\let\mathscr\relax
\newcommand{\uncolholant}{\textup{\textsf{Holant}}}
\newcommand{\uni}[1]{#1_{\text{\sf{uni}}}}
\newcommand{\classHolant}{\uni{\text{\sc{p-Holant}}}^{2}}
\newcommand{\constclassHolant}[1]{\uni{\text{\sc{p-Holant}}}^{#1}}
\newcommand{\unbndclassHolant}{\text{\sc{Holant}}}
\newcommand{\uncolholantprob}{\text{\sc{p-Holant}}}
\newcommand{\constuncolholantprob}[1]{\text{\sc{p-Holant}}^#1}
\newcommand{\uncolunbndholantprob}{\text{\sc{p-Holant}}^{\sf{UBD}}}
\newcommand{\uncolholantprobuni}
{\text{\sc{p-Holant}}_{\textsf{uni}}}
\newcommand{\Sub}{\#\mathsf{Sub}}
\newcommand{\Hom}{\#\mathsf{Hom}}
\newcommand{\Emb}{\#\mathsf{Emb}}
\newcommand{\Aut}{\#\mathsf{Aut}}
\newcommand{\homs}[2]{\mathsf{Hom}{(#1 \to #2)}}
\newcommand{\subs}[2]{\mathsf{Sub}{(#1 \to #2)}}
\newcommand{\embs}[2]{\mathsf{Emb}{(#1 \to #2)}}
\newcommand{\auts}[1]{\mathsf{Aut}#1}
\newcommand{\surhoms}[2]{\mathsf{SurHom}{(#1 \to #2)}}
\newcommand{\Homprob}{\#\textup{\textsc{Hom}}}
\newcommand{\kgraphsBHS}{\mathcal{G}_{k}(\mathcal{S};\hs)}
\newcommand{\kgraphsGamma}{\mathcal{G}_{\leq k}(\mathcal{S};\hs)}
\newcommand{\colGraph}[1]{(#1, \nu_{#1})}
\newcommand{\hs}{\text{\sf{r}}}
\newcommand{\fptlinred}{\leq^{\mathsf{FPT-lin}}_{\mathsf{T}}}
\author{Panagiotis Aivasiliotis}{Hasso Plattner Institute, University of Potsdam}{panos.aivasiliotis@hpi.de}{}{}
\author{Andreas Göbel}{Hasso Plattner Institute, University of Potsdam}{andreas.goebel@hpi.de}{}{Funded by the Postdoc Network Brandenburg.}
\author{Marc Roth}{School of Electronic Engineering and Computer Science, Queen Mary University of London}{m.roth@qmul.ac.uk}{}{}
\authorrunning{P. Aivasiliotis, A. Göbel and M. Roth}
\title{Symmetric Parameterised Holants on Hypergraphs:
Towards a Classification for Parameterised VCSPs}
\titlerunning{Parameterised Holant Problems on Hypergraphs}
\keywords{Parameterised Complexity, Counting Problems, Constraint Satisfaction Problems, Holant Problems}
\begin{document}

\maketitle
\begin{abstract}
We study the complexity of the parameterised counting constraint satisfaction problem: given a set of constraints over a set of variables and a positive integer $k$, how many ways are there to assign $k$ variables to $1$ (and the others to $0$) such that all constraints are satisfied.
While this problem, and its decision version, received significant attention during the last two decades, existing work has so far exclusively focused on restricted settings such as finding and counting homomorphisms between relational structures due to Grohe (JACM 2007) and Dalmau and Jonsson (TCS 2004), or the case of finite constraint languages due to Creignou and Vollmer (SAT 2012), and Bulatov and Marx (SICOMP 2014).

In this work, we tackle a more general setting of parameterised (counting) valued constraint satisfaction problems (VCSPs) with infinite constraint languages: we allow our constraints to be chosen from an infinite set of permitted constraints and we allow our constraints to map an assignment of its variables not only to $\mathsf{True}$ or $\mathsf{False}$, but to arbitrary values. In this setting we are able to model and classify significantly more general problems such as (weighted) parameterised factor problems on hypergraphs and counting weight-$k$ solutions of systems of linear equations, none of which are captured by existing complexity classifications of parameterised constraint satisfaction problems.

On a formal level, we express parameterised VCSPs as parameterised \emph{holant problems} on uniform hypergraphs, and we establish complete and explicit complexity dichotomy theorems for this family of problems both w.r.t.\ classical complexity theory ($\mathrm{P}$ vs.\ $\#\mathrm{P}$) and parameterised complexity ($\mathrm{FPT}$ vs.\ $\#\mathrm{W}[1]$). For resolving the $\mathrm{P}$ vs.\ $\#\mathrm{P}$ question, we mainly rely on the use of hypergraph gadgets, the existence of which we prove using properties of degree sequences necessary for realisability in uniform hypergraphs. As a technical highlight, we also employ Curticapean's ``CFI Filters'' (SODA 2024) --- named after the Cai-Fürer-Immermann construction for bounding the expressiveness of the Weisfeiler-Leman heuristic --- to establish polynomial-time algorithms for isolating vectors in the homomorphism basis of some of our holant problems. For the $\mathrm{FPT}$ vs.\ $\#\mathrm{W}[1]$ question, we build upon the recently established combinatorial toolkit for parameterised holants on the special case of graphs by Aivasiliotis et al. (ICALP 2025) and also rely on an extension of the framework of the homomorphism basis due to Curticapean, Dell and Marx (STOC 17) to uniform hypergraphs. 
\end{abstract}
\newpage

\section{Introduction}
Constraint satisfaction problems (``CSPs'') and their counting analogues (``$\#$CSPs'') belong to the most well-studied computational problems in algorithms and complexity theory: given a set of constraints over a set of variables, find one of (or count) the assignments of variables to elements of a finite set $D$ such that all constraints are satisfied. CSPs and $\#$CSPs are extremely rich in expressibility, allowing us not only to model a multitude of fundamental problems such as the Boolean satisfiability problem, the graph homomorphism problem, and the computation of partition functions, but they also provide a uniform language for studying and analysing seemingly very different problems (see e.g.\ Bulatov's survey on CSPs~\cite{BulatovSurvey18}). At the same time, families of $(\#)$CSPs are robust enough in structure to often allow for a \emph{precise and exhaustive} complexity classification of their members, which usually comes in form of a \emph{complexity dichotomy theorem} stating that each member of a family of $(\#)$CSPs is either solvable in polynomial time, or $\mathrm{NP}$-hard.\footnote{This property is rather surprising, given that Ladner's Theorem states that, assuming $\mathrm{P}\neq \mathrm{NP}$, the class $\mathrm{NP}$ contains an infinite hierarchy of complexities between $\mathrm{P}$ and $\mathrm{NPC}$ \cite{Ladner75}.}

One of the earliest examples of a comprehensive analysis of a CSP is Schaefer's Dichotomy~\cite{schaefer1978complexity}, stating that each member of the generalised satisfiability problem is either solvable in polynomial time or $\mathrm{NP}$-complete. Over the course of the following 20 years a significant amount of research had been undertaken with the goal of lifting Schaefer's result from the Boolean domain to arbitrary finite domains, culminating in the formulation of the Feder-Vardi-Conjecture in 1998 \cite{FederV98}. It then took almost another 20 years until the conjecture was proved independently by Bulatov \cite{Bulatov17} and Zhuk \cite{Zhuk17} in 2017.

The quest for counting constraint satisfaction problems ($\#$CSPs) has seen similar success: Creignou and Hermann~\cite{creignou1996complexity} established a counting version of Schaefer's dichotomy in 1996, and the classification for $\#$CSPs over arbitrary domains was established again by Bulatov~\cite{Bulatov13} whose result was then simplified and its tractability criterion shown to be decidable by Dyer and Richerby~\cite{DyerR13}. Moreover, the classification was eventually extended and strenghthened to $\#$CSPs with complex weights by Cai and Chen~\cite{CaiC17}. In the present work, we study the version of $\#$CSP in which we only count ``weight-$k$-solutions'', that is assignments that set precisely $k$ variables to $1$; this version is also known as \emph{parameterised} $\#$CSP and we focus on the Boolean domain, i.e., \#CSPs defined over $D = \{0,1 \}$.

\subsection{Parameterising CSPs by Solution Size}
The \emph{weighted satisfiability problem} $\textsc{WSAT}$ asks, given a Boolean formula $F$ and a positive integer $k$, whether $F$ contains a satisfying assignment of Hamming weight $k$, that is, whether it is possible to satisfy the formula by setting precisely $k$ variables to $1$. Similarly, its counting version $\#\textsc{WSAT}$ asks to compute the number of such assignments. $\textsc{WSAT}$ and $\#\textsc{WSAT}$ are the foundational problems for the complexity theory of parameterised decision and counting problems: depending on the allowed structure of the input formula\footnote{The ``W'' in the $\mathrm{W}$-hierarchy stands for ``weft'', which describes a structural property of boolean circuits.}, they define, respectively, the $\mathrm{W}$ and the $\#\mathrm{W}$-hierarchies, including the classes $\mathrm{W}[1]$ and $\#\mathrm{W}[1]$, which can be considered the parameterised decision and counting equivalents of $\mathrm{NP}$. We refer the reader to the standard textbook of Flum and Grohe~\cite{FlumG06} for a comprehensive introduction to structural parameterised complexity theory.

Notably, $\textsc{WSAT}$ and $\#\textsc{WSAT}$ constitute the most rudimentary instances of parameterised CSP and $\#$CSP, and it comes to no surprise that analogues of Schaefer's dichotomy, both for decision and for counting, have already been established for those problems~\cite{Marx05,bulatov2014constraint,CreignouV15}. The key difference in the parameterised setting is that the notion of tractability is relaxed from polynomial-time algorithms to  \emph{fixed-parameter tractable} ``FPT'' algorithms, the requirement for which is a running time bound of $f(k)\cdot n^{O(1)}$, where $n$ is the input size and $k$ is the problem parameter --- in this work, $k$ will always be the Hamming weight of the sought solutions. The function $f$ can be any computable function; in other words, we allow for a super-polynomial overhead only in the problem parameter, reflecting the assumption that the parameter is significantly smaller than the input size in problem instances of interest (see \cite{bulatov2014constraint} for a discussion on this assumption in the context of parameterised CSPs). 

When studying families of parameterised CSPs, the goal is, thus, to classify the problems into instances that are FPT and instances that are hard for a class in the $\mathrm{W}$ (or $\#\mathrm{W}$) hierarchy. In particular, it is well-known that under the \emph{exponential time hypothesis} (ETH) \cite{ImpagliazzoP01,ImpagliazzoPZ01}, which asserts that 3-SAT cannot be solved in time $\mathsf{exp}(o(n))$ (where $n$ is the number of variables of the input formula), $(\#)\mathrm{W}[1]$-hard problems are not in FPT \cite{Chenetal05,Chenetal06,CyganFKLMPPS15}.  In addition to the aforementioned analogues of Schaefer's dichotomy, such classifications have also been achieved for the problems of finding and counting homomorphisms between relational structures, which can both be expressed as parameterised $(\#)$CSPs (see \cite{Grohe07,dalmau2004complexity}).

However, all existing results are limited by either focusing on a special case (such as simple, undirected graph homomorphisms) or by the restriction to finite constraint languages. For example, the following natural counting problems cannot be expressed in existing frameworks for parameterised $\#$CSPs:\footnote{We note that special cases of $\#\textsc{CodeWord}_p$ and of $\#\textsc{Factor}(S)$, such as restrictions to certain input matrices and to graphs, are expressible in existing frameworks such as in~\cite{aivasiliotis_et_al:LIPIcs.ICALP.2025.7}. However, the complexity of both problems has not been fully resolved yet.}
\begin{itemize}
    \item Let $p$ be a prime. The problem $\#\textsc{CodeWord}_p$ gets as input a matrix $A$ over $\mathbb{Z}/p\mathbb{Z}$ and a positive integer $k$, and the goal is to compute the number of solutions to the system of linear equations $A\vec{x}=\vec{0}$ (also over $\mathbb{Z}/p\mathbb{Z}$) that set precisely $k$ variables to $1$ (and the other variables to $0$). Note that $\#\textsc{CodeWord}_2$ is identical to the counting version of $\textsc{Exact-Even-Set}$ (see e.g.\ \cite{DowneyFVW99}).
    \item Let $S\subseteq \mathbb{N}$. The problem $\#\textsc{Factor}(S)$ gets as input a hypergraph $H$ and a positive integer~$k$, and the goal is to compute the number of $k$-edge-subsets of $H$ that induce a hypergraph in which the degree of every vertex is in~$S$. Given $d\in \mathbb{N}$, we denote $\#\textsc{Factor}_d(S)$ for the restriction to $d$-uniform hypergraphs. For example, $\#\textsc{Factor}_d(\{0,1\})$ is identical to the problem of counting $k$-matchings in $d$-uniform hypergraphs, where a matching in a hypergraph is a set of pairwise disjoint hyperedges.
\end{itemize}

In this work, we introduce and study a version of parameterised $\#$CSP which is powerful enough in expressibility to model the previous problems ---in fact, we will consider a much more general model which can also describe constraints that map assignments to arbitrary numbers (rather than to just $\mathsf{True}$ or $\mathsf{False}$). Moreover, we will be able to provide complete complexity classifications for our model which reveal the complexity of the aforementioned examples as straightforward special cases.

\paragraph*{Parameterised Symmetric Valued $\#$CSPs and Hypergraph Holants}
For reasons of notational consistency with existing work on (parameterised) CSPs (see, e.g., \cite{jerrum2017counting}) we will first introduce our setting of interest via so-called \emph{Valued} CSPs (``VCSPs'').\footnote{Being aware of another work (see \cite{kolmogorov2017complexity}) studying a problem under the name Valued CSPs, we would like to point out that their problem is incomparable to ours and that our definition complies with \cite{jerrum2017counting}.}

Let $\mathcal{F}$ denote a finite set of symmetric functions $f : \{0, 1\}^\ast \to \mathbb{C}$; by symmetric we mean that $f$ is invariant under permutation of entries of input vectors. 

\begin{remark}
Note that, a symmetric function that is not associated with a specific arity, i.e., the length of its input is arbitrary, depends only on the Hamming weight of its input. Hence, such functions can be equivalently defined over natural numbers.    
\end{remark}

An instance $\mathcal{I}$ of parameterised symmetric $\textsc{p-VCSP}(\mathcal{F})$ consists of a positive integer $k$, a set $X$ of $n$ \textit{variables} $x_1, \dots, x_n$ and a set $C$ of $m$ \textit{constraints} of the form $\langle f, (x_{i_1}, ..., x_{i_r})\rangle$ for some $r\in \mathbb{N}$, where $f \in \mathcal{F}$. The vector $(x_{i_1}, \dots, x_{i_r})$ is known as the \textit{scope} of the constraint. Given an instance $\mathcal{I}=(X,C,k)$, the task is to compute

\[
Z(\mathcal{I}) = \sum_{\substack{\alpha : X \to \{0,1\}\\\sum_{x\in X}\alpha(x)=k}}\prod_{\langle f, (x_{i_1}, \dots, x_{i_r})\rangle \in C}f(\alpha(x_{i_1}), \dots, \alpha(x_{i_r}))\,.
\]

For example, if $\mathcal{F}$ contains precisely the function $\vec{x} \mapsto \sum_{x\in \vec{x}} x\mod p$, then $\textsc{p-VCSP}(\mathcal{F})$ is identical to $\#\textsc{CodeWord}_p$. If $\mathcal{F}$ contains precisely the function $\mathsf{hw}_{\leq 1}$ which maps a vector $\vec{x}\in \{0,1\}^\ast$ to $1$ if it has at most one $1$ entry, and to $0$ otherwise, then $\textsc{p-VCSP}(\mathcal{F})$ is identical to $\#\textsc{Factor}(\{0,1\})$.

Our goal is to understand the complexity of  $\#\textsc{p-VCSP}(\mathcal{F})$ for any set of symmetric constraint functions $\mathcal{F}$ - recall that symmetric constraints only depend on the Hamming weight of their assignment. To this end, we introduce and rely on a generalisation of the parameterised holant framework \cite{Curticapean15,aivasiliotis_et_al:LIPIcs.ICALP.2025.7} from graphs to hypergraphs, which enables us to naturally model VCSPs in a way that allows us to leverage a variety of powerful tools that have been established in parameterised counting complexity theory over the past years. We proceed by formally introducing the holant framework, after which we provide details on the equivalence between VCSPs and parameterised holants on hypergraphs.

Holant problems, implicitly defined by Valiant in his seminal paper on \textit{holographic reductions} \cite{Valiant08} and formally introduced by Cai, Lu and Xia \cite{cai2009holant}, yield an avenue for modeling various counting problems, including problems that cannot be formulated as a $\#$CSP problem, such as the generating function of perfect matchings in a graph (see \cite{CaiG22,jerrum2017counting}). 
Similarly to VCSPs, holant problems assume oracle access to a finite collection of functions, which by convention we call \textit{signatures} and denote them with $s$ instead of $f$. 

An instance of a holant problem with respect to a finite set $\mathcal{S}$ of signatures is a \textit{signature grid} that consists of a (simple) undirected graph $G$ and an assignment of signatures from $\mathcal{S}$ to the vertices of $G$. For a vertex $v \in V(G)$, we denote by $s_v$, the signature that is assigned to $v$. The objective is to compute the holant value (or simply, the \textit{holant}) of the signature grid $(G, \{s_v\}_{v \in V(G)})$ which is given as follows
\begin{equation}\label{eq:introHolantValue}
\mathsf{Holant}(G, \{s_v\}_{v \in V(G)}) = \sum_{\alpha : E(G) \to \{0,1\}}\prod_{v \in V(G)}s_v(\alpha|_{E_v(G)})\,,
\end{equation}
where $\alpha|_{E_v(G)}$ is the restriction of $\alpha$ to the edges that are incident to $v$. The focus of this article is on the following holant variant.

The \textit{parameterised holant problem} on a finite set $\mathcal{S}$ of symmetric signatures $s : \{0,1\}^* \to \mathbb{C}$, takes as input a signature grid $\Omega = (G, \{s_v\}_{v \in V(G)})$, and a number $k \in \mathbb{Z}_{\geq 0}$ and computes
\[
\uncolholant(\Omega, k) = \sum_{\substack{\alpha : E(G) \to \{0,1\} \\ \sum_{e}\alpha(e) = k}}\prod_{v \in V(G)}s_v(\alpha|_{E_G(v)})\,,
\]

\begin{remark}
Following the equivalent definition of symmetric signatures $s : \{0,1\}^* \to \mathbb{C}$ as $s : \mathbb{N} \to \mathbb{C}$, we note that $\uncolholant(\Omega, k)$ can be written in a simplified form by observing that for any assignment $\alpha : E(G) \to \{0,1\}$ s.t. $\sum_{e}\alpha(e) = k$ and any vertex $v \in V(G)$, by letting $A = \{e \in E(G)\,|\,\alpha(e) = 1\}$, we have that the number of 1s in $\alpha |_{E_G(v)}$ is precisely equal to $|A\,\cap\, E_G(v)|$. Hence,
\[
\uncolholant(\Omega, k) = \sum_{\substack{A \subseteq E(G) \\ |A| = k}}\prod_{v \in V(G)}s_v(|A\,\cap\,E_G(v)|)\,.
\]
\end{remark}

The integer $k$ is called the \emph{parameter} of the problem instance, and, in addition to classifying the polynomial-time tractable cases, we are interested in \emph{fixed-parameter tractable} (FPT) algorithms. Holant instances can naturally be extended to signature grids over \textit{hypergraphs} $(V, E)$, where $V$ is the set of vertices and $E \subseteq 2^{V}$ is the \textit{multiset} of \textit{hyperedges}\footnote{Note that a hyperedge is a set and \textit{not} a multiset, but multiple copies of the same hyperedge are allowed.}, that generalise graphs. The holant (value) of a signature grid over an underlying hypergraph is defined identically to \eqref{eq:introHolantValue}, where a hyperedge $e$ is incident to a vertex $v$ if $v \in e$.  

Thus far, parameterised holant problems may take as input a signature grid with an underlying hypergraph of arbitrarily large rank\footnote{Recall that the rank of a hypergraph $G$ is given by $\max_{e \in E(G)}|e|$.}. Naturally, we are also interested in the case in which the rank of the input hypergraph is \textit{bounded}. There are two ways by which we may \textit{bound} the rank of the instance's underlying hypergraph, namely, either to include the rank in the parametrisation along with $k$, or to allow only for hypergraphs of fixed rank, that is, their rank is at most some \textit{constant} $d$ (the latter of which is in fact the \textit{slice} of the former). In particular, the most interesting variant considering fixed rank is the one that is restricted to \textit{uniform} hypergraphs. The three aforementioned variants are formally stated below\footnote{We point out that parameterised holant problems $\textsc{p-Holant}$ have been traditionally considered in a \textit{colourful} setting in which edges are coloured, while uncoloured holant problems are typically denoted as $\textsc{p-UnColHolant}$ (see e.g., \cite{Curticapean15}, \cite{aivasiliotis_et_al:LIPIcs.ICALP.2025.7}). In this work, we only consider the uncoloured setting, thus we refrain from indicating in the notation of our holant problems that are uncoloured.} --- we emphasize that we assume signature grids over hypergraphs that feature \textit{no} multiple hyperedges for everything that follows. 

\begin{definition}[Parameterised Holant Problems on Hypergraphs]
    Let $\mathcal{S}$ be a finite set of signatures. 
\begin{itemize}
\item    
    The problem $\uncolunbndholantprob(\mathcal{S})$ expects as input a positive integer $k$ and a signature grid $\Omega=(G,\{s_v\}_{v\in V(G)})$ over $\mathcal{S}$. The output is $\uncolholant(\Omega, k)$ and the problem parameter is $k$. 
\item 
    The problem $\uncolholantprob(\mathcal{S})$ expects as input a positive integer $k$ and a signature grid $\Omega=(G,\{s_v\}_{v\in V(G)})$ over $\mathcal{S}$. The output is $\uncolholant(\Omega, k)$ and the problem parameters are $k$ and the rank of $G$. 
\item For any constant $d \in \mathbb{Z}_{\geq 2}$, the problem $\uni{\constuncolholantprob{d}}(\mathcal{S})$ expects as input a positive integer $k$ and a signature grid $\Omega=(G,\{s_v\}_{v\in V(G)})$ over $\mathcal{S}$, such that $G$ is $d$-uniform, that is, each hyperedge has cardinality $d$. The output is $\uncolholant(\Omega, k)$ and the problem parameter is $k$. 
\end{itemize}
\end{definition}

\begin{remark}[On infinite domains] We emphasise that our definition of parameterised holants on hypergraphs is, up to the extension from graphs to hypergraphs, \textit{identical} to the definition in existing literature \cite{aivasiliotis_et_al:LIPIcs.ICALP.2025.7,Curticapean15}. As discussed in \cite{aivasiliotis_et_al:LIPIcs.ICALP.2025.7}, computable signatures with infinite domains ---though unusual in classical holant theory--- yield a non-trivial well-defined framework that is able to capture problems that would otherwise require an infinite number of (finite-arity) signatures (see, e.g., $\#\textsc{CodeWord}_p$ and $\#\textsc{Factor}(S)$ discussed earlier). Even more importantly, finite-arity signatures can only capture those instances for which the underlying hypergraph is of bounded vertex-degree. In the parameterised setting, such a restriction above is known to admit trivial FPT-algorithms (a detailed discussion about the latter point is formally provided in \cite{aivasiliotis_et_al:LIPIcs.ICALP.2025.7} as well as briefly in \Cref{sec:Conclusion}). Thereby a study on infinite domains is crucial for establishing the complexity of important families of problems that remained uncaptured by existing frameworks. 
\end{remark}

\begin{remark}[On Boolean domain and symmetric signatures]
In general, holant problems allow for general domains (see, e.g., \cite{CaiG21}) and signatures that do not need to be symmetric~\cite{LinW17}. However, even in the classical setting, only partial results are known for symmetric functions on general domains \cite{cai2025holant} or for asymmetric functions on Boolean domains~\cite{meng2025fp,MengWXZ25}. Therefore, this work focusses on symmetric Boolean signatures not because the asymmetric case is less interesting, but as a natural starting point for investigating parameterised holants on hypergraphs.
\end{remark}

\subsection{Equivalence of VCSPs and Holant Problems on Hypergraphs}\label{appendix:VCSPsHolants}
We show in a very simple reduction, that holant problems on hypergraphs are VCSPs (and vice versa). In particular, here we show the equivalence for the respective parameterised problems. We stress that for the aforementioned equivalence to hold, we need to modify our definition of holant problems such that the underlying hypergraphs allow for multi-hyperedges, i.e., multiple copies of the same hyperedge may appear. However, this is only a technical subtlety that we do not need to consider in the rest of the paper. The reason is that all of our hardness results for the \textit{restricted} holant problems transfer trivially to the more general holant problems that allows for multi-hyperedges and as we show in \Cref{sec:TractabilitySubsection} our tractability results still apply as well. 

We let $\mathcal{F}$ be a finite set of functions $f : \{0,1\}^r \to \mathbb{C}$ and recall that, given a set $X$ of $n$ variables $x_1, \dots, x_n$, a set $C$ of $m$ constraints of the form $\langle f, (x_{i_1}, \dots, x_{i_r})\rangle$, where $f \in \mathcal{F}$, and a positive integer $k$, the problem $\textsc{p-VCSP}(\mathcal{F})$ computes
\[
Z(X, C, k) = \sum_{\substack{\alpha : X \to \{0,1\} \\ \sum_{e}\alpha(e) = k}}\prod_{\langle f, (x_{i_1}, \dots, x_{i_r})\rangle \in C}f(\alpha(x_{i_1}), \dots, \alpha(x_{i_r}))\,.
\]

Given $\mathcal{I} = (X, C, k)$, we construct a hypergraph $G(\mathcal{I})$, that will be used as the underlying hypergraph of the intended holant instance, as follows. Each constraint $c \in C$ corresponds to a (unique) vertex $v_c \in G(\mathcal{I})$ and each variable $x \in X$ corresponds to a (unique) hyperedge $e_x \in E(G(\mathcal{I}))$ given as $\{v_c \mid \text{$x$ is in the scope of $c$}\}$. Note that, there may be variables $x \neq x'$ such that $e_x = e_{x'}$. We allow $G(\mathcal{I})$ to contain multiple copies of the same hyperedge, such that each variable $x$ corresponds uniquely to (some copy of) $e_x$.

We let $\Omega(\mathcal{I})$ denote the signature grid obtained from $G(\mathcal{I})$ by equipping, for each $c = (f, (x_{i_1}, \dots, x_{i_r})) \in C$, the variable $v_c$ with the signature $s_{v_c} = f$. We also assume that the order of the arguments of $f$  is preserved, that is, if $x$ is the $j$-th argument of $f$, then $e_x$ is the $j$-th argument of $s_{v_c}$. It can be readily verified that
$Z(\mathcal{I}) = \uncolholant(\Omega(\mathcal{I}), k)\,.$ 

\begin{remark}\label{rem:introVSCPsToHolants}
For $x \in X$, we write $\textsf{occ}_{\mathcal{I}}(x)$ for the \textit{occurence} of $x$ in $\mathcal{I}$, that is, the number of constraints $\langle f, (x_{i_1}, \dots, x_{i_r})\rangle \in C$ that contain $x$ in their scope, that is, $x = x_{i_j}$, for some $1 \leq j \leq r$.
For any variable $x \in X$, we have that $\textsf{occ}_{\mathcal{I}}(x)$ is equal to the size of $e_x$. Furthermore, for any constraint $c \in C$, we have that $|c|$, that is, the number of variables in the scope of $c$, is equal to the number of hyperedges in $G(\mathcal{I})$ that are incident to $v_c$. 
\end{remark}

Furthermore, the reverse direction, that is, formulating a (parameterised) generalised holant problem as a VSCP can be achieved in a similar fashion.

\subsection{Existing Results for Parameterised Holants on Graphs}

The parameterised complexity of the problem $\uni{\constuncolholantprob{2}}(\mathcal{S})$ was recently classified by Aivasiliotis et al.\ \cite{aivasiliotis_et_al:LIPIcs.ICALP.2025.7}. The classification criterion depends on the \textit{type} of the set of allowed signatures, as defined in \cite{aivasiliotis_et_al:LIPIcs.ICALP.2025.7}, which we also define here momentarily, for reasons of self-containment. In particular, only signature sets that exclude signatures $s$ with $s(0) = 0$ are assigned a type in~\cite{aivasiliotis_et_al:LIPIcs.ICALP.2025.7} and the reason is that --- roughly speaking --- signatures $s$ with $s(0) = 0$ do not affect the \emph{parameterised} complexity of the problem, as it was shown in \cite{aivasiliotis_et_al:LIPIcs.ICALP.2025.7}. In other words, any fixed-parameter tractable finite set of signatures remains fixed-parameter tractable even if we add to it any number of signatures $s$ with $s(0) = 0$. However, we will see that this is not necessarily true for polynomial-time tractability, which is why we explicitly include the case of $s(0)=0$ in the present paper.

The type of a signature set is decided by the \textit{signature fingerprint} function, which is a combinatorial sum that is reminiscent of the Möbius function of partition lattices, given as follows.

\begin{definition}[Signature Fingerprints \cite{aivasiliotis_et_al:LIPIcs.ICALP.2025.7}]\label{def:fingerprint_intro}
Let $a$ be a positive integer and let $s$ be a signature. If $s(0)\neq 0$ then the \emph{fingerprint} of $a$ and $s$ is defined as
    \[\chi(a,s) := \sum_{\sigma} (-1)^{|\sigma|-1} (|\sigma|-1)! \cdot \prod_{B \in \sigma} \frac{s(|B|)}{s(0)} \,,\]
    where the sum is over all partitions $\sigma$ of $[a]$ and $|\sigma|$ denotes the number of blocks $B \in \sigma$. For technical reasons, if $s(x)=0$ for all $x$, we set $\chi(a,s)=0$. In all remaining cases, i.e., $s(0)=0$ but $s(x)\neq 0$ for some $x>0$, the fingerprint is undefined and we set $\chi(a,s)=\bot$.
\end{definition}

\begin{definition}[Types of Signature Sets \cite{aivasiliotis_et_al:LIPIcs.ICALP.2025.7}]\label{def:sigtype_intro}
    Let $\mathcal{S}$ be a finite set of signatures. We say that $\mathcal{S}$ is of type\footnote{We point out that we have renamed the types of signatures to avoid introducing further concepts.}
    \begin{itemize}
        \item[1.] $\mathbb{T}[1]$ if $\chi(a,s)=0$ for all $s\in \mathcal{S}, a\geq 2$, 
        \item[2.] $\mathbb{T}[2]$ if $\chi(a,s)=0$ for all $s\in \mathcal{S}, a\geq 3$, but there exists $s\in \mathcal{S}$ with $\chi(2,s)\neq 0$, and
        \item[3.] $\mathbb{T}[\infty]$ otherwise, i.e., there exists $s\in \mathcal{S}$ and $a\geq 3$ such that $\chi(a,s)\neq 0$.
    \end{itemize}
\end{definition}

It was shown in~\cite{aivasiliotis_et_al:LIPIcs.ICALP.2025.7} that there are infinitely many signature sets of each type.
We are now ready to state the known complexity dichotomy for $\uni{\constuncolholantprob{2}}(\mathcal{S})$. In particular, it is only the signature sets of the first type that render the problem tractable. However, the third type of signature sets allow for tighter conditional lower bounds.

\begin{theorem}[Complexity Dichotomy for $\uni{\constuncolholantprob{2}}(\mathcal{S})$ \cite{aivasiliotis_et_al:LIPIcs.ICALP.2025.7}]\label{thm:main_uncol}
    Let $\mathcal{S}$ be a finite set of signatures. We set $\mathcal{S}_0 = \{s \in \mathcal{S} \mid s(0) = 0\}$.
    \begin{itemize}
        \item[1.] If $\mathcal{S} \setminus \mathcal{S}_0$ is of type $\mathbb{T}[1]$, then $\uni{\constuncolholantprob{2}}(\mathcal{S})$ can be solved in FPT-near-linear time.
        \item[2.] Otherwise $\uni{\constuncolholantprob{2}}(\mathcal{S})$ is $\#\mathrm{W}[1]$-complete. If, additionally, $\mathcal{S} \setminus \mathcal{S}_0$ is of type $\mathbb{T}[\infty]$, then $\uni{\constuncolholantprob{2}}(\mathcal{S})$ cannot be solved in time $f(k)\cdot |V(\Omega)|^{o(k/\log k)}$, unless ETH fails. \qed
    \end{itemize}
\end{theorem}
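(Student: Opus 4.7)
The plan is to reduce $\uni{\constuncolholantprob{2}}(\mathcal{S})$ to a linear combination of parameterised subgraph counts via a signature decomposition built on the fingerprint $\chi$, and then to read off the complexity from the combinatorial structure of the graph patterns that appear with nonzero coefficient. Algebraically, the formula for $\chi(a,s)$ is the Möbius inverse of $a \mapsto s(a)/s(0)$ over the partition lattice of $[a]$, yielding the exponential-formula identity $s(a)/s(0) = \sum_{\pi \vdash [a]} \prod_{B\in \pi} \chi(|B|,s)$. Plugging this identity into $\uncolholant(\Omega,k)$ and swapping sums rewrites the holant as a linear combination, indexed by graphs $H$, of the number of $k$-edge subgraphs of the underlying graph isomorphic to $H$, with coefficients assembled from products of the $\chi(a,s_v)$.

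For the tractability case (Type $\mathbb{T}[1]$), only components of size at most $1$ survive the decomposition, because $\chi(a,s)=0$ for every $s\in\mathcal{S}\setminus\mathcal{S}_0$ and every $a\geq 2$. After rescaling so that $s(0)=1$ on $\mathcal{S}\setminus\mathcal{S}_0$ (and extracting a global prefactor), the holant reduces to a sum over $k$-edge subsets whose summand depends on each vertex only through whether it is incident to any selected edge. This factorisable weighted-subset count is evaluated in FPT-near-linear time via, e.g., a colour-coding procedure that enumerates covered vertex sets of size at most $2k$.

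For the hardness cases, I would invoke complexity monotonicity in the style of Curticapean--Dell--Marx: if $H$ appears with nonzero coefficient in the subgraph-basis expansion, then $\uni{\constuncolholantprob{2}}(\mathcal{S})$ is at least as hard as counting size-$k$ copies of $H$. Type $\mathbb{T}[2]$ guarantees some $s \in \mathcal{S}\setminus \mathcal{S}_0$ with $\chi(2,s)\neq 0$, producing a pattern that contains an edge and hence a reduction from counting $k$-matchings, which is $\#\mathrm{W}[1]$-hard. Type $\mathbb{T}[\infty]$ guarantees the presence of a pattern whose connected components include one of size at least $3$; by suitably gadgeting / scaling $k$ one obtains families of patterns of unbounded treewidth, and the subgraph-counting dichotomy of Curticapean--Marx then translates this into the sharper ETH lower bound $f(k)\cdot n^{o(k/\log k)}$.

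The subtlest step, and the one I expect to be the main obstacle, is handling signatures with $s(0)=0$: the fingerprint is undefined for them, and they zero out the summand whenever an incident vertex fails to be ``covered'' in the right way. For the parameterised classification I would exploit the fact that any weight-$k$ assignment touches at most $2k$ vertices, so the constraints imposed by $\mathcal{S}_0$-signatures bite on only a bounded portion of the graph and can be absorbed via a branching / colour-coding preprocessing that reduces to an instance effectively over $\mathcal{S}\setminus\mathcal{S}_0$; the type-based analysis above then closes both the FPT algorithm and the $\#\mathrm{W}[1]$-lower bound. As the present paper emphasises, this robustness with respect to $\mathcal{S}_0$ breaks down for polynomial-time tractability, which is precisely why the $\mathrm{P}$ vs.\ $\#\mathrm{P}$ question in the hypergraph setting cannot omit the case $s(0)=0$.
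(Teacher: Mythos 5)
First, a point of context: the paper does not prove this theorem at all --- it is Theorem~\ref{thm:main_uncol}, quoted verbatim from Aivasiliotis et al.\ and closed with \qed{} as a black-box import, so there is no in-paper proof to match your proposal against. Your tractability direction is essentially the right argument (type $\mathbb{T}[1]$ forces $s(n)=s(1)^n$ up to scaling, the holant factorises over vertices, and the $\mathcal{S}_0$-signatures can only bite on at most $2k$ vertices, else the holant vanishes); this is consistent with how the present paper handles the analogous cases in \Cref{lem:linearSigAlgo} and \Cref{lem:linearSigAlgoWithZeros}.

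The hardness direction, however, has a genuine gap. You assert that ``if $H$ appears with nonzero coefficient in the subgraph-basis expansion, then the problem is at least as hard as counting size-$k$ copies of $H$.'' Complexity monotonicity holds for the \emph{homomorphism} basis, not the subgraph basis: for instance, $\binom{|E(G)|}{k}=\sum_{H}\#\subs{H}{G}$ over all $k$-edge patterns $H$ exhibits every hard pattern with coefficient $1$ yet is trivially computable. One must first convert subgraph counts into homomorphism counts (as in \Cref{lem:uncolHomsEmbs}), and the entire difficulty is then to show that some family of unbounded-treewidth patterns survives the resulting M\"obius-type cancellations with nonzero coefficient $\zeta_{k,s}$. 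Your argument for $\mathbb{T}[2]$ --- ``a pattern that contains an edge, hence a reduction from counting $k$-matchings'' --- does not do this: a $k$-matching has treewidth $1$ in the homomorphism basis, and containing an edge implies nothing about which quotients survive. The cited result actually requires proving, e.g., that cliques $K_{d+1}$ survive for $k=d^2-1$ in the $\mathbb{T}[2]$ case and that $r$-regular graphs with $k=|E(H)|$ survive when $\chi(r,s)\neq 0$ in the $\mathbb{T}[\infty]$ case (the latter also supplying treewidth linear in $k$, which is what the $f(k)\cdot n^{o(k/\log k)}$ bound needs); this sign/cancellation analysis --- of the kind carried out in \Cref{lem:non-zero-coeff} for a related statement --- is the technical heart of the theorem and is missing from your sketch.
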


Finally, the tractability criteria above are in fact explicit, in the sense that for a set $\mathcal{S}$ of signatures of type $\mathbb{T}[1]$ and a signature $s \in \mathcal{S}$, we know that $s$ has the following explicit representation: $s(n) = \alpha^n$, for some $\alpha \in \mathbb{C}$. The above implies that $s(0) = 1$ must be assumed, but in the context of holant problems it is well-known that scaling a signature by some constant, does not affect the complexity of the problem.

\subsection{Further Related Work on Holants}
In this part, we explore some well established approaches that appear, at first glance, to allow for deriving complexity classifications for holants on hypergraphs; however, as we argue below, they turn out to be insufficient for our (parameterised and classical) settings.

Firstly, as it has already been pointed out, holant problems on hypergraphs are equivalent to a natural class of symmetric counting constraint satisfaction problems (see \cref{appendix:VCSPsHolants}). Furthermore, readers familiar with classical holant theory might be aware that the extension of signature grids from graphs to hypergraphs has only limited benefits since classifications for higher arity $\#$CSPs had already been known when the holant framework was introduced --- e.g. the dichotomy for $\#$CSP~\cite{Bulatov13} first appeared in ICALP'08 while for holants~\cite{CaiGW16} in STOC'13. However, in the parameterised world where we only consider assignments that map precisely $k$ variables/edges to $1$, the situation is flipped: not much is known about the complexity of $\#$CSP for higher arities and the framework of parameterised holants on hypergraphs allows us to tackle such problems.

Secondly, holants on hypergraphs are also known to reduce, in the classical sense, to holants on graphs via incidence graphs; we provide the details for the sake of self-containment:
To this end, we let $\mathcal{S}$ be a finite set of signatures. We consider a hypergraph $G$ and an assignment $\{s_v\}_{v \in V(G)}$ of signatures from $\mathcal{S}$ to its vertices. It is folklore that any hypergraph can be represented as a bipartite incidence graph $B$ as follows. We let $V(B) = L\,\dot\cup\,R$ and consider two arbitrary bijections $\pi_L : V(G) \to L$ and $\pi_R : E(G) \to R$. Furthermore, for each $v \in V(G)$ and each $e \in E(G)$ such that $v \in e$, we add $\{\pi_L(v), \pi_R(e)\}$ to $E(B)$. Then, we consider the following assignment $\{s'_v\}_{v \in V(B)}$ of signatures to $V(B)$: for each $v \in L$ we assign to $v$ the signature of $\pi_L^{-1}(v)$ and for each $u \in R$, we assign to $u$ the signature $=_\ell : \{0,1\}^\ell \to \{0,1\}$ where $\ell = |\pi^{-1}_R(e)|$ and $=_\ell$ evaluates to one if and only if all of its arguments are equal. It can be readily verified that 

\begin{equation}\label{eq:introHolantReduction}
\uncolholant(G, \{s_v\}_{v \in V(G)}) = \uncolholant(B, \{s'_v\}_{v \in V(B)})\,,
\end{equation}
where $s'_v$ is either in $\mathcal{S}$ or is an equality signature $=_{\ell}$ for some $\ell \in \mathbb{N}$. In other words, holants on hypergraphs reduce to holants on (bipartite) graphs assuming ---besides $\mathcal{S}$--- freely available equality signatures.

It is known that signatures for equality (i.e., $=_\ell$) are part of the tractable family of affine signatures, implying the easiness of holant problems on hypergraphs. Moreover, hardness of holants on graphs transfer directly to holant problems on hypergraphs since the former is a special case of the latter.

However, we point out that the aforementioned trivial equivalence already becomes incomplete when introducing even mild natural restrictions to the problem, e.g., restricting holants to uniform hypergraphs of fixed rank (i.e., of upper-bounded maximum hyperedge-size), or, more importantly, by not allowing freely available equality signatures, which cases are considered and remedied in this work. Note that the latter case is realised when we only consider signatures of infinite domain which are our main consideration. It is easy to see that such signatures cannot model equality signatures.

\subsection{Our Contributions}
We provide exhaustive complexity dichotomy results both with respect to classical complexity theory ($\mathrm{P}$ vs.\ $\#\mathrm{P}$) and with respect to parameterised complexity theory ($\mathrm{FPT}$ vs.\ $\#\mathrm{W}[1]$). 

We first state our main results in the parameterised setting. As it has been already pointed out, the problem $\uni{\constuncolholantprob{d}}(\mathcal{S})$ is a ``slice'' of $\uncolholantprob(\mathcal{S})$. The above implies that we need not state two complexity dichotomies, since an FPT algorithm for the latter problem implies an FPT algorithm for the former (w.r.t. the respective parametrisation) and if any slice of $\uncolholantprob(\mathcal{S})$ is hard, then $\uncolholantprob(\mathcal{S})$ must already be hard.
Hence, our main result ---stated formally below--- consists of 
\begin{itemize}
\item[1.] an FPT algorithm for $\uncolholantprob(\mathcal{S})$, for finite sets of signatures $\mathcal{S}$ of type $\mathbb{T}[1]$, which also addresses the tractable case in which $\mathcal{S}$ contains signatures $s$ with $s(0) = 0$. 

\item[2.] $\#\mathrm{W}[1]$-hardness results for the problem $\uni{\constuncolholantprob{d}}(\mathcal{S})$, for any $d \geq 2$.
\end{itemize}

We recall that the parameterised complexity in the special case of $\uni{\constuncolholantprob{2}}(\mathcal{S})$, that is, the case of graphs, has already been established in~\cite{aivasiliotis_et_al:LIPIcs.ICALP.2025.7}.

\begin{theorem}[Parameterised Complexity Dichotomy]\label{thm:INTROmainParamThm}\label{main_para_intro}
Let $\mathcal{S}$ be a finite set of signatures. Let furthermore $\mathcal{S}_0 = \{s \in \mathcal{S} \mid s(0) = 0\}$. 
\begin{itemize}
\item[1.] If $\mathcal{S} \setminus \mathcal{S}_0$ is of type $\mathbb{T}[1]$, then $\uncolholantprob(\mathcal{S})$ can be solved in FPT-(near)-linear time, that is, in time $f(k, \hs) \cdot \tilde{O}(|\Omega|)$, where $\hs$ is the rank of the underlying hypergraph of $\Omega$.
\item[2.] If $\mathcal{S}\setminus \mathcal{S}_0$ is not of type $\mathbb{T}[1]$, then $\uncolholantprob(\mathcal{S})$ is $\#\mathrm{W}[1]$-hard.\\ In particular, for any $d \in \mathbb{Z}_{\geq 2}$, the problem $\uni{\constuncolholantprob{d}}(\mathcal{S})$ is $\#\mathrm{W}[1]$-hard. Furthermore, if $\mathcal{S} \setminus \mathcal{S}_0$ is of type $\mathbb{T}[\infty]$, then $\uni{\constuncolholantprob{d}}(\mathcal{S})$ cannot be solved in time $f(k)\cdot |V(\Omega)|^{o(k/\log(k))}$, for any computable function $f$, unless ETH fails. \qed
\end{itemize}
\end{theorem}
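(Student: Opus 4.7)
The plan is to treat the two halves of the dichotomy separately: first an FPT-near-linear algorithm for type $\mathbb{T}[1]$, and then $\#\mathrm{W}[1]$-hardness (with the sharper ETH-based lower bound in the $\mathbb{T}[\infty]$ sub-case) for $d$-uniform hypergraphs of every rank $d\geq 2$.

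\textbf{The algorithm.} For every $s\in \mathcal{S}\setminus \mathcal{S}_0$ of type $\mathbb{T}[1]$, the explicit characterisation inherited from \cite{aivasiliotis_et_al:LIPIcs.ICALP.2025.7} gives $s(n)=s(0)\cdot \alpha_s^n$. Hence, for any edge subset $A\subseteq E(G)$ the vertex-wise product factors across the incident hyperedges, so the contribution of $A$ to $\uncolholant(\Omega,k)$ equals $\bigl(\prod_v s_v(0)\bigr)\cdot \prod_{e\in A} w(e)$ with $w(e)=\prod_{v\in e}\alpha_{s_v}$. Vertices whose signature lies in $\mathcal{S}_0$ break the factorisation because $s_v(0)=0$; these I handle first. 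Each such vertex must be touched by at least one edge of $A$, hence there are at most $k\hs$ of them in any surviving configuration, and their incidences with $A$ can be enumerated in FPT time. Conditional on a fixed incidence pattern, the remaining sum is an elementary symmetric polynomial of degree at most $k$ in the edge weights $w(e)$, which is computable in time $O(k\cdot |E(G)|)$ by the standard edge-by-edge recursion. Combining this with a colour-coding-style step to avoid overcounting yields the claimed FPT-near-linear running time $f(k,\hs)\cdot \tilde O(|\Omega|)$.

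\textbf{The hardness.} For $d=2$ the hardness is exactly Theorem~\ref{thm:main_uncol}, so I focus on $d\geq 3$. The strategy is to extend the homomorphism-basis method of Curticapean, Dell and Marx~\cite{Curticapean15} from graphs to $d$-uniform hypergraphs and then to follow the blueprint of \cite{aivasiliotis_et_al:LIPIcs.ICALP.2025.7}. Concretely, I write
\[
\uncolholant(\Omega,k)\;=\;\sum_{H} a_{H}\cdot \#\mathsf{Hom}(H\to \Omega')\,,
\]
as a finite linear combination over small $d$-uniform pattern hypergraphs $H$, where the coefficients $a_H$ arise from a Möbius-style inversion whose leading terms are exactly the fingerprints $\chi(a,s)$. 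Since $\mathcal{S}\setminus \mathcal{S}_0$ is not of type $\mathbb{T}[1]$, there exist $s\in \mathcal{S}\setminus \mathcal{S}_0$ and $a\geq 2$ with $\chi(a,s)\neq 0$, which forces a nonzero coefficient on some combinatorially non-trivial pattern $H^\ast$. Hardness is then transferred to $\uni{\constuncolholantprob{d}}(\mathcal{S})$ by polynomial-interpolation arguments applied to gadget replacements that isolate the $H^\ast$-component of the basis. In the $\mathbb{T}[\infty]$ case the witnessing $a$ can be taken to be at least $3$, so $H^\ast$ can be chosen rich enough to contain a clique-like substructure, and the ETH-based $f(k)\cdot |V(\Omega)|^{o(k/\log k)}$ lower bound for counting $k$-cliques transfers through this reduction.

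\textbf{Main obstacle.} The principal technical hurdle is the construction, within the $d$-uniform regime, of the hypergraph gadgets required to realise the pattern hypergraphs produced by the basis-coefficient analysis and to isolate the corresponding coefficient without cancellation. Unlike in graphs, $d$-uniform gadgets cannot freely attach arbitrary degree sequences to their boundary vertices; realisability conditions on $d$-uniform degree sequences constrain the admissible constructions. Establishing that the necessary patterns admit such realisations, and that the isolation step goes through --- presumably by invoking Curticapean's CFI Filter technique alluded to in the abstract --- is expected to be the heart of the proof and the locus of most of the new work beyond \cite{aivasiliotis_et_al:LIPIcs.ICALP.2025.7}.
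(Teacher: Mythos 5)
Your algorithmic half is essentially sound and takes a mildly different route from the paper: where you factor the contribution of an edge set $A$ as $\bigl(\prod_v s_v(0)\bigr)\cdot\prod_{e\in A}w(e)$ and evaluate the resulting degree-$k$ elementary symmetric polynomial by the standard $O(k\cdot|E(G)|)$ recursion, the paper instead partitions $E(G)$ into classes according to the signature profile of each hyperedge and sums binomial coefficients over partitions of $k$. Both yield $f(k,\hs)\cdot\tilde{O}(|\Omega|)$, and your treatment of the $\mathcal{S}_0$-vertices (at most $k\hs$ of them in any surviving configuration, enumerate their incidence patterns in FPT time) matches the paper's Lemma~\ref{lem:linearSigAlgoWithZeros}. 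The ``colour-coding-style step to avoid overcounting'' is unnecessary: each $k$-subset is counted exactly once by the symmetric-polynomial recursion.

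The hardness half has genuine gaps. First, the paper does not run the homomorphism-basis machinery uniformly over all non-$\mathbb{T}[1]$ sets for $d\geq 3$; it splits on $b$, the least positive integer with $s(b)\neq 0$. For $b\in\{1,2\}$ it gives direct gadget reductions from $\uni{\constuncolholantprob{2}}(\{s\})$ (attaching $d-2$ pendant vertices to every edge, respectively replacing every edge by a connected $d$-uniform hypergraph with exactly two degree-$1$ vertices, whose existence follows from realisability of $d$-uniform degree sequences), and only for $b\geq 3$ does it use the hypergraph homomorphism basis --- precisely because there $s(i)=0$ for $0<i<b$ forces the unique surviving pattern with $k=|E(F^*)|$ hyperedges to be the $d$-uniform $b$-regular hypergraph $F^*$ itself, so the coefficient is simply $\#\mathsf{Aut}(F^*)^{-1}\cdot s(b)^{|V(F^*)|}\neq 0$. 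Your plan of extracting ``some combinatorially non-trivial pattern $H^\ast$'' from $\chi(a,s)\neq 0$ in the $d$-uniform basis for arbitrary non-$\mathbb{T}[1]$ signatures is unsubstantiated: no such coefficient computation is carried out (or needed) for $b\leq 2$, and transplanting the generalised-fingerprint analysis of the graph case to hypergraphs is exactly the work your sketch presupposes without supplying. Second, your ETH argument via ``clique-like substructures'' cannot deliver the claimed $f(k)\cdot|V(\Omega)|^{o(k/\log k)}$ bound: the parameter is the number of hyperedges, so a clique on $t$ vertices has $k=\Theta(t^2)$ and would only rule out time $|V(\Omega)|^{o(\sqrt{k})}$. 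What is needed --- and what the paper constructs explicitly, by gluing Bollob\'as expanders to auxiliary $(b-1)$-regular $d$-uniform hypergraphs --- is a family of $d$-uniform $b$-regular hypergraphs whose treewidth is \emph{linear} in the number of hyperedges, to which the Dalmau--Jonsson/Marx lower bound for $\#\textsc{Hom}$ from unbounded-treewidth classes then applies. Finally, CFI filters play no role in the parameterised hardness; they are used only for the $\#\mathrm{P}$-hardness results.
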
 
Note that the previous result implies that all FPT cases are in fact FPT-near-linear time solvable.
We also classify the aforementioned problems in the non-parameterised setting. We point out that such a classification is novel \textit{even} for the case of graphs (that is, $d = 2$). 

\begin{theorem}[Classical Complexity Dichotomy]\label{thm:INTROPerfectMatchingReduction}\label{main_classical_intro}
Let $\mathcal{S}$ be a finite set of signatures and let $d\in \mathbb{Z}_{\geq 2}$.
\begin{itemize}
\item[1.] If $\mathcal{S}$ is of type $\mathbb{T}[1]$ (implying that all signatures $s\in \mathcal{S}$ satisfy $s(0)\neq 0$), then $\uni{\constuncolholantprob{d}}(\mathcal{S})$ can be solved in polynomial time.
\item[2.] Otherwise, $\uni{\constuncolholantprob{d}}(\mathcal{S})$ is $\#\mathrm{P}$-hard. \qed
\end{itemize}
\end{theorem}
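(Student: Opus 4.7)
The plan is to prove the two directions separately, using the explicit exponential structure of type-$\mathbb{T}[1]$ signatures for the tractable side and hypergraph gadget constructions for the hard side.

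For tractability, assume $\mathcal{S}$ is of type $\mathbb{T}[1]$. Inverting the fingerprint formula of \Cref{def:fingerprint_intro} via the standard cumulant--moment correspondence (the cumulants vanish for all $a\geq 2$ precisely when the generating sequence is geometric) shows that every non-trivial $s \in \mathcal{S}$ satisfies $s(n) = s(0)\cdot \alpha_s^n$ for some $\alpha_s \in \mathbb{C}$; an identically-zero signature forces the whole Holant to zero and is detected in a single scan. Substituting this form into the definition of the Holant and interchanging the products over vertices and edges yields
\[
\uncolholant(\Omega, k) \;=\; \Bigl(\prod_{v \in V(G)} s_v(0)\Bigr) \sum_{\substack{A \subseteq E(G) \\ |A|=k}} \prod_{e \in A} w(e), \qquad w(e) \;:=\; \prod_{v \in e} \alpha_{s_v}.
\]
The inner sum is the $k$-th elementary symmetric polynomial in the edge weights $\{w(e)\}_{e \in E(G)}$, computable in time $O(k\cdot |E(G)|)$ via the recurrence $e_k(S \cup \{x\}) = e_k(S) + x\cdot e_{k-1}(S)$.

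For hardness, I would split according to whether $\mathcal{S}$ contains a signature $s$ with $s(0)=0$ that is not identically zero. If no such signature exists, then every non-trivial $s \in \mathcal{S}$ has $s(0) \neq 0$, and failure of type $\mathbb{T}[1]$ forces some $s \in \mathcal{S}$ to satisfy $\chi(a,s)\neq 0$ for some $a\geq 2$. Here I would lift the graph-level hardness underlying \Cref{thm:main_uncol} to $d$-uniform hypergraphs through a gadget that replaces each graph edge by a $d$-uniform substructure whose ``effective'' signature reproduces the graph-edge behaviour. Because the reduction used in \cite{aivasiliotis_et_al:LIPIcs.ICALP.2025.7} is at its core an interpolation-style Turing reduction from classical $\#\mathrm{P}$-hard counting problems (such as counting matchings), once the gadget is in place it lifts to $\#\mathrm{P}$-hardness of $\uni{\constuncolholantprob{d}}(\mathcal{S})$ in the non-parameterised setting.

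In the remaining subcase some $s \in \mathcal{S}$ satisfies $s(0)=0$ but is not identically zero. Here I would reduce from counting perfect matchings in $d$-uniform hypergraphs, which is classically $\#\mathrm{P}$-hard for every $d \geq 2$. Placing $s$ at every vertex of a $d$-uniform input $H$ and taking $k = |V(H)|/d$ restricts the sum to those edge-subsets in which every vertex has at least one incident selected hyperedge; combining this with an auxiliary gadget that isolates the contribution from vertices of incidence count exactly one (for instance via interpolation across several multiplicities realisable by parallel $d$-uniform copies) should recover the count of perfect matchings of $H$. The main obstacle I anticipate is the gadget construction itself: in a $d$-uniform hypergraph every gadget must also be $d$-uniform, so unlike in the graph case one cannot freely edit adjacencies, and realising a prescribed effective signature becomes a combinatorial realisability question for degree sequences in $d$-uniform hypergraphs. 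This is precisely the point at which the Erd\H{o}s--Gallai-type existence results flagged in the abstract have to be invoked; establishing that the gadgets exist and have the intended effective signatures is where I expect most of the work to lie.
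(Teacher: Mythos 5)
Your tractability argument is correct and in fact slightly slicker than the paper's: once $s(n)=s(0)\alpha_s^n$ is established, pulling the product over to the edges and evaluating the $k$-th elementary symmetric polynomial of the edge weights is a clean $O(k\cdot|E(G)|)$ algorithm, whereas the paper (\Cref{lem:linearSigAlgo}) instead partitions $E(G)$ by signature multiplicities and sums over compositions of $k$. Your treatment of the case $s(0)=0$ is also headed the right way: the paper likewise reduces from $\#\textsc{PerfectMatching}^d$, and the obstacle you flag (padding each original vertex up to incidence $b-1$ with forced hyperedges, which is a degree-sequence realisability question for $d$-uniform hypergraphs) is exactly what \Cref{thm:HyperPerfectMatchingReduction} resolves via the realisability theorem of Frosini et al.\ and the connected regular gadgets of \Cref{lem:connectedHypergraphs}.

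The genuine gap is in the remaining hardness case, where all signatures satisfy $s(0)\neq 0$ but $\mathcal{S}$ is of type $\mathbb{T}[2]$ or $\mathbb{T}[\infty]$. You claim that the hardness underlying \Cref{thm:main_uncol} ``is at its core an interpolation-style Turing reduction from classical $\#\mathrm{P}$-hard counting problems'' and therefore lifts to $\#\mathrm{P}$-hardness once a uniformity gadget is supplied. This is not so: that reduction is complexity monotonicity in the homomorphism basis, which isolates a pattern $H$ on $\Theta(k)$ vertices by querying the Holant on superpolynomially many (in $k$) tensor-product instances. It is an FPT Turing reduction, not a polynomial-time one, and $k$ here grows with $|V(H)|$, so no $\#\mathrm{P}$-hardness follows. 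This is precisely why the paper calls the classical dichotomy ``novel even for $d=2$'' and why \Cref{sec:HardnessCFI} exists: to get a \emph{polynomial-time} reduction one needs Curticapean's CFI-filter version of complexity monotonicity (\Cref{thm:raduPoly}), whose running time carries a factor $4^{\Delta(H)}$, forcing one to (i) prove that \emph{bounded-degree} regular patterns survive with nonzero coefficients (the new \Cref{lem:non-zero-coeff}, showing $4$-regular graphs survive at $k=3|V(F)|$ for type $\mathbb{T}[2]$ --- cliques, which survive in the parameterised argument, are useless here), and (ii) establish $\#\mathrm{P}$-hardness of counting colourful homomorphisms from $r$-regular graphs deterministically, which the paper does via explicit toroidal-wall constructions. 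None of this is replaceable by ``the gadget plus the existing reduction.'' Relatedly, for $d>2$ with $s(1)=s(2)=0$ (which falls inside this case, since such an $s$ still has $s(0)\neq 0$), your edge-replacement gadget does not exist, and the paper has to abandon the homomorphism basis entirely and build a separate global reduction from $\#\textsc{PerfectMatching}^d$ (\Cref{thm:SharpPHardHyper}). As written, your proposal does not prove part 2 of the theorem in this regime.
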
  
Note that the second part (2.) in the previous result implies that $\uni{\constuncolholantprob{d}}(\mathcal{S})$ is $\#\mathrm{P}$-hard whenever $\mathcal{S}$ contains a signature $s$ with $s(0)=0$ but $s(x)\neq 0$ for some $x>0$.

\begin{remark}[Finite signature sets vs.\ finite constraint languages]
   Finite signature sets are not the same as finite constraint languages, since our signatures have domain $\{0,1\}^\ast$, but constraints come with a fixed arity. 
\end{remark}

As we explain in \cref{appendix:VCSPsHolants} our theorems immediately imply the analogous classifications for (the classical and parameterised) complexity of $\textsc{p-VCSP}$.
We now give two applications of our main results:

\begin{corollary}
    For each prime $p$, the problem $\#\textsc{Codeword}_p$ is $\#\mathrm{P}$-hard and $\#\mathrm{W}[1]$-hard. Moreover, assuming ETH, $\#\textsc{Codeword}_p$ cannot be solved in time $f(k)\cdot |A|^{o(k/\log k)}$ for any function $f$, where $A$ is the input matrix and $k \in \mathbb{N}$ is the problem parameter.
\end{corollary}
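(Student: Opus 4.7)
The plan is to pick the symmetric signature $s_p : \mathbb{N} \to \{0,1\}$ defined by $s_p(n) = 1$ iff $p \mid n$, verify that $\mathcal{S}_p := \{s_p\}$ is of type $\mathbb{T}[\infty]$, and observe that $\uni{\constuncolholantprob{d}}(\mathcal{S}_p)$ parsimoniously reduces to $\#\textsc{Codeword}_p$ for any fixed $d \geq 2$; the corollary then follows by invoking Theorems \ref{main_classical_intro} and \ref{main_para_intro} as a black box.

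To handle the type, I would unpack Definition \ref{def:fingerprint_intro}. The fingerprint $\chi(a, s_p)$ is well-defined because $s_p(0) = 1$, and the key observation is that only set partitions $\sigma$ of $[a]$ whose every block has size divisible by $p$ can contribute, since $s_p$ vanishes on all other block sizes. For $p \geq 3$, choosing $a = p$ leaves only the single-block partition, so $\chi(p, s_p) = s_p(p) = 1 \neq 0$ with $p \geq 3$. For $p = 2$, only the three pair-partitions and the single-block partition of $[4]$ contribute, and a direct tally yields $\chi(4, s_2) = -2$. Hence $\mathcal{S}_p$ is of type $\mathbb{T}[\infty]$ in the sense of Definition \ref{def:sigtype_intro}.

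For the reduction, given a $d$-uniform signature grid $\Omega = (G, \{s_p\}_v)$ with parameter $k$, I would set $M \in \{0,1\}^{V(G) \times E(G)}$ to be the vertex--hyperedge incidence matrix of $G$, viewed over $\mathbb{Z}/p\mathbb{Z}$. Identifying an edge assignment $\alpha$ with $x \in \{0,1\}^{E(G)}$, the coordinate $(Mx)_v$ is the number of selected hyperedges incident to $v$, so the factor $s_p(\alpha|_{E_v})$ fires iff $(Mx)_v \equiv 0 \pmod p$. Therefore $\uncolholant(\Omega, k)$ equals the number of weight-$k$ solutions to $Mx \equiv 0 \pmod p$, i.e.\ an instance of $\#\textsc{Codeword}_p$ with the same parameter $k$ and $|M| = O(|V(\Omega)|^{d+1})$.

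Combining the two steps: since $\mathcal{S}_p$ is not of type $\mathbb{T}[1]$, Theorem \ref{main_classical_intro} yields $\#\mathrm{P}$-hardness of $\uni{\constuncolholantprob{d}}(\mathcal{S}_p)$, and since it is of type $\mathbb{T}[\infty]$, Theorem \ref{main_para_intro} yields $\#\mathrm{W}[1]$-hardness together with the ETH lower bound $f(k)\cdot|V(\Omega)|^{o(k/\log k)}$. The first two transfer to $\#\textsc{Codeword}_p$ directly via the reduction. For the ETH bound, any algorithm of time $f(k)\cdot|A|^{o(k/\log k)}$ for $\#\textsc{Codeword}_p$ would, composed with the reduction and the polynomial bound on $|M|$, achieve time $f(k)\cdot|V(\Omega)|^{o(k/\log k)}$ on the Holant problem, contradicting the lower bound. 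The only ingredient with real content is the fingerprint computation, which is routine thanks to the many zero entries of $s_p$; everything else is bookkeeping on top of the paper's main theorems.
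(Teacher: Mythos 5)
Your proposal is correct and follows essentially the same route as the paper: the signature $s_p$ with $s_p(n)=1$ iff $p\mid n$, the fingerprint computation showing type $\mathbb{T}[\infty]$, and the incidence-matrix reduction from the Holant problem to $\#\textsc{Codeword}_p$, with the hardness imported from Theorems~\ref{main_para_intro} and~\ref{main_classical_intro}. One point in your favour: for $p=2$ the paper asserts $\chi(3,s_2)\neq 0$, but every partition of $[3]$ contains an odd-sized block, so $\chi(3,s_2)=0$; your choice of $a=4$ with $\chi(4,s_2)=1-3=-2\neq 0$ is the correct way to witness type $\mathbb{T}[\infty]$ there.
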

\begin{proof}
    The problem $\#\textsc{Codeword}_p$ reduces from $\uncolholantprob(\{s_p\})$ where $s_p(x)=1$ if $x \equiv 0 \mod p$ and $s_p(x)=0$ otherwise. To verify the existence of this reduction, observe that for each vertex $v$ of the signature grid and its incident edges $e_1,\dots,e_\ell$, we just need to add the equation $\sum_{i=1}^\ell e_i \equiv 0 \mod p$; in fact, this reduction applies in both directions if we allow multiple copies of the same hyperedge in the signature grid (this happens if the matrix of the system has two identical columns).
    
    Clearly, $s_p(0)\neq 0$. Now consider $\chi(p,s_p)$. There is only one partition of $[p]$ that contains a block of size $p$: this is the coarsest partition $\top$ containing only one block $[p]$. Clearly, $s_p(|[p]|)=s_p(p)=1$. All further partitions of $[p]$ contain a block $B$ with $1\leq |B|\leq p-1$, hence $s(|B|)=0$. Consequently,
    \[\chi(p,s_p)= (-1)^{|\top|-1} (|\top|-1)! \cdot 1 \neq 0\,.\]
    For $p>2$, this shows that $\{s_p\}$ is of type $\mathbb{T}[\infty]$. For $p=2$, it is easy to verify that $\chi(3,s_p)\neq 0$, hence the type is also $\mathbb{T}[\infty]$. Intractability then immediately follows from Theorems~\ref{main_para_intro} and~\ref{main_classical_intro}, and from the fact that $\uni{\constuncolholantprob{d}}(\mathcal{S})$ reduces to $\uncolholantprob(\mathcal{S})$ for any $d$ and $\mathcal{S}$.
\end{proof}

For the next application, we consider the problem of counting $k$-matchings in $d$-uniform hypergraphs. It is well-known that this problem is $\#\mathrm{P}$-hard since it subsumes as a special case the problem of counting perfect matchings in $d$-uniform hypergraphs for which $\#\mathrm{P}$-hardness is known (in particular, the case $d = 2$ was shown by Valiant \cite{valiant1979complexity}, while the case $d > 2$ was shown later by Creignou \cite{creignou1996complexity}). For $d=2$, it took almost a decade to strengthen $\#\mathrm{P}$-hardness to $\#\mathrm{W}[1]$-hardness.\footnote{The $\#\mathrm{W}[1]$-hardness of counting $k$-matchings in graphs was first conjectured by Flum and Grohe in 2004~\cite{FlumG04}, and proved by Curticapean in 2013~\cite{Curticapean13}.} We obtain $\#\mathrm{W}[1]$-hardness for all $d\geq 2$ as an easy application of our main result.

\begin{corollary}\label{cor:matchingsComplexityClass}
    For each $d\in \mathbb{Z}_{\geq 2}$, the problem of counting $k$-matchings in a $d$-uniform hypergraph $G$ is $\#\mathrm{W}[1]$-hard and cannot be solved in time $f(k)\cdot |G|^{o(k/\log k)}$ for any function~$f$, unless ETH fails.
\end{corollary}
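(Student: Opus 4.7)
The plan is to realise the problem as a parameterised Holant problem on $d$-uniform hypergraphs with a single, explicit signature, and then apply the hard part of Theorem~\ref{main_para_intro} (the ETH lower bound for signature sets of type $\mathbb{T}[\infty]$). First, given a $d$-uniform hypergraph $G$, I would build the signature grid $\Omega=(G,\{s_v\}_{v\in V(G)})$ by assigning to every vertex $v$ the signature $\mathsf{hw}_{\leq 1}$ defined by $\mathsf{hw}_{\leq 1}(0)=\mathsf{hw}_{\leq 1}(1)=1$ and $\mathsf{hw}_{\leq 1}(x)=0$ for $x\geq 2$. Using the simplified form of the Holant recalled in the excerpt, we get
\[
\uncolholant(\Omega,k) = \sum_{\substack{A\subseteq E(G)\\ |A|=k}}\prod_{v\in V(G)} \mathsf{hw}_{\leq 1}(|A\cap E_G(v)|)\,,
\]
and the product evaluates to $1$ exactly when every vertex is incident to at most one hyperedge of $A$, i.e., when $A$ is a $k$-matching. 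Hence $\uncolholant(\Omega,k)$ equals the number of $k$-matchings of $G$, giving a linear-time reduction from $\uni{\constuncolholantprob{d}}(\{\mathsf{hw}_{\leq 1}\})$ to counting $k$-matchings in $d$-uniform hypergraphs.

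Second, I would compute the fingerprint to classify the type of $\mathcal{S}=\{\mathsf{hw}_{\leq 1}\}$. Since $\mathsf{hw}_{\leq 1}(|B|)=0$ whenever $|B|\geq 2$, only the partition of $[a]$ into singletons contributes to the sum in Definition~\ref{def:fingerprint_intro}, yielding
\[
\chi(a,\mathsf{hw}_{\leq 1}) = (-1)^{a-1}(a-1)!\,,
\]
which is nonzero for every $a\geq 1$. In particular $\chi(a,\mathsf{hw}_{\leq 1})\neq 0$ for some $a\geq 3$, so $\mathcal{S}$ is of type $\mathbb{T}[\infty]$. Since $\mathsf{hw}_{\leq 1}(0)=1\neq 0$, we also have $\mathcal{S}_0=\emptyset$, and therefore $\mathcal{S}\setminus\mathcal{S}_0=\mathcal{S}$ is of type $\mathbb{T}[\infty]$.

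Third, I would invoke Theorem~\ref{main_para_intro}(2): because $\mathcal{S}\setminus\mathcal{S}_0$ is of type $\mathbb{T}[\infty]$, the problem $\uni{\constuncolholantprob{d}}(\{\mathsf{hw}_{\leq 1}\})$ is $\#\mathrm{W}[1]$-hard and admits no $f(k)\cdot |V(\Omega)|^{o(k/\log k)}$ algorithm unless ETH fails. The reduction from the first step transfers both lower bounds to counting $k$-matchings in $d$-uniform hypergraphs, noting that $|V(\Omega)|\leq |G|$.

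There is essentially no obstacle here: the only nontrivial step is verifying the fingerprint computation, which is straightforward because $\mathsf{hw}_{\leq 1}$ is supported only on $\{0,1\}$, forcing all non-finest partitions to vanish in the Möbius-type sum. The entire argument is a direct application of the main dichotomy to a single, transparent signature.
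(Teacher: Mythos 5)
Your proposal is correct and follows essentially the same route as the paper: identify counting $k$-matchings with $\uni{\constuncolholantprob{d}}(\{\mathsf{hw}_{\leq 1}\})$, verify $\mathsf{hw}_{\leq 1}(0)\neq 0$ and $\chi(3,\mathsf{hw}_{\leq 1})\neq 0$ so that the set is of type $\mathbb{T}[\infty]$, and invoke Theorem~\ref{main_para_intro}. Your explicit computation $\chi(a,\mathsf{hw}_{\leq 1})=(-1)^{a-1}(a-1)!$ (only the finest partition survives) is exactly the verification the paper leaves implicit.
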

\begin{proof}
    We just need to show that the signature set $\{\mathsf{hw}_{\leq 1}\}$ is of type $\mathbb{T}[\infty]$. Here, $\mathsf{hw}_{\leq 1}(x)=1$ if $x\leq 1$ and $\mathsf{hw}_{\leq 1}(x)=0$ otherwise. The claim then follows immediately from Theorem~\ref{main_para_intro}, since the problem is identical to $\uni{\constuncolholantprob{d}}(\{\mathsf{hw}_{\leq 1}\})$. It is easy to verify that $\chi(3,\mathsf{hw}_{\leq 1})\neq 0$. Moreover, $\mathsf{hw}_{\leq 1}(0)\neq 0$. Hence $\{\mathsf{hw}_{\leq 1}\}$ is of type $\mathbb{T}[\infty]$.
\end{proof}

In particular, \Cref{cor:matchingsComplexityClass} also follows from the novel complete complexity classification of $\textsc{Factor}_d(S)$ which we establish below. 

\begin{corollary}
Let $S \subseteq \mathbb{Z}_{\geq 0}$ be a finite set. For any $d \in \mathbb{Z}_{\geq 2}$,
\begin{enumerate}
\item $\text{\sc{Factor}}_d(S)$ is $\#\mathrm{P}$-hard.
\item If $0 \notin S$, $\text{\sc{Factor}}_d(S)$ is solvable in FPT-(near)-linear time, that is, in time $f(k) \cdot \tilde{O}(|G|)$, for some computable function $f$.
\item If $0 \in S$, $\text{\sc{Factor}}_d(S)$ is $\#\mathrm{W}[1]$-hard and cannot be solved in time $f(k) \cdot |V(G)|^{o(k/\log(k))}$, for any computable function $f$, unless ETH fails.
\end{enumerate}
\end{corollary}
\begin{proof}
Let $(G, k)$ be an instance of $\textsc{Factor}_d(S)$. Consider the signature $t_S : \mathbb{Z}_{\geq 0} \to \{0,1\}$ where $t_S(x) = 1$ if and only if $x \in S$. It is easy to verify that $\textsc{Factor}_d(S)$ is equivalent\footnote{Clearly, with respect to parameterised as well as non-parameterised polynomial-time Turing-reductions.} to $\uni{\constuncolholantprob{d}}(\{t_S\})$ via a reduction that associates to $(G, k)$ a holant instance $(\Omega, k)$, where the underlying hypergraph of $\Omega$ is $G$ and each vertex of $\Omega$ is equipped with signature $t_S$.

First, note that if $0 \notin S$ (implying that $t_S(0) = 0)$, it follows from \Cref{thm:INTROmainParamThm} that $\textsc{Factor}_d(S)$ is solvable in FPT-(near)-linear time. Furthermore, \Cref{thm:INTROPerfectMatchingReduction} implies that for any $d \in \mathbb{Z}_{\geq 2}$, $\textsc{Factor}_d(S)$ is $\#\mathrm{P}$-hard.

Now, assume that $0 \in S$. It has been shown in (the full version of) \cite[Corollary 6.22]{aivasiliotis_et_al:LIPIcs.ICALP.2025.7} that $t_S$ is of type $\mathbb{T}[\infty]$. Hence, \Cref{thm:INTROmainParamThm} implies that for any $d \in \mathbb{Z}_{\geq 2}$, $\textsc{Factor}_d(S)$ is $\#\mathrm{W}[1]$-hard and cannot be solved in time $f(k)\cdot |V(G)|^{o(k/\log(k))}$, for any computable function $f$, unless ETH fails. Furthermore, \Cref{thm:INTROPerfectMatchingReduction} implies that for any $d \in \mathbb{Z}_{\geq 2}$, $\textsc{Factor}_d(S)$ is $\#\mathrm{P}$-hard.

\end{proof}

\subsection{Our Techniques}

Before we conclude our introduction, we give a brief exposition of the machinery we are employing for deriving our results.

\subsubsection{Gadgets}

One of the tools that we employ are \textit{gadgets}. Gadgets have mainly been used in counting problems outside of the holant framework. We point out that our gadgets will be much more general than what the familiar reader might know as \textit{matchgates} which have been extensively used within the holant framework, the details of which are beyond the scope of this paper.

Roughly speaking, gadgets are signature grids $\Omega_F$ that are \textit{embedded} on other signature grids $\Omega_G$, typically by identifying vertices of (a copy of) $\Omega_F$ with vertices of $\Omega_G$ and possibly forming new hyperedges (e.g., between different copies of gadgets). The above implies that gadgets induce a mapping between signature grids (possibly allowing different signatures). In this work, the construction of gadgets is based on the realisability of \textit{degree sequences} of certain suitable hypergraphs that also meet some structural properties. For example, some of our gadgets rely on the existence of infinitely many $d$-uniform $b$-regular hypergraphs for $d, b \geq 2$ that are also connected.

For our parameterised holant problems, we will consider a more general mapping $f$ that maps a parameterised holant instance $(\Omega, k)$ (instead of just a signature grid) into another $(\Omega', k') = f(\Omega, k)$, that can be computed efficiently with respect to parameterised or classical complexity. The choice of the new parameter $k'$ is based on the following observation: the most useful gadgets are naturally those that preserve the holant value, that is, $\uncolholant(\Omega', k') =   \uncolholant(\Omega, k)$, hence the need for possibly choosing a different parameter. However, the previous requirement is rather strict, and we would also be content if the holant values of the respective holant instances are \textit{efficiently comparable}, in the sense that there is an efficient way to compute $\uncolholant(\Omega, k)$ by computing only $\uncolholant(\Omega', k')$. For our problems, we devise gadgets such that 
\[
\uncolholant(\Omega', k') = \alpha(\Omega, k)\cdot\uncolholant(\Omega, k) + \beta(\Omega, k)\,,
\]
where $\alpha, \beta$ are efficiently computable functions, which implies that we can efficiently compute $\uncolholant(\Omega, k)$ via computing $\uncolholant(\Omega', k')$. More formally, we are interested in those gadgets that induce a mapping $f$ which will imply a polynomial time (or FPT) Turing-reduction between two holant problems (that may not necessarily allow for the same signatures).  

\subsubsection{Graph Motif Parameters}

Another tool in our machinery is the framework of \textit{graph motif parameters} and the property of \textit{complexity monotonicity} that they enjoy, introduced by Curticapean, Dell and Marx \cite{CurticapeanDM17}. 

A \textit{graph parameter} is any function $f : \mathcal{G} \to \mathbb{Q}$ over graphs that is invariant under isomorphisms. A \textit{graph motif parameter} is a graph parameter $f$ that admits the following expansion: there exist pair-wise non-isomorphic \textit{pattern} graphs $H_1, \dots, H_t$ and \textit{coefficients} $\zeta_1, \dots, \zeta_t \in \mathbb{Q}\setminus\{0\}$ such that for any graph $G$, $f(G) = \sum_{i = 1}^{t}\zeta_i\cdot\#\homs{H_i}{G}$, where $\#\homs{H}{G}$ counts the number of homomorphisms, i.e., edge-preserving mappings from $V(H)$ to $V(G)$. Is it known that such an expansion is unique \cite{Lovasz12}. We will sometimes refer to it as the expansion of $f$ into the \textit{homomorphism basis} implying that homomorphism counts span the vector space of graph motif parameters, the details of which are beyond the scope of this introduction.

A well-known result, first established in~\cite{DiazST02} (see also~\cite{CurticapeanDM17}), states that evaluating $\#\homs{H}{G}$ can be done in time $\mathsf{poly}(|V(H)|)\cdot n^{\mathsf{tw}(H)+1}$, where $\mathsf{tw}(H)$ is the treewidth of $H$ and $n = |V(G)|$. Hence, any fixed graph motif parameter can be evaluated in polynomial time, since the pattern graphs are fixed. It is therefore more interesting to investigate the complexity of evaluating a graph motif parameter $f$ from an infinite ---and assumed recursively enumerable--- family $\mathcal{F}$ of graph motif parameters. 

From our discussion above, it follows that if we can bound the treewidth of the pattern graphs featured in \textit{any} graph motif parameter $f \in \mathcal{F}$, then the problem of evaluating any $f \in \mathcal{F}$ is computable in polynomial time.
From the perspective of parameterised complexity, the result above further implies quite straightforwardly that the problem of evaluating $f \in \mathcal{F}$, when parameterised by the maximum number of vertices of the pattern graphs in $f$, is fixed-parameter tractable when the treewidth of pattern graphs in any $f \in \mathcal{F}$ is bounded. In a remarkable result due to Curticapean, Dell and Marx~\cite{CurticapeanDM17} it was shown that the boundedness of the treewidth is the right dichotomy criterion for the aforementioned parameterised problem, in the sense that, if $\mathcal{F}$ contains graph motif parameters with pattern graphs of unbounded treewidth, then the problem is $\#\mathrm{W}[1]$-hard. 
Specifically, it is shown in~\cite{CurticapeanDM17} that, if we can evaluate a graph motif parameter $f$, then we can also evaluate the mapping $\#\homs{H}{\star} : G \mapsto \#\homs{H}{G}$, via FPT Turing-reductions, for any pattern graph $H$ of $f$ as long as its corresponding coefficient is non-zero. This property was coined as \textit{complexity monotonicity} and since its inception, it has yielded an avenue for charting the complexity of many counting problems that can be formulated within the framework of graph motif parameters. In particular, the problem of evaluating a graph motif parameter $f \in \mathcal{F}$ then becomes a combinatorial problem ---arguably of very challenging nature---, since one has to investigate which graphs are supported in the expansion of $f$ in the homomorphism basis, i.e., their coefficient is non-zero. 

Interestingly, Curticapean showed very recently that ---under some technical restrictions--- a polynomial-time analogue of complexity monotonicity is feasible, that is, extracting homomorphism counts from the homomorphism expansion of a graph motif parameter can be done in polynomial time \cite{curticapean2024count}. If the technical conditions apply, this result can be used to infer $\#\mathrm{P}$-hardness for the problem of evaluating graph motif parameters.

The property of complexity monotonicity also applies to graph motif parameters over more general structures, e.g., hypergraphs that may also feature vertex-colourings and it is otherwise known as \textit{Dedekind interpolation} (see e.g., the full version of \cite[Theorem 18]{10.1145/3564246.3585204}).

In particular, signature grids can be equivalently seen as vertex-coloured hypergraphs where the colours are given by the signatures. Letting $\mathcal{G}(\mathcal{S})$ denote the set of all signature grids over $\mathcal{S}$, we can show that for any $k \in \mathbb{Z}_{\geq 0}$, the mapping $f : \Omega \in \mathcal{G}(\mathcal{S}) \to \uncolholant(\Omega, k)$ is a graph motif parameter (see \Cref{sec:HolantsInHomBasis} for a proof) and thus we can formulate and study holant problems within the framework of graph motif parameters.

\subsection{Conclusion and Future Work}\label{sec:Conclusion}

Following the recent advancements of Aivasiliotis et al.\ ~\cite{aivasiliotis_et_al:LIPIcs.ICALP.2025.7}, in the line of work for charting the complexity of parameterised holant problems introduced by Curticapean~\cite{Curticapean15}, we introduce generalised parameterised holant problems that can model parameterised VCSPs. The latter can be seen as the counting counterpart (that assumes complex values) of the parameterised satisfiability problem introduced and classified by Marx in \cite{Marx05}, the parametrisation of which is given by the solution-size which is restricted to some $k \in \mathbb{Z}_{\geq 0}$. 

We fully classify the complexity of parameterised VCSPs (as well as the complexity of the non-parameterised equivalent problems) for unbounded symmetric constraints. In particular, several natural variants are also considered subject to additional restrictions, e.g., when no two variables can appear in the same set of constraints and/or variables appear in exactly $d$ constraints, that is the \textit{occurrence} of every variable is precisely $d$.

As mentioned above, in our setting we assumed unbounded arity (of constraints) and in particular, we further assumed bounded occurrence of (variables). For the rest of this section, we would like to elaborate on the other possible configurations of the problem with respect to the (un)boundedness of the arity and the occurrence. 

First, in the context of unbounded arity, we show that trivially tractable instances of the problem already become $\mathrm{W}[2]$-hard if we assume unbounded occurrences (see \Cref{sec:W2hardness}). In this context, an FPT Turing-reduction from parameterised VCSPs to the problem of evaluating graph motif parameters would imply the collapse of $\mathrm{W}$-hierarchy, since the expansion of instances of the latter problem into the homomorphism basis includes patterns, the size of which is not bounded by the (only) parameter $k$. This shows that a novel framework becomes necessary for tackling those instances of parameterised \#CSPs, which is an exciting avenue for future work.

Next, in the context of bounded arity, if we further assume that occurrence is bounded then we can infer trivial fixed-parameter tractability results (for all instances) which follow from standard algorithmic techniques in parameterised complexity such as the bounded search-tree paradigm.

The last case, that is the combination of bounded arity and unbounded occurrence, is not remedied in this work as it is orthogonal to our main setting of interest. Hence we leave this family of parameterised \#CSPs open for investigation in future work.

\newpage

\tableofcontents

\newpage

\section{Technical Overview}
We proceed with giving an overview of our technical contributions hoping that it will facilitate the navigation of the reader through the proofs and the techniques therein. In this section, we will focus solely on the analysis of our hardness results, which we consider the heart of our contributions. Our upper bounds, although novel, follow from standard algorithmic techniques, for the details of which we refer the reader to \Cref{sec:TractabilitySubsection}.
\subsection{Parameterised Complexity: \ensuremath{\#\mathrm{W}[1]}-Hardness and Lower Bounds}\label{sec:introHardnessParam}
The $\#\mathrm{W}[1]$-hard instances of $\uncolholantprob(\mathcal{S})$ (recall that they are defined over arbitrary hypergraphs) are essentially determined by the hard instances of $\uni{\constuncolholantprob{2}}(\mathcal{S})$, since the latter problem is subsumed by the former and since the tractable instances of $\uncolholantprob(\mathcal{S})$ match those of $\uni{\constuncolholantprob{2}}(\mathcal{S})$ (see \Cref{sec:TractabilitySubsection} for the aforementioned tractability results). Both, more interesting and more challenging is the case of uniform hypergraphs, that is, the hardness of $\uni{\constuncolholantprob{d}}(\mathcal{S})$, for $d \geq 3$, which ---overviewed below--- is formally established in \Cref{sec:hardnessSubsection}.

Our proof for the hardness of $\uni{\constuncolholantprob{d}}(\mathcal{S})$ relies generally on a case distinction along additional properties of the set of allowed signatures.
The first, and also the simplest case, assumes that $\mathcal{S}$ contains a signature $s$ with $s(1) \neq 0$. Under this assumption, we can construct a simple gadget to embed on the underlying graph of instances of $\uni{\constuncolholantprob{2}}(\mathcal{S})$ and transform those instances into instances of $\uni{\constuncolholantprob{d}}(\mathcal{S})$, for any $d \geq 3$, thus yielding a reduction from $\uni{\constuncolholantprob{2}}(\mathcal{S})$ to $\uni{\constuncolholantprob{d}}(\mathcal{S})$. We outline the gadget construction below.

\begin{proof}[Gadget construction for signature sets containing $s$ with $s(1) \neq 0$] Let $s$ be a signature and let $(\Omega, k)$ be an instance of $\uni{\constuncolholantprob{2}}(\{s\})$ with underlying graph $G$. Let $G'$ denote the $d$-uniform hypergraph obtained from $G$ by adding to each edge $e \in E(G)$, $d-2$ new vertices (that is, all new vertices have degree 1). We further equip the new vertices with signature $s$ and obtain a signature hypergrid $\Omega'$. Note that $(\Omega', k) \in \uni{\constuncolholantprob{d}}(\{s\})$. We have
    \begin{align*}
    \uncolholant(\Omega', k) & = \sum_{\substack{A \subseteq E(G') \\ |A| = k}}\prod_{v \in V(G')}s(|A \cap E_{G'}(v)|) \\ & = s(1)^{k(d-2)} \underbrace{\sum_{\substack{A \subseteq E(G') \\ |A| = k}}\prod_{v \in V(G)}s(|A\cap E_{G'}(v)|)}_{= \uncolholant(\Omega, k)}\,.\end{align*}
    So, as long as $s(1) \neq 0$, it follows that $\uncolholant(\Omega, k) = s(1)^{-k(d-2)}\cdot\uncolholant(\Omega', k)$ which thus can be computed in polynomial time using our oracle for $\uncolholant(\Omega', k)$.
\end{proof}
Now, the question is whether the previous analysis can be adapted for any signature set of type $\mathbb{T}[2]$ or $\mathbb{T}[\infty]$, including signatures $s$ with $s(1)=0$. The answer is partially affirmative, that is, we can construct more complicated gadgets for signatures that satisfy $s(2) \neq 0$ (and not necessarily $s(1) \neq 0$) and thus derive the desired reduction from $\uni{\constuncolholantprob{2}}(\mathcal{S})$ to $\uni{\constuncolholantprob{d}}(\mathcal{S})$ for those signatures (in particular, any set $\mathcal{S}$ of type $\mathbb{T}[2]$ contains such signatures), as outlined below.

\begin{proof}[Gadget construction for signature sets containing $s$ with $s(2) \neq 0$] Let $B$ be a connected $d$-uniform hypergraph with precisely two vertices $x, y$ having degree 1 (the existence of which is established below). We assume that $x, y$ are adjacent (that is, they are contained in the same hyperedge) and that every other vertex has degree 2.
    Let $G'$ denote the hypergraph obtained by replacing each edge $\{u, v\} \in E(G)$ with a copy of $B$ by identifying $v$ (resp. $u$) with $x$ (resp. $y$). We also equip each new vertex with signature $s$, obtaining a new signature grid $\Omega'$.
    
    To compute $\uncolholant(\Omega', k')$, for any $k'$, we observe that any $k'$-set $A \subseteq E(G')$ that contains at least one but not all of the hyperedges of any copy of $B$, has zero contribution to $\uncolholant(\Omega', k')$. To see this, recall that $s(1) = 0$ and note that there would exist at least one vertex $w \in B$ ($w \neq x, y$) such that $|A \cap E_{G'}(w))| = 1$, since $B$ is connected.

    We set $k' = k \cdot |E(B)|$ and for each $e \in E(G)$, we write $B^e \subseteq G'$ for the copy of $B$ that $e$ is replaced with in $G'$. We let $\mathcal{A}'$ denote the set of all $k'$-sets $A' \subseteq E(G')$ for which there exist $e_1, \dots, e_k \in E(G)$ such that $A' = \bigcup_{i = 1}^{k}E(B^{e_i})$. It follows from the discussion above that

    \[
    \uncolholant(\Omega', k') = \sum_{ A'\in \mathcal{A}'}\prod_{v \in V(G')}s(|A' \cap E_{G'}(v)|)\,.
    \]
    We let $\binom{E(G)}{k}$ denote all $k$-subsets of $E(G)$.
    Let $a:\mathcal{A}' \to \binom{E(G)}{k}$ denote the mapping assigning a set $A'= \bigcup_{i = 1}^{k}E(B^{e_i})$ to the set $\{e_1, \dots, e_k\}$. By construction of $\mathcal{A}'$ it is easy to see that $a$ is a bijection. Moreover, observe that 
    \[
    \prod_{v \in V(G')}s(|A' \cap E_{G'}(v)|) = s(2)^{k(|V(B)|-2)}\prod_{v \in V(G)}s(|a(A') \cap E_G(v)|)\,.\] 
    So, we have 
    \begin{align*}
    \uncolholant(\Omega', k') & = \sum_{\substack{A' \in a(\mathcal{A}')}}s(2)^{k(|V(B)|-2)}\prod_{v \in V(G)}s(|A' \cap E_{G}(v)|) \\ & = s(2)^{k(|V(B)|-2)} \cdot \uncolholant(\Omega, k)\,,
    \end{align*}
    and since $s(2) \neq 0$, we have $\uncolholant(\Omega, k) = s(2)^{-k(|V(B)|-2)}\cdot \uncolholant(\Omega', k')$, which thus can be computed in polynomial time using our oracle for $\uncolholant(\Omega', k')$ (since $|V(B)| + |E(B)|$ is constant). 

    To complete the reduction, what remains is to show that such a hypergraph $B$ always exists.
    By \cite[Theorem 3]{frosini2021new}, it follows that any degree sequence with sufficiently many entries of $2$ and precisely $d-2$ entries of $1$ (and no entries larger than $2$) is realizable by a $d$-uniform hypergraph.
    Let hence $W$ be a $d$-uniform hypergraph of maximum degree 2 having precisely $d-2$ vertices of degree 1. First, we delete all connected components of $W$ in which each vertex has degree $2$; consequently, each remaining connected component contains at least one vertex of degree $1$.
    Finally, we obtain $B$ by first adding two fresh vertices $x$ and $y$, and then forming a new hyperedge that contains all $(d-2)$ degree-1 vertices along with $x$ and $y$. Since $d$ is a constant, it follows that we can also compute $B$ in constant time (irrespective of whether the mapping $d \mapsto B$ that constructs $B$ is even decidable).
\end{proof}

It seems that our gadgets cannot be extended so as to resolve the remaining hard signatures, which stems from certain issues arising when applying \cite[Theorem 3]{frosini2021new}, that is, there are certain divisibiliy requirements that \textit{cannot} be met and thus realisability of hypergraphs that our gadgets are based upon cannot be guaranteed. However, it turns out that all other cases can be remedied using the framework of graph motif parameters, in which cases the combinatorial analysis involved will turn out to be very clean. In particular, let $s$ be a signature such that $s(1) = s(2) = 0$ and let $b$ be the largest number such that $s(b) \neq 0$ and $s(i) = 0$, for each $0 < i < b$. We show that for any instance $(\Omega, k)$ of $\uni{\constuncolholantprob{d}}(\{s\})$, $d$-uniform $b$-regular hypergraphs on $k$ hyperedges survive in the expansion of $\uncolholant(\Omega, k)$ into the homomorphism basis. More formally, letting $\mathcal{F}_{d, b}$ denote the class of all $d$-uniform $b$-regular hypergraphs, we show that $\#\textsc{HOM}(\mathcal{F}_{d, b})$ reduces to $ \uni{\constuncolholantprob{d}}(\{s\})$. 

In particular, it is known that for $d \geq 2$ and $b \geq 3$, $\mathcal{F}_{d, b}$ contains hypergraphs of unbounded treewidth. Furthermore, we point out that our reductions above are linear FPT Turing-reductions, implying that via our reductions, apart from $\#\mathrm{W}[1]$-hardness, fine-grained lower bounds on the runtime complexity transfer as well. 

Hence, what remains is to investigate whether the known fine-grained conditional lower-bounds of $\uni{\constuncolholantprob{2}}(\mathcal{S})$ also hold for $\#\textsc{HOM}(\mathcal{F}_{d,b})$. We answer this in the affirmative. For this, we formally prove the technical lemma below that follows from \cite{dalmau2004complexity} and \cite{Marx10} and apply it for $\mathcal{F}_{d, b}$ which as shown in this work, contains hypergraphs the treewidth of which is linear in their size, yielding the (almost) matching lower bounds.

\begin{lemma}
Let $r\in \mathbb{Z}_{\geq 2}$ be a fixed integer, and let $\mathcal{H}$ be a recursively enumerable class of $r$-uniform hypergraphs. If $\mathcal{H}$ has unbounded treewidth, then $\#\text{\sc{Hom}}(\mathcal{H})$ is $\#\mathrm{W}[1]$-hard and cannot be solved in time $f(|H|)\cdot |V(G)|^{o(\mathsf{tw}(H)/\log(\mathsf{tw}(H)))}$, for any function $f$, unless ETH fails.  This remains true even if each input $(H,G)$ to $\#\text{\sc{Hom}}(\mathcal{H})$ satisfies that $G$ is $r$-uniform as well.
\end{lemma}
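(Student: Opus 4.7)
The plan is to derive both conclusions by reducing to the classical hardness of counting homomorphisms between bounded-arity relational structures: the $\#\mathrm{W}[1]$-hardness of Dalmau and Jonsson~\cite{dalmau2004complexity} and the ETH-based fine-grained lower bound of Marx~\cite{Marx10}. An $r$-uniform hypergraph corresponds canonically to a relational structure over a signature with a single symmetric $r$-ary relation (interpreting each hyperedge as the set of its $r!$ orderings), and under this correspondence the standard hypergraph treewidth coincides exactly with the Gaifman-graph treewidth used by Marx, since each hyperedge becomes an $r$-clique in both the primal graph and the Gaifman graph. Viewing $\mathcal{H}$ as a recursively enumerable class of bounded-arity structures of unbounded treewidth, Marx's theorem then directly yields the $f(|H|)\cdot |V(G)|^{o(\mathsf{tw}(H)/\log \mathsf{tw}(H))}$ exclusion under ETH and, a fortiori, $\#\mathrm{W}[1]$-hardness, provided the target $G$ is allowed to be an arbitrary structure over the same signature.

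The second step addresses the refinement that the target $G$ may also be taken to be $r$-uniform. I would handle this by inspecting the reductions of Dalmau-Jonsson and Marx, which ultimately reduce from a hard base problem (such as multicoloured clique or a partitioned CSP) by producing a target whose relations mirror the arities of the pattern: tuples of the source are packaged into tuples of the target of the same arity. Since every tuple of a pattern $H \in \mathcal{H}$ is $r$-ary, the target produced by their construction contains only $r$-ary tuples, and is therefore itself an $r$-uniform hypergraph after symmetrising the relation. Consequently the stronger statement is obtained without modifying the underlying reductions, merely by inspecting that they respect the arity of the source.

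The step I expect to require the most care is the fine-grained bookkeeping: to inherit the $o(\mathsf{tw}(H)/\log \mathsf{tw}(H))$ exponent rather than mere asymptotic unboundedness, the reduction must preserve the treewidth of the pattern up to constant factors and blow up the target by at most a polynomial factor in the base-problem instance size. Because Marx's theorem is already formulated and proved with these quantitative guarantees for bounded-arity relational structures, the argument reduces to verifying that the identification of $r$-uniform hypergraphs with structures over a single symmetric $r$-ary relation is treewidth-preserving in the exact sense (it is, essentially as an identity), and that the natural enumeration of $\mathcal{H}$ as relational structures is still recursively enumerable (it is, by applying the enumeration of $\mathcal{H}$ and symmetrising the relation coordinate-wise). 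With these verifications in place, both assertions of the lemma follow immediately from the cited theorems.
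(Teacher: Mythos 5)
Your high-level route (pass to bounded-arity relational structures, invoke Dalmau--Jonsson and Marx's Corollary 5.3, then argue the target can be taken $r$-uniform) matches the paper's strategy, but the proposal has a genuine gap precisely at the step you defer to ``inspecting the reductions.'' Marx's lower bound is stated for $\textsc{CSP}(\mathcal{H})$, i.e.\ for instances with explicitly given constraints --- effectively the \emph{colour-prescribed} homomorphism problem --- and the targets produced by the Dalmau--Jonsson/Marx constructions carry this partition information in their relations. You cannot simply ``symmetrise the relation'' of such a target and declare it an $r$-uniform hypergraph: a structure homomorphism into a non-symmetric $r$-ary relation is a different (stronger) condition than a hypergraph homomorphism into its symmetric closure, so the count is not preserved; tuples with repeated entries and the loss of the colour information cause further mismatches. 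The paper has to do real work here: it first pins each vertex of $H$ with a singleton unary relation $U_i$, making $A(H)$ a rigid core so that the CSP/colour-prescribed lower bound transfers to the uncoloured structure homomorphism problem $\Homprob(\mathcal{A})$; it then tensors the target $B$ with $A$ and extracts from $B\otimes A$ an explicitly $H$-coloured $r$-uniform hypergraph $G(B,A)$, proving a bijection $\#\mathsf{cpHom}(H\to(G(B,A),c))=\#\homs{A}{B}$; and finally it removes the colours by inclusion--exclusion to land in $\Homprob(\mathcal{H})$. None of these three steps is present in your proposal, and the first one also undercuts your claim that Marx's theorem ``directly yields'' the exclusion for the uncoloured problem: without the unary pins (or an equivalent device), the symmetrised pattern need not be a core and the transfer from the partitioned to the uncoloured setting is exactly what has to be proved.

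Your first paragraph's observation that hypergraph treewidth coincides with Gaifman treewidth under the symmetric-relation encoding, and that recursive enumerability is preserved, is correct and is also used (implicitly) in the paper. To repair the proposal, replace the ``inspect and symmetrise'' step with an explicit count-preserving reduction from the hard relational instances to $r$-uniform hypergraph instances; the tensor-plus-inclusion-exclusion construction is the standard way to do this, and it is also what guarantees that the target is $r$-uniform and that the $o(\mathsf{tw}(H)/\log\mathsf{tw}(H))$ exponent survives (the pattern is unchanged and the target blow-up is polynomial).
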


\subsection{Results On \ensuremath{\#\mathrm{P}}-hardness}

In \Cref{sec:NonParamHardness}, we study the ``classical complexity'' of the problem $\constclassHolant{d}(\mathcal{S})$. In particular, we show that \begin{itemize} 

\item[1.] all $\#\mathrm{W}[1]$-hard instances of $\uni{\constuncolholantprob{d}}(\mathcal{S})$ are $\#\mathrm{P}$-hard for $\constclassHolant{d}(\mathcal{S})$ and,

\item[2.] any signature set $\mathcal{S}$ containing a signature $s$ with $s(0) = 0$ renders the problem $\constclassHolant{d}(\mathcal{S})$ $\#\mathrm{P}$-hard, for any $d \geq 2$.
\end{itemize}

We point out that for the remaining cases, $\constclassHolant{d}(\mathcal{S})$ is solvable in polynomial time, since for these cases, the FPT algorithms for $\uni{\constuncolholantprob{d}}(\mathcal{S})$ (see \Cref{sec:TractabilitySubsection}) are in fact polynomial time.

\subsubsection{\ensuremath{\#\mathrm{P}}-hardness results for signature sets containing \ensuremath{s} with \ensuremath{s(0) = 0}} 

In \Cref{sec:HardnessZeroSigs} we show that $\constclassHolant{d}(\mathcal{S})$ is $\#\mathrm{P}$-hard whenever $\mathcal{S}$ contains a signature $s$ with $s(0) = 0$. As usual, for hardness results, it suffices to restrict to signature sets with a single signature $s$, in which case we also assume that $s(0) = 0$. 

We derive our hardness results by reducing from the problem $\#\textsc{PerfectMatching}^d$, which is given a $d$-uniform hypergraph $G$ to compute the number of perfect matchings of $G$, denoted as $\#\mathsf{PerfMatch}(G)$, where recall that a perfect matching in a hypergraph $G$ is a partition of $V(G)$ into pairwise disjoint hyperedges of $G$. The problem $\#\textsc{PerfectMatching}^d$ is known to be $\#\mathrm{P}$-hard for any $d \geq 2$ (in particular, the case $d = 2$ was shown by Valiant \cite{valiant1979complexity}, while the case $d > 2$ was shown later by Creignou and Hermann~\cite{creignou1996complexity}).

In a nutshell, for our reductions mentioned above, we proceed as follows. We construct gadgets ---that are based on the realisability of certain uniform regular hypergraphs--- which we embed on the instances $G$ of $\#\textsc{PerfectMatching}^d$ and obtain new $d$-uniform hypergraphs $G'$. Then, the vertices of $G'$ are equipped with signature $s$ yielding the signature grid~$\Omega$. We show that for a suitable choice of $k$, $\uncolholant(\Omega, k)$ is equal to $\#\mathsf{PerfMatch}(G)$, up to efficiently computable multiplicative factors, and thus we can efficiently extract $\#\mathsf{PerfMatch}(G)$ from $\uncolholant(\Omega, k)$, which proves correctness of the reduction.

\subsubsection{\ensuremath{\#\mathrm{P}}-hardness results for \ensuremath{\constclassHolant{d}(\mathcal{S})}: CFI Filters for \ensuremath{d = 2}}
The parameterised complexity of $\uni{\constuncolholantprob{2}}(\mathcal{S})$ was resolved using the framework of graph motif parameters. In \Cref{sec:HardnessCFI}, we show that the same framework can also be used to derive $\#\mathrm{P}$-hardness, following the recent results regarding the polynomial-time analogue of the property of complexity monotonicity via ``CFI-Filters'' by Curticapean \cite{curticapean2024count} and suitably adapting the combinatorial analysis of Aivasiliotis et al.\ \cite{aivasiliotis_et_al:LIPIcs.ICALP.2025.7} from the parameterised setting.

To this end, we let $\mathcal{G}$ denote the family of all pairwise non-isomorphic graphs and $\mathcal{G}(\mathcal{S})$ the family of all pairwise non-isomorphic signature grids $\Omega_G$ over $\mathcal{S}$ where $G \in \mathcal{G}$. As already mentioned, a signature grid can be equivalently seen as a vertex-coloured hypergraph, where the colours of the vertices are given by the signatures.

Recall that for any $k$, the mapping $f_{k, \mathcal{S}} : \Omega_G \in \mathcal{G}(\mathcal{S}) \mapsto \uncolholant(\Omega_G, k)$ is a graph motif parameter (see \Cref{sec:HolantsInHomBasis}), that is, for any $\Omega_G \in \mathcal{G}(\mathcal{S})$
\[
f_{k, \mathcal{S}}(\Omega_G) = \sum_{\Omega_H \in \mathcal{G}(\mathcal{S})}\zeta_{k, \mathcal{S}}(\Omega_H)\cdot\#\homs{\Omega_H}{\Omega_G}\,
\]
and there are finitely many signature grids $\Omega_H$ such that $\zeta_{k, \mathcal{S}}(\Omega_H) \neq 0$. 

For deriving hardness results, it suffices to consider signature sets $\mathcal{S}$ containing only one signature $s$. Clearly, for any two signature grids $\Omega_H, \Omega_G$ over a single signature $s$, it holds that $\#\homs{\Omega_H}{\Omega_G} = \#\homs{H}{G}$. Hence, we may simplify notation and redefine $f_{k, \mathcal{S}}$ as follows:
\[
f_{k, s}(G) = \sum_{H \in \mathcal{G}}\zeta_{k, s}(H)\cdot\#\homs{H}{G}\,,
\]
where we write $f_{k, s}(G)$ (resp. $\zeta_{k,s}(H)$) instead of $f_{k, \{s\}}(\Omega_G)$ (resp. $\zeta_{k, \{s\}}(\Omega_H)$).

The combinatorial analysis of the coefficients $\zeta_{k, s}(H)$ in \cite{aivasiliotis_et_al:LIPIcs.ICALP.2025.7},\footnote{We refer the reader also to the full version \cite{aivasiliotis2024parameterisedholantproblems} of~\cite{aivasiliotis_et_al:LIPIcs.ICALP.2025.7}.} revealed the following regarding the patterns surviving in the expansion of $f_{k, s}$ into the homomorphism basis:

\begin{lemma}[\cite{aivasiliotis2024parameterisedholantproblems}]
Let $\{s\}$ be of type $\mathbb{T}[2]$. Let $H$ be a clique on $d+1$ vertices and set $k = d^2-1$. It holds $\zeta_{k, s}(H) \neq 0$.    
\end{lemma}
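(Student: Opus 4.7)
The plan is to prove $\zeta_{k,s}(K_{d+1}) \neq 0$ by expanding the Holant into the homomorphism basis using the type-$\mathbb{T}[2]$ structure of $s$, and then isolating a non-vanishing contribution to the $K_{d+1}$-coefficient.

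First, I would exploit the fact that type-$\mathbb{T}[2]$ signatures are completely determined by two values. Setting $c_1:=\chi(1,s)=s(1)/s(0)$ and $c_2:=\chi(2,s)\neq 0$, the vanishing of $\chi(a,s)$ for all $a\geq 3$ inverts the fingerprint definition into the pair-partition identity
\[
s(a)/s(0) \;=\; \sum_{j=0}^{\lfloor a/2\rfloor}\frac{a!}{2^{j}\,j!\,(a-2j)!}\,c_1^{\,a-2j}\,c_2^{\,j},
\]
which is the weighted enumeration of pair partitions of $[a]$ (weight $c_1$ per singleton, $c_2$ per pair). Substituting this into $\uncolholant(\Omega_G,k)$ rewrites it, up to the global factor $s(0)^{|V(G)|}$, as a sum over configurations $(A,\{M_v\})$ with $A\subseteq E(G)$, $|A|=k$, and $M_v$ a matching on $A\cap E_G(v)$, weighted by $c_1^{\,2k-2|P|}c_2^{\,|P|}$ with $|P|=\sum_v|M_v|$. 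Each such configuration is naturally encoded by an edge-injective homomorphism from a maximum-degree-two graph $T$ (a disjoint union of simple paths and simple cycles with $k$ edges and with $p(T)$ path components) into $G$, and the weight depends only on $p(T)$: it equals $c_1^{\,2p(T)}c_2^{\,k-p(T)}$.

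Next, I would apply the standard M\"obius inversion over vertex partitions (the Curticapean-Dell-Marx framework, lifted to the edge-injective setting on graphs of maximum degree two) to convert the sum over $T$ into a linear combination of ordinary homomorphism counts $\#\homs{H}{G}$ for simple graphs $H$, yielding the basis expansion $\uncolholant(\Omega_G,k)=\sum_H \zeta_{k,s}(H)\cdot\#\homs{H}{G}$. The coefficient $\zeta_{k,s}(K_{d+1})$ is then a polynomial in $(c_1,c_2)$ aggregating contributions from pairs $(T,\sigma)$ where $\sigma$ partitions $V(T)$ into $d+1$ parts whose vertex-quotient $T/\sigma$ has underlying simple graph equal to $K_{d+1}$. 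I would observe that the choice $k=d^2-1=(d-1)(d+1)$ is precisely the value for which, in the pure-cycle case ($p(T)=0$), one has $|V(T)|=d^2-1$ and the $d+1$ parts all have the equal size $d-1$; such $(T,\sigma)$ therefore exist, for instance via a single cycle $T=C_{d^2-1}$ together with a cyclic labelling by an edge-covering closed walk of length $d^2-1$ in $K_{d+1}$ (easily constructed for every $d\geq 2$, since $d^2-1\geq d(d+1)/2=|E(K_{d+1})|$).

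The main obstacle is the non-cancellation argument, because distinct $(T,\sigma)$ may yield the same $K_{d+1}$-quotient with opposing M\"obius signs and one must rule out complete cancellation in $\zeta_{k,s}(K_{d+1})$. I would address this by isolating the top $c_2^{k}$-monomial, which receives contributions only from $T$ with $p(T)=0$, and evaluating the resulting coefficient by a direct combinatorial count of valid labellings of pure-cycle systems. Concretely, I would specialise to $c_1=0$, where only pure-cycle configurations with $|P|=k$ survive, and then evaluate the Holant on a carefully designed test graph---for instance, one obtained via the CFI-style gadgets used elsewhere in the paper to isolate the $K_{d+1}$-contribution---or use an explicit orbit-counting argument that leverages the rigidity forced by the equal part-size $d-1$. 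Polynomial non-vanishing of $\zeta_{k,s}(K_{d+1})$ at some specialisation of $(c_1,c_2)$ then implies $\zeta_{k,s}(K_{d+1})\neq 0$ for every type-$\mathbb{T}[2]$ signature $s$.
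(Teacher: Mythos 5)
Your overall strategy---expand $s$ via pair partitions (weight $c_1$ per singleton, $c_2$ per pair), reorganise the Holant as a sum over path/cycle systems $T$ with weight $c_1^{2p(T)}c_2^{k-p(T)}$, and then invert to the homomorphism basis---is a legitimate re-derivation of the machinery the paper imports from the prior work (there, Theorem~\ref{thm:zeta_at_most_k_edges} together with the generalised fingerprints $\chi(\lambda,s)$ and Proposition~\ref{Prop:chi_vanishing_property} play exactly this role). However, the proof has a genuine gap at the decisive step. First, isolating the $c_2^{k}$ monomial does not suffice: $\zeta_{k,s}(K_{d+1})$ is a polynomial in $(c_1,c_2)$, and lower-order monomials \emph{do} receive contributions (already for $d=2$, a $3$-edge path quotients onto $K_3$ by identifying its endpoints, contributing to $c_1^2c_2^2$). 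Your closing claim that ``polynomial non-vanishing at some specialisation of $(c_1,c_2)$ implies $\zeta_{k,s}(K_{d+1})\neq 0$ for every type-$\mathbb{T}[2]$ signature'' is false as a logical step: a bivariate polynomial nonzero at $(0,c_2^*)$ can vanish at $(c_1^*,c_2^*)$. For the lemma to hold for \emph{all} type-$\mathbb{T}[2]$ signatures you must show that every monomial involving $c_1$ has identically zero coefficient, so that $\zeta_{k,s}(K_{d+1})$ is a nonzero constant times $c_2^{k}$. That cancellation is precisely the content of Proposition~\ref{Prop:chi_vanishing_property}(a) combined with the averaging argument (each vertex of $K_{d+1}$ forces $\mathsf{len}\geq d$, the total weight $2k=(d+1)(2d-2)$ forces equality everywhere, and any slack makes the local fingerprint vanish); your proposal never proves it.

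Second, even for the $c_2^{k}$ coefficient itself you do not complete the non-cancellation argument: the M\"obius signs from the vertex-partition inversion alternate, and you only list three candidate strategies (direct count, CFI-style test graphs, orbit counting) without executing any. The paper's route (mirrored in its own proof of Lemma~\ref{lem:non-zero-coeff} for $4$-regular graphs) resolves this by showing that every surviving term is a \emph{positive} multiple $a_\lambda\cdot c_2^{(d-1)(d+1)}$ with a uniform global sign $(-1)^{k-|E(K_{d+1})|}$ coming from $\mathsf{mult}(\lambda(e))$ with $\mathsf{len}(\lambda(e))=1$, so that no cancellation can occur, and then exhibits one explicit surviving weighting $\lambda:E(K_{d+1})\to\mathbb{Z}_{\geq1}$ with $\sum_{e\ni v}\lambda(e)=2d-2$ at every vertex. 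Your proposal correctly identifies the extremal configurations (pure cycle systems with all blocks of size $d-1$) but supplies neither the uniform-sign argument nor a verified existence construction, so the conclusion $\zeta_{k,s}(K_{d+1})\neq 0$ is not established.
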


\begin{lemma}[\cite{aivasiliotis2024parameterisedholantproblems}]\label{lem:IntroCliques}
Let $\{s\}$ be of type $\mathbb{T}[\infty]$ and $r \geq 3$ such that $\chi(r, s) \neq 0$. Let $H$ be a $r$-regular graph and set $k = |E(H)|$. It holds $\zeta_{k, s}(H) \neq 0$.
\end{lemma}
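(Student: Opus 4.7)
My plan is to compute $\zeta_{k,s}(H)$ explicitly by expanding $f_{k,s}$ in the homomorphism basis and isolating the coefficient of $\#\homs{H}{G}$. I would first normalise so that $s(0)=1$: this is without loss of generality since rescaling a signature by a non-zero constant only multiplies $f_{k,s}(G)$ by the efficiently computable factor $s(0)^{|V(G)|}$, which does not affect vanishing. Next I would reorganise
\[
f_{k,s}(G) \;=\; \sum_{\substack{A \subseteq E(G)\\|A|=k}} \prod_{v \in V_A} s\!\left(|A\cap E_G(v)|\right)
\]
by the isomorphism type $[H']$ of the edge-induced subgraph $(V_A, A)$, obtaining $f_{k,s}(G) = \sum_{[H']} t_s(H') \cdot \#\subs{H'}{G}$ with $t_s(H') = \prod_{v \in V(H')} s(\deg_{H'}(v))$, where $[H']$ ranges over isomorphism classes of simple graphs with $k$ edges and no isolated vertices. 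Combining this with the standard Möbius conversion $\#\subs{H'}{G} = \sum_{[H]} \beta(H',H)\,\#\homs{H}{G}$, where $\beta(H',H) = \frac{1}{|\mathsf{Aut}(H')|}\sum_{P: H'/P \cong H} \mu(\hat 0, P)$ is the Möbius-weighted count of admissible vertex partitions of $V(H')$ whose quotient is isomorphic to $H$, I arrive at the closed formula
\[
\zeta_{k,s}(H) \;=\; \sum_{[H']} t_s(H')\, \beta(H',H)\,.
\]

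I would then substitute the moment--cumulant expansion $s(d) = \sum_{\sigma \in \Pi([d])} \prod_{B \in \sigma} \chi(|B|,s)$ and regard $\zeta_{k,s}(H)$ as a polynomial in $\chi(1,s), \dots, \chi(r,s)$. The structural observation driving the argument is that for an $r$-regular $H$ with $|V(H)|=2k/r$, any refinement $H' \twoheadrightarrow H$ with $H' \ncong H$ must contain at least one vertex of degree strictly less than $r$: the degree sum $2k = r|V(H)|$ is distributed over $|V(H')|>|V(H)|$ vertices, each of degree at most $r$. Consequently $t_s(H')$ has $\chi(r,s)$-degree at most $|V(H)|-1$ whenever $H' \ncong H$, and only the contribution of $H' = H$ produces the monomial $\chi(r,s)^{|V(H)|}$, with coefficient exactly $1/|\mathsf{Aut}(H)|$.

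To conclude, I would establish the sharper identity $\zeta_{k,s}(H) = \chi(r,s)^{|V(H)|}/|\mathsf{Aut}(H)|$, under which the hypothesis $\chi(r,s)\neq 0$ immediately gives $\zeta_{k,s}(H)\neq 0$. The strategy is to reparameterise admissible quotients $H' \twoheadrightarrow H$ as tuples $(\pi_v)_{v\in V(H)}$ of partitions of the $r$ half-edges at each vertex of $H$: the edges of $H$ incident to $v$ are distributed among the preimages of $v$ in $H'$. Under this correspondence the Möbius coefficients $\beta(H',H)$ factorise as products of per-vertex Möbius terms $\mu(\hat 0, \pi_v)$ on the local partition lattice of the $r$ half-edges at $v$, and the resulting vertex-local sum $\sum_{\pi_v} \mu(\hat 0, \pi_v) \prod_{B \in \pi_v} s(|B|)$ is by definition exactly $\chi(r,s)$. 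Taking the product over $v \in V(H)$ and normalising by $|\mathsf{Aut}(H)|$ (to pass from labelled to isomorphism classes) yields the claimed identity.

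The main obstacle is verifying that the global admissibility conditions on vertex partitions of $V(H')$—in particular the multi-edge avoidance constraint—really do decompose into the independent local data described above. Because the ``no multi-edge in quotient'' condition couples partitions at different vertices of $H$, one must show via a careful inclusion--exclusion (or an explicit combinatorial bijection) that for an $r$-regular $H$ this coupling either does not occur or cancels out inside the Möbius sum, so that the factorisation into per-vertex contributions really goes through. This is the combinatorial heart of the argument and is the step I would expect the full version of Aivasiliotis et al.\ to carry out in detail.
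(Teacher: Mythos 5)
Your target identity $\zeta_{k,s}(H)=\chi(r,s)^{|V(H)|}/\#\mathsf{Aut}(H)$ is exactly the specialisation of \Cref{thm:infinityCoeffRestated} (cited from the full version of Aivasiliotis et al.) to $r$-regular $H$ with $k=|E(H)|$, and the lemma follows from that theorem in one line; your proposal is in effect an attempt to reprove that theorem in the regular case, along the same subgraph-basis/M\"obius-inversion route. Two remarks on the substance. First, your intermediate ``leading monomial'' observation (that only $H'\cong H$ contributes $\chi(r,s)^{|V(H)|}$) is correct but, as you seem to realise, would not by itself rule out cancellation against the lower-order monomials in $\chi(1,s),\dots,\chi(r-1,s)$; the exact factorised identity is genuinely needed, and it is needed precisely because a general $\mathbb{T}[\infty]$ signature may have $s(1),s(2)\neq 0$ (the paper's own shortcut in case $b\geq 3$ of \Cref{thm:reduction2TodArity}, which kills all $H'\neq H$ outright, is unavailable here).

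Second, the obstacle you single out as ``the combinatorial heart'' --- that the no-multi-edge condition on the refinements $H'$ might couple the local partitions $\pi_v$ at different vertices of $H$ --- in fact dissolves in this setting: $H$ is a simple graph, so distinct edges of $H$ have distinct endpoint pairs $\{u,v\}\neq\{u',v'\}$, and after splitting each vertex into the blocks of $\pi_v$ the images of two distinct edges join copies of distinct vertex pairs and can never become parallel. Hence every tuple $(\pi_v)_{v\in V(H)}$ yields a legitimate simple $H'$ with $H'/\rho\cong H$, the local data are genuinely independent, and $\mu(\bot,\rho)=\prod_v\mu(\bot,\pi_v)$ factorises as you describe (note the fingerprint sums over partitions of the $r$ edge-slots weighted by $(-1)^{|\sigma|-1}(|\sigma|-1)!=\mu(\sigma,\top)$, so one must match the M\"obius conventions carefully). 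What actually remains, and what your sketch glosses over, is the labelled-versus-unlabelled bookkeeping: converting the sum over isomorphism classes of pairs $(H',\rho)$ weighted by $1/\#\mathsf{Aut}(H')$ into $\frac{1}{\#\mathsf{Aut}(H)}\prod_v\bigl(\sum_{\pi_v}\cdots\bigr)$ requires an orbit-counting argument relating $\mathsf{Aut}(H')$, $\mathsf{Aut}(H)$, and the stabiliser of the tuple $(\pi_v)_v$. This is routine but it, rather than the multi-edge issue, is the step that must be carried out to close the proof.
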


In essence, the above results imply that, whenever $\{s\}$ of type $\mathbb{T}[2]$ or $\mathbb{T}[\infty]$, there is a family of graphs $\mathcal{H}$  of unbounded treewidth, such that $\#\textsc{HOM}(\mathcal{H})$ reduces to $\uni{\constuncolholantprob{2}}(\mathcal{S})$. In particular, if $\{s\}$ is of type $\mathbb{T}[2]$ (resp. $\mathbb{T}[\infty]$), the aforementioned class of graphs $\mathcal{H}$ is the family of all $r$-regular graphs (resp. the family of all cliques). 

The result of Curticapean~\cite{curticapean2024count} implies that the above reduction is also a polynomial time Turing-reduction ---subject to some alterations to be discussed momentarily--- as long as the maximum degree of graphs in $\mathcal{H}$ is upper-bounded by some constant. Indeed, this applies to the case of $r$-regular graphs since $r$ is a constant depending only on $s$, but cliques do not meet this requirement. However, we are able to show the following result by adapting accordingly the combinatorial analysis of \cref{lem:IntroCliques}:

\begin{lemma} Let $\{s\}$ be of type $\mathbb{T}[2]$. Let $F$ be a 4-regular graph and set $k = 3\cdot |V(F)|$. It holds $\zeta_{k, s}(F) \neq 0$.    
\end{lemma}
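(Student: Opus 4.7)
The plan is to lift the combinatorial argument establishing \Cref{lem:IntroCliques} in its $d=4$ case---namely the non-vanishing of $\zeta_{15,s}(K_5)$---from $K_5$ to every 4-regular graph $F$ with $k=3|V(F)|$. Note that for $F = K_5$ (the unique 4-regular graph on five vertices) the statement of the present lemma is precisely the $d=4$ instance of the previous lemma, so the task is to replace the clique-specific combinatorics of \cite{aivasiliotis2024parameterisedholantproblems} by a structural argument valid for arbitrary 4-regular $F$.

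I would start from the derivation of the homomorphism-basis expansion of $f_{k,s}$. Normalising $\tilde s = s/s(0)$, applying the partition-lattice inversion $\tilde s(a) = \sum_{\sigma \in P([a])}\prod_{B \in \sigma}\chi(|B|,s)$ at every vertex, grouping by subgraph isomorphism type, and Möbius-inverting from $\Sub$ to $\Hom$, yields an explicit formula for $\zeta_{k,s}(F)$ as a signed sum over ``configurations'': $k$-edge multigraphs equipped with a local partition of the half-edges at each vertex together with a quotient map to $F$. Since $\{s\}$ is of type $\mathbb{T}[2]$ we have $\chi(a,s) = 0$ for $a \geq 3$, so only partitions whose blocks all have size $1$ or $2$ contribute, and $\zeta_{k,s}(F)$ becomes a polynomial in $y_1 := \chi(1,s)$ and $y_2 := \chi(2,s)$ in which every singleton block contributes a factor of $y_1$ and every pair block a factor of $y_2$.

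The key step is then a leading-coefficient analysis. Since the number $n_1$ of singletons in a contributing configuration must satisfy $n_1 + 2 n_2 = 2k$ and hence be even, the maximum power of $y_2$ in $\zeta_{k,s}(F)$ is $y_2^k$, attained precisely by the ``purely paired'' configurations (those with $n_1 = 0$); every other surviving contribution is divisible by $y_1^2$. The coefficient of $y_2^k$ factors locally over the vertices of $F$: at each vertex one pairs $4$ incident half-edges into $2$ pairs, which yields a fixed positive integer local count, and the distribution of the $k = 3n$ configuration edges over the $|E(F)| = 2n$ edges of $F$ is a strictly positive integer count as well. Hence the coefficient of $y_2^k$ is a non-zero integer independent of $y_1$. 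Following the $K_5$-blueprint, the remaining lower-order terms (all divisible by $y_1^2$) do not cancel the leading term, and since $y_2 = \chi(2,s) \neq 0$, this gives $\zeta_{k,s}(F) \neq 0$.

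The main obstacle is controlling the lower-order terms: one has to rule out a cancellation $c \cdot y_2^k = -y_1^2 \cdot q(y_1, y_2)$ at the specific values $y_1 = \chi(1,s)$ and $y_2 = \chi(2,s)$ determined by $s$. I plan to follow the $K_5$ argument in~\cite{aivasiliotis2024parameterisedholantproblems}, which exploits the algebraic constraints $\chi(a,s) = 0$ for $a \geq 3$ (equivalently, the polynomial identities that tie $\tilde s(3), \tilde s(4), \ldots$ to $\tilde s(1)$ and $\tilde s(2)$) to reduce $\zeta_{k,s}(F)$ to a form in which the leading $y_2^k$-term is manifestly uncancellable. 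Since the per-vertex pairing calculation depends only on the 4-regularity of $F$, this reduction should go through uniformly for every 4-regular $F$.
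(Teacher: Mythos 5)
Your reduction to the $K_5$ case correctly identifies the target, but the plan has a genuine gap at exactly the step you flag as "the main obstacle," and that step is not a technicality --- it is the entire content of the lemma. Your leading-coefficient analysis in the polynomial ring $\mathbb{Q}[y_1,y_2]$ cannot close on its own: the values $y_1=\chi(1,s)$ and $y_2=\chi(2,s)$ are arbitrary nonzero complex numbers subject to no relation, and in particular $y_1^2$ and $y_2$ may coincide numerically, so "the coefficient of $y_2^k$ is nonzero and everything else is divisible by $y_1^2$" does not preclude $c\,y_2^k + y_1^2 q(y_1,y_2)=0$. Moreover, your claim that the coefficient of $y_2^k$ is a \emph{positive} integer that "factors locally" ignores the signs: each configuration carries a M\"obius-function factor $\mu(\rho)$ from the quotient map $H/\rho\cong F$, and these alternate, so even the leading coefficient is a signed sum whose non-vanishing needs an argument. (Your local picture is also off: with $k=3|V(F)|$ the relevant vertex weight is $6$ half-edges partitioned into $3$ pairs, not $4$ into $2$.)

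The paper's proof avoids both issues by working with the reorganized coefficient formula $\zeta_{k,s}(F)=\frac{1}{\#\mathsf{Aut}(F)}\sum_{\lambda}\prod_e\frac{\mathsf{mult}(\lambda(e))}{|\lambda(e)|!}\prod_v\chi(\mathsf{deg}(F,v,\lambda),s)$ together with the vanishing property of generalised fingerprints for type $\mathbb{T}[2]$: $\chi(\lambda,s)=0$ whenever $|\lambda|<2\,\mathsf{len}(\lambda)-2$, and $\chi(\lambda,s)=a_\lambda(s(2)-s(1)^2)^{\mathsf{len}(\lambda)-1}$ with $a_\lambda>0$ on the boundary. An averaging argument (total vertex weight $6n_F$, so any vertex of weight $>6$ forces one of weight $<6$, which vanishes) kills every term except those with all vertex weights exactly $6$ and all $\mathsf{len}(\lambda(e))=1$; each surviving term is then a \emph{positive} multiple of $(-1)^{n_F}(s(2)-s(1)^2)^{3n_F}$, so there are no lower-order terms at all and no sign cancellation. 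What remains --- and what your "strictly positive count" assertion glosses over --- is exhibiting one surviving $\lambda$, i.e.\ an assignment of multiplicities $1$ and $2$ to $E(F)$ with exactly two $2$'s at every vertex; the paper gets this from Petersen's 2-factor theorem applied to the $4$-regular graph $F$. To repair your proposal you would need to import precisely this vanishing/positivity machinery, at which point you have reproduced the paper's argument rather than a genuinely independent leading-term one.
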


One more technicality that differentiates the poly-time version of complexity monotonicity from the parameterised one, is that the former extracts homomorphism counts for \textit{colourful} patterns. To break down this, a colourful graph $H$ is a graph that comes with a \textit{bijective} colouring $\nu_H$ of its vertices. For a recursively enumerable class of \textit{colourful} graphs, we write $\#\textsc{ColHom}(\mathcal{H})$ for the problem that is given a colourful graph $H \in \mathcal{H}$ and a \textit{coloured} graph $G$, to compute $\#\mathsf{ColHom}(H \to G)$, which is the number of all homomorphisms from $H$ to $G$ that also preserve the colour of the vertices. Hence, we derive a reduction from
$\#\textsc{ColHom}(\mathcal{H})$ to $\constclassHolant{d}(\{s\})$,
where $\mathcal{H}$ is the class of all $4$-regular \textit{colourful} graphs (resp. the class of all $r$-regular \textit{colourful} graphs) if $\{s\}$ is of type $\mathbb{T}[2]$ (resp. $\mathbb{T}[\infty]$).

Finally, we complete our proof by showing that counting homomorphisms from \textit{colourful} $R$-regular graphs, for any $R \geq 3$, is $\#\mathrm{P}$-hard, a result that 
was not known.

\subsubsection{\ensuremath{\#\mathrm{P}}-Hardness Results for \ensuremath{\constclassHolant{d}(\mathcal{S})}: Global Gadgets for \ensuremath{d > 2}} 

In \Cref{sec:HardnessCFIHypergraphs}, similarly to the parameterised setting, our goal ---at first--- is to establish a reduction from $\constclassHolant{2}(\mathcal{S})$ to $\constclassHolant{d}(\mathcal{S})$, for any $d \geq 3$. In fact, it can be easily verified that the reductions we showed in \Cref{sec:HardnessCFI} yielded by the gadgets we constructed, are polynomial time Turing-reductions. Hence, for any signature set $\mathcal{S}$ containing a signature $s$ such that $s(1) \neq 0$ or $s(2) \neq 0$, the problem $\constclassHolant{d}(\mathcal{S})$ is $\#\mathrm{P}$-hard.

For the remaining cases, that is, for signatures $s$ such that $s(1) = s(2) = 0$, we note that a similar approach as in the parameterised setting, that is, using the framework of graph motif parameters, seems to be non-trivial. Although we have already argued that a polynomial time analogue of complexity monotonicity is feasible for graph motif parameters over graphs, it is non-trivial how such a result can be extended to graph motif parameters over hypergraphs, which is beyond the scope of this work.

Instead, we follow a different approach and construct global gadgets in order to reduce from the problem $\#\textsc{PerfectMatching}^{d}$. For the construction of our gadgets, we rely on the existence of infinitely many $d$-uniform $b$-regular graphs that are also connected, which we show in the same section. Finally, the reason as to why it is possible to construct such gadgets in the non-parameterised setting (that still cannot be used in the parameterised setting) is ---roughly speaking--- due to the fact we have more flexibility on the values of $k$, that is, $k$ can even be comparable to the size of the input which is not allowed by FPT Turing-reductions.

\section{Preliminaries}\label{sec:prelims}

\subsection{Hypergraphs}
A hypergraph is an unordered pair $(V, E)$ of sets, where $V$ is the set of \textit{vertices} and $E \subseteq 2^{V}$ is the set of \textit{hyperedges}. Given a hypergraph $G$, we write $V(G), E(G)$ for its vertex- and hyperedge-set respectively. We refer to the maximum size of a hyperedge in $E(G)$, as the \textit{rank} of $G$, denoted as $\mathsf{rank}(G)$. 

Hypergraphs may also come with a vertex-colouring $\nu : V(G) \to C$, for some finite set $C$ of colours. We call a hypergraph \textit{$C$-coloured} to indicate that its vertices are coloured with elements of $C$.

\subsubsection{Isomorphic Hypergraphs}
Two hypergraphs $F, H$ of rank $\hs$ are isomorphic (in symbols, $F \cong H)$ if there is a bijection $a : V(F) \to V(H)$ with the following property: for any $x \leq \hs$, $\{v_1, \dots, v_x\} \in E(F)$ if and only if $\{a(v_1), \dots, a(v_x)\} \in E(H)$. If $a$ satisfies the above then we call $a$ an \emph{isomorphism} (from $F$ to $H$). 

Similarly, two vertex-coloured hypergraphs $(F, \nu_F), (H, \nu_H)$ are isomorphic if there is an isomorphism $a$ from $F$ to $H$ that preserves the colours of the vertices, that is, for each $v \in V(F)$, $\nu_F(v) = \nu_H(a(v))$ is satisfied.

\subsubsection{Sub-hypergraphs}
In the present work, whenever we refer to a \emph{sub-hypergraph} $H$ of a hypergraph $G$ we assume that $H$ is \emph{induced} by some set $A$ of hyperedges of $G$, that is, $V(H) = \bigcup_{e \in A}e$ and $E(H) = A$. For $A \subseteq E(G)$, we write $G[A]$ for the hypergraph that is induced by $A$.

Given two hypergraphs $F, G$, we write $\subs{F}{G}$ for the set of all sub-hypergraphs $H$ of $G$ that are isomorphic to $F$.

If $G$ comes with a vertex-colouring $\nu_G$, then every sub-hypergraph $H$ of $G$ admits a natural vertex-colouring $\nu_H = \nu_G|_{V(H)}$, which is the restriction of $\nu_G$ to the vertices of $H$. We write $\subs{(F, \nu_F)}{(G, \nu_G)}$ for the set of all (vertex-coloured) sub-hypergraphs $(H, \nu_H)$ of $(G, \nu_G)$ that are isomorphic to $(F, \nu_F)$.

\subsubsection{Special Classes of Uniform Hypergraphs}
A hypergraph, all hyperedges of which have the same size $\hs$, is called $\hs$-\textit{uniform}. For $k, \hs \in \mathbb{N}$, we write $\mathcal{G}_k(\hs)$ for the class of all (isomorphism types) of $\hs$-uniform hypergraphs with $k$ hyperedges, without isolated vertices. We also set $\mathcal{G}_{\leq k}(\hs) = \bigcup_{\ell = 1}^{k}\mathcal{G}_\ell(\hs)$. 

Similarly, for the case of $C$-coloured hypergraphs, we write $\mathcal{G}_k(C;\hs)$ for the class of all (isomorphism types) of $C$-coloured $\hs$-uniform hypergraphs with $k$ hyperedges, without isolated vertices. We also set $\mathcal{G}_{\leq k}(C;\hs) = \bigcup_{\ell = 1}^{k}\mathcal{G}_{\ell}(C; \hs)$.  

\subsubsection{Gaifman Graphs and the Treewidth of Hypergraphs}
Let $G$ be a \emph{graph}. A \emph{tree decomposition} of $G$ is a pair $(\mathcal{T}, \beta)$ of a tree $\mathcal{T}$ and a mapping $\beta : V(\mathcal{T}) \to 2^{V(G)}$ with the following properties:
\begin{itemize}
\item For each $\{u, v\} \in E(G)$, there is $t \in V(\mathcal{T})$, such that $\{u, v\} \subseteq \beta(t)$.
\item For each $v \in V(G)$, the subtree of $\mathcal{T}$ consisted of all vertices $t \in V(\mathcal{T})$ such that $v \in \beta(t)$, is connected.
\end{itemize}
The \emph{width} of a tree decomposition $(\mathcal{T}, \beta)$ is given by $\max_{t \in V(T)}|\beta(t)|-1$. The \emph{treewidth} of $G$ is the minimum width among all possible tree decompositions of $G$.

The treewidth of a hypergraph $H$ is defined as the treewidth of the \emph{Gaifman graph} of $H$, which is a graph on the vertices of $H$ where two vertices $u, v$ are made adjacent if there is $e \in E(G)$ such that $\{u, v\} \subseteq e$ is satisfied.

\subsection{Homomorphisms, Embeddings and Automorphisms}
Given two hypergraphs $F, G$, a \emph{homomorphism} from $F$ to $G$ is a mapping $h : V(F) \to V(G)$ with the following property: for each hyperedge $e \in E(F)$ we have that $\bigcup_{v \in e}h(v) \in E(G)$ is satisfied. A homomorphism from $F$ to $G$ that is injective is called an \emph{embedding} and an embedding from a hypergraph to itself is called an \emph{automorphism}. We write $\homs{F}{G}$ for the set of all homomorphisms from $F$ to $G$ and $\embs{F}{G}$ for the set of all embeddings from $F$ to $G$. We also write $\auts(F)$ for the set of all automorphisms of $F$.

If case $F, G$ come with vertex-colourings $\nu_F, \nu_G$ respectively, then a homomorphism $h$ from $(F, \nu_F)$ to $(G, \nu_G)$ is a homomorphism from $F$ to $G$ that preserves the colours of the vertices, that is, for each $v \in V(F)$, $\nu_F(v) = \nu_G(h(v))$. 

The sets $\homs{(F, \nu_F)}{(G, \nu_G)}, \embs{(F, \nu_F)}{(G, \nu_G)}, \auts{(F, \nu_F)}$ are defined accordingly.

\subsection{Partitions of Sets and the Möbius function}
A \emph{partition} of a finite set $S$ is a collection of pair-wise disjoint subsets of $S$, the union of which is equal to $S$. The elements of a partition are called \emph{blocks}, and given a partition $\rho$ we write $|\rho|$ for the number of blocks of $\rho$. We write $\mathsf{Part}(S)$ for the set of all partitions of $S$. For two partitions $\sigma, \rho$ of $S$, we say that $\sigma$ \emph{refines} $\rho$, and write $\sigma \leq \rho$ if we can obtain $\sigma$ by further partitioning (some of) the blocks of $\rho$. We write $\bot_S$ for the \emph{finest} partition of $S$, that is, every element of $S$ consists a block of $\rho$. We also write $\top_S = \{S\}$ for the \emph{coarsest} partition. Whenever the set $S$ is clear from context, we write $\bot$ and $\top$ for $\bot_S$ and $\top_S$ respectively.

The \emph{Möbius function} $\mu_S : \mathsf{Part}(S) \times \mathsf{Part}(S) \to \mathbb{Z}$ of the poset $(\mathsf{Part}(S), \leq)$ is given explicitly below.\footnote{We refer the reader to \cite[Chapter 3.7]{Stanley11} for a detailed introduction to the Möbius function of posets, however we need not any other properties of Möbius functions.}
\begin{definition}[The M\"obius function of partitions \cite{aivasiliotis2024parameterisedholantproblems}]\label{def:mobius}
    Let $S$ be a finite set and let $\rho=\{B_1,\dots,B_t\}$ be a partition of $S$. Let furthermore $\sigma$ be a partition of $S$ with $\sigma \leq \rho$. For each $i\in [t]$, let $\sigma^i$ be the subpartition of $\sigma$ refining $\{B_i\}$, that is, $\sigma = \dot{\cup}_{i=1}^t \sigma^i$ and $\sigma^i$ is a partition of $B_i$ for all $i\in [t]$. Then the \emph{M\"obius function} of $\sigma$ and $\rho$ is defined as follows:
    \[ \mu_S(\sigma,\rho):=  \prod_{i=1}^t (-1)^{|\sigma^i|-1}(|\sigma^i|-1)! \,,\]
    where $0!=1$ as usual. We will drop the subscript $S$ of $\mu_S$ if it is clear from the context and we will further simplify notation by writing $\mu(\rho)$ instead of $\mu(\bot, \rho)$.
\end{definition}

\subsection{Quotient Hypergraphs}

Given a hypergraph $G$, we write $\mathsf{Part}(G)$ for the set of all partitions of $V(G)$. If $G$ comes with a vertex-colouring $\nu$, we write $\mathsf{Part}(G, \nu)$ for the set of all \emph{colour-consistent} partitions of $V(G)$, where a partition is colour-consistent if no two vertices of different colours are contained in the same block of the partition.

Given a hypergraph $G$ and a partition $\rho \in \mathsf{Part}(G)$, we write $G/\rho$ for the \emph{quotient hypergraph} defined as follows:

\begin{itemize}
\item The vertices of $G/\rho$ are given by the blocks of $\rho$ and

\item for any $r \leq \mathsf{rank}(G)$, blocks $B_1, \dots, B_r$ are made adjacent if and only if there is $e \in E(G)$ such that $\bigcup_{i = 1}^{r}B_i = e$ and for each $1 \leq i \leq r$, $e \cap B_i \neq \emptyset$. 
\end{itemize}
If $G$ comes with a vertex-colouring, then $G/\rho$ is defined similarly with respect to any \emph{colour-consistent} partition $\rho$. Naturally, the vertices (blocks) of the quotient hypergraph are coloured by the colour of the vertices contained in the respective blocks.

\subsection{Basic Transformations of Embedding- and Subgraph-Counts}
\begin{lemma}\label{claim:subposetMobius}
Let $(P, \leq)$ be a poset and let $(Q, \leq)$ be a subposet of $(P, \leq)$ such that for any $\sigma \notin Q$ and $\rho \geq \sigma$, it holds $\rho \notin Q$.
Let $f : Q \to \mathbb{R}, g : Q \to \mathbb{R}$ such that $f(\sigma) = \sum_{\rho \in Q,\, \rho \geq \sigma}g(\rho)$. Then, $g(\sigma) = \sum_{\rho \in Q,\, \rho \geq \sigma}\mu(\sigma, \rho)\cdot f(\rho)$, where $\mu : P \times P \to \mathbb{R}$ is the Möbius function of the poset $(P, \leq)$.
\end{lemma}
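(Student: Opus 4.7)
The plan is to prove this by downward induction on $\sigma$ within $Q$, using the standard Möbius identity from the ambient poset $P$. The key structural fact I want to exploit first is that the stated condition on $Q$ (namely $\sigma \notin Q$ and $\rho \geq \sigma$ imply $\rho \notin Q$) is equivalent to saying that $Q$ is a downward-closed order ideal in $P$. In particular, whenever $\sigma, \tau \in Q$ satisfy $\sigma \leq \tau$, every element $\rho \in P$ lying in the interval $\sigma \leq \rho \leq \tau$ must also lie in $Q$. This is precisely what will let me invoke Möbius identities computed in $P$ while keeping all sums restricted to $Q$.

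For the induction, I would order elements of $Q$ by the number of elements of $Q$ strictly above them (or equivalently by a fixed linear extension of the reverse order). The base case consists of the maximal elements of $Q$: for such $\sigma$ the hypothesis reduces to $f(\sigma)=g(\sigma)$, and the claimed formula collapses to $\mu(\sigma,\sigma)f(\sigma)=f(\sigma)$, which agrees. For the inductive step, I would rearrange the hypothesis as
\[
g(\sigma) \;=\; f(\sigma) \;-\; \sum_{\rho \in Q,\, \rho > \sigma} g(\rho),
\]
substitute the inductive formula for each $g(\rho)$ with $\rho > \sigma$, and then swap the order of summation to obtain
\[
g(\sigma) \;=\; f(\sigma) \;-\; \sum_{\tau \in Q,\, \tau > \sigma} f(\tau) \sum_{\substack{\rho \in Q \\ \sigma < \rho \leq \tau}} \mu(\rho,\tau).
\]

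The main obstacle —- and also the place where the downward-closedness of $Q$ becomes essential —- is evaluating the inner sum. Because $\tau \in Q$ and $Q$ is an order ideal of $P$, the constraint ``$\rho \in Q$'' in the inner summation is redundant: it is automatically satisfied by every $\rho \leq \tau$. Hence the inner sum equals $\sum_{\sigma < \rho \leq \tau} \mu(\rho,\tau)$, taken in the full poset $P$, and the classical Möbius identity $\sum_{\sigma \leq \rho \leq \tau} \mu(\rho,\tau) = [\sigma = \tau]$ gives that this equals $-\mu(\sigma,\tau)$ whenever $\sigma < \tau$. Substituting and using $\mu(\sigma,\sigma)=1$ yields
\[
g(\sigma) \;=\; f(\sigma) + \sum_{\tau \in Q,\, \tau > \sigma} \mu(\sigma,\tau) f(\tau) \;=\; \sum_{\tau \in Q,\, \tau \geq \sigma} \mu(\sigma,\tau) f(\tau),
\]
which is the desired identity. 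The only genuine content beyond Möbius inversion in $P$ is the observation that downward-closedness of $Q$ permits interchanging the restricted and unrestricted interval sums; everything else is bookkeeping.
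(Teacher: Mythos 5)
Your proof is correct, but it follows a genuinely different route from the paper's. The paper extends $f$ and $g$ by zero to all of $P$ (setting $\hat f(\sigma)=\hat g(\sigma)=0$ for $\sigma\in P\setminus Q$), verifies that the relation $\hat f(\sigma)=\sum_{\rho\geq\sigma}\hat g(\rho)$ then holds over the \emph{entire} poset $P$ --- here the hypothesis on $Q$ is used in the form ``the complement of $Q$ is upward closed,'' so that for $\sigma\notin Q$ every summand vanishes --- and then invokes Möbius inversion on $(P,\leq)$ as a black box before restricting back to $Q$. You instead stay inside $Q$ and run a downward induction, using the hypothesis in the equivalent form ``$Q$ is a downward-closed ideal,'' so that the interval $\{\rho:\sigma<\rho\leq\tau\}$ for $\tau\in Q$ lies entirely in $Q$ and the defining recurrence $\sum_{\sigma\leq\rho\leq\tau}\mu(\rho,\tau)=[\sigma=\tau]$ can be applied directly. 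Your argument is more self-contained (it reproves the inversion rather than citing it) at the cost of being slightly longer; the paper's is shorter but leans on Möbius inversion over $P$ as an external fact. Both correctly identify the same essential point, namely that the closure condition lets one pass between sums restricted to $Q$ and sums over $P$. The only (shared, implicit) caveat is that both arguments need the relevant upper sets and intervals to be finite for the sums and the induction to make sense, which holds in the paper's application to finite partition lattices.
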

\begin{proof}
Consider the extensions of $f$ and $g$ to $P$, denoted as $\hat{f}$ and $\hat{g}$ respectively, such that for all $\sigma \in P \setminus Q$, $\hat{f}(\sigma) = \hat{g}(\sigma) = 0$. For $\sigma \in P\setminus Q$, we have \[\hat{f}(\sigma) = \sum_{\rho \in P, \, \rho \geq \sigma}\hat{g}(\rho) = 0\,,\]
since, by hypothesis, for any $\rho \geq \sigma$ we have $\rho \in P\backslash Q$ and hence $\hat{g}(\rho) = 0$.
In case $\sigma \in Q$, we have \[\hat{f}(\sigma) = f(\sigma) = \sum_{\rho \in Q,\, \rho \geq \sigma}g(\rho) = \sum_{\rho \in P,\, \rho \geq \sigma}\hat{g}(\rho)\,,\]
which follows again from the fact that for $\rho \in P\setminus Q$, it is $\hat{g}(\rho) = 0$.
Next, we invoke Möbius inversion over $(P, \leq)$ for $\hat{f}$ and $\hat{g}$ and deduce that for $\sigma \in P$ we have \[\hat{g}(\sigma) = \sum_{\rho \in P,\, \rho \geq \sigma}\mu(\sigma, \rho)\cdot \hat{f}(\rho)\,\]
and so for any $\sigma \in Q$ we have
\[g(\sigma) = \hat{g}(\sigma) = \sum_{\rho \in Q,\, \rho \geq \sigma}\mu(\sigma, \rho)\cdot f(\rho)\,.\]
\end{proof}

The following transformations are well-known for the setting of (uncoloured) \textit{graphs} and lifting them to vertex-coloured hypergraphs follows almost the same line of argumentation (see e.g., \cite[Chapter 5.2.3]{Lovasz12} for \Cref{lem:uncolHomsEmbs} in the setting of uncoloured graphs). For reasons of self-containment and to address a few subtleties that arise in the case of hypergraphs, we provide the complete proofs.

\begin{lemma}\label{lem:uncolHomsEmbs}
Let $(H, \nu_H), (G, \nu_G)$ be $\mathcal{S}$-coloured hypergraphs. For any $\sigma \in \mathsf{Part}(H, \nu_H)$ it holds,
\begin{equation}\label{eq:embeddings}
\Emb((H/\sigma, \nu_{H/\sigma}) \to (G, \nu_G)) = \sum_{\substack{\rho \in \mathsf{Part}(H, \nu_H) \\ \rho \geq \sigma}}\mu(\sigma, \rho)\cdot\Hom((H/\rho, \nu_{H /\rho}) \to (G, \nu_G))\,,   
\end{equation}
where $\mu : \mathsf{Part}(H) \times \mathsf{Part}(H) \to \mathbb{R}$ is the Möbius function of the partition lattice$(\mathsf{Part}(H), \leq)$.
\end{lemma}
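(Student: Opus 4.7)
The plan is to derive the identity by M\"obius inversion from a corresponding ``upward'' summation identity that expresses homomorphism counts in terms of embedding counts. Specifically, I would first establish
\[
\Hom((H/\sigma, \nu_{H/\sigma}) \to (G, \nu_G)) \;=\; \sum_{\substack{\rho \in \mathsf{Part}(H, \nu_H) \\ \rho \geq \sigma}} \Emb((H/\rho, \nu_{H/\rho}) \to (G, \nu_G))\,,
\]
and then invert this identity using \Cref{claim:subposetMobius}. The reason for proving this particular identity (instead of inverting directly) is that it is the natural ``classify homomorphisms by their kernel'' identity, and inverting it recovers exactly equation~\eqref{eq:embeddings}.

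To prove the summation identity, the plan is to exhibit a bijection between colour-preserving homomorphisms $h : (H/\sigma, \nu_{H/\sigma}) \to (G, \nu_G)$ and pairs $(\rho, \tilde{h})$ in which $\rho \in \mathsf{Part}(H, \nu_H)$ satisfies $\rho \geq \sigma$ and $\tilde{h}$ is a colour-preserving embedding $(H/\rho, \nu_{H/\rho}) \to (G, \nu_G)$. Let $\pi_\sigma : V(H) \to V(H/\sigma)$ denote the canonical quotient map. Given $h$, define $\rho$ to be the partition of $V(H)$ whose blocks are the fibers of $h \circ \pi_\sigma$. Then $\rho \geq \sigma$ (since $\pi_\sigma$ already identifies vertices sharing a block of $\sigma$), and $\rho$ is colour-consistent because every fiber is mapped to a single vertex of $G$ whose colour agrees with $\nu_H$ on the whole fiber, using that both $h$ and $\pi_\sigma$ preserve colours. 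The map $h \circ \pi_\sigma$ then factors uniquely through $H/\rho$ as an injection $\tilde{h}$, which one checks to be a colour-preserving homomorphism: every hyperedge of $H/\rho$ is of the form $\{B_1', \dots, B_r'\}$ arising from a hyperedge $e \in E(H)$ partitioned by the $B_i'$, and its image under $\tilde{h}$ coincides with the image of $e$ under $h \circ \pi_\sigma$, which is a hyperedge of $G$ because $h$ is a homomorphism on $H/\sigma$. The inverse map sends $(\rho, \tilde{h})$ to $\tilde{h}$ pre-composed with the natural quotient $H/\sigma \to H/\rho$ (which is well-defined because $\rho \geq \sigma$), and it is immediate that the two constructions are mutually inverse.

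For the M\"obius inversion step, I would set $P := \mathsf{Part}(H)$ and $Q := \mathsf{Part}(H, \nu_H)$. The hypothesis of \Cref{claim:subposetMobius} --- that $\sigma \notin Q$ and $\rho \geq \sigma$ imply $\rho \notin Q$ --- holds because any block of $\sigma$ containing two differently-coloured vertices is contained in some (larger) block of $\rho$, which therefore also contains two differently-coloured vertices. Applying the lemma with $f(\sigma) = \Hom((H/\sigma, \nu_{H/\sigma}) \to (G, \nu_G))$ and $g(\rho) = \Emb((H/\rho, \nu_{H/\rho}) \to (G, \nu_G))$ yields the desired formula, with $\mu$ being the M\"obius function of the full partition lattice $\mathsf{Part}(H)$ as made explicit in \Cref{def:mobius}.

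The only genuine obstacle in this plan is checking that $\tilde{h}$ is a homomorphism on the quotient hypergraph: the quotient construction can collapse several vertices of a hyperedge of $H$ into the same block of $\rho$, producing a hyperedge of smaller rank in $H/\rho$, so one must carefully align the quotient-hyperedge definition with the hypergraph-homomorphism definition. This is routine bookkeeping once both definitions are unfolded, but it is the step where the argument for vertex-coloured hypergraphs differs in appearance (though not in substance) from the familiar uncoloured-graph case.
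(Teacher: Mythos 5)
Your proposal is correct and follows essentially the same route as the paper's proof: classify colour-preserving homomorphisms from $H/\sigma$ by the (colour-consistent) partition induced by their fibers, observe that each class is in bijection with the embeddings of the corresponding quotient, and then invert over the subposet of colour-consistent partitions via \Cref{claim:subposetMobius}. The bookkeeping you flag about quotient hyperedges is exactly the point the paper also spells out, and your handling of it is sound.
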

\begin{proof}
Given $\phi \in \mathsf{Hom}((H/\sigma, \nu_{H/\sigma}) \to (G, \nu_G))$, we let $\rho_{\phi}$ denote the partition induced by $\phi$, that is, the partition having blocks given by $\phi^{-1}(u)$ for each $u \in V(G)$ such that $\phi^{-1}(u) \neq \emptyset$. By slightly abusing notation we can see $\rho_\phi$ both as a partition of $V(H/\sigma)$ and a partition of $V(H)$ that is a coarsening of $\sigma$. For the sake of simplicity, we write $H/\rho_\phi$ instead of $(H /\sigma) /\rho_\phi$.

Clearly, $\rho_{\phi}$ is color-consistent. Furthermore, the map $\psi : V(H/\rho_\phi) \to V(G)$ that maps each block $\phi^{-1}(u)$ to $u$ is an embedding. To see this, first observe that it is indeed injective. Then, let $\{\phi^{-1}(u_1), \ldots, \phi^{-1}(u_r)\} \in E(H/\rho_\phi)$. By definition, there exist non-empty subsets $V_1 \subseteq \phi^{-1}(u_1), \ldots, V_\hs \subseteq \phi^{-1}(u_r)$ such that $\bigcup_{i=1}^{r}V_i \in E(H/\sigma)$. Also, $\psi(\{\phi^{-1}(u_1), \dots, \phi^{-1}(u_r)\}) = \{u_1, \dots, u_r\}$ and $ \phi(\bigcup_{i=1}^{r}V_i) = \{u_1, \dots, u_r\}$. Since, $\bigcup_{i=1}^{r}V_i \in E(H/\sigma)$, it follows that $\{u_1, \dots, u_r\} \in E(G)$, and so $\psi \in \embs{(H/\rho_{\phi}, \nu_{H/\rho_{\phi}})}{(G, \nu_G)}$. 

On the other hand, the set of homomorphisms that induce the same partition $\rho \in \mathsf{Part}(H/\sigma, \nu_{H/\sigma})$ define an equivalence class with $\Emb((H/\rho, \nu_{H/\rho}) \to (G, \nu_G))$ many elements. For this, let $\rho \in \mathsf{Part}(H/\sigma, \nu_{H/\sigma})$. For $\psi \in \embs{(H/\rho, \nu_{H/\rho})}{(G, \nu_G)}$, let $\phi : V(H/\sigma) \to V(G)$ be defined such that $\phi(v) = \psi(B)$, where $B$ is the block of $\rho$ that contains $v$. By definition, $\phi$ is color-preserving. Given a hyperedge $e = \{v_1, \ldots, v_r\} \in E(H/\sigma)$, let $B_1, \ldots, B_{\ell}$ be the blocks of $\rho$ intersecting with $e$. By definition $\{B_1, \ldots, B_{\ell}\} \in E(H/\rho)$. We have $\phi(e) = \psi(\{B_1, \ldots, B_{\ell}\}) \in E(G)$. Hence, $\phi$ is a homomorphism in $\homs{(H/\sigma, \nu_{H/\sigma})}{(G, \nu_G)}$ that induces $\rho$. 

Let $\Hom((H/\sigma, \nu_{H/\sigma}) \to (G, \nu_G))[\rho]$ denote the number of homomorphisms that induce the same partition $\rho \in \mathsf{Part}(H/\sigma, \nu_{H/\sigma})$. Recall that we can equivalently treat $\rho \in \mathsf{Part}(H/\sigma, \nu_{H/\sigma})$ as a partition in $\mathsf{Part}(H, \nu_H)$ such that $\rho \geq \sigma$. We have

\begin{align*}
\Hom(((H/\sigma, \nu_{H/\sigma}) \to (G, \nu_G)) &= \sum_{\substack{\rho \in \mathsf{Part}(H, \nu_H) \\ \rho \geq \sigma}}\Hom(((H/\sigma, \nu_{H/\sigma}) \to (G, \nu_G))[\rho] \\ &= \sum_{\substack{\rho \in \mathsf{Part}(H, \nu_H) \\ \rho \geq \sigma}}\Emb((H/\rho, \nu_{H/\rho}) \to (G, \nu_G)).
\end{align*}
Finally, we observe that the conditions of \Cref{claim:subposetMobius} are satisfied by taking $P = \mathsf{Part}(H)$, $Q = \mathsf{Part}(H, \nu_H)$, $f : \sigma \mapsto \Hom(((H/\sigma, \nu_{H/\sigma}) \to (G, \nu_G))$ and $g : \sigma \mapsto \Emb(((H/\sigma, \nu_{H/\sigma}) \to (G, \nu_G))$. Indeed, no coarsening of a non-color-consistent partition is color-consistent. So, we get 
\begin{equation*}
\Emb((H/\sigma, \nu_{H/\sigma}) \to (G, \nu_G)) = \sum_{\substack{\rho \in \mathsf{Part}(H, \nu_H) \\ \rho \geq \sigma}}\mu(\sigma, \rho)\cdot\Hom((H/\rho, \nu_{H /\rho}) \to (G, \nu_G))\,
\end{equation*}
where $\mu$ is the Möbius function of the partition lattice $(\mathsf{Part}(H), \leq)$.
\end{proof}

\begin{proposition}\label{prop:subgraphsAndEmbs} Let $(H, \nu_H), (G, \nu_G)$ be hypergraphs. It holds,
\begin{equation}\label{eq:subgraphs}
\Sub((H, \nu_H) \to (G, \nu_G)) = \Emb((H, \nu_H) \to (G, \nu_G)) \cdot \Aut(H, \nu_H)^{-1}.
\end{equation}
\end{proposition}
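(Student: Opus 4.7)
The plan is to set up a natural surjection from the set of color-preserving embeddings onto the set of color-preserving sub-hypergraphs isomorphic to $(H,\nu_H)$, and then show that every fiber has cardinality exactly $|\auts(H,\nu_H)|$. Dividing through then yields the claimed identity.

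First I would define the map $\Phi : \embs{(H,\nu_H)}{(G,\nu_G)} \to \subs{(H,\nu_H)}{(G,\nu_G)}$ as follows. Given $\phi \in \embs{(H,\nu_H)}{(G,\nu_G)}$, consider the set of hyperedges $A_\phi = \{\phi(e) : e \in E(H)\} \subseteq E(G)$ and let $\Phi(\phi) := G[A_\phi]$ endowed with the induced coloring $\nu_G|_{V(G[A_\phi])}$. Since $\phi$ is an injective homomorphism that preserves colors, $\phi$ is an isomorphism between $(H,\nu_H)$ and $\Phi(\phi)$, so $\Phi(\phi)$ indeed lies in $\subs{(H,\nu_H)}{(G,\nu_G)}$.

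Next I would verify surjectivity: if $(H',\nu_{H'}) \in \subs{(H,\nu_H)}{(G,\nu_G)}$, then by definition there exists an isomorphism $\psi : (H,\nu_H) \to (H',\nu_{H'})$. Composing with the inclusion into $G$ yields an element of $\embs{(H,\nu_H)}{(G,\nu_G)}$ that is mapped by $\Phi$ to $(H',\nu_{H'})$.

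The heart of the argument is the fiber analysis. Fix $(H',\nu_{H'}) \in \subs{(H,\nu_H)}{(G,\nu_G)}$ and a particular embedding $\phi_0 \in \Phi^{-1}((H',\nu_{H'}))$, viewed as an isomorphism $(H,\nu_H)\to(H',\nu_{H'})$. I claim the map
\[
\auts(H,\nu_H) \to \Phi^{-1}((H',\nu_{H'})),\quad \alpha \mapsto \phi_0 \circ \alpha
\]
is a bijection. It is clearly injective since $\phi_0$ is injective, and every $\phi \in \Phi^{-1}((H',\nu_{H'}))$ is an isomorphism $(H,\nu_H)\to(H',\nu_{H'})$, so $\phi_0^{-1}\circ\phi \in \auts(H,\nu_H)$ witnesses surjectivity. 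Hence every fiber has size exactly $|\auts(H,\nu_H)|$.

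Combining these two facts, the total count satisfies
\[
\Emb((H,\nu_H)\to(G,\nu_G)) \;=\; |\auts(H,\nu_H)| \cdot |\subs{(H,\nu_H)}{(G,\nu_G)}| \;=\; \Aut(H,\nu_H)\cdot \Sub((H,\nu_H)\to(G,\nu_G)),
\]
and rearranging yields the proposition. The only mildly delicate point, which I expect to be the main obstacle, is verifying that the image $G[A_\phi]$ together with the induced coloring really is isomorphic (as a colored hypergraph) to $(H,\nu_H)$ via $\phi$ itself; this requires checking both that $\phi$ is a bijection onto $V(G[A_\phi])$ (which follows from injectivity and the absence of isolated vertices in $G[A_\phi]$ once restricted to the edge set $A_\phi$) and that the induced edge set of $G[A_\phi]$ is exactly $A_\phi$, which is immediate from our convention that sub-hypergraphs are induced by their edge sets.
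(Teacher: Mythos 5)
Your argument is correct and is essentially the paper's proof in slightly different packaging: the paper phrases it as a free action of $\auts{(H,\nu_H)}$ on $\embs{(H,\nu_H)}{(G,\nu_G)}$ by precomposition, whose orbits are exactly the fibers of your map $\Phi$, and both arguments rest on the same convention that sub-hypergraphs are hyperedge-induced so that an embedding is an isomorphism onto its edge-image. No further changes needed.
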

\begin{proof}
Let $a \in \auts{(H, \nu_H)}$ and $\phi \in \embs{(H, \nu_H)}{(G, \nu_G)}$. We define the composition of mappings $\circ : \auts{(H, \nu_H)} \times \embs{(H, \nu_H)}{(G, \nu_G)} \mapsto \embs{(H, \nu_H)}{(G, \nu_G)}$ as the group action of the group $\auts{(H, \nu_H)}$ acting on $\embs{(H, \nu_H)}{(G, \nu_G)}$ that maps $(a, \phi)$ to $a\,\circ\,\phi$, where for each $v \in V(H), (a\,\circ\,\phi)(v) = \phi(a(v))$. The action $\circ$ is a free group action, so the orbit of each $\phi \in \embs{(H, \nu_H)}{(G, \nu_G)}$ is of size $\#\auts{(H, \nu_H)}$. Finally, since the image of $\phi$ is a sub-hypergraph of $G$ (that is, it is hyperedge-induced) and the group action preserves the image of $\phi$, it follows that the orbits of $\embs{(H, \nu_H)}{(G, \nu_G)}$ correspond uniquely to sub-hypergraphs of $(G, \nu_G)$ that are isomorphic to $(H, \nu_H)$, yielding \Cref{eq:subgraphs}.
\end{proof}

\subsection{Parameterised (Counting) Complexity Theory}

Recall that a counting problem is a function $P : \{0, 1\}^* \to \mathbb{Q}$. A \emph{parameterised counting problem} is a pair $(P, \kappa)$ where $P$ is a counting problem and $\kappa : \{0, 1\}^* \to \mathbb{N}$ is a computable function, called the \emph{parametrisation} of $P$. In particular, $\kappa$ is evaluated on the input $x \in \{0, 1\}^*$ of $P$. Parameterised (counting) complexity  theory\footnote{We refer the readers to \cite{CyganFKLMPPS15} for a detailed exposition on parameterised algorithms and complexity theory and in particular to \cite[Chapter 14]{FlumG06} and \cite{Curticapean15} for the counting counterpart.} relaxes the notion of efficiency as it is perceived in the classical (counting) complexity theory, where a counting problem $P$ is solvable efficiently if there is an algorithm that computes $P(x)$ in time $|x|^{O(1)}$, based on the following observation: in the parameterised world, it is often the case that the parameter $\kappa(x)$ is much smaller compared to the size of the input $x$. Hence, algorithms computing $P(x)$ in time $f(\kappa(x))\cdot |x|^{O(1)}$ are sought, for any computable function $f$ (that depends only on the parametrisation $\kappa$). The problems that are solvable in the aforementioned running time comprise the class of \emph{fixed-parameter tractable} problems (FPT).

For containment of a problem $(P, \kappa)$ in the class FPT, it suffices to construct an algorithm that computes $P(x)$ in time $f(\kappa(x))\cdot |x|^{O(1)}$ (for some computable function $f$), which can be done either explicitly or even implicitly via reducing $(P, \kappa)$ to a problem that is already known to be in the class FPT, via \emph{FPT Turing-reductions} defined below.

\begin{definition}[Parameterised Reductions]
Let $(P, \kappa), (P', \kappa')$ be two parameterised counting problems. An \emph{FPT Turing-reduction} from $(P, \kappa)$ to $(P', \kappa')$ is an algorithm $\mathbb{A}$ to compute $P(x)$ that has oracle access to $(P', \kappa')$ satisfying the following:

\begin{itemize}
\item There is a computable function $f$ such that $\mathbb{A}$ computes $P(x)$ in time $f(\kappa(x))\cdot |x|^{O(1)}$.
\item There is a computable function $g$ such that for any queried instance $y$, we have $\kappa'(y) \leq g(\kappa(x))$.
\end{itemize}

If in particular, $\mathbb{A}$ computes $P(x)$ in time $f(\kappa(x))\cdot O(|x|)$ making at most $f(\kappa(x))$ queries to the oracle, then we call $\mathbb{A}$ a \emph{linear FPT Turing-reduction}.

We write $(P, \kappa) \leq^{\mathsf{FPT}}_{\mathsf{T}} (P', \kappa')$ $($resp. $(P, \kappa) \fptlinred (P', \kappa'))$ if there is an FPT Turing-reduction $($resp. linear FPT Turing-reduction$)$ from $(P, \kappa)$ to $(P', \kappa')$. 
\end{definition}

It is in fact a folklore result in parameterised complexity theory that an FPT Turing-reduction implies that whenever the right-hand side problem of the reduction is in FPT, then so is the left-hand side. 

In particular, if the FPT Turing-reduction is linear then, we also have better control over the runtime complexity of $(P, \kappa)$, as shown in the following less-straightforward result (see \cite[Lemma 2.8]{aivasiliotis2024parameterisedholantproblems} for a proof), which is very useful if one is after a more fine-grained analysis.

\begin{lemma}[\cite{aivasiliotis2024parameterisedholantproblems}]
Let $d\geq 0$ and $c\geq 1$ be reals, and let $(P,\kappa)$ and $(P',\kappa')$ be parameterised counting problems such that $(P,\kappa)\fptlinred (P',\kappa')$. Assume that $(P',\kappa')$ can be solved in time $f(\kappa'(x))\cdot O(\log^d(|x|)\cdot |x|^c)$ for some computable function $f$. Then there is a computable function $g$ such that $(P,\kappa)$ can be solved in time $g(\kappa(x))\cdot O(\log^d(|x|)\cdot |x|^c)$.    
\end{lemma}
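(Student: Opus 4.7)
The plan is to unfold the linear FPT Turing-reduction and substitute the hypothesised algorithm for the oracle, then carefully bound all the resulting time contributions in the desired form $g(\kappa(x))\cdot O(\log^d(|x|)\cdot |x|^c)$.

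First I would invoke the assumed reduction $\mathbb{A}$, so that computing $P(x)$ consists of (i) running $\mathbb{A}$ itself in time $f(\kappa(x))\cdot O(|x|)$, plus (ii) answering at most $f(\kappa(x))$ oracle queries, each query $y$ satisfying $\kappa'(y)\leq g_0(\kappa(x))$ for some computable function $g_0$ from the definition of FPT-reducibility. A simple but crucial observation is that the size of every query $y$ is at most the running time of $\mathbb{A}$, hence $|y|\leq f(\kappa(x))\cdot c_0\cdot |x|$ for some absolute constant $c_0$; without loss of generality we assume $|x|\geq 2$.

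Next I would replace each oracle call by the hypothesised algorithm for $(P',\kappa')$, which answers the query in time $f'(\kappa'(y))\cdot O(\log^d(|y|)\cdot |y|^c)$ for some computable $f'$. Using $\kappa'(y)\leq g_0(\kappa(x))$, the $f'(\kappa'(y))$-factor is bounded by $f'(g_0(\kappa(x)))$, a function of $\kappa(x)$ only. For the remaining factor I use $\log(|y|)\leq \log(f(\kappa(x)))+\log(c_0|x|)$, so by convexity
\[
\log^d(|y|)\ \leq\ 2^d\bigl(\log^d(f(\kappa(x)))+\log^d(c_0|x|)\bigr)\ \leq\ h_1(\kappa(x))+h_2\cdot \log^d(|x|),
\]
for a computable $h_1$ and absolute constant $h_2$; and $|y|^c\leq f(\kappa(x))^c\cdot c_0^c\cdot |x|^c$. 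Since $|x|\geq 2$ and $c\geq 1$, the term $h_1(\kappa(x))\cdot |x|^c$ can be absorbed into $h_1(\kappa(x))\cdot \log^d(|x|)\cdot |x|^c$ (for $d\geq 1$ because $\log^d(|x|)\geq 1$, and for $d=0$ the $\log^d$-factor is constant). Gathering all $\kappa$-dependent terms into a single computable function, each oracle call takes at most $g_1(\kappa(x))\cdot O(\log^d(|x|)\cdot |x|^c)$ time.

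Finally I would add up the costs: the reduction itself costs $f(\kappa(x))\cdot O(|x|)$, which is subsumed by $f(\kappa(x))\cdot O(\log^d(|x|)\cdot |x|^c)$ since $c\geq 1$; and we make at most $f(\kappa(x))$ oracle calls each costing $g_1(\kappa(x))\cdot O(\log^d(|x|)\cdot |x|^c)$. Setting $g(k):=f(k)\cdot(1+g_1(k))$ yields the desired total bound $g(\kappa(x))\cdot O(\log^d(|x|)\cdot |x|^c)$. The main obstacle is purely bookkeeping: ensuring that the $\kappa$-depending overheads arising from the logarithm of $|y|$ and from the $f(\kappa(x))^c$ blow-up in $|y|^c$ are genuinely absorbed into a single computable function of $\kappa(x)$ without inflating the $\log^d(|x|)\cdot |x|^c$ factor, which is exactly what the inequalities above achieve under the mild assumption $|x|\geq 2$.
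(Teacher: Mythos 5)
Your argument is correct and is essentially the standard proof of this composition lemma (the paper itself defers to the cited reference rather than reproving it, and the argument there is the same bookkeeping): bound each query size by the reduction's running time, substitute the algorithm for the oracle, and absorb all parameter-dependent factors into a single computable function. The only cosmetic point is that bounding $f'(\kappa'(y))$ by $f'(g_0(\kappa(x)))$ implicitly assumes $f'$ is non-decreasing; replacing $f'$ by its computable monotone majorant $n\mapsto\max_{j\le n}f'(j)$ makes this airtight.
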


On the other hand, to argue about non-containment of a problem $(P, \kappa)$ in the class FPT, we show that $(P, \kappa)$ is hard for the class $\#\mathrm{W}$[1] (under FPT Turing-reductions), which can be thought of as the parameterised counting equivalent of the class $\mathrm{NP}$ (see \cite[Chapter 14]{FlumG06}). It is known that under the Exponential Time Hypothesis (ETH) \cite{ImpagliazzoP01,ImpagliazzoPZ01}, which asserts that 3-SAT cannot be solved in time $\mathsf{exp}(o(n))$ (where $n$ is the number of variables of the input formula), $\#\mathrm{W}[1]$-hard problems are not in FPT \cite{Chenetal05,Chenetal06,CyganFKLMPPS15}. 

The canonical $\#\mathrm{W}$[1]-complete problem is $\#\textsc{p-Clique}$ which, given a graph $G$ and an integer $k$, computes the number of cliques of size $k$ in $G$. Hence, we say that a problem $(P, \kappa)$ is $\#\mathrm{W}[1]$-hard if $\#\textsc{p-Clique} \leq^{\mathsf{FPT}}_{\mathsf{T}} (P, \kappa)$. If in addition, $(P, \kappa) \leq^{\mathsf{FPT}}_{\mathsf{T}} \#\textsc{p-Clique}$, then $(P, \kappa)$ is $\#\mathrm{W}[1]$-complete.

\subsection{Holants as Linear Combinations of Homomorphism-Counts}\label{sec:HolantsInHomBasis}

In this section, we show how the holant value of an instance $(\Omega, k) \in \uni{\constuncolholantprob{\mathsf{r}}}(\mathcal{S})$ can be expressed as a finite combinatorial sum of homomorphism-counts of the form $\#\homs{*}{(G, \nu_G)}$, where $G$ is the underlying hypergraph of $\Omega$ and $\nu_G$ is the $\mathcal{S}$-colouring of $G$ implied by $\Omega$.

For the sake of simplicity, we normalise our signatures based on the following remark.

\begin{remark}[On $s(0) = 1$]\label{remark:zeroValues}
Following \cite[Remark 6.6]{aivasiliotis2024parameterisedholantproblems}, we may assume without loss of generality (both with respect to parameterised and classical complexity), that we only consider signatures $s$ such that, whenever $s(0) \neq 0$, we have $s(0) = 1$.   
\end{remark}

\begin{lemma}\label{lem:uncolHomBasis}
Let $\mathcal{S}$ be a finite set of signatures and let $\Omega = (G, \{s_v\}_{v \in V(G)})$, $k \in \mathbb{N}$ be an instance of $\uni{\constuncolholantprob{\mathsf{r}}}(\mathcal{S})$, where $G$ is $\hs$-uniform. We have
\begin{equation}\label{eq:uncolHolantHomBasis}
\uncolholant(\Omega, k) = \sum_{\colGraph{F} \in \kgraphsGamma}\zeta_{k, \mathcal{S}}{\colGraph{F}}\cdot \Hom(\colGraph{F} \to \colGraph{G})\,,
\end{equation}
where
\begin{align}\label{eq:coeffColored}
\nonumber&\zeta_{k, \mathcal{S}}{\colGraph{F}}  &= \sum_{\colGraph{H} \in \kgraphsBHS}\frac1{\Aut\colGraph{H}}\prod_{v \in V(H)}\nu_H(v)(\deg_H(v))\sum_{\substack{\rho \in \mathsf{Part}\colGraph{H} \\ \colGraph{F} \cong \colGraph{H/\rho}}}\mu(\rho)\,.
\end{align}   
\end{lemma}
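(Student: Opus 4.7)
The strategy is to convert $\uncolholant(\Omega,k)$ into a linear combination of homomorphism counts through a four-step pipeline: first express it as a sum over sub-hypergraph counts, then convert sub-hypergraph counts into embedding counts via \Cref{prop:subgraphsAndEmbs}, then apply Möbius inversion on the colour-consistent partition lattice via \Cref{lem:uncolHomsEmbs}, and finally reindex by the isomorphism type of the quotient. Throughout, I view $\Omega$ as the $\mathcal{S}$-coloured hypergraph $\colGraph{G}$ with $\nu_G(v)=s_v$, and invoke \Cref{remark:zeroValues} to normalise $s(0)=1$ whenever $s(0)\neq 0$, so that vertices not incident to any chosen hyperedge contribute a trivial factor of $1$.

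First, every $k$-subset $A\subseteq E(G)$ induces a hyperedge-induced sub-hypergraph $G[A]$ that is $\hs$-uniform, has exactly $k$ hyperedges, and contains no isolated vertex, so its isomorphism type (together with the inherited $\mathcal{S}$-colouring) lies in $\kgraphsBHS$. Grouping the sum defining $\uncolholant(\Omega,k)$ by this isomorphism type and discarding the trivial factors coming from vertices outside $V(G[A])$ yields
\[
\uncolholant(\Omega,k)=\sum_{\colGraph{H}\in\kgraphsBHS}\#\subs{\colGraph{H}}{\colGraph{G}}\cdot\prod_{v\in V(H)}\nu_H(v)(\deg_H(v)).
\]
Applying \Cref{prop:subgraphsAndEmbs} to rewrite $\#\subs{\colGraph{H}}{\colGraph{G}}=\Emb(\colGraph{H}\to\colGraph{G})/\Aut\colGraph{H}$, and then \Cref{lem:uncolHomsEmbs} with $\sigma=\bot$ to rewrite
\[
\Emb(\colGraph{H}\to\colGraph{G})=\sum_{\rho\in\mathsf{Part}\colGraph{H}}\mu(\rho)\cdot\Hom(\colGraph{H/\rho}\to\colGraph{G}),
\]
I substitute and swap the order of summation, grouping all pairs $(\colGraph{H},\rho)$ whose quotient $\colGraph{H/\rho}$ is isomorphic to a fixed $\colGraph{F}$. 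This recovers the claimed identity with $\zeta_{k,\mathcal{S}}\colGraph{F}$ as the corresponding double sum.

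The closing task is to justify restricting the outer index to $\kgraphsGamma$. A quotient $\colGraph{H/\rho}$ acquires a hyperedge of cardinality strictly less than $\hs$ precisely when $\rho$ identifies two vertices lying in a common hyperedge of $H$; but since $G$ is $\hs$-uniform (and its hyperedges are genuine $\hs$-sets), no vertex map can send such a sub-$\hs$-sized image onto any hyperedge of $G$, forcing $\Hom(\colGraph{H/\rho}\to\colGraph{G})=0$ for every non-$\hs$-uniform quotient. Hence only isomorphism types $\colGraph{F}\in\kgraphsGamma$ survive in the outer sum, which is the claim.

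The main obstacle is precisely this last step, which requires a careful reading of the quotient construction from \Cref{sec:prelims}: blocks are declared adjacent only when their union equals a hyperedge of $H$, so the cardinality of a hyperedge of $H/\rho$ may drop below $\hs$ under collapses, and one must use the $\hs$-uniformity of $G$ to kill these ``degenerate'' contributions. A secondary subtlety is that signatures with $s(0)=0$ fall outside the normalisation of \Cref{remark:zeroValues}: in this case the missing factor $\prod_{v\in V(G)\setminus V(H)}s_v(0)$ from Step~1 vanishes exactly on contributions that the grouping already excludes (since those vertices must be covered by $A$ for the product to be nonzero), so the form of the coefficient $\zeta_{k,\mathcal{S}}\colGraph{F}$ is unchanged.
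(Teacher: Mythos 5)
Your proof follows essentially the same route as the paper's: group the $k$-subsets $A$ by the coloured isomorphism type of $G[A]$, convert subgraph counts to embeddings via \Cref{prop:subgraphsAndEmbs}, apply the Möbius inversion of \Cref{lem:uncolHomsEmbs} at $\sigma=\bot$, regroup by quotient type, and discard non-uniform quotients because they admit no homomorphism into the $\hs$-uniform $G$. One caveat: your final parenthetical claim about signatures with $s(0)=0$ is not correct — after grouping, the contribution of $A$ is recorded as $\prod_{v\in V(H)}\nu_H(v)(\deg_H(v))$, which can be nonzero even when some uncovered vertex $v\notin V(G[A])$ has $s_v(0)=0$ and the true contribution of $A$ is zero, so the identity genuinely requires the normalisation $s(0)=1$ for all $s\in\mathcal{S}$ (as the paper's proof assumes); this does not affect the validity of your main argument within that scope.
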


\begin{proof}
Recall, that
\[\uncolholant(\Omega) = \sum_{\substack{A \in \binom{E(G)}{k}}}\prod_{v\in V(G)}s_v(|A\cap E(v)|)\,.\]

Given a set $A \in \binom{E(G)}{k}$, we write $G[A]$ for the sub-hypergraph of $G$ induced by $A$. The $\mathcal{S}$-colouring of $G[A]$ is given by $\nu_{G[A]} = \nu_G|_{V(G[A])}$.

For $\colGraph{H} \in \kgraphsBHS$, let $[\colGraph{H}]$ denote the set of all $A \in \binom{E(G)}{k}$ such that $\colGraph{G[A]} \cong \colGraph{H}$; note, that $[\colGraph{H}]$ may be empty. 

Recall, that for each $s \in \mathcal{S}$, $s(0) = 1$. It is easy to verify that, given $\colGraph{H} \in \kgraphsBHS$, each $A \in [\colGraph{H}]$ satisfies
\[\prod_{v \in V(G)}s_v(|A \cap E(v)|) = \prod_{v \in V(H)}\nu_H(v)(\deg_H(v))\,.\]
Hence, we may write
\begin{equation}\label{eq:HolantToSubgraphs}
\uncolholant(\Omega) = \sum_{\colGraph{H} \in \kgraphsBHS}\big|[\colGraph{H}]\big|\prod_{v \in V(H)}\nu_H(v)(\deg_H(v))\,.
\end{equation}

We observe that $\big|[\colGraph{H}]\big|$ is equal to the number of sub-hypergraphs of $G$ that are isomorphic to $\colGraph{H}$. Formally, $\big|[\colGraph{H}]\big| = \#\subs{\colGraph{H}}{\colGraph{G}}$. Hence, by \Cref{prop:subgraphsAndEmbs} and \Cref{lem:uncolHomsEmbs}, it follows that
\begin{align}
\nonumber&\big|[\colGraph{H}]\big| \\ \nonumber&= \auts{\colGraph{H}}^{-1}\cdot\#\embs{\colGraph{H}}{\colGraph{G}} \\ &= \label{eq:intermediateHomBasis}\auts{\colGraph{H}}^{-1}\sum_{\rho \in \mathsf{Part}\colGraph{H}}\mu(\rho)\cdot\#\homs{\colGraph{H/\rho}}{\colGraph{G}}\,.
\end{align}

Next, we group terms of the above formula such that all partitions within the same group induce the same quotient hypergraph (up to isomorphisms). Note, that the quotient hypergraph $\colGraph{F}$ induced by partitions of some group may not be uniform (that is, it contains at least one hyperedge of size strictly less than $\hs$). However, it is easy to see that, for such hypergraphs, we have $\#\homs{(F, \nu_F)}{(G, \nu_G)} = 0$ since $G$ is assumed to be $\hs$-uniform. Hence, we may assume that $(F, \nu_F) \in \kgraphsGamma$ and write

\begin{align*}
&\big|[\colGraph{H}]\big| \\ &= \frac1{\#\auts{\colGraph{H}}}\sum_{\colGraph{F} \in \kgraphsGamma}\left(\sum_{\substack{\rho \in \mathsf{Part}\colGraph{H} \\ \colGraph{H/\rho}\cong\colGraph{F}}}\mu(\rho)\right)\#\homs{\colGraph{F}}{\colGraph{G}}\,.
\end{align*}

By replacing each term $\big|[\colGraph{H}]\big|$ in \eqref{eq:HolantToSubgraphs} with the expression above and rearranging the summands, we get the desired result.
\end{proof}

Note that, any hypergraph $G$ can be seen as a vertex-coloured hypergraph the vertices of which have all the the same colour (and vice versa). Hence, clearly, homomorphisms between hypergraphs that are vertex-coloured with the same single colour are essentially identical to those between the respective underlying uncoloured hypergraphs. The same is true for the colour-preserving partitions of such coloured hypergraphs. That said, we may simplify the notation in \Cref{lem:uncolHomBasis} for the case where $\mathcal{S} = \{s\}$ as follows. We have

\begin{lemma}[\Cref{lem:uncolHomBasis} restated for signature sets with a single signature]\label{lem:uncolHomBasisRESTATED}
Let $s$ be a signature and let $\Omega = (G, \{s\}_{v \in V(G)})$, $k \in \mathbb{N}$ be an instance of $\uni{\uncolholantprob}(\{s\})$, where $G$ is $\hs$-uniform. We have
\begin{equation}\label{eq:uncolHolantHomBasisSimplified}
\uncolholant(\Omega, k) = \sum_{F \in \mathcal{G}_{\leq k}(\hs)}\zeta_{k, s}(F)\cdot \Hom(F\to G)\,,
\end{equation}
where
\begin{align}\label{eq:coeffColoredRestated}
\nonumber\zeta_{k, s}(F)  = \sum_{H \in \mathcal{G}_k(\hs)}\frac1{\Aut(H)}\prod_{v \in V(H)}s(\deg_H(v))\sum_{\substack{\rho \in \mathsf{Part}(H) \\ F \cong H/\rho}}\mu(\rho)\,.
\end{align}   
\end{lemma}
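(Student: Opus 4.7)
The plan is to derive the statement as a direct corollary of Lemma~\ref{lem:uncolHomBasis} by specialising the earlier result to the single-signature case $\mathcal{S}=\{s\}$. The guiding observation is that when only one signature is available, the vertex-colouring of every signature grid becomes \emph{trivial}, i.e.\ every vertex is assigned the same colour $s$, so that colour-preserving homomorphisms, colour-preserving automorphisms, and colour-consistent partitions collapse to their uncoloured analogues.

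Concretely, first I would spell out the identifications between coloured and uncoloured objects. For the signature grid $\Omega=(G,\{s\}_{v\in V(G)})$ the induced colouring $\nu_G$ is constant, and similarly any $H\in \mathcal{G}_k(\hs)$ can be equipped with the trivial colouring $\nu_H:v\mapsto s$. Under these colourings a map $h:V(F)\to V(G)$ is a colour-preserving homomorphism from $(F,\nu_F)$ to $(G,\nu_G)$ if and only if it is a homomorphism between the underlying uncoloured hypergraphs; hence $\#\homs{(F,\nu_F)}{(G,\nu_G)}=\#\homs{F}{G}$. By the same token, $\mathsf{Part}(H,\nu_H)=\mathsf{Part}(H)$ (every partition is trivially colour-consistent), the quotient $(H/\rho,\nu_{H/\rho})$ is isomorphic to $(F,\nu_F)$ iff $H/\rho\cong F$ as uncoloured hypergraphs, and $\Aut(H,\nu_H)=\Aut(H)$. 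Finally, the families $\mathcal{G}_k(\{s\};\hs)$ and $\mathcal{G}_{\leq k}(\{s\};\hs)$ are in one-to-one correspondence with $\mathcal{G}_k(\hs)$ and $\mathcal{G}_{\leq k}(\hs)$, respectively, since each isomorphism type admits exactly one trivial colouring.

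Next I would substitute these identifications into the expressions of Lemma~\ref{lem:uncolHomBasis}. The inner product $\prod_{v\in V(H)}\nu_H(v)(\deg_H(v))$ becomes $\prod_{v\in V(H)}s(\deg_H(v))$, the Möbius sum runs over all partitions $\rho\in\mathsf{Part}(H)$ with $F\cong H/\rho$, and the outer sum over $(F,\nu_F)\in\mathcal{G}_{\leq k}(\{s\};\hs)$ becomes a sum over $F\in\mathcal{G}_{\leq k}(\hs)$. This yields \eqref{eq:uncolHolantHomBasisSimplified} and \eqref{eq:coeffColoredRestated} verbatim.

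I do not expect any real obstacle here: the proof is essentially bookkeeping, whose only subtle point is verifying that the bijection between coloured and uncoloured isomorphism types respects quotienting by partitions (so that the combinatorial sum defining $\zeta_{k,s}(F)$ picks up precisely the partitions counted in the coloured formula). Since every colouring involved is constant, this correspondence is immediate, and the simplified identity follows.
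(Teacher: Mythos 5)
Your proposal is correct and matches the paper's own treatment: the paper justifies the restated lemma with exactly this observation, namely that under the constant colouring induced by a single signature, coloured homomorphisms, automorphisms, colour-consistent partitions, and the coloured isomorphism classes all collapse to their uncoloured counterparts, so the formula of \Cref{lem:uncolHomBasis} specialises verbatim. No further argument is needed.
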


\section{A Complete Parameterised Classification of \ensuremath{\uncolholantprob}}\label{sec:para}
In this section, we present a full classification of the problem $\uncolholantprob$. We approach the classification in two ways, depending on whether we restrict our signatures to satisfy $s(0) \neq 0$ or not, as follows. 

Firstly, we consider only signatures $s$ that satisfy $s(0) \neq 0$. Recall that any finite set $\mathcal{S}$ of such signatures can be of type $\mathbb{T}[1], \mathbb{T}[2]$ or $\mathbb{T}[\infty]$. 
For every finite set $\mathcal{S}$ of type $\mathbb{T}[1]$, we present an FPT-(near)-linear time algorithm for $\uncolholantprob(\mathcal{S})$ (in \Cref{lem:linearSigAlgo}). We complete the classification by showing that if $\mathcal{S}$ is not of type $\mathbb{T}[1]$, then $\uncolholantprob(\mathcal{S})$ is $\#\mathrm{W}[1]$-hard. In fact, we establish an even stronger result: whenever $\mathcal{S}$ is not of type $\mathbb{T}[1]$, then for any $d \in \mathbb{Z}_{\geq 2}$, the problem $\uni{\constuncolholantprob{d}}(\mathcal{S})$ is $\#\mathrm{W}[1]$-hard (see \Cref{thm:reduction2TodArity}).\footnote{We stress that the case $d = 2$ was shown in \cite{aivasiliotis_et_al:LIPIcs.ICALP.2025.7}.} We further complement our hardness results by providing lower bounds (under ETH) on the runtime complexity of $\uni{\constuncolholantprob{d}}(\mathcal{S})$, for signature sets $\mathcal{S}$ of type $\mathbb{T}[\infty]$. Clearly, the aforementioned hardness results transfer directly to $\uncolholantprob(\mathcal{S})$.

Next, we allow for signatures $s$ to satisfy $s(0) = 0$ and extend our tractability results above by showing that for any finite set of signatures $\mathcal{S}$ of type $\mathbb{T}[1]$, $\uncolholantprob(\mathcal{S})$ remains tractable even if we add to $\mathcal{S}$ any finite number of signatures $s$ with $s(0) = 0$. The aforementioned algorithm is shown in \Cref{lem:linearSigAlgoWithZeros}. On the other hand, it is easy to see that if $\mathcal{S}$ contains a subset that is of type $\mathbb{T}[2]$ or $\mathbb{T}[\infty]$, then all of the previously mentioned hardness results transfer directly to $\uncolholantprob(\mathcal{S})$, with which we obtain a complete classification of $\uncolholantprob$ for any finite set $\mathcal{S}$ of signatures with no restrictions, stated formally as follows.

\begin{theorem}\label{thm:mainParamThm}
Let $\mathcal{S}$ be a finite set of signatures. Let $\mathcal{S}_0 = \{s \in \mathcal{S} \mid s(0) = 0\}$. We have the following classification of the complexity of $\uncolholantprob(\mathcal{S})$.
\begin{itemize}
\item[1.] If $\mathcal{S} \setminus \mathcal{S}_0$ is of type $\mathbb{T}[1]$, then $\uncolholantprob(\mathcal{S})$ can be solved in FPT-(near)-linear time, that is, in time $f(k, \hs) \cdot \tilde{O}(|\Omega|)$, where $\hs$ is the rank of the underlying hypergraph of $\Omega$.
\item[2.] If $\mathcal{S}\setminus \mathcal{S}_0$ is not of type $\mathbb{T}[1]$, then $\uncolholantprob(\mathcal{S})$ is $\#\mathrm{W}[1]$-hard. In particular, for any $d \in \mathbb{Z}_{\geq 2}$, the problem $\uni{\constuncolholantprob{d}}(\mathcal{S})$ is $\#\mathrm{W}[1]$-hard. Furthermore, if $\mathcal{S} \setminus \mathcal{S}_0$ is of type $\mathbb{T}[\infty]$, then $\uni{\constuncolholantprob{d}}(\mathcal{S})$ cannot be solved in time $f(k)\cdot |V(\Omega)|^{o(k/\log(k))}$, for any computable function $f$ and any $d \geq 2$, unless ETH fails.
\end{itemize}
\end{theorem}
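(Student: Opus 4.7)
The plan is to split the theorem into tractability (part~1) and hardness (part~2). For hardness, since the graph case $d=2$ is furnished by \Cref{thm:main_uncol}, it suffices to reduce $\uni{\constuncolholantprob{2}}(\mathcal{S})$ to $\uni{\constuncolholantprob{d}}(\mathcal{S})$ for each $d\geq 3$ via linear FPT Turing reductions that blow up $|V(\Omega)|$ only polynomially, so that both $\#\mathrm{W}[1]$-hardness and the $|V(\Omega)|^{o(k/\log k)}$ ETH lower bound transfer.

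\textbf{Part~1 (Tractability).} Assume first that $\mathcal{S}_0=\emptyset$. By the discussion after \Cref{thm:main_uncol}, every signature in a set of type $\mathbb{T}[1]$ has the explicit form $s(n)=\alpha_s^n$ for some $\alpha_s\in\mathbb{C}$ (after normalising $s(0)=1$ via \Cref{remark:zeroValues}). For any signature grid $\Omega=(G,\{s_v\})$ and any $A\subseteq E(G)$ with $|A|=k$, the product defining the holant factors hyperedge-by-hyperedge:
\[
\prod_{v\in V(G)} \alpha_{s_v}^{|A\cap E_G(v)|} \;=\; \prod_{e\in A}\prod_{v\in e}\alpha_{s_v}.
\]
Thus $\uncolholant(\Omega,k)=\sum_{A\in\binom{E(G)}{k}}\prod_{e\in A}w(e)$ for precomputed weights $w(e)=\prod_{v\in e}\alpha_{s_v}$, i.e.\ the $k$-th elementary symmetric polynomial in the edge weights, which is computable in $\tilde{O}(|\Omega|)$ time with only an $f(k,\hs)$ combining overhead. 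To accommodate $\mathcal{S}_0$, observe that a vertex $v$ with $s_v\in\mathcal{S}_0$ contributes $0$ unless $|A\cap E_G(v)|\geq 1$ (or $s_v\equiv 0$, in which case the instance is trivially $0$); branching on which of the $\leq k$ solution edges first covers each such vertex multiplies the running time by a further $f'(k,\hs)$ factor.

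\textbf{Part~2 (Hardness).} Pick a witness signature $s\in\mathcal{S}\setminus\mathcal{S}_0$ of non-trivial fingerprint. If $s(1)\neq 0$, apply the degree-$1$ attachment gadget from the overview: add $d-2$ new vertices carrying $s$ to each edge of the graph instance, scale the answer by $s(1)^{-k(d-2)}$, and keep $k$ unchanged. If $s(1)=0$ but $s(2)\neq 0$, use the connected $d$-uniform $B$-gadget with two distinguished adjacent degree-$1$ vertices, whose existence reduces to the realisability result of \cite[Theorem~3]{frosini2021new} on degree sequences made of $2$'s and two $1$'s; replace each graph edge by a copy of $B$, set $k'=k\cdot|E(B)|$, and scale by $s(2)^{-k(|V(B)|-2)}$. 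Both reductions preserve $|V(\Omega)|$ up to a $d$-dependent constant, so the ETH lower bound transfers verbatim in the $\mathbb{T}[\infty]$ case. The remaining subcase, in which every witness signature satisfies $s(1)=s(2)=0$, necessarily falls into type $\mathbb{T}[\infty]$; let $b\geq 3$ be the smallest positive integer with $s(b)\neq 0$. Here the gadget approach is blocked by divisibility obstructions in \cite{frosini2021new}, so the plan is instead to apply the homomorphism-basis expansion of \Cref{lem:uncolHomBasisRESTATED}: for a target $d$-uniform $b$-regular hypergraph $F$ with $k=|E(F)|=b|V(F)|/d$, a M\"obius/fingerprint argument analogous to \cite{aivasiliotis2024parameterisedholantproblems} but over partition lattices of hypergraphs shows that $\zeta_{k,s}(F)\neq 0$. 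Complexity monotonicity for vertex-coloured hypergraphs (Dedekind interpolation, cf.\ the overview) then converts this into a linear FPT Turing reduction from $\#\text{\sc{Hom}}(\mathcal{F}_{d,b})$ to $\uni{\constuncolholantprob{d}}(\{s\})$, and the $r$-uniform extension of the Dalmau--Jonsson/Marx theorem, proved by adapting \cite{dalmau2004complexity,Marx10}, yields $\#\mathrm{W}[1]$-hardness and the tight ETH lower bound, provided one exhibits $d$-uniform $b$-regular hypergraphs of treewidth linear in their size.

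\textbf{Main obstacle.} The delicate step is the $s(1)=s(2)=0$ subcase. Proving $\zeta_{k,s}(F)\neq 0$ for every $d$-uniform $b$-regular $F$ requires a careful cancellation argument over the partition lattice of $V(F)$: many quotient hypergraphs $F/\rho$ contribute simultaneously, and the required non-vanishing must mirror the definition of the signature fingerprint while being controlled in the hypergraph setting. Combining this with an explicit infinite family of $d$-uniform $b$-regular hypergraphs of linear treewidth---needed for the $|V(\Omega)|^{o(k/\log k)}$ lower bound---constitutes the technical heart of the argument; all remaining ingredients (the tractability algorithm, the two local gadgets, the transfer of the Dalmau--Jonsson/Marx lemma to uniform hypergraphs, and the hypergraph version of complexity monotonicity) are standard once the coefficient computation and the regular-hypergraph constructions are in place.
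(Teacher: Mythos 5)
Your hardness half follows the paper's route essentially step for step: the same three-way split on the least $b$ with $s(b)\neq 0$, the same two local gadgets (with the same realisability input from \cite[Theorem 3]{frosini2021new}), and for $b\geq 3$ the same detour through the homomorphism basis, Dedekind interpolation, $\#\text{\sc{Hom}}(\mathcal{F}_{d,b})$, and a uniform-hypergraph version of the Dalmau--Jonsson/Marx lower bound. One remark: the coefficient computation you flag as the ``technical heart'' is in fact the cleanest step in the paper --- for $k=|E(F)|$ with $F$ being $d$-uniform and $b$-regular, a degree-counting argument shows that the only $H\in\mathcal{G}_k(d)$ with a nonzero signature product and a quotient isomorphic to $F$ is $F$ itself with $\rho=\bot$, so $\zeta_{k,s}(F)=\Aut(F)^{-1}s(b)^{|V(F)|}$ with no cancellation to control; the genuinely technical ingredient is rather the explicit expander-based construction of $d$-uniform $b$-regular hypergraphs of linear treewidth. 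For the tractable case with $\mathcal{S}_0=\emptyset$, your algorithm is a genuinely different and arguably cleaner argument than the paper's: factoring $\prod_v \alpha_{s_v}^{|A\cap E_G(v)|}=\prod_{e\in A}\prod_{v\in e}\alpha_{s_v}$ reduces the holant to the $k$-th elementary symmetric polynomial of precomputed edge weights, which avoids both the reduction to uniform hypergraphs (\Cref{claim:uniformRestriction}) and the partition-of-edges bookkeeping of \Cref{lem:linearSigAlgo}.

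The gap is in your treatment of $\mathcal{S}_0\neq\emptyset$. First, the branching you describe is only an $f'(k,\hs)$ overhead if the number of vertices carrying $\mathcal{S}_0$-signatures is itself bounded by a function of $k$ and $\hs$; you need the (easy but essential) observation that if more than $k\cdot\hs$ such vertices exist the holant vanishes, since $k$ hyperedges of rank $\hs$ cover at most $k\hs$ vertices. Second, and more substantively, knowing ``which solution edge first covers $v$'' does not suffice to evaluate $s_v(|A\cap E_G(v)|)$: a signature in $\mathcal{S}_0$ is an arbitrary function, not geometric, so its contribution depends on the exact incidence count, not merely on being covered. You must enumerate the full incidence profile of $A$ on the $\leq k\hs$ special vertices and then count, for each profile, the number of $k$-edge sets realising it --- this is precisely the $\hat{E}(W,\mu)$ machinery of \Cref{lem:linearSigAlgoWithZeros}, and it does not follow from the one-line branching as stated.
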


The tractability part of the theorem above is shown in \Cref{sec:TractabilitySubsection}, while the hardness part is shown in \Cref{sec:hardnessSubsection}.

\subsection{Tractability results for \ensuremath{\uncolholantprob}}\label{sec:TractabilitySubsection}

Recall that, unless stated otherwise, when we refer to a symmetric signature $s$, we assume that $s(0) \neq 0$. In \Cref{lem:linearSigAlgo}, we present an FPT-(near)-linear time algorithm for $\uncolholantprob(\mathcal{S})$, for finite sets $\mathcal{S}$ of symmetric signatures, of type $\mathbb{T}[1]$. In \Cref{lem:linearSigAlgoWithZeros}, we extend our algorithm so as to address the case in which $\mathcal{S}$ contains signatures $s$ with $s(0) = 0$. Note that the modified algorithm of \Cref{lem:linearSigAlgoWithZeros} still runs in FPT-(near)-linear time. 

Before we present our algorithms, we state and prove the following simple lemma that will allow us to restrict our analysis to uniform hypergraphs, thus simplifying the analysis.

\begin{lemma}\label{claim:uniformRestriction} Let $\mathcal{S}$ be a finite set of signatures and let $\text{\sf{one}} : \mathbb{N} \to \{1\}$. It holds,
\[\uncolholantprob(\mathcal{S}) \fptlinred \uni{\uncolholantprob}(\mathcal{\mathcal{S} \,\cup\,\{\text{\sf{one}}\}})\,.\]
In particular, the parameterised Turing reduction above is in fact a polynomial-time Turing reduction and so
\[
\unbndclassHolant(\mathcal{S}) \leq_\mathsf{T} \uni{\unbndclassHolant}(\mathcal{S}\,\cup\,\{\mathsf{one}\})\,.
\]
\end{lemma}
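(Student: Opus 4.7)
The plan is to construct, for any instance $(\Omega,k)$ of $\uncolholantprob(\mathcal{S})$, an instance of $\uni{\uncolholantprob}(\mathcal{S}\cup\{\mathsf{one}\})$ on a uniform hypergraph with the same holant value, by padding each undersized hyperedge with fresh degree-$1$ vertices carrying the signature $\mathsf{one}$. Since $\mathsf{one}$ evaluates to $1$ on every input, every padded vertex will contribute a factor $1$ to every term of the holant sum, so the padding will be ``invisible'' in the computation.

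Concretely, I would write $\Omega=(G,\{s_v\}_{v\in V(G)})$, set $\hs:=\mathsf{rank}(G)$, and then for each hyperedge $e\in E(G)$ with $|e|<\hs$ introduce $\hs-|e|$ fresh vertices private to $e$, add them to $e$, and label them with $\mathsf{one}$. Call the resulting signature grid $\Omega'$, so that the underlying hypergraph $G'$ is $\hs$-uniform. The use of \emph{fresh} vertices per hyperedge ensures that distinct hyperedges remain distinct after padding, so $G'$ inherits simplicity from $G$; this is the one bookkeeping check that needs to be made, but it is immediate from the construction.

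The next step is to verify the invariance $\uncolholant(\Omega',k)=\uncolholant(\Omega,k)$. For this, I would exhibit the canonical bijection sending a $k$-subset $A\subseteq E(G)$ to its padded image $A'\subseteq E(G')$, and then argue term-by-term: every original vertex has the same number of incident selected hyperedges in $G$ and $G'$, while every padding vertex is incident to exactly one hyperedge and its signature yields $\mathsf{one}(0)=\mathsf{one}(1)=1$ regardless of whether that hyperedge is selected. Hence a single oracle call on $(\Omega',k)$ recovers the answer, the construction runs in time $O(|\Omega|)$, and both parameters (the weight $k$ and the rank $\hs$) are preserved; this witnesses the required linear FPT Turing-reduction.

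For the classical consequence, I would observe that the same construction, applied to an input for $\unbndclassHolant(\mathcal{S})$, runs in polynomial time and likewise preserves the (now unparameterised) holant value via the very same term-by-term argument over all $\{0,1\}$-assignments to the edges. A single oracle call then suffices, yielding the desired polynomial-time Turing-reduction. I do not anticipate any genuine obstacle in the proof; the only subtlety to watch out for is keeping $G'$ simple, which is handled automatically by using padding vertices private to each hyperedge.
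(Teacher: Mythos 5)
Your proposal is correct and matches the paper's proof in all essentials: both pad undersized hyperedges with new vertices carrying the constant-one signature and observe that these contribute a factor of $1$ to every term of the holant sum. The only (immaterial) difference is that the paper reuses a single shared pool of $\hs-\min_{e}|e|$ new vertices across all hyperedges rather than fresh private vertices per hyperedge; both variants preserve simplicity and the linear size bound.
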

\begin{proof}
Let $\Omega = (G, \{s_v\}_{v \in V(G)}) \in \uncolholantprob(\mathcal{S})$ be a signature grid. Let $\hs$ denote the rank of $G$. Consider the signature grid $\Omega' = (G', \{s'_v\}_{v \in V(G')})$, where $G'$ also has rank $\hs$, obtained from $\Omega$ as follows. We obtain $G'$ from $G$ by adding $\hs - \min_{e \in E(H)}|e|$ new vertices and, then, for each hyperedge $e \in E(G)$, adding to $e$ any $\hs - |e|$ of the new vertices, yielding the hyperedge $e'$. We equip each new vertex with the constant signature $\text{\sf{one}} : \mathbb{N} \to \{1\}$, while the rest of the vertices keep their signatures. Observe that $G'$ is $\hs$-uniform with size $|G'| \leq \mathsf{poly}(|G|)$, and that $\Omega' \in \uncolholantprobuni(\mathcal{S}\,\cup\,\{\text{\sf{one}}\})$. Furthermore, it is easy to verify that
\[\uncolholant(\Omega', k) = \uncolholant(\Omega, k)\,,\]
since the contribution of the new vertices to any assignment in the the holant value is always equal to one.
\end{proof}

\begin{lemma}\label{lem:linearSigAlgo}
Let $\mathcal{S}$ be a finite set of symmetric signatures of type $\mathbb{T}[1]$. Let $(\Omega, k)$ be an instance of $\uncolholantprob(\mathcal{S})$ and let $\hs$ denote the rank of the underlying graph $G$ of $\Omega$. There is an algorithm to compute $\uncolholant(\Omega, k)$ in time $f(k, \hs) \cdot \tilde{O}(|G|)$, for some computable function $f$. In particular, for any constant $c$, $f(k, c) \in \mathsf{poly}(k)$. 
\end{lemma}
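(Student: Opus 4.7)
The plan rests on the explicit structural characterization of $\mathbb{T}[1]$ signatures noted in the discussion after Theorem~\ref{thm:main_uncol}: after the standard normalization $s(0)=1$ (cf.\ Remark~\ref{remark:zeroValues}), every signature $s \in \mathcal{S}$ has the form $s(n) = \alpha_s^n$ for some $\alpha_s \in \mathbb{C}$, where $\alpha_s$ can be read off from $s(1)$. The degenerate case where some $s \in \mathcal{S}$ is identically zero is handled upfront: if any vertex of $\Omega$ is labelled with such an $s$, the holant vanishes and we output $0$. Otherwise I assign to each vertex $v \in V(G)$ the scalar $\alpha_v := \alpha_{s_v}$, computable in $O(1)$ time from $s_v$.

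Given this multiplicative structure, the whole holant sum factorizes over hyperedges. For any $A \subseteq E(G)$ we have
\begin{equation*}
\prod_{v \in V(G)} s_v\bigl(|A \cap E_G(v)|\bigr) \;=\; \prod_{v \in V(G)} \alpha_v^{|A \cap E_G(v)|} \;=\; \prod_{e \in A}\, \prod_{v \in e} \alpha_v,
\end{equation*}
where the last equality simply regroups the product over vertex--hyperedge incidences. Setting $w(e) := \prod_{v \in e} \alpha_v$, this yields
\begin{equation*}
\uncolholant(\Omega, k) \;=\; \sum_{\substack{A \subseteq E(G) \\ |A| = k}} \prod_{e \in A} w(e) \;=\; e_k\bigl(\{w(e)\}_{e \in E(G)}\bigr),
\end{equation*}
i.e.\ the $k$-th elementary symmetric polynomial evaluated at the edge weights.

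It remains to compute the weights and evaluate $e_k$ efficiently. Each weight $w(e)$ costs $O(\hs)$ multiplications since $|e| \leq \hs$, so all $m := |E(G)|$ weights are computed in time $O(\hs \cdot |G|)$. Evaluating $e_k$ on $m$ inputs is a standard dynamic programme based on the recurrence $e_k(w_1,\dots,w_j) = e_k(w_1,\dots,w_{j-1}) + w_j \cdot e_{k-1}(w_1,\dots,w_{j-1})$: one streams through the hyperedges once while maintaining a table of length $k{+}1$, giving time $O(k \cdot m) \leq O(k \cdot |G|)$. Summing the two contributions yields total running time $O((\hs + k)\cdot |G|)$, matching the claimed $f(k,\hs)\cdot \tilde{O}(|G|)$ bound with $f(k,\hs) = O(\hs + k)$; in particular, for any constant $\hs = c$ we obtain $f(k,c) = O(k) \in \mathsf{poly}(k)$.

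The only non-trivial ingredient is the explicit form $s(n) = \alpha_s^n$ for type~$\mathbb{T}[1]$, which is already recorded as a consequence of the dichotomy of~\cite{aivasiliotis_et_al:LIPIcs.ICALP.2025.7} in the preamble of the excerpt, so I expect no real obstacle: the proof is essentially the factorization identity above combined with a textbook elementary-symmetric-polynomial evaluation. Note, in particular, that there is no need to first uniformize the hypergraph via Lemma~\ref{claim:uniformRestriction}, since the identity holds for hypergraphs of arbitrary rank.
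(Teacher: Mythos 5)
Your proof is correct, and it takes a genuinely different --- and in fact sharper --- route than the paper's. Both arguments start from the same structural fact (after normalising $s(0)=1$, every type-$\mathbb{T}[1]$ signature satisfies $s(n)=\alpha_s^n$), but they diverge from there. The paper first uniformises the hypergraph via Lemma~\ref{claim:uniformRestriction}, then partitions the hyperedges into classes according to the vector of signature-multiplicities they contain, and finally enumerates all ways of distributing $k$ among these classes, weighting each distribution by a product of binomial coefficients; this yields $k^{\mathsf{poly}(\hs)}$ summands and hence $f(k,\hs)$ of that order. You instead exploit the multiplicative structure one step further: regrouping the product over vertex--hyperedge incidences turns the contribution of each $k$-set $A$ into $\prod_{e\in A}w(e)$ with $w(e)=\prod_{v\in e}\alpha_v$, so the holant is exactly the elementary symmetric polynomial $e_k$ of the edge weights, computable by the standard streaming recurrence in $O(k\cdot|E(G)|)$ time. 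This buys you a cleaner argument, removes the need for the uniformisation step, and gives $f(k,\hs)=O(k+\hs)$, which is polynomial in \emph{both} parameters rather than only in $k$ for fixed rank --- strictly stronger than what the lemma claims. Your upfront treatment of an identically-zero signature and your reliance on Remark~\ref{remark:zeroValues} for the normalisation (which only perturbs the holant by the efficiently computable factor $\prod_v s_v(0)$) are both sound. The one caveat worth noting is that your factorisation crucially uses $s_v(0)=1$ for vertices untouched by $A$, so, unlike the paper's edge-classification scheme, it does not extend to the companion Lemma~\ref{lem:linearSigAlgoWithZeros} where signatures with $s(0)=0$ are admitted; but that is outside the scope of the present statement.
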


\begin{proof}
Recall that for each $s \in \mathcal{S}$, we have $s(0) = 1$. Furthermore, since $\mathcal{S}$ is of type $\mathbb{T}[1]$, it follows from \cite[Lemma 6.15]{aivasiliotis2024parameterisedholantproblems} that for each $s \in \mathcal{S}$ we have $s(n) = s(1)^{n}, n \geq 1$. We assume that we are only given instances of $\uncolholantprobuni(\mathcal{S})$, since from \Cref{claim:uniformRestriction} we know that $\uncolholantprob(\mathcal{T})$ reduces to $\uncolholantprobuni(\mathcal{T} \,\cup\, \{\mathsf{one}\})$ in FPT-(near)-linear time, and $\mathcal{T}, \mathcal{T\,\cup\,\{\mathsf{one}\}}$ have the same type.

We further assume that there are $z$ different signatures assigned to the vertices of $G$. Since every signature $s$ is completely determined by $s(1)$, we assume that the values of the $z$ distinct signatures on input 1 are $a_1, \ldots, a_z$. 

For $1 \leq i \leq z$, let $N_i = \{v \in V(G) \mid s_v(1) = a_i\}$, where $s_v \in \mathcal{S}$ denotes the signature assigned to $v$. Let $\mathcal{L}$ denote the set of all vectors $\lambda \in \{0, 1, \dots, \hs\}^z$ such that $\sum_{i=1}^{z}\lambda(i) = \hs$.  Given $\lambda \in \mathcal{L}$, let $E_{\lambda} = \{e \in E(G) \mid \forall 1\leq i\leq z, |e\,\cap\,N_i| = \lambda(i)\}$. Clearly, the sets $\{E_\lambda\}_{\lambda \in \mathcal{L}}$ partition $E(G)$, that is, $\bigcup_{\lambda \in \mathcal{L}}E_\lambda = E(G)$ and $E_\lambda \cap E_{\lambda'} = \emptyset$, whenever $\lambda \neq \lambda'$ (and $E_\lambda, E_{\lambda'}$ are not empty).

Let $\{k_\lambda\}_{\lambda \in \mathcal{L}} \in \mathbb{N}^{|\mathcal{L}|}$ be a partition of $k$, that is, $k = \sum_{\lambda \in \mathcal{L}}k_\lambda$ (where $k_\lambda$ may be zero). Let $A \subseteq E(G)$ be a set of $k$ hyperedges, such that, for each $\lambda \in \mathcal{L}$, $|A\,\cap\,E_\lambda| = k_\lambda$. We say that $A$ \textit{induces} the partition $\{k_\lambda\}_{\lambda \in \mathcal{L}}$ of $k$. The contribution of $A$ to $\uncolholant(\Omega)$ is computed as follows:

\begin{equation*}
   \prod_{v \in V(G)}s_v(|A \cap E_G(v)|) = \prod_{i = 1}^{z}\prod_{v \in N_i}a_i^{|A\,\cap\, E_G(v)|} = \prod_{i = 1}^{z}a_i^{\sum_{v \in N_i}|A\,\cap\,E_G(v)|} = \prod_{i=1}^{z}a_i^{\sum_{\lambda \in \mathcal{L}}\lambda(i)\cdot k_{\lambda}}\,,
\end{equation*}
where the last equation follows from the observation that for every hyperedge $e \in A\,\cap\,E_\lambda$, $\lambda \in \mathcal{L}$, there are exactly $\lambda(i)$ vertices $v \in N_i$ such that $e \in A\,\cap\,E_G(v)$ and the number of those hyperedges is $k_\lambda$.

Furthermore, all subsets of $k$ hyperedges that induce the same partition of $k$ into $\sum_{\lambda \in \mathcal{L}}k_{\lambda}$ yield the same contribution. The number of those subsets is precisely $\prod_{\lambda \in \mathcal{L}}\binom{|E_{\lambda}|}{k_\lambda}$. 

Let $[k]^+_{\mathcal{L}}$ denote all possibilities we can write $k$ as $\sum_{\lambda \in \mathcal{L}}k_{\lambda}$ (where $k_{\lambda}$ may be zero). We have,

\begin{equation*}
    \uncolholant(\Omega) = \sum_{\{k_\lambda\}_{\lambda \in \mathcal{L}} \in [k]^+_{\mathcal{L}}}\prod_{\lambda \in \mathcal{L}}\binom{|E_{\lambda}|}{k_\lambda}\prod_{i=1}^{z}a_i^{\sum_{\lambda \in \mathcal{L}}\lambda(i)\cdot k_{\lambda}}\,.
\end{equation*}
The size of $\mathcal{L}$ is $\mathsf{poly}(\hs)$ since $z$ is upper-bounded by $|\mathcal{S}|$ which is a constant. Hence, the sum above has $k^{\mathsf{poly}(\hs)}$ many summands, each of which can be computed in $\mathsf{poly}(k, \hs)\cdot\tilde{O}(|G|)$ time.
\end{proof}

We extend \Cref{lem:linearSigAlgo} so as to address the case where $\mathcal{S}$ also contains signatures $s$ with $s(0) = 0$.
\begin{lemma}\label{lem:linearSigAlgoWithZeros}
Let $\mathcal{S}$ be a finite set of symmetric signatures that may contain signatures $s$ with $s(0) = 0$. We set $\mathcal{S}_0 = \{s \in S \mid s(0) = 0\}$. We assume that $\mathcal{S} \setminus \mathcal{S}_0$ is of type $\mathbb{T}[1]$. Let $(\Omega, k)$ be an instance of $\uncolholantprob(\mathcal{S})$ and let $\hs$ denote the rank of the underlying graph $G$ of $\Omega$. There is an algorithm to compute $ \uncolholant(\Omega, k)$ in FPT-(near)-linear time, that is, in time $f(k, \hs) \cdot \tilde{O}(|G|)$, for some computable function $f$.
\end{lemma}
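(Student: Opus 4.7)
The plan is to extend the algorithm of \Cref{lem:linearSigAlgo} by exploiting the following structural observation: if $s_v(0) = 0$, then any $k$-subset $A \subseteq E(G)$ whose term in $\uncolholant(\Omega, k)$ is nonzero must contain at least one hyperedge incident to $v$. Writing $V_0 := \{v \in V(G) : s_v \in \mathcal{S}_0\}$, every contributing $A$ must cover $V_0$, and since each hyperedge covers at most $\hs$ vertices, $|V_0| > k\hs$ immediately implies $\uncolholant(\Omega, k) = 0$. Otherwise $|V_0| \leq k\hs$ is bounded by the parameters, and this smallness of $V_0$ is what powers the FPT-near-linear algorithm.

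First, I would invoke \Cref{claim:uniformRestriction} to reduce to the $\hs$-uniform case, noting that $\mathsf{one}(0) = 1 \neq 0$ keeps the augmented signature set of type $\mathbb{T}[1]$. For $v \notin V_0$, the hypothesis that $\mathcal{S} \setminus \mathcal{S}_0$ is of type $\mathbb{T}[1]$ yields $s_v(n) = \alpha_v^n$ with $\alpha_v = s_v(1)$ by \cite[Lemma 6.15]{aivasiliotis2024parameterisedholantproblems}, which lets me factor the joint contribution of non-$V_0$ vertices edgewise as $\prod_{e \in A} w(e)$ with $w(e) := \prod_{v \in e \setminus V_0} \alpha_v$.

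The combinatorial core is a refinement of the type classification from \Cref{lem:linearSigAlgo}: assign each hyperedge $e$ the type $\tau(e) := (e \cap V_0,\, \lambda_e)$, where $\lambda_e$ records the multiset of signature-types borne by the vertices of $e \setminus V_0$. Since $|V_0| \leq k\hs$, the number of possible $V_0$-traces is at most $(k\hs)^{\hs}$, and the number of $\lambda_e$'s is at most $(\hs+1)^{|\mathcal{S}|}$, so the total number of types $T$ is a function of $k$ and $\hs$ only. Letting $N_t$ count hyperedges of type $t$ and $w_t$ denote their common weight, the holant rewrites as
\[
\uncolholant(\Omega, k) = \sum_{(a_t)} \left(\prod_{v \in V_0} s_v\!\Bigl(\sum_{t\,:\,v \in t_V} a_t\Bigr)\right) \prod_t \binom{N_t}{a_t}\, w_t^{a_t},
\]
where $t_V$ denotes the $V_0$-component of $t$, and $(a_t)$ ranges over nonnegative integer vectors with $a_t \leq N_t$ and $\sum_t a_t = k$; the coverage constraint for $V_0$ is enforced automatically by the zeros of the $s_v$.

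The algorithm then (i) computes $V_0$ and checks $|V_0| \leq k\hs$, (ii) makes one $\tilde{O}(|G|)$-time pass through $E(G)$ to determine each $\tau(e)$ and accumulate the counts $N_t$, and (iii) enumerates the $f(k,\hs)$-many admissible vectors $(a_t)$ and evaluates each summand in $\mathsf{poly}(k, \hs)$ time. The main obstacle I anticipate is the bookkeeping around the type classification: the type must be fine enough to determine the individual $V_0$-degrees $d_v := \sum_{t \,:\, v \in t_V} a_t$ separately for every $v \in V_0$ (which forces recording $e \cap V_0$ as an exact subset rather than some coarser statistic), while simultaneously the bound $|V_0| \leq k\hs$ must be used to keep $T$ bounded purely in $k$ and $\hs$. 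Once this accounting goes through, the claimed $f(k,\hs) \cdot \tilde{O}(|G|)$ runtime is immediate.
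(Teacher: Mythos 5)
Your proposal is correct and follows essentially the same route as the paper's proof: reduce to the uniform case, observe that $|V_0|\leq k\hs$ (else the holant vanishes), classify hyperedges by their trace on $V_0$ together with the multiset of signature classes outside $V_0$, and sum over the boundedly many ways of distributing $k$ among these types, with the $V_0$-coverage constraint enforced automatically by the zeros of the signatures. The only (cosmetic) difference is that the paper separately enumerates the hyperedges fully contained in $V_0$ (the sets $\hat{A}\subseteq E(G_0)$) before partitioning the remaining budget, whereas you absorb these into singleton types with $N_t=1$; both yield the same $f(k,\hs)\cdot\tilde{O}(|G|)$ bound.
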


\begin{proof}
As in the proof of the previous lemma, due to \Cref{claim:uniformRestriction}, we may assume that $G$ is uniform. We may also assume that, without loss of generality, the number of vertices equipped with signatures $s \in \mathcal{S}_0$ is at most $k\cdot\hs$, as otherwise the value of the holant would be zero. To see this, note that, $k\cdot\hs$ is the maximum number of vertices that $k$ hyperedges (of maximum size $\hs)$ can cover. 

As in \Cref{lem:linearSigAlgo}, we assume that $\mathcal{S} \setminus \mathcal{S}_0$ contains $z$ distinct signatures, each associated with their $s(1)$ value, which we denote as $a_i$. For $1 \leq i \leq z$, let $N_i = \{v \in \mathcal{S} \setminus \mathcal{S}_0 \mid s_v(1) = a_i\}$, where $s_v \in \mathcal{S} \setminus \mathcal{S}_0$ denotes the signature assigned to $v$. Let $N_0 = \{w_1, \dots, w_x\} \subseteq V(G)$ denote the vertices with signatures $s \in \mathcal{S}_0$. For $0 \leq j \leq |N_0|$, we define $\mathcal{L}_j = \{\mu \in \{0, 1, \dots, \hs\}^z \mid \sum_{i=1}^{z}\mu(i) = \hs - j\}$, whenever $j \leq \hs$ and $\mathcal{L}_j = \emptyset$, otherwise. For $W \subseteq N_0$ and $\mu \in \mathcal{L}_{|W|}$, we also define $\hat{E}(W, \mu) = \{e \in E(G) \mid e\,\cap\,N_0 = W \,\land\, \forall 1\leq i \leq z, |e \,\cap\, N_i| = \mu(i)\}$.

The sets $\{\hat{E}(W, \mu)\}_{W\subseteq N_0, \mu \in \mathcal{L}_{|W|}}$ partition $E(G)$ as follows. We first partition $E(G)$ according to the intersection of hyperedges $e \in E(G)$ with $N_0$ and then partition further each block (corresponding to a subset $W \subseteq N_0)$ according to the multiplicities of signatures $s \in S \setminus\mathcal{S}_0$, described by the vectors $\mu \in \mathcal{L}_{|W|}$.

Let $\{\hat{k}(W, \mu)\}_{W \subseteq N_0, \mu \in \mathcal{L}_{|W|}}$ be a partition of $k$, that is, $k = \sum_{W \subseteq N_0, \mu \in \mathcal{L}_{|W|}}\hat{k}(W, \mu)$ (where $\hat{k}(W, \mu)$ may be zero). Let $A \subseteq E(G)$ be a set of $k$ hyperedges, such that, for each $W \subseteq N_0$, $\mu \in \mathcal{L}_{|W|}$, we have $|A \,\cap\, \hat{E}(W, \mu)| = \hat{k}(W, \mu)$.

We compute the contribution of $A$ to $\uncolholant(\Omega)$ as the product of the contribution of vertices in $N_0$ and the contribution of vertices in $V(G) \setminus N_0$, the latter of which is computed using similar arguments as in \Cref{lem:linearSigAlgo} as follows:

\begin{equation}\label{eq:productOfContributions}
   \prod_{v \in V(G) \setminus N_0}s_v(|A\,\cap\,E_G(v)|) = \prod_{i = 1}^{z}\prod_{v \in N_i}a_i^{|A\,\cap\, E_G(v)|} = \prod_{i = 1}^{z}a_i^{\sum_{v \in N_i}|A\,\cap\,E_G(v)|} = \prod_{i = 1}^{z}a_i^{\mathsf{exp}(a_i, A)},  
\end{equation}
where \[\mathsf{exp}(a_i, A) = \sum_{\substack{W \subseteq N_0}}\sum_{\mu \in \mathcal{L}_{|W|}}\mu(i)\cdot\hat{k}(W,\mu)\] is precisely the number of ways that any hyperedge $e \in A$ satisfies $e \in E_G(v)$, for some $v \in N_i$.

Let $G_0$ denote the hypergraph with $V(G_0) = N_0$ and $E(G_0) = \{e \in E(G) \mid e \subseteq N_0\}$. Let $\hat{A} = A\,\cap\,E(G_0)$. We have $|\hat{A}|$ = $\sum_{W \subseteq N_0,\,|W| = \hs}\hat{k}(W, \mathbf{0})$, where $\mathbf{0}$ is the zero vector. To see this, observe that for $W \subseteq N_0$ with $|W| = \hs$, we have $\mathcal{L}_{|W|} = \{\mathbf{0}\}$. Hence, if in addition $W \in E(G_0)$, we have $\hat{E}(W, \mathbf{0}) = \{W\}$ and $\hat{k}(W, \mathbf{0}) = |A\,\cap\,\{W\}|$ which is 1 if $W \in A$ and 0 otherwise, the former of which implies that $W \in \hat{A}$. Equivalently, we have $k-|\hat{A}| = \sum_{W \subseteq N_0,\,|W| < \hs}\sum_{\mu \in \mathcal{L}_{|W|}}\hat{k}(W, \mu)$.

Next, we write the contribution of vertices in $N_0$ to $\uncolholant(\Omega, k)$, in terms of $\hat{A}$ and $\hat{k}(W,\mu)$ as follows, so that in turn the overall contribution of $A$ to the holant value depends only on $\hat{A}$ and $\hat{k}(W,\mu)$:

\begin{equation}\label{eq:N0contribution}
\prod_{1 \leq q \leq x}s_{w_q}(|A \cap E_G(w_q)|) = \prod_{1 \leq q \leq x}s_{w_q}\left(|\hat{A} \cap E_G(w_q)| + \sum_{\substack{\{w_q\} \subseteq W \subseteq N_0 \\ |W| < \hs}}\sum_{\mu \in \mathcal{L}_W}\hat{k}(W, \mu)\right)\,.    
\end{equation}
To see this, note that
\begin{align*}
|A\,\cap\,E_G(w_q)| &= \sum_{\substack{\{w_q\} \subseteq W \subseteq N_0}}\sum_{\mu \in \mathcal{L}_{|W|}}\hat{k}(W, \mu) \\ &= \sum_{\substack{\{w_q\} \subseteq W \subseteq N_0 \\ |W| = \hs}}\sum_{\mu \in \mathcal{L}_{|W|}}\hat{k}(W, \mu) + \sum_{\substack{\{w_q\} \subseteq W \subseteq N_0 \\ |W| < \hs}}\sum_{\mu \in \mathcal{L}_{|W|}}\hat{k}(W, \mu)
\end{align*}
as well as
\[
\sum_{\substack{\{w_q\} \subseteq W \subseteq N_0 \\ |W| = \hs}}\sum_{\mu \in \mathcal{L}_{|W|}}\hat{k}(W, \mu) = \sum_{\substack{\{w_q\} \subseteq W \subseteq N_0 \\ |W| = \hs}}\hat{k}(W, \mathbf{0}) = |\hat{A}\,\cap\,E_G(w_q)|\,.
\]

Furthermore, the number of all $k$-hyperedge subsets $A'$ with $A'\,\cap\,E(G_0) = \hat{A}$ that induce the same partition of $k$ into $\sum_{\substack{W \subseteq N_0}}\sum_{\mu \in \mathcal{L}_{|W|}}\hat{k}(W, \mu)$ is

\begin{equation}\label{eq:partitionsHighArity}
\prod_{\substack{W \subseteq N_0 \\ |W| < \hs}}\prod_{\mu \in \mathcal{L}_{|W|}}\binom{|\hat{E}(W, \mu)|}{\hat{k}(W,\mu)}\,
\end{equation}
since as we showed previously, $k-|\hat{A}| = \sum_{W \subseteq N_0,\,|W| < \hs}\sum_{\mu \in \mathcal{L}_{|W|}}\hat{k}(W, \mu)$. Clearly, the aforementioned subsets yield the same contribution to the total holant value.

Given $\hat{A}$, we write $\mathsf{Cont}(\hat{A})$ for the contribution to $\uncolholant(\Omega, k)$ of all $A \subseteq E(G)$ with $|A| = k$ and $A\,\cap\,E(G_0) = \hat{A}$. We have

\begin{equation}\label{eq:UpperBoundsHolant}
\uncolholant(\Omega, k) = \sum_{\substack{\hat{A} \subseteq E(G_0) \\  |\hat{A}| \leq k}} \mathsf{Cont}(\hat{A})\,. 
\end{equation}

Now we are ready to present our algorithm, which works as follows. First, we partition $E(G)$ into $\bigcup_{W \subseteq N_0, \mu \in \mathcal{L}_{|W|}}{\hat{E}(W, \mu)}$. Clearly, we can decide in linear time, for each $e \in E(G)$, the block $\hat{E}(W, \mu)$ that $e$ belongs to and note that the number of those blocks depends only on $k$ and $\hs$.\footnote{Note that, it also depends on $z$, however, $z$ is however upper-bounded by $|\mathcal{S}|$ which is constant.}

We proceed by computing $\mathsf{Cont}(\hat{A})$, for each $\hat{A} \subseteq E(G_0)$ such that $|\hat{A}| \leq k$. Note that, $|E(G_0)|$ is at most $\binom{|N_0|}{\hs}$ which depends only on $k, \hs$ since we have assumed that $|N_0| \leq k\cdot\hs$. Hence, we compute $\mathsf{Cont}(\hat{A})$ for at most $h(k, \hs)$ subsets $\hat{A}$, for some computable function $h$.

Given $\hat{A}$, note that  we compute $\mathsf{Cont}(\hat{A})$ by iterating over all possibilities of partitioning $k - |\hat{A}|$ into $\sum_{\substack{W \subseteq N_0,\,|W| < \hs}}\sum_{\mu \in \mathcal{L}_{|W|}}\hat{k}(W, \mu)$ such that $\hat{k}(W, \mu) \leq \min\{k-|\hat{A}|, |\hat{E}(W, \mu)|\}$. The number of those possibilities ---with respect to a given partitioning--- is given by \eqref{eq:partitionsHighArity} and the contribution of the corresponding edge-subsets is given by the product of \eqref{eq:productOfContributions} and \eqref{eq:N0contribution}, which can be computed 
in FPT-(near)-linear time, that is, in time $g(k, \hs)\cdot\tilde{O}(|G|)$, for some computable function $g$.
\end{proof}

\begin{remark}[On allowing multiple copies of a hyperedge]Recall that all of our holant problems are defined over hypergraphs that feature \textit{no} multiple hyperedges. To lift our upper-bounds for the more general case in which multiple hyperedges are allowed, we work as follows: We first note that the analysis of \Cref{lem:linearSigAlgo} applies verbatim to the more general setting, hence yielding the same upper-bounds. If in addition, signatures $s$ with $s(0) = 0$ are considered, the analysis of \Cref{lem:linearSigAlgoWithZeros} can still be adapted in an easy way, but we would like to point out the only non-trivial point, which is \Cref{eq:UpperBoundsHolant} and in particular, the computation implied by it. The reason is that, in this case, the number $|\{\hat{A} \subseteq E(G_0) : |\hat{A}| \leq k\}|$ is not bounded by a function of $k, \hs$ and so we can cannot compute $\mathsf{Cont}(\hat{A})$ for \textit{each} $\hat{A} \subseteq E(G_0), |\hat{A}| \leq k$. Instead, we will group $\{\hat{A} \subseteq E(G_0) : |\hat{A}| \leq k\}$ into groups, the number of which is bounded by a function of our parameters, such that elements within the same group yield the same contribution. Concretely, we compute $\uncolholant(\Omega, k)$ as follows: First, we let $\{e_1^*, \dots, e_y^*\}$ denote the unique hyperedges of $E(G_0)$ and we let $C(e_i^*) \subseteq E(G_0)$ denote the multiset containing all copies of $e_i^*$ in $E(G_0)$. We have
\[\uncolholant(\Omega, k) = \sum_{\hat{k} = 0}^{k}\sum_{\{\hat{k}_i\}_{1\leq i \leq y}}\prod_{i = 1}^{y}\binom{|C(e_i^*)|}{\hat{k}_i}\mathsf{Cont}(\hat{A}(\{\hat{k}_i\}))\,,
\]
where the second sum is over all partitions $\{\hat{k}_i\}$ of $\hat{k}$ into $y$ non-negative integers (that may also be zero) and $\hat{A}(\{\hat{k}_i\})$ is \textit{any} of the subsets of $E(G_0)$ satisfying for each $1 \leq i \leq y$, $|\hat{A}(\{\hat{k}_i\}) \cap C(e_i^*)| = \hat{k}_i$. Since $y$ is clearly bounded by a function of $k, \hs$, we derive the same upper bounds for the setting in which multiple hyperedges are allowed as well.
\end{remark}

\subsection{\ensuremath{\uni{\constuncolholantprob{d}}}\label{sec:hardnessSubsection} is \ensuremath{\#\mathrm{W}[1]}-hard for signatures sets of type \ensuremath{\mathbb{T}[2]} or \ensuremath{\mathbb{T}[\infty]}}
In this section, we deal with the hard cases of $\uni{\constuncolholantprob{d}}$. In a nutshell, our approach towards $\#\mathrm{W}[1]$-hardness works as follows. Let $\mathcal{S}$ be a finite set of symmetric signatures $s$ such that $s(0) \neq 0$. In \cite{aivasiliotis_et_al:LIPIcs.ICALP.2025.7} it was shown that, whenever $\mathcal{S}$ is not of type $\mathbb{T}[1]$, $\uni{\constuncolholantprob{2}}(\mathcal{S})$ is $\#\mathrm{W}[1]$-hard. For certain signature sets, we rely on the aforementioned result and show hardness by reducing $\uni{\constuncolholantprob{2}}$ to  $\uni{\constuncolholantprob{d}}$, for any $d \geq 3$ which is achieved with the help of suitably defined gadgets. For the cases for which the aforementioned approach does not seem to be applicable, we show instead that we can use $\uni{\constuncolholantprob{d}}$ as an oracle in order to evaluate in FPT time certain homomorphism counts $\#\mathsf{Hom}(H, \star) : G \mapsto \#\mathsf{Hom}(H, G)$ for hypergraphs $H$ with large treewidth, a problem which is known to be $\#\mathrm{W}[1]$-hard \cite{dalmau2004complexity}. 

Before we proceed with the proof, we need to introduce some additional technical background.

\subsubsection{Tensor Product of Uniform Hypergraphs}\label{sec:Dedekind}

In this section, we define the tensor product of $\hs$-uniform hypergraphs, such that, for any $\hs \in \mathbb{Z}_{\geq 2}$, the class $\mathcal{G}(\hs)$ of all $\hs$-uniform hypergraphs is closed under taking tensor products. There are further useful properties of our tensor product, which will eventually allow us to apply Dedekind Interpolation for linear combinations of homomorphism counts over the semigroup $(\mathcal{G}(\hs), \otimes)$, to be shown momentarily. We note that the tensor product between hypergraphs and the properties discussed below are not new (see e.g., \cite{bressan2025complexitycountingsmallsubhypergraphs}). However, we need to slightly modify the way the tensor product is defined in the special case of uniform hypergraphs, and adapt the properties we will use accordingly.

\begin{definition}[Tensor Product of Uniform Hypergraphs]
Let $\hs \in \mathbb{Z}_{\geq 2}$. For any two $\hs$-uniform hypergraphs $G, H$, we write $G\otimes H$ for the hypergraph with vertex set $V(G) \times V(H)$ and hyperedges that satisfy the following: $\{(u_1, v_1), \ldots, (u_\hs, v_\hs)\}$ is a hyperedge of $G \otimes H$ if and only if $\{u_1, \ldots, u_\hs\} \in E(G)$ and $\{v_1, \ldots, v_\hs\} \in E(H)$.
\end{definition}

Note that, for $r=2$, the Tensor product coincides with the usual Tensor product for graphs.

\begin{proposition}
For any $\hs \in \mathbb{Z}_{\geq 2}$, the class $\mathcal{G}(\hs)$ of all (isomorphism classes) of $\hs$-uniform hypergraphs equipped with the tensor product of uniform hypergraphs, is a semigroup.   
\end{proposition}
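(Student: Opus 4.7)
The plan is to establish that $\otimes$ is a well-defined binary operation on $\mathcal{G}(\hs)$ (i.e., it is closed in $\mathcal{G}(\hs)$ and respects isomorphism) and that it is associative up to isomorphism. I would verify each property by constructing the appropriate bijection at the level of vertex sets and then checking hyperedge-preservation.

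For closure, observe that by definition a hyperedge of $G \otimes H$ is a set $\{(u_1, v_1), \dots, (u_\hs, v_\hs)\}$ of exactly $\hs$ distinct pairs whose first-coordinate projection is a hyperedge of $G$ and whose second-coordinate projection is a hyperedge of $H$; in particular $G \otimes H$ is itself $\hs$-uniform. A small subtlety here is that, since a hyperedge is a \emph{set} rather than a tuple, one must verify that the projected sets $\{u_1, \dots, u_\hs\}$ and $\{v_1, \dots, v_\hs\}$ each actually contain $\hs$ distinct elements; this is automatic, since if, say, the first projection had strictly fewer than $\hs$ elements, then it could not be a hyperedge of the $\hs$-uniform hypergraph $G$, contradicting the definition.

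For well-definedness on isomorphism classes, given isomorphisms $\phi : G \to G'$ and $\psi : H \to H'$, I would check that the map $\phi \times \psi$ sending $(u, v) \mapsto (\phi(u), \psi(v))$ is a vertex bijection that carries hyperedges of $G \otimes H$ to hyperedges of $G' \otimes H'$ in both directions, yielding $G \otimes H \cong G' \otimes H'$. For associativity, I would exhibit the canonical bijection $\beta : ((u, v), w) \mapsto (u, (v, w))$ between the vertex sets of $(G \otimes H) \otimes K$ and $G \otimes (H \otimes K)$, and verify that $\beta$ is an isomorphism by unfolding the definition twice: a set $\{((u_i, v_i), w_i) : i \in [\hs]\}$ is a hyperedge of $(G \otimes H) \otimes K$ if and only if the three coordinate projections $\{u_i\}_i$, $\{v_i\}_i$, $\{w_i\}_i$ are hyperedges of $G$, $H$ and $K$, respectively, and a symmetric characterisation applies to hyperedges of $G \otimes (H \otimes K)$.

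The main obstacle here is essentially bookkeeping rather than conceptual: because hyperedges are sets and not tuples, one must be careful that the various coordinate projections yield sets of the intended size $\hs$, which is ensured by the $\hs$-uniformity of each factor as noted above. Once this is in place, both closure and associativity reduce to the observation that a hyperedge of an iterated tensor product is in bijective correspondence with a tuple of hyperedges, one from each factor.
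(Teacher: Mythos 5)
Your proposal is correct and follows essentially the same route as the paper: closure is immediate from the definition, and associativity is witnessed by the canonical re-bracketing bijection $((a,b),c)\mapsto (a,(b,c))$. The extra care you take about projections of hyperedge-sets having exactly $\hs$ distinct elements and about well-definedness on isomorphism classes is sound, but the paper treats these points as immediate.
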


\begin{proof}
By definition, the tensor product of two $\hs$-uniform hypergraphs $G, H$ is also $\hs$-uniform. Hence, what remains is to show that the tensor product is associative, that is, for any $\hs$-uniform hypergraphs $G, H, F$, we have $(G \otimes H)\otimes F = G \otimes (H \otimes F)$. It can be readily verified that the map that maps $((a, b), c) \in V((G \otimes H)\otimes F)$ to $(a, (b, c)) \in V(G \otimes (H\otimes F))$ is an isomorphism, which shows the claim.
\end{proof}

\begin{proposition}\label{prop:tensorprod} Let $F, G, H$ be $\hs$-uniform hypergraphs for some $\hs \in \mathbb{Z}_{\geq 2}$. We have
\begin{align*}
\Hom(F \to G \otimes H) = \Hom(F \to G) \cdot \Hom(F \to H)\,.
\end{align*}
\end{proposition}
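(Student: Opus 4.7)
The plan is to establish the equality by constructing an explicit bijection
\[
\Psi : \homs{F}{G \otimes H} \longrightarrow \homs{F}{G} \times \homs{F}{H},
\]
defined by $\Psi(\phi) = (\pi_G \circ \phi,\, \pi_H \circ \phi)$, where $\pi_G : V(G) \times V(H) \to V(G)$ and $\pi_H : V(G) \times V(H) \to V(H)$ are the projections. Injectivity is immediate: if $\phi \neq \phi'$ they disagree at some vertex, hence in at least one coordinate.

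The first real step is to verify that $\Psi$ is well-defined, i.e.\ that both projections of a homomorphism are themselves homomorphisms. Given $\phi \in \homs{F}{G\otimes H}$ and a hyperedge $e = \{v_1,\dots,v_\hs\} \in E(F)$, the image $\phi(e)$ is a hyperedge of $G \otimes H$, say $\{(u_1,w_1),\dots,(u_\hs,w_\hs)\}$. By the definition of the tensor product of $\hs$-uniform hypergraphs, this forces $\{u_1,\dots,u_\hs\} \in E(G)$ and $\{w_1,\dots,w_\hs\} \in E(H)$; these two sets are precisely the images of $e$ under $\pi_G \circ \phi$ and $\pi_H \circ \phi$, which therefore respect hyperedges.

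For surjectivity, given any pair $(\phi_G, \phi_H) \in \homs{F}{G} \times \homs{F}{H}$, I define $\phi(v) := (\phi_G(v), \phi_H(v))$ and show $\phi \in \homs{F}{G \otimes H}$. For $e = \{v_1,\dots,v_\hs\} \in E(F)$, the sets $\{\phi_G(v_1),\dots,\phi_G(v_\hs)\}$ and $\{\phi_H(v_1),\dots,\phi_H(v_\hs)\}$ are hyperedges of $G$ and $H$ respectively, so each has exactly $\hs$ distinct elements; in particular the pairs $(\phi_G(v_i), \phi_H(v_i))$ are $\hs$ distinct elements (distinctness of the first coordinates suffices), and by definition of $\otimes$ they form a hyperedge of $G \otimes H$. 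Clearly $\Psi(\phi) = (\phi_G, \phi_H)$, completing surjectivity and hence the bijection.

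The step that warrants care is the surjectivity verification, because the definition of hyperedges in $G \otimes H$ treats them as \emph{sets} of pairs, not ordered tuples; one must be sure that the $\hs$ pairs produced from $(\phi_G, \phi_H)$ are genuinely distinct and genuinely form a single hyperedge of $G \otimes H$ (rather than, say, coinciding into fewer pairs). This is precisely where the $\hs$-uniformity of $F$, $G$, $H$ --- and the fact that homomorphisms send hyperedges to hyperedges of the same arity --- is used. Everything else is a straightforward unfolding of definitions.
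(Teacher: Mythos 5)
Your proof is correct and follows essentially the same route as the paper's: the bijection $\phi \mapsto (\pi_G\circ\phi,\pi_H\circ\phi)$ with the standard verification of well-definedness, injectivity, and surjectivity. The one point where you are slightly more careful than the paper --- checking in the surjectivity step that the $\hs$ pairs are genuinely distinct so that they form a hyperedge of the uniform tensor product --- is a valid and welcome addition, and your justification via the distinctness of the first coordinates is sound.
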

\begin{proof}
Consider $b$ that maps elements of $\homs{F}{G\,\otimes\,H}$ to elements of $\homs{F}{G}\,\times\,\homs{F}{H}$ as follows. For a homomorphism $h \in \homs{F}{G\,\otimes\,H}$, we define $h_G$ and $h_H$ such that $h_G$ (resp. $h_H$) maps a vertex $v \in V(F)$ to $h(v)_1$ (resp. $h(v)_2$), where $h(v)_i$ denotes the $i$-th entry of the tuple $h(v)$. For any hyperedge $\{v_1, \ldots, v_\hs\} \in E(F)$, we have $\{h(v_1), \ldots, h(v_\hs)\} \in E(G\otimes H)$ which implies that $\{h(v_1)_1, \ldots, h(v_\hs)_1\} \in E(G)$ and $\{h(v_1)_2, \ldots, h(v_\hs)_2\} \in E(H)$. Clearly, $h_G \in \homs{F}{G}$ and $h_H \in \homs{F}{H}$.

We have that $b$ is a bijection. 
To see that it is injective, let $h, h' \in \homs{F}{G\,\otimes\,H}$ and assume that $b(h) = b(h') = (h_G, h_H)$. For each $v \in V(F)$ we have $h(v) = (h_G(v), h_H(v)) = h'(v)$. 
Then, for surjectivity, we take $h_G \in \homs{F}{G}$ and $h_H \in \homs{F}{H}$ and define $h$ that maps $v \in V(F)$ to $(h_G(v), h_H(v))$. It can be verified that indeed $h \in \homs{F}{G\,\otimes\,H}$ and $b(h) = (h_G, h_F)$.
\end{proof}

\begin{proposition}\label{prop:lovaszNonIso}
For non-isomorphic $\hs$-uniform hypergraphs $F, H$ without isolated vertices there exists $X$ that is a sub-hypergraph\footnote{Recall that a sub-hypergraph is by definition hyperedge-induced.} either of $F$ or $H$ such that $\Hom(F \to X) \neq \Hom(H \to X)$. 
\end{proposition}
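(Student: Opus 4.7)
The plan is to adapt the classical Lov\'asz argument for graphs to the setting of $\hs$-uniform hypergraphs with hyperedge-induced sub-hypergraphs. Because $F$ has no isolated vertices, every homomorphism $h:F\to X$ has an image that is a genuine (hyperedge-induced) sub-hypergraph of $X$: take its hyperedge set to be $h(E(F))$, whose vertex union equals $h(V(F))$. Each such $h$ then factors uniquely as a homomorphism onto $\mathrm{im}(h)$ that is surjective on both vertices and hyperedges, followed by inclusion. Grouping by isomorphism type of the image yields the identity
\[
\Hom(F\to X)\;=\;\sum_{X'\in\mathcal{X}}\Sub(X'\to X)\cdot \#\mathsf{SurHom}(F\to X')\,,
\]
where $\mathcal{X}$ denotes the finite set of isomorphism types of sub-hypergraphs of $F$ or $H$, a family closed under taking further sub-hypergraphs and containing both $F$ and $H$.

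I would then argue by contradiction. Suppose $\Hom(F\to X)=\Hom(H\to X)$ for every $X\in\mathcal{X}$. Order $\mathcal{X}$ by increasing $(|E(\cdot)|,|V(\cdot)|)$. The matrix $T:=(\Sub(X'\to X))_{X',X\in\mathcal{X}}$ is triangular with all diagonal entries equal to $1$, since any sub-hypergraph of $X$ isomorphic to $X$ must share its hyperedge count and therefore equal $X$ itself. Hence $T$ is invertible, and inverting the decomposition above yields $\#\mathsf{SurHom}(F\to X)=\#\mathsf{SurHom}(H\to X)$ for all $X\in\mathcal{X}$; in particular for $X=F$ and $X=H$.

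The identity is a surjective endomorphism of $F$, so $\#\mathsf{SurHom}(F\to F)\ge 1$, which forces $\#\mathsf{SurHom}(H\to F)\ge 1$; symmetrically $\#\mathsf{SurHom}(F\to H)\ge 1$. Picking witnessing surjective homomorphisms $f:F\to H$ and $g:H\to F$, vertex-surjectivity of both forces $|V(F)|=|V(H)|$ and makes $f,g$ vertex bijections, while hyperedge-surjectivity forces $|E(F)|=|E(H)|$. Any vertex-bijective $\hs$-uniform hypergraph homomorphism that is hyperedge-surjective is automatically a hypergraph isomorphism (the induced hyperedge map is then injective as well), so $F\cong H$, contradicting the hypothesis. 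The main obstacle is conceptual rather than computational: it lies in the first step, where the no-isolated-vertices hypothesis is precisely what ensures that the image of every $h:F\to X$ is hyperedge-induced, so that the sum in the decomposition actually ranges over sub-hypergraphs in the paper's sense. Once this is correctly set up, the triangularity of $T$ and the finite-cardinality rigidity that upgrades mutual surjections to an isomorphism are routine.
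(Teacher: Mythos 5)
Your proof is correct and follows essentially the same route as the paper's: both reduce the claim to counts of edge-surjective homomorphisms and conclude from $\#\mathsf{SurHom}(F\to H)>0$ and $\#\mathsf{SurHom}(H\to F)>0$ that $F\cong H$. The differences are only presentational --- the paper inverts the hom-to-surhom relation by an explicit inclusion--exclusion over edge subsets $J\subseteq E(B)$ rather than via your triangular matrix over isomorphism types, and it cites the final ``mutual surjective homomorphisms imply isomorphism'' step as well known, whereas you prove it.
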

\begin{proof}
    For any two hypergraphs $A,B$ we set
    \[ \surhoms{A}{B} := \{ \varphi \in \homs{A}{B} \mid \varphi \text{ is edge-surjective}\}\,\]
    where a homomorphism $\varphi$ from $A$ to $B$ is \textit{edge-surjective} if and only if $\varphi(E(A)) = E(B)$.
    Recall that given a subset $J \subseteq E(B)$, $B[J]$ is the sub-hypergraph of $B$ induced by $J$, that is, $V(B[J])=\bigcup_{e \in J}e$ and $E(B[J])=J$. Clearly, if $B$ is $\hs$-uniform then $B[J]$ is also $\hs$-uniform for any $J \subseteq V(B)$.
    By the principle of Inclusion-Exclusion, we have
    \[\#\surhoms{A}{B} = \sum_{J \subseteq E(B)} (-1)^{|E(B)\setminus J|} \cdot \#\homs{A}{B[J]} \,.\]
    Towards contradiction, assume that that for all sub-hypergraphs $X$ of $F$ and $H$ we have
    \[ \#\homs{F}{X} = \#\homs{H}{X} \,.\]
    Then, we compute
    \begin{align*}
        \#\surhoms{F}{H} &= \sum_{J \subseteq E(H)} (-1)^{|E(H)\setminus J|} \cdot \#\homs{F}{H[J]}\\
        ~& = \sum_{J \subseteq E(H)} (-1)^{|E(H)\setminus J|} \cdot \#\homs{H}{H[J]}\\
        ~&= \#\surhoms{H}{H} > 0\,.
    \end{align*}
    Similarly, we have $\#\surhoms{H}{F}>0$. It is well-known that two hypergraphs $F, H$ (without isolated vertices) are isomorphic if and only if $\#\surhoms{F}{H} > 0$ as well as $\#\surhoms{H}{F} > 0$. Hence, $F \cong H$, which is a contradiction. So, there is a sub-hypergraph $X$ of either $F$ or $H$ such that $\#\homs{F}{X} \neq \#\homs{H}{X}$. 
\end{proof}

The last ingredient is the following technical lemma that follows from \cite{dalmau2004complexity} and \cite{Marx10}. The formal argument requires the consideration of homomorphisms between relational structures in an intermediate step and is provided in \cref{sec:ProofOfLemma}.

\begin{lemma}\label{lem:lowerBoundsHoms}
Let $r\in \mathbb{Z}_{\geq 2}$ be a fixed integer, and let $\mathcal{H}$ be a recursively enumerable class of $r$-uniform hypergraphs. If $\mathcal{H}$ has unbounded treewidth, then $\#\text{\sc{Hom}}(\mathcal{H})$ is $\#\mathrm{W}[1]$-hard and cannot be solved in time $f(|H|)\cdot |V(G)|^{o(\mathsf{tw}(H)/\log(\mathsf{tw}(H))))}$, for any function $f$, unless ETH fails.  This remains true even if each input $(H,G)$ to $\#\text{\sc{Hom}}(\mathcal{H})$ satisfies that $G$ is $r$-uniform as well.
\end{lemma}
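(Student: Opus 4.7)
The plan is to reduce the problem to the well-studied counting homomorphism problem between relational structures. I would associate to each $r$-uniform hypergraph $H$ the relational structure $\mathbf{H}$ of signature consisting of a single symmetric $r$-ary relation $R$, where $R^{\mathbf{H}}$ consists of all tuple-orderings of the hyperedges of $H$; in particular, no diagonal (repeated-entry) tuple appears in $R^{\mathbf{H}}$. Under this translation, (i) the Gaifman graphs---and hence the treewidths---of $H$ and $\mathbf{H}$ coincide, and (ii) when both $H$ and $G$ are $r$-uniform, the excerpt's hypergraph homomorphisms from $H$ to $G$ match exactly the relational-structure homomorphisms $\mathbf{H}\to\mathbf{G}$, because the $r$-uniformity of $G$ forces $\bigcup_{v\in e}\{h(v)\}$ to be an $r$-element set, so any such $h$ is automatically injective on each hyperedge.

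Once in the relational-structure world, the $\#\mathrm{W}[1]$-hardness follows from the counting dichotomy of Dalmau and Jonsson~\cite{dalmau2004complexity}, which guarantees $\#\mathrm{W}[1]$-hardness of $\#\text{\sc{Hom}}(\mathcal{C})$ for any recursively enumerable class $\mathcal{C}$ of relational structures of unbounded treewidth. For the fine-grained ETH-based lower bound I would invoke Marx's tight treewidth-based bound~\cite{Marx10}, which rules out (under ETH) an algorithm solving the decision problem $\text{Hom}(H,G)$ in time $f(|H|)\cdot|V(G)|^{o(\mathsf{tw}(H)/\log\mathsf{tw}(H))}$ whenever $H$ ranges over a class of unbounded treewidth; since a counting oracle trivially decides existence, this lower bound carries over to $\#\text{\sc{Hom}}(\mathcal{H})$.

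The remaining, and most delicate, step is the strengthening that $G$ can be assumed $r$-uniform rather than an arbitrary structure over the single-$r$-ary-relation signature (which might otherwise carry repeated-entry tuples). The plan here is a direct inspection of the Dalmau--Jonsson and Marx reductions, which produce symmetric, repetition-free targets---that is, $r$-uniform hypergraphs---because their gadgets encode CSP/clique instances whose constraints naturally correspond to hyperedges with pairwise distinct elements. Should this direct inspection not suffice, I would fall back on the following robust conversion: given a general target $\mathbf{G}$, let $G^{\star}$ be its $r$-uniform sub-hypergraph obtained by discarding the diagonal tuples of $R^{\mathbf{G}}$. Then $\Hom(\mathbf{H}\to\mathbf{G})$ decomposes as a finite linear combination of counts $\Hom(H/\sigma\to G^{\star})$ over partitions $\sigma$ of $V(H)$, and this relation inverts via the M\"obius function of the partition lattice already exploited throughout the paper. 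The main obstacle is to ensure that this conversion preserves the $\mathsf{tw}(H)/\log\mathsf{tw}(H)$ rate in the ETH lower bound; this reduces to checking that quotients do not increase treewidth and that only a number of partitions depending on $|V(H)|$ is involved, so that the induced reduction is an FPT-linear Turing reduction in the sense of the earlier preliminaries.
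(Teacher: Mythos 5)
There is a genuine gap in the fine-grained part of your argument. Marx's lower bound from~\cite{Marx10} does \emph{not} apply to the plain decision problem $\mathrm{Hom}(H\to G)$ for an arbitrary class of unbounded treewidth: for the uncoloured decision problem the relevant parameter is the treewidth of the \emph{core} of $H$ (Grohe), and a class $\mathcal{H}$ of $r$-uniform hypergraphs of unbounded treewidth may consist entirely of hypergraphs with trivial cores (e.g.\ ``$r$-partite'' ones that map onto a single hyperedge), for which the decision problem is solvable in polynomial time and no such lower bound can exist. Marx's bound concerns CSP instances whose primal (hyper)graph lies in $\mathcal{H}$ --- equivalently, an individualized or colour-prescribed homomorphism problem --- so your step ``a counting oracle trivially decides existence, hence the lower bound carries over'' transfers a lower bound that is simply false for the problem you apply it to. To make the argument work you must (i) apply the Dalmau--Jonsson and Marx lower bounds to a class of \emph{cores} of unbounded treewidth (the paper achieves this by augmenting the single $r$-ary relation with unary singleton relations $U_i=\{v_i\}$, which forces the only self-homomorphism to be the identity), and then (ii) give a \emph{tight counting reduction} from that individualized problem back to uncoloured $\#\text{\sc{Hom}}(\mathcal{H})$; the paper does this by tensoring the target $B$ with the core $A$ (so that $\#\homs{A}{B\otimes A}=\#\homs{A}{B}$), reading off an $r$-uniform $H$-coloured hypergraph, and finally removing the colours by inclusion--exclusion. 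Your Möbius-inversion fallback addresses only the orthogonal issue of diagonal tuples in the target, not this core/colouring issue.

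The remaining ingredients of your proposal are essentially sound: the single-relation encoding does preserve treewidth, the identification of hypergraph homomorphisms with structure homomorphisms when $G$ is $r$-uniform is correct, and the bare $\#\mathrm{W}[1]$-hardness claim can indeed be extracted from Dalmau--Jonsson (whose counting dichotomy is governed by the treewidth of the structures themselves, not of their cores), modulo making the target $r$-uniform. But as written, the ETH-based $|V(G)|^{o(\mathsf{tw}(H)/\log \mathsf{tw}(H))}$ bound is not established.
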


We are now ready to prove the theorem.

\begin{theorem}\label{thm:reduction2TodArity}
If $\{s\}$ is of type $\mathbb{T}[2]$ or $\mathbb{T}[\infty]$, $\uni{\constuncolholantprob{d}}(\{s\})$ is $\#\mathrm{W}[1]$-hard, for any number $d \in \mathbb{Z}_{\geq 2}$. Furthermore, if $\{s\}$ is of type $\mathbb{T}[\infty]$, then $\uni{\constuncolholantprob{d}}$ cannot be solved in time $f(k)\cdot |V(\Omega)|^{o(k/\log k)}$, unless ETH fails.
\end{theorem}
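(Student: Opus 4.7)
The plan is to prove this by a case analysis on $s$ and $d$, reducing each case either to the already-established graph case $\uni{\constuncolholantprob{2}}(\{s\})$ of \Cref{thm:main_uncol} or to $\#\textsc{Hom}$ on a class of hypergraphs of unbounded treewidth via \Cref{lem:lowerBoundsHoms}. When $d=2$ the statement is precisely \Cref{thm:main_uncol}, so we fix $d\geq 3$ throughout and split according to the smallest $i\geq 1$ with $s(i)\neq 0$.

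\emph{Case 1: $s(1)\neq 0$.} Apply the first gadget construction described in \Cref{sec:introHardnessParam}: given an instance $(\Omega,k)$ of $\uni{\constuncolholantprob{2}}(\{s\})$ on graph $G$, form $\Omega'$ on the $d$-uniform hypergraph $G'$ obtained by stuffing each edge with $d-2$ fresh degree-$1$ vertices carrying signature $s$. Because each fresh vertex contributes $s(1)$ whenever its unique incident hyperedge is selected and $s(0)=1$ otherwise, one gets $\uncolholant(\Omega',k)=s(1)^{k(d-2)}\cdot\uncolholant(\Omega,k)$, which is a linear FPT Turing-reduction. \emph{Case 2: $s(1)=0$ and $s(2)\neq 0$.} Build, using \cite[Theorem 3]{frosini2021new} as sketched in the excerpt, a connected $d$-uniform hypergraph $B$ on a constant number of vertices containing exactly two adjacent degree-$1$ vertices $x,y$ and with every other vertex of degree $2$. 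Replacing every edge $\{u,v\}$ of a $2$-uniform instance $\Omega$ by a fresh copy of $B$ (identifying $u,v$ with $x,y$) and setting $k'=k|E(B)|$ yields $\uncolholant(\Omega',k')=s(2)^{k(|V(B)|-2)}\cdot\uncolholant(\Omega,k)$, by the vanishing-on-incomplete-copies argument (here $s(1)=0$ kills any partial selection inside a copy). This is again a linear FPT Turing-reduction to $\uni{\constuncolholantprob{d}}(\{s\})$, and together Cases 1 and 2 handle all signature sets of type $\mathbb{T}[2]$, plus those in $\mathbb{T}[\infty]$ with $s(1)\neq 0$ or $s(2)\neq 0$.

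\emph{Case 3: $s(1)=s(2)=0$, so $s\in\mathbb{T}[\infty]$ and the least $b\geq 1$ with $s(b)\neq 0$ satisfies $b\geq 3$.} Here gadgets alone run into the divisibility obstruction noted after the Case-2 construction, so the plan is to pivot to the homomorphism-basis machinery: by \Cref{lem:uncolHomBasisRESTATED}, for any fixed $k\geq 1$ the holant is a finite linear combination
\[\uncolholant(\Omega,k)=\sum_{F\in\mathcal{G}_{\leq k}(d)}\zeta_{k,s}(F)\cdot\Hom(F\to G).\]
The main technical claim to establish is that whenever $H\in\mathcal{G}_k(d)$ is a $b$-regular $d$-uniform hypergraph with exactly $k$ hyperedges, the coefficient $\zeta_{k,s}(H)$ is nonzero, i.e.\ the contributions from coarser partitions $\rho>\bot$ all vanish because any non-trivial merge inside a $b$-regular graph produces a vertex of degree strictly between $0$ and $b$ at which $s$ vanishes. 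Given this, complexity monotonicity (Dedekind interpolation, as in \cite[Thm.~18]{10.1145/3564246.3585204} together with \Cref{prop:tensorprod} and \Cref{prop:lovaszNonIso} applied over the semigroup $(\mathcal{G}(d),\otimes)$) yields a linear FPT Turing-reduction from $\#\textsc{Hom}(\mathcal{F}_{d,b})$ to $\uni{\constuncolholantprob{d}}(\{s\})$, where $\mathcal{F}_{d,b}$ is the class of $d$-uniform $b$-regular hypergraphs. Since $d\geq 2$ and $b\geq 3$, $\mathcal{F}_{d,b}$ is recursively enumerable and has unbounded treewidth (indeed treewidth linear in the number of vertices, which is the input needed for the sharp ETH bound), so \Cref{lem:lowerBoundsHoms} gives $\#\mathrm{W}[1]$-hardness together with the conditional $f(k)\cdot|V(\Omega)|^{o(k/\log k)}$ lower bound.

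The main obstacle I expect is the combinatorial non-vanishing argument for $\zeta_{k,s}(H)$ on $b$-regular patterns in Case 3: the sum in \eqref{eq:coeffColoredRestated} ranges over all $H\in\mathcal{G}_k(d)$ admitting a quotient isomorphic to our target $b$-regular $F$, and one has to show that only the trivial refinement $\rho=\bot$ (with $H\cong F$) contributes nonzero mass, exploiting that $s$ is supported only on multiples-of-degree $\geq b$ while any non-trivial partition of vertices of a $b$-regular hypergraph creates a block of degree in $\{1,\dots,b-1\}$. Finally, for the ETH tail of the theorem, one verifies that each of the Case-1, Case-2 gadgets is a linear FPT Turing-reduction and that Case 3's reduction pushes $k$ to $|E(H)|$ which is $\Theta(\mathsf{tw}(H))$ on the $b$-regular family, so the lower bound $|V(G)|^{o(\mathsf{tw}(H)/\log\mathsf{tw}(H))}$ from \Cref{lem:lowerBoundsHoms} transports to $|V(\Omega)|^{o(k/\log k)}$ as claimed.
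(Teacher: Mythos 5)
Your proposal follows essentially the same route as the paper's proof: the identical case split on the least $b$ with $s(b)\neq 0$, the same two gadget reductions from $\uni{\constuncolholantprob{2}}(\{s\})$ for $b\in\{1,2\}$, and for $b\geq 3$ the same pivot to the homomorphism basis, non-vanishing of $\zeta_{k,s}$ on $d$-uniform $b$-regular patterns with $k=|E(F)|$, Dedekind interpolation over the tensor-product semigroup, and \Cref{lem:lowerBoundsHoms} applied to $\mathcal{F}_{d,b}$ with treewidth linear in size. The only cosmetic difference is that the paper establishes the non-vanishing claim by a global degree-sum count ($|V(H)|\cdot b\leq dk=|V(F^*)|\cdot b$ forces $\rho=\bot$) rather than your local "splitting creates a low-degree vertex" argument, but both are sound.
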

\begin{proof}
For $d = 2$, the theorem follows from \cite[Theorem 6.2]{aivasiliotis2024parameterisedholantproblems}. So, for what follows we assume that $d \geq 3$. 

Let $b > 0$ such that $s(b) \neq 0$ and for each $0 < i < b$, $s(i) = 0$. We distinguish between the cases $b = 1, b = 2,$ and $b > 2$. For the first two cases, we show the following reduction
\[\uni{\constuncolholantprob{2}}(\{s\}) \fptlinred \uni{\constuncolholantprob{d}}(\{s\})\,,\]
from which follows that $\constuncolholantprob{d}(\{s\})$ is $\#\mathrm{W}[1]$-hard. Also, note that the known lower bounds for $\uni{\constuncolholantprob{2}}(\{s\})$ (that is, for the case where $\{s\}$ is of type $\mathbb{T}[\infty]$) transfer directly to $\uni{\constuncolholantprob{d}}(\{s\})$, for any $d \geq 3$.

For the last case, we rely on \Cref{lem:lowerBoundsHoms} and instead of reducing from $\uni{\constuncolholantprob{2}}(\{s\})$ we reduce from $\#\textsc{HOM}(\mathcal{F}_{d,b})$ where we write $\mathcal{F}_{d, b}$ for the class of all $d$-uniform $b$-regular hypergraphs. As we will show, $\mathcal{F}_{d, b}$ has unbounded treewidth and it contains hypergraphs the treewidth of which is linear in their size, with which observations we can then apply \Cref{lem:lowerBoundsHoms} and derive our hardness results.

We also note that in the case $b > 2$, $\{s\}$ can only be of type $\mathbb{T}[\infty]$. To see this, note that by \Cref{def:fingerprint_intro}, for any signature $s$ satisfying $s(1) = s(2) = 0$, $\{s\}$ is either of type $\mathbb{T}[1]$ or $\mathbb{T}[\infty]$ (and we have assumed that we only consider signatures $s$ such that $\{s\}$ is not of type $\mathbb{T}[1]$). We are now ready to proceed with the details of each case.
\begin{itemize}
    \item[(1)]$\boldsymbol{b=1}:\quad$ Let $(\Omega, k)$ be an instance of $\uni{\constuncolholantprob{2}}(\{s\})$ with underlying graph $G$. Let $G'$ denote the $d$-uniform hypergraph obtained from $G$ by adding to each edge $e \in E(G)$, $d-2$ new vertices (that is, all new vertices have degree 1). We further equip the new vertices with signature $s$ and obtain a signature hypergrid $\Omega'$. Note that $(\Omega', k) \in \uni{\constuncolholantprob{d}}(\{s\})$. We have
    \begin{align*}
    \uncolholant(\Omega', k) & = \sum_{\substack{A \subseteq E(G') \\ |A| = k}}\prod_{v \in V(G')}s(|A \cap E_{G'}(v)|) \\ & = s(1)^{k(d-2)} \underbrace{\sum_{\substack{A \subseteq E(G') \\ |A| = k}}\prod_{v \in V(G)}s(|A\cap E_{G'}(v)|)}_{= \uncolholant(\Omega, k)}\,.\end{align*}
    Since $s(1) \neq 0$, it follows that $\uncolholant(\Omega, k) = s(1)^{-k(d-2)}\cdot\uncolholant(\Omega', k)$ which thus can be computed in polynomial time using our oracle for $\uncolholant(\Omega', k)$, showing the reduction.

    \item[(2)]$\boldsymbol{b = 2}: \quad$ Let $(\Omega, k)$ and $G$ be defined as above. Since $s(1) = 0$, the above reduction cannot be used as it is in this case. However, we devise appropriate gadgets that will help us overcome this. Let $B$ be a connected $d$-uniform hypergraph with precisely two vertices $x, y$ having degree 1 (the existence of which is established below). We assume that $x, y$ are adjacent (that is, they are contained in the same hyperedge) and that every other vertex has degree 2.
    Let $G'$ denote the hypergraph obtained by replacing each edge $\{u, v\} \in E(G)$ with a copy of $B$ by identifying $v$ (resp. $u$) with $x$ (resp. $y$). We also equip each new vertex with signature $s$, obtaining a new signature grid $\Omega'$.
    
    To compute $\uncolholant(\Omega', k')$, for any $k'$, we observe that any $k'$-set $A \subseteq E(G')$ that contains at least one but not all of the hyperedges of any copy of $B$, has zero contribution to $\uncolholant(\Omega', k')$. To see this, recall that $s(1) = 0$ and note that there would exist at least one vertex $w \in B$ ($w \neq x, y$) such that $|A \cap E_{G'}(w))| = 1$, since $B$ is connected.

    We set $k' = k \cdot |E(B)|$ and for each $e \in E(G)$, we write $B^e \subseteq G'$ for the copy of $B$ that $e$ is replaced with in $G'$. We let $\mathcal{A}'$ denote the set of all $k'$-sets $A' \subseteq E(G')$ for which there exist $e_1, \dots, e_k \in E(G)$ such that $A' = \bigcup_{i = 1}^{k}E(B^{e_i})$. It follows from the discussion above that

    \[
    \uncolholant(\Omega', k') = \sum_{ A'\in \mathcal{A}'}\prod_{v \in V(G')}s(|A' \cap E_{G'}(v)|)\,.
    \]
    We let $\binom{E(G)}{k}$ denote all $k$-subsets of $E(G)$.
    Let $a:\mathcal{A}' \to \binom{E(G)}{k}$ denote the mapping assigning a set $A'= \bigcup_{i = 1}^{k}E(B^{e_i})$ to the set $\{e_1, \dots, e_k\}$. By construction of $\mathcal{A}'$ it is easy to see that $a$ is a bijection. Moreover, observe that 
    \[
    \prod_{v \in V(G')}s(|A' \cap E_{G'}(v)|) = s(2)^{k(|V(B)|-2)}\prod_{v \in V(G)}s(|a(A') \cap E_G(v)|)\,.\] 
    So, we have 
    \begin{align*}
    \uncolholant(\Omega', k') & = \sum_{\substack{A' \in a(\mathcal{A}')}}s(2)^{k(|V(B)|-2)}\prod_{v \in V(G)}s(|A' \cap E_{G}(v)|) \\ & = s(2)^{k(|V(B)|-2)} \cdot \uncolholant(\Omega, k)\,,
    \end{align*}
    and since $s(2) \neq 0$, we have $\uncolholant(\Omega, k) = s(2)^{-k(|V(B)|-2)}\cdot \uncolholant(\Omega', k')$, which thus can be computed in polynomial time using our oracle for $\uncolholant(\Omega', k')$ (since $|V(B)| + |E(B)|$ is constant). 

    To complete the reduction, what remains is to show that such a hypergraph $B$ always exists.
    By \cite[Theorem 3]{frosini2021new}, it follows that any degree sequence with sufficiently many entries of $2$ and precisely $d-2$ entries of $1$ (and no entries larger than $2$) is realizable by a $d$-uniform hypergraph.
    Let hence $W$ be a $d$-uniform hypergraph of maximum degree 2 having precisely $d-2$ vertices of degree 1. First, we delete all connected components of $W$ in which each vertex has degree $2$; consequently, each remaining connected component contains at least one vertex of degree $1$.
    Finally, we obtain $B$ by first adding two fresh vertices $x$ and $y$, and then forming a new hyperedge that contains all $(d-2)$ degree-1 vertices along with $x$ and $y$. Since $d$ is a constant, it follows that we can also compute $B$ in constant time (irrespective of whether the mapping $d \mapsto B$ that constructs $B$ is even decidable).

    \item[(3)] $\boldsymbol{b\geq 3}:\quad$ For the last case, we note that the existence of the gadgets (similar to the ones) employed above cannot be guaranteed. So, we need to follow a different approach. To this end, recall from \Cref{lem:uncolHomBasisRESTATED}, that for any $(\Omega, k) \in \uni{\constuncolholantprob{d}}(\{s\})$ with underlying hypergraph $G$ we have \[
\mathsf{Holant}(\Omega, k) = \sum_{F \in \mathcal{G}_{\leq k}(d)}\zeta_{k, s}(F)\cdot \Hom(F \to G)\,,
\]
where the coefficients of the above expression are given as follows
\begin{align*}
\zeta_{k, s}(F) = \sum_{H \in \mathcal{G}_k(d)}\left(\Aut(H)^{-1}\prod_{v \in V(H)}s(\deg_H(v))\sum_{\substack{\rho \in \mathsf{Part}(H) \\ F \cong H/\rho, }}\mu(\rho)\right).
\end{align*}

We also need to recall some simple properties of uniform and regular hypergraphs. For any $b$-regular hypergraph $X$, we have $\sum_{v \in V(X)}\deg_X(v) = |V(X)|\cdot b$. Additionally, if $X$ is also $d$-uniform, it follows from the generalization of the Handshaking lemma for hypergraphs that $\sum_{v \in V(X)}\deg_X(v) = d\cdot |E(X)|$. Hence, for any $d$-uniform $b$-regular hypergraph $X$, we have $|V(X)|\cdot b = d \cdot |E(X)|$. 

Recall that we write $\mathcal{F}_{d, b}$ for the class of all $d$-uniform $b$-regular hypergraphs. Let $F^* \in \mathcal{F}_{d, b}$ and $G$  be an instance of $\#\textsc{Hom}(\mathcal{F}_{d, b})$. We set $k = |E(F^*)|$. We observe that any hypergraph $H \in \mathcal{G}_k(d)$ that has at least one vertex of degree strictly less than $b$, has zero contribution to $\zeta_{k, s}(F^*)$, since $s(x) = 0$, for any $0 < x < b$ (recall that $H$ has no isolated vertices). In fact, the only hypergraph $H \in \mathcal{G}_k(d)$ for which there is a partition $\rho \in \mathsf{Part}(H)$ such that $F^* \cong H/\rho$ is $F^*$ itself (with the corresponding partition being the finest partition). To see this, recall that we have safely assumed that the vertices of $H$ all have degree at least $b$ and so, $|V(H)|\cdot b \leq \sum_{v \in V(H)}\deg_H(v) = d\cdot k = |V(F^*)|\cdot b$, which implies that $|V(H)| \leq |V(F^*)|$. Since $F^*$ is a quotient hypergraph of $H$, we have $|V(F^*)| \leq |V(H)|$. Hence, $|V(H)| = |V(F^*)|$ or equivalently $\rho = \bot$, which in turn implies that $H = F^*$ and thus $\zeta_{k, s}(F^*) = \#\auts(F^*)^{-1}\cdot s(b)^{|V(F^*)|} \neq 0$ (since $\mu(\bot) = 1$). 

For any $d$-uniform hypergraph $X$, let $\Omega_X$ denote the signature grid with underlying hypergraph $G \otimes X$, the vertices of which are all equipped with $s$. We have 
\begin{align*}
\uncolholant(\Omega_X, k) &= \sum_{F \in \mathcal{G}_{\leq k}(d)}\zeta_{k, s}(F)\cdot \Hom(F \to G \otimes X) \\ &= \sum_{F \in \mathcal{G}_{\leq k}(d)}\underbrace{\zeta_{k, s}(F)\cdot \Hom(F \to G)}_{=\hat{\zeta}_{k,s}(F)}\cdot\Hom(F \to X)\,,
\end{align*}
where the coefficients $\zeta_{k, s}(F)$ are as given above, and the second inequality follows from \Cref{prop:tensorprod}.
With the observations of \Cref{sec:Dedekind} in hand, we can apply Dedekind Interpolation (see e.g., the full version of \cite[Theorem 18]{10.1145/3564246.3585204}) for the map \[p_G : X \in \mathcal{G}(d) \mapsto \sum_{F \in \mathcal{G}_{\leq k}(d)}\hat{\zeta}_{k, s}(F)\cdot \#\homs{F}{X}\,,\] where $\mathcal{G}(d)$ is the class of all (isomorphism types) of $d$-uniform hypergraphs (with no isolated vertices). Clearly, for any $X \in \mathcal{G}(d)$, $p_G(X) = \uncolholant(\Omega_X, k)$. Thus, we can compute $\hat{\zeta}_{k, s}(F)$ for any $F \in \mathcal{G}_{\leq k}(d)$. If in addition $\zeta_{k, s}(F) \neq 0$, we can also compute $\Hom(F \to G) = \hat{\zeta}_{k, s}(F)/\zeta_{k,s}(F)$. Recall that $\zeta_{k, s}(F^*) \neq 0$, which implies that we can compute $\Hom(F^* \to G)$ using Dedekind Interpolation. In particular, it can be verified that this can be done in FPT-(near)-linear time, with which we deduce that
\[
\#\text{\sc{HOM}}(\mathcal{F}_{d, b}) \fptlinred \uni{\constuncolholantprob{d}}(\{s\})\,.
\] 

Next, in order to apply \Cref{lem:lowerBoundsHoms}, we first need to show that $\mathcal{F}_{d, b}$ has unbounded treewidth. To derive lower-bounds we also show that $\mathcal{F}_{d, b}$ contains hypergraphs the treewidth of which is linear in their size (recalling that we have taken $k = |E(F)|$).

\begin{claim}
Fix $d \geq 2$ and $b \geq 3$. There is a family $(F_n)_{n\in \mathbb{N}}$ of $d$-uniform $b$-regular hypergraphs such that $|V(F_n)|,|E(F_n)|\in \Theta(n)$, and $|\mathsf{tw}(F_n)| \geq \Omega(n)$.   
\end{claim}
\begin{proof}
Given a graph $G$, we write $i(G)$ for the edge-expansion (also called the isoperimetric number) of $G$, and we write $\mathsf{vx}(G)$ for the vertex expansion of $G$. For $b$-regular graphs, edge end vertex expansion are equal up to a constant; in particular, $\mathsf{vx}(G) \geq \frac{i(G)}{b}$. It is well known that expander graphs have large treewidth; concretely we will use $\mathsf{tw}(G)\in \Omega(\mathsf{vx}(G)\cdot |V(G)|)$ (see e.g., \cite[Proposition 1]{Grohe&2009expansion} for $\alpha=1/2$).

We now provide a construction of the $F_n$. First, assume w.l.o.g.\ that $2d$ divides $n$ --- otherwise just set $F_n = F_{n+1} = \dots = F_{n+j}$ such that $2d|(n+j)$; clearly $n+j \in \Theta(n)$ since $d$ is a constant.

By a classical result of Bollob{\'{a}}s~\cite{Bollobas88}, for every $n$ with $bn$ even, a random $b$-regular $n$-vertex \emph{graph} $G$ has edge expansion $i(G)\geq b/20$ with probability converging to $1$ as $n \to \infty$.\footnote{Note that the explicit bound $i(G)\geq b/20$ can be obtained by invoking \cite[Theorem 1]{Bollobas88} with $\eta=0.9$.} In particular, there is hence a positive integer $N_1$ such that for all $n\geq N_1$ with $bn$ even, there exists a $b$-regular graph $G_n$ with $i(G_n)\geq b/20$. 

For all $n\geq N_1$ such that $2d$ divides $n$, we first construct $\hat{F}_n$ from $G_n$ by adding $(d-2)$ fresh vertices to each edge of $G_n$. Next, let $m=|E(G_n)|$ and note that $m=\frac{1}{2}nb$ by the Handshaking Lemma. 

Now observe that we added $n':= m\cdot(d-2)$ fresh vertices. Observe furthermore that $d$ divides $m$ since $2d$ divides $n$. In particular, we obtain that $d$ divides $n'$. By \cite[Theorem 6]{frosini2021new}, there is hence a positive integer $N_2$ such that, if $n'\geq N_2$, there exists a $d$-uniform $(b-1)$-regular hypergraph $H_{n'}$ on $n'$ vertices and $(b-1)n'/d$ edges.

Finally, we construct $F_n$ from $\hat{F}_n$ by considering a bijection between the degree-1 vertices of $\hat{F}_n$ and the vertices of $H_{n'}$ and then proceed with the identification of corresponding vertices. Clearly, this results in a $d$-uniform $b$-regular hypergraph.
Now observe that 
\begin{align*}
    |V(F_n)| &=n+n' = n+(d-2)m= n+(d-2)\frac{1}{2}nb \in \Theta(n)\,,\\
    |E(F_n)|&= m + \frac{(b-1)n'}{d} = \frac{1}{2}nb + \frac{b-1}{d}(d-2)\frac{1}{2}nb \in \Theta(n)\,,\\
    \mathsf{tw}(F_n)&\stackrel{(\ast)}{\geq} \mathsf{tw}(G_n)\geq \Omega(\mathsf{vx}(G_n)\cdot n) \geq \Omega(i(G_n)\cdot n) \geq \Omega\left(\frac{b}{20}n\right)=\Omega(n)\,,
\end{align*}
where $(\ast)$ holds since $G_n$ is a subgraph of $F_n$ and treewidth cannot increase by taking subgraphs.
This concludes our construction for the cases $n\geq N_1$ and $n' \geq N_2$. Since $N_1,N_2$ are constants, we can hence, for smaller $n$ (and corresponding $n'$), just set $F_n$ to be any fixed $b$-regular $d$-uniform hypergraph.
\end{proof}
By the claim above, we have that $\mathcal{F}_{d, b}$ has unbounded treewidth and in particular contains hypergraphs with treewidth linear in their size. Thus, due to \Cref{lem:lowerBoundsHoms}, $\uni{\constuncolholantprob{d}}(\{s\})$ is $\#\mathrm{W}[1]$-hard and cannot be solved in time $f(k) \cdot |V(\Omega)|^{o(k/\log(k))}$.
\end{itemize}
\end{proof}

\section{\ensuremath{\#\mathrm{P}}-hardness for \ensuremath{\constclassHolant{d}}}\label{sec:NonParamHardness}

In this part of the paper, we will establish $\#\mathrm{P}$-hardness for all instances of $\ensuremath{\constclassHolant{d}}$ that are not solvable in polynomial-time via our algorithm in \Cref{lem:linearSigAlgoWithZeros}.

We start with instances of $\ensuremath{\constclassHolant{d}}$ in which at least one of the signatures satisfies $s(0)=0$.

\subsection{Hardness for signatures sets containing $s$ with $s(0)=0$}\label{sec:HardnessZeroSigs}
We have seen in \Cref{thm:mainParamThm} that $\classHolant(\mathcal{S})$ and $\constclassHolant{d}(\mathcal{S})$ are fixed-parameter tractable if all signatures $s\in \mathcal{S}$ satisfy $s(0)=0$. In this section, we will show that those instances are ``real'' FPT cases in the sense that even a single signature $s$ with $s(0)=0$ makes our holant problems $\#\mathrm{P}$-hard --- recall that no signature is allowed to be the constant zero function, hence any signature $s$ with $s(0)=0$ satisfies that there is some $b>0$ with $s(b)\neq 0$. 

We will obtain $\#\mathrm{P}$-hardness via reducing from the problem $\#\textsc{PerfectMatching}^d$ of counting perfect matchings in $d$-uniform hyperrgaphs. Recall that a perfect matching of a (hyper)graph $G$ is a set $A$ of hyperedges such that each vertex of $G$ is contained in precisely one edge of $A$. In case of graphs, i.e., for $d=2$, we will just drop the $d$ in the superscript. We write $\#\mathsf{PerfMatch}(G)$ for the number of perfect matchings of $G$. For $d=2$, the $\#\mathrm{P}$-hardness result of $\#\textsc{PerfectMatching}^d$ was established in the seminal paper by Valiant on computational counting~\cite{valiant1979complexity}. For $d>2$, $\#\mathrm{P}$-hardness of $\#\textsc{PerfectMatching}^d$ was established in~\cite{creignou1996complexity}.

We begin our reductions by establishing hardness for $s(0)=0$ for the case of graphs, i.e., for $d=2$.
\begin{theorem}\label{thm:PerfectMatchingReduction}
For any signature $s$ satisfying $s(0) = 0$ that is not the zero function, $\classHolant(\{s\})$ is $\#\mathrm{P}$-hard.    
\end{theorem}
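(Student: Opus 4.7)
The plan is to reduce from $\#\textsc{PerfectMatching}$, which is $\#\mathrm{P}$-hard by Valiant~\cite{valiant1979complexity}. Let $b\geq 1$ be the smallest positive integer with $s(b)\neq 0$; such $b$ exists because $s$ is not the zero function while $s(0)=0$. Following the blueprint in the technical overview, given a graph $G$ I would construct a graph $G'$ by embedding a small graph gadget on each edge of $G$, equip every vertex of $G'$ with the signature $s$, and choose a single value of $k$ so that $\uncolholant(\Omega,k)$ equals $\#\mathsf{PerfMatch}(G)$ up to an efficiently computable nonzero factor.

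The easy case is $b=1$: take $G':=G$ and $k:=|V(G)|/2$ (if $|V(G)|$ is odd, $\#\mathsf{PerfMatch}(G)=0$ and the reduction is trivial). Any $k$-edge subset $A$ with nonzero contribution satisfies $|A\cap E_G(v)|\geq 1$ for every vertex $v$, since $s(0)=0$. Combined with $\sum_v |A\cap E_G(v)|=2k=|V(G)|$ and the fact that there are $|V(G)|$ summands, each at least $1$, the inequality is forced to be an equality everywhere, so $A$ is a perfect matching. Every perfect matching contributes the factor $s(1)^{|V(G)|}$, giving $\uncolholant(\Omega,k)=s(1)^{|V(G)|}\cdot\#\mathsf{PerfMatch}(G)$, from which $\#\mathsf{PerfMatch}(G)$ is recoverable since $s(1)\neq 0$.

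For $b\geq 2$ the direct approach fails, because a $k$-subset in which every vertex has degree at least $b$ need not correspond to a perfect matching of $G$. My plan is to replace each edge $e=\{u,v\}$ of $G$ by a \emph{bridge gadget} $B_e$, which is a $b$-regular graph having $u$ and $v$ as two of its vertices and whose existence on a suitable number of internal vertices is guaranteed by the degree-sequence realisability results of~\cite{frosini2021new}. Because $s(0)=s(1)=\cdots=s(b-1)=0$ and every internal vertex of $B_e$ has degree $b$ inside the gadget, any contributing assignment $A$ is forced to select the entire edge-neighbourhood of every internal vertex it touches. I plan to choose the gadget and the parameter $k$ so that the only contributing assignments of $G'$ are those whose on/off pattern across $\{B_e\}_{e\in E(G)}$ encodes a perfect matching of $G$, with internal vertices attaining degree exactly $b$ and contributing the common factor $s(b)$ in every contributing state, while the pole contributions at each original vertex $v$ of $G$ reduce to a single factor of $s(b)$.

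The main obstacle is precisely this gadget design for $b\geq 2$: the vanishing of $s$ on all positive integers below $b$ makes it delicate to exhibit two distinct contributing states of $B_e$ (corresponding to the edge being matched or not) while keeping the contribution of the internal vertices identical across states, so that the Holant factorises cleanly as a constant times the perfect matching count. I expect the correct gadget to exploit the flexibility afforded by degree-sequence realisability to produce $b$-regular graphs with the required behaviour at the poles $u,v$, and I expect the correct value of $k$ to be the total number of edges of $G'$ contributed by a perfect matching of $G$ under the gadget replacement, so that the constant multiplicative factor in $\uncolholant(\Omega,k)=c\cdot\#\mathsf{PerfMatch}(G)$ is an explicit nonzero product of powers of $s(b)$.
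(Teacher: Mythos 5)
Your $b=1$ case is correct and coincides with the paper's first case. For $b\geq 2$, however, there is a genuine gap, and the edge-gadget strategy you outline cannot be repaired in the form described — the obstacle you flag as ``delicate'' is in fact fatal. Since $s(0)=0$, \emph{every} vertex of $G'$ must have at least one selected incident edge in any contributing assignment, so a gadget $B_e$ can never be entirely ``off''. And since $s(i)=0$ for all $0<i<b$ while every internal vertex of $B_e$ has degree exactly $b$ inside the gadget, each internal vertex is forced to select \emph{all} of its incident edges; by connectivity this propagates through the whole gadget, so every edge of every $B_e$ lies in every contributing assignment. The assignment is therefore completely rigid: there are not two contributing states per gadget encoding ``matched'' versus ``unmatched'', but exactly one global contributing state, which carries no information about a matching of $G$.

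The paper's proof escapes this by inverting the roles of edges and vertices. It leaves the original edges of $G$ untouched and attaches a gadget to each original \emph{vertex} $v$: a clique $K_{b-1}^{(v)}$ completely joined to $v$, with the cliques of arbitrarily paired vertices linked by a matching of size $b-1$ so that every new vertex has degree exactly $b$. All gadget edges are then forced into any contributing set, exactly by the rigidity argument above — but here this is a feature rather than a bug: each original vertex $v$ receives $b-1$ forced incident edges, so to avoid the zero value $s(b-1)$ it must pick up at least one further incident edge from $E(G)$. Choosing $k=n/2+|E(G')\setminus E(G)|$ leaves a budget of exactly $n/2$ original edges to cover all $n$ original vertices, which forces those edges to form a perfect matching, and every contributing set yields the identical factor $s(b)^{|V(G')|}$, whence $\uncolholant(\Omega',k)=s(b)^{|V(G')|}\cdot\#\mathsf{PerfMatch}(G)$. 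To salvage your plan you would need to relocate the combinatorial freedom from the interiors of the gadgets to the original edges of $G$ in exactly this way.
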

\begin{proof}
We show the following reduction
\[\#\textsc{PerfectMatching} \leq_{\mathsf{T}} \classHolant(\{s\})\,,\]
which yields $\#\mathrm{P}$-hardness for $\classHolant(\{s\})$ via~\cite{valiant1979complexity}. 

We first show the reduction assuming that $s(1) \neq 0$. To this end, let $G$ be an instance of $\#\textsc{PerfectMatching}$ and assume that $n = |V(G)|$ is even (since otherwise, $\#\textsf{PerfMatch}(G) = 0)$. Let $\Omega$ denote the signature grid with underlying graph $G$ the vertices of which are equipped with signature $s$. It can be verified, that
\[
\uncolholant(\Omega, n/2) = s(1)^n\#\mathsf{PerfMatch}(G).
\]
To see this note, that since $s(0) = 0$, any set $A \subseteq E(G)$ that is not an edge-cover\footnote{Given a graph $G$, a set $A \subseteq E(G)$ is an edge-cover if $\bigcup_{e \in A}e = V(G)$.} has zero contribution to $\uncolholant(\Omega, k)$. Also, note, that any edge-cover of size $n/2$ of a $n$-vertex graph is a perfect matching and that for any perfect matching $A \subseteq E(G)$, we have $\prod_{v \in V(G)}s(|A \cap E(v)|) = \prod_{v \in V(G)}s(1) = s(1)^n$.

To generalize for any signature, let $b \in \mathbb{Z}_{>0}$, such that $s(b) \neq 0$ and for each $i < b, s(i) = 0$ (such $b$ exists since we have assumed that $s$ is not the zero function). Let $K_{b-1}$ denote the clique on $b-1$ vertices. We construct a graph $G'$ obtained from $G$ as follows. For each $v \in V(G)$, we add a copy of $K_{b-1}$, denoted as $K_{b-1}^{(v)}$ and we add an edge between $v$ and every vertex of $K_{b-1}^{(v)}$. Next, we arbitrarily group $V(G)$ into $n/2$ pairs and for each pair $v, u$ we connect $K_{b-1}^{(v)}$ and $K_{b-1}^{(u)}$ with a matching of size $b-1$. Note, that since $b$ is a constant, $|G'| = O(|G|)$.

Let $\Omega'$ denote the signature grid with underlying graph $G'$ the vertices of which are equipped with signature $s$. Observe, that every vertex $v \in V(G') \setminus V(G)$ has degree $b$. Hence, for any $k$, the contribution to $\uncolholant(\Omega',k)$ of any $k$-set $A \subseteq E(G')$ that misses at least one edge from $E(G') \setminus E(G)$ is trivially zero. That said, for $k = n/2 + |E(G') \setminus E(G)|$, it can be verified that
\[
\uncolholant(\Omega', k) = s(b)^{|V(G')|}\#\mathsf{PerfMatch}(G)\,.
\]
To see this, recall that, as argued above, every (non-trivial) $k$-set $A \subseteq E(G')$ contains all edges in $E(G') \setminus E(G)$, which implies that the remaining $n/2$ edges should consist an edge-cover of $G$, since every vertex $v \in V(G)$ has exactly $b-1$ neighbors in $G' \setminus G$ and $s(b-1) = 0$. Recall that an edge-cover of size $n/2$ of an $n$-vertex graph is a perfect matching. So, the contribution of $A$ to $\uncolholant(\Omega', k)$ is precisely $\prod_{v \in V(G')}s(|A \cap E_{G'}(v)|) = s(b)^{|V(G')|}$.

Clearly, $s(b)^{|V(G')|}$ can be computed in time that is polynomial in $|G|$ by using standard techniques, which concludes the proof.
\end{proof}

Next, we establish $\#\mathrm{P}$-hardness for $d$-uniform hypergraphs with $d>2$. Since we have little control over $s$ --- we only assume that $s(0)=0$ --- it turns out to be easier to reduce directly from counting perfect matchings in $d$-uniform hypergraphs rather than to reduce from the case of $d=2$.
\begin{theorem}\label{thm:HyperPerfectMatchingReduction}
For any signature $s$ satisfying $s(0) = 0$ that is not the zero function, and any number $d \in \mathbb{Z}_{\geq 2}$, $\constclassHolant{d}(\{s\})$ is $\#\mathrm{P}$-hard.    
\end{theorem}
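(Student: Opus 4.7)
My plan is to reduce from $\#\textsc{PerfectMatching}^d$, which is $\#\mathrm{P}$-hard by~\cite{creignou1996complexity}, lifting the graph-case strategy of \Cref{thm:PerfectMatchingReduction} to $d$-uniform hypergraphs. The key enabler remains the assumption $s(0)=0$: any $A\subseteq E(G')$ whose product contributes non-zero to $\uncolholant(\Omega',k)$ is automatically forced to assign strictly positive degree to every vertex of the modified hypergraph, which will mesh with the structure of $G'$ to pick out exactly the perfect matchings of $G$.

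Let $G$ be the input $d$-uniform hypergraph on $n$ vertices; since $\#\mathsf{PerfMatch}(G)=0$ when $d\nmid n$, I may assume $d\mid n$. Let $b\in \mathbb{Z}_{>0}$ be minimal with $s(b)\neq 0$. For the sub-case $b=1$ I would just place $s$ on every vertex of $G$ and take $k=n/d$: any $A\subseteq E(G)$ of size $n/d$ that is not a perfect matching leaves some vertex uncovered and contributes $s(0)=0$, so $\uncolholant(\Omega,n/d)=s(1)^n\cdot\#\mathsf{PerfMatch}(G)$.

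For $b\geq 2$ the core technical step is a gadget construction. I would build a $d$-uniform gadget $W$ with a distinguished attachment set $X=\{x_1,\dots,x_d\}\subseteq V(W)$ such that $|V(W)\setminus X|\geq 1$, each $x_i$ has degree $b-1$ in $W$, every vertex in $V(W)\setminus X$ has degree exactly $b$, and no hyperedge of $W$ is a subset of $X$ (so every edge of $W$ contains at least one ``pure'' gadget vertex). Handshaking forces $|V(W)|\cdot b=d(|E(W)|+1)$, which is solvable for $|V(W)|$ any sufficiently large multiple of $d/\gcd(d,b)$. Realisability as a simple $d$-uniform hypergraph with this degree sequence follows by invoking the degree-sequence realisability theorem of Frosini et al.~\cite[Theorem 3]{frosini2021new}, exactly as in the $b=2$ sub-case of \Cref{thm:reduction2TodArity}; a small-parameter padding argument (adding a disjoint connected $d$-uniform $b$-regular side-hypergraph and merging along one edge) can be used to enforce the no-all-attachment-edge condition when needed. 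I would then partition $V(G)$ into $n/d$ arbitrary groups of $d$ vertices and, for each group, glue in a fresh copy of $W$ by identifying its attachment vertices with the group; the resulting $d$-uniform hypergraph $G'$ is equipped with signature $s$ on every vertex, yielding $\Omega'$, and I set $k=(n/d)|E(W)|+n/d$.

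The correctness analysis then goes as follows. Any $A$ contributing non-zero to $\uncolholant(\Omega',k)$ must give every vertex positive degree in $A$, because $s(0)=0$. Every pure gadget vertex $w\in V(G')\setminus V(G)$ has $\deg_{G'}(w)=b$ and $s(i)=0$ for $1\leq i\leq b-1$, so necessarily $|A\cap E_{G'}(w)|=b$, meaning every edge incident to $w$ is in $A$. Because every hyperedge of each gadget copy contains at least one pure gadget vertex, every gadget edge lies in $A$, and hence $|A\cap E(G)|=k-(n/d)|E(W)|=n/d$. Each $v\in V(G)$ now has $|A\cap E_{G'}(v)|=(b-1)+|A\cap E_G(v)|$, and since $s(b-1)=0$ we need $|A\cap E_G(v)|\geq 1$; by $d$-uniformity, $\sum_{v\in V(G)}|A\cap E_G(v)|=d\cdot(n/d)=n$, so each of these $n$ inequalities is tight and the $G$-edges of $A$ form a perfect matching of $G$. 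Conversely, every perfect matching yields such an $A$ with contribution $s(b)^{|V(G')|}$, giving
\[
\uncolholant(\Omega',k)=s(b)^{|V(G')|}\cdot\#\mathsf{PerfMatch}(G),
\]
from which $\#\mathsf{PerfMatch}(G)$ is recovered in polynomial time since $s(b)\neq 0$ and $|V(G')|\in \mathsf{poly}(|V(G)|)$. The main technical obstacle I foresee is verifying that the gadget $W$ can be built for every pair $(d,b)$ with $b\geq 2$ while simultaneously meeting the divisibility, simplicity, and no-all-attachment-edge requirements; this is a purely combinatorial realisability question, handled in the generic regime by~\cite{frosini2021new} and by explicit case analysis for a bounded number of small-parameter corner cases.
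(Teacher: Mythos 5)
Your proposal is correct and follows essentially the same route as the paper: reduce from $\#\textsc{PerfectMatching}^d$, attach gadgets that give each original vertex base degree $b-1$ while every new gadget vertex has degree exactly $b$, so that $s(0)=\dots=s(b-1)=0$ forces all gadget hyperedges into $A$ and forces the remaining $n/d$ hyperedges to form a perfect matching of $G$, yielding $\uncolholant(\Omega',k)=s(b)^{|V(G')|}\cdot\#\mathsf{PerfMatch}(G)$. The only place you are less explicit than the paper is the existence of your gadget $W$ — a degree-sequence realisability theorem alone does not guarantee the ``no hyperedge inside $X$'' condition — but the paper's explicit construction (take $d$ disjoint copies of a $d$-uniform $(b-1)$-regular hypergraph, attach each $x_i$ via $b-1$ hyperedges $S_j\cup\{x_i\}$ with $S_j$ a $(d-1)$-set inside the $i$-th copy, then complete the degrees of the leftover degree-$(b-1)$ gadget vertices by partitioning them into $d$-sets) instantiates exactly your abstract gadget and closes that step.
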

\begin{proof}
Recall that $\#\textsc{PerfectMatching}^{d}$ denotes the problem that takes as input a $d$-uniform hypergraph $G$ (where $d \in \mathbb{Z}_{\geq 2}$ is a constant) and computes the number $\#\mathsf{PerfMatch}(G)$ of perfect matchings of $G$, that is, $\#\mathsf{PerfMatch}(G)$ is the number of all partitions of $V(G)$ into (pair-wise disjoint) sets $e \in E(G)$. Following the lines of the proof of \Cref{thm:PerfectMatchingReduction}, we show the following reduction 
\[\#\textsc{PerfectMatching}^{d} \leq_{\mathsf{T}} \constclassHolant{d}(\{s\})\,,\]
which yields $\#\mathrm{P}$-hardness for $\constclassHolant{d}(\{s\})$ since $\#\textsc{PerfectMatching}^{d}$ is $\#\mathrm{P}$-hard~\cite{creignou1996complexity}.

We first show the reduction assuming that $s(1) \neq 0$. To this end, let $G$ be an instance of $\#\textsc{PerfectMatching}^{d}$ and assume that $n = |V(G)|$ is a multiple of $d$ (since otherwise, $\#\textsf{PerfMatch}(G) = 0)$. Let $\Omega$ denote the signature grid with underlying hypergraph $G$ the vertices of which are equipped with signature $s$. It can be verified, that
\[
\uncolholant(\Omega, n/d) = s(1)^n\#\mathsf{PerfMatch}(G).
\]
To see this note, that since $s(0) = 0$, any set $A \subseteq E(G)$ that is not a hyperedge-cover\footnote{Given a hypergraph $G$, a set $A \subseteq E(G)$ is a hyperedge-cover if $\bigcup_{e \in E(G)}e = V(G)$.} has zero contribution to $\uncolholant(\Omega, k)$. Also, note, that any hyperedge-cover of size $n/d$ of a $n$-vertex graph is a perfect matching and that for any perfect matching $A \subseteq E(G)$, we have $\prod_{v \in V(G)}s(|A \cap E(v)|) = \prod_{v \in V(G)}s(1) = s(1)^n$.

To generalize for any signature, let $b \in \mathbb{Z}_{> 0}$, such that $s(b) \neq 0$ and for each $i < b, s(i) = 0$ (such $b$ exists since we have assumed that $s$ is not the zero function). We may assume that $b \geq 2$. Let $F(d, b-1)$ be a $d$-uniform $(b-1)$-regular hypergraph. If $F(d, b-1)$ has sufficiently many vertices, then is guaranteed to exist for any $d, b$ (see, \cite[Theorem 3]{frosini2021new}). Hence, we may assume that $F(d, b-1)$ has at least $(d-1)(b-1)$ vertices. We construct a hypergraph $G'$ obtained from $G$ as follows. For each $v \in V(G)$, we add a copy of $F(d, b-1)$, denoted as $F(d, b-1;v)$. We pick $(d-1)(b-1)$ vertices of $F(d, b-1;v)$ arbitrarily, which we partition into $b-1$ sets of size $d-1$, namely $S_1, \dots, S_{b-1}$. For each $1 \leq i \leq b-1$, we create a new hyperedge $S_i \cup \{v\}$.

Next, we proceed by increasing by one the degree of each vertex of $F(d, b-1; v)$ having degree $b-1$ (that is, of each vertex of $F(d, b-1;v)$ that is not adjacent to $v$). To this end, recalling that $n$ is a multiple of $d$, we group $V(G)$ into pair-wise disjoint sets of size $d$. Let $v_1, \dots, v_{d}$ be such a group. We partition the vertices of $\bigcup_{i =1}^{d}F(d, b-1;v_i)$ having degree $b-1$ into pair-wise disjoint sets of size $d$, which we include into the hyperedges of $G'$. We repeat the aforementioned procedure for all groups and thus we obtain $G'$. Note, that since $b$ and $d$ are constants, $|G'| = O(|G|)$.

Let $\Omega'$ denote the signature grid with underlying hypergraph $G'$ the vertices of which are equipped with signature $s$. Observe, that every vertex $v \in V(G') \setminus V(G)$ has degree $b$. Hence, for any $k$, the contribution to $\uncolholant(\Omega',k)$ of any $k$-set $A \subseteq E(G')$ that misses at least one hyperedge from $E(G') \setminus E(G)$ is trivially zero. 

That said, for $k = n/d + |E(G') \setminus E(G)|$, since every (non-trivial) $k$-set $A \subseteq E(G')$ contains all hyperedges in $E(G') \setminus E(G)$, it follows that the remaining $n/d$ hyperedges should consist a hyperedge-cover of $G$. To see this, note that every vertex $v \in V(G)$ has exactly $b-1$ incident hyperedges intersecting with $G' \setminus G$ and $s(b-1) = 0$. Recall that a hyperedge-cover of size $n/d$ of an $n$-vertex $d$-uniform hypergraph is a perfect matching. So, the contribution of $A$ to $\uncolholant(\Omega', k)$ is precisely $\prod_{v \in V(G')}s(|A \cap E_{G'}(v)|) = s(b)^{|V(G')|}$. Hence,
\[
\uncolholant(\Omega', k) = s(b)^{|V(G')|}\#\mathsf{PerfMatch}(G)\,.
\]

Clearly, $s(b)^{|V(G')|}$ can be computed in time that is polynomial in $|G|$ by using standard techniques, which concludes the proof.
\end{proof}

\subsection{Hardness for Signatures of Types $\mathbb{T}[2]$ or $\mathbb{T}[\infty]$ on Graphs: Reduction via Curticapean's CFI-Filters}\label{sec:HardnessCFI}
We continue with signatures of types $\mathbb{T}[2]$ or $\mathbb{T}[\infty]$ and first handle the case of graphs ($d=2$). Our overall strategy is to use again the homomorphism basis of our holant problems on graphs: while this framework (``complexity monotonicity'') usually only works for establishing $\#\mathrm{W}[1]$-hardness of counting problems, we will rely here on a recent modification of the framework due to Curticapean~\cite{curticapean2024count}. In a nutshell, Curticapean's approach is to use so-called ``CFI-graphs''\footnote{Named after the seminal Cai-Fürer-Immerman construction for bounding the expressivity of the Weisfeiler-Leman algorithm \cite{cai1992optimal}.} to filter out \emph{in polynomial time} certain hard terms in the homomorphism basis of counting problems. 
For the purpose of our work, we present Curticapean's result as a black box tool for constructing polynomial-time Turing reductions, and we refer the interested reader to Curticapean's paper~\cite{curticapean2024count} for the details.

For the statement of the framework, we need to introduce a few further concepts and notations.

A graph $H$ is colourful, if its vertex-colouring $\nu_H$ is a bijection; note that we may identify $\nu_H(V(H))$ with $V(H)$. Given a colourful graph $H$ and a vertex-coloured graph $G$, the vertices of which are coloured by some colouring $\nu_G : V(G) \to V(H)$, we write $\mathsf{ColHom}(H \to G)$ for the set $\homs{(H, \nu_H)}{(G, \nu_G)}$.

\begin{definition}
Let $\mathcal{H}$ be a recursively enumerable class of colourful graphs. We write $\#\textup{\textsc{ColHom}}(\mathcal{H})$ for the problem that takes as input a colourful graph $H \in \mathcal{H}$ and a vertex-coloured graph $G$, and computes $\#\mathsf{ColHom}(H \to G)$.    
\end{definition}

\begin{remark} While we choose to adapt to the notation of \cite{curticapean2024count} for the reader's easier access to the results therein, we would like to point out that another common way of defining $\mathsf{ColHom}(H \to G)$ is via \emph{colour-prescribed} homomorphisms. Given two graphs $H, G$ and a vertex-colouring $h \in \homs{G}{H}$ of $G$, a colour-prescribed homomorphism $\phi$ from $H$ to $G$ satisfies $\phi \in \homs{H}{G}$ and for each $v \in V(H)$, $h(\phi(v)) = v$. The difference in our case is that the vertex-colouring $\nu_G$ of $G$ might not be a homomorphism in $\homs{G}{H}$. However, it is easy to verify, that for any edge $e \in E(G)$ that is not mapped by $\nu_G$ to an edge of $E(H)$, no homomorphism in $\mathsf{ColHom}(H \to G)$ maps edges of $E(H)$ to $e$ either, since $H$ is colourful. Hence, $\mathsf{ColHom}(H \to G) = \mathsf{ColHom}(H \to (G \setminus e))$. Letting $G'$ denote the graph obtained from $G$ by removing all edges that are not mapped by $\phi$ to an edge in $E(H)$, we have $\mathsf{ColHom}(H \to G) = \mathsf{ColHom}(H \to G')$. Also, note that $\nu_{G'} = \nu_G \in \homs{G'}{H}$. So, the problem of evaluating $\#\mathsf{ColHom}(H \to G)$ reduces to the problem of evaluating $\#\mathsf{cp}$-$\mathsf{Hom}(H \to G')$, where the last quantity counts the number of colour-prescribed homomorphisms from $H$ to $G'$.
\end{remark}

We are now able to state Curticapean's polynomial-time reduction for isolating constituents of a graph motif parameter. Recall that for a graph motif parameter $p$ and a graph $F$, we write $F \triangleleft p$ to denote that the coefficient of $F$ in the expansion of $p$ in the homomorphism basis, is non-zero. 

\begin{theorem}[\cite{curticapean2024count}]\label{thm:raduPoly}
There is an algorithm for the following problem: output the value $\#\mathsf{ColHom}(H \to G)$ when given
\begin{itemize}
\item[$\bullet$] as input a number $a \in \mathbb{N}$, a colorful graph $H$, a colored graph $G$, and
\item[$\bullet$] oracle access for a graph motif parameter $p$ with $\max_{F\,\triangleleft\,p}|V(F)| \leq a$ such that $H^\circ\,\triangleleft\, p$, where $H^\circ$ is the underlying (uncoloured) graph corresponding to $H$.
\end{itemize}
The running time of the algorithm is bounded by $4^{\Delta(H)}\cdot\mathsf{poly}(|V(G)|, a).$
\end{theorem}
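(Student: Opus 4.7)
The plan is to prove this via a polynomial-time variant of Dedekind interpolation in which CFI-style gadgets locally substitute each vertex of $G$. Concretely, for each $v \in V(G)$ coloured by $h_v \in V(H)$, I would replace $v$ by a small gadget $\mathsf{CFI}(h_v)$ whose internal structure is determined by the neighbourhood of $h_v$ in $H$ and which is parameterised by a bit-vector $\sigma_v$ of ``twists'' (one bit per edge of $H$ incident to $h_v$). The gadgets are stitched together along the edges of $G$ in a colour-preserving manner, yielding a family of coloured graphs $\{G^\sigma\}_\sigma$ of size polynomial in $|V(G)|$ and $a$, on each of which we query the oracle for $p$.

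The structural property to establish is that, via the homomorphism basis of $p$, each oracle output admits an expansion
\[
p(G^\sigma) \;=\; \sum_{F \,\triangleleft\, p,\; |V(F)| \leq a} c_F \cdot \Phi_F(\sigma),
\]
where $\Phi_F(\sigma)$ records how homomorphisms from $F$ factor through the gadgets as a function of the twists. The CFI design should be tuned so that, after Fourier-transforming the oracle outputs over $\sigma$, every pattern $F \not\cong H^\circ$ contributes only to characters distinct from a distinguished character $\chi^\ast$ associated with $H^\circ$, whereas the coefficient at $\chi^\ast$ is an efficiently invertible nonzero multiple of $\#\mathsf{ColHom}(H \to G)$. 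Reading off that coefficient then recovers the desired count.

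For the running time, the key observation is that the Fourier inversion factorises locally across $V(H)$: rather than globally inverting a transform over all $2^{|E(H)|}$ global twists, one performs the inversion edge-by-edge at each vertex of $H$, paying a $4$-fold overhead per incident edge, which yields the claimed $4^{\Delta(H)}$ factor; each query is on an instance of size $\mathsf{poly}(|V(G)|, a)$. The principal obstacle --- indeed the heart of the CFI construction --- is to prove that every pattern $F \triangleleft p$ with $|V(F)| \leq a$ and $F \not\cong H^\circ$ is annihilated by the projection onto $\chi^\ast$. This requires a careful analysis of how homomorphisms from $F$ can ``align'' with the gadget twists, exploiting that patterns structurally different from $H^\circ$ cannot realise the same parity pattern that $H^\circ$ encodes via the CFI gadgets. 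Designing the gadgets so that this annihilation holds uniformly across all $F$ of size at most $a$, while simultaneously keeping the ``decoding'' step at $\chi^\ast$ computable in time $4^{\Delta(H)} \cdot \mathsf{poly}(|V(G)|, a)$, is the delicate balancing act underlying the theorem.
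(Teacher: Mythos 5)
The first thing to note is that the paper does not prove this statement at all: Theorem~\ref{thm:raduPoly} is imported verbatim from Curticapean's work \cite{curticapean2024count} and used explicitly as a black box (``we present Curticapean's result as a black box tool \dots and we refer the interested reader to Curticapean's paper for the details''). So there is no in-paper proof to compare your attempt against; what can be assessed is whether your sketch matches the strategy of the cited reference and whether it stands on its own as a proof.

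Your high-level plan is consistent with the actual CFI-filter technique: replace each vertex of the host by a gadget determined by the corresponding vertex of $H$, parameterise the gadgets by edge-twists, query the oracle on the resulting family of coloured graphs, and take an alternating (character) sum that annihilates every constituent $F \not\cong H^\circ$ while leaving an invertible nonzero multiple of $\#\mathsf{ColHom}(H \to G)$. The $4^{\Delta(H)}$ factor indeed originates from the per-vertex gadget sizes (exponential in the degree of the corresponding $H$-vertex) and the local inversion. However, as written this is a plan rather than a proof. You never specify the gadgets $\mathsf{CFI}(h_v)$, never define the characters $\Phi_F(\sigma)$ or $\chi^\ast$, and you explicitly defer the one step that carries all the mathematical content --- the annihilation of every $F \triangleleft p$ with $|V(F)| \leq a$ and $F \not\cong H^\circ$ under the projection onto $\chi^\ast$ --- by calling it ``the principal obstacle'' and ``the delicate balancing act underlying the theorem.'' That vanishing lemma is precisely what distinguishes the CFI construction from a generic interpolation scheme (a naive Dedekind-style interpolation over all constituents of size at most $a$ would require superpolynomially many linearly independent queries and would not run in polynomial time), so leaving it unproven leaves the theorem unproven. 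Likewise, the claim that the Fourier inversion ``factorises locally across $V(H)$'' with only a $4$-fold overhead per incident edge is asserted, not derived; without the explicit gadget structure one cannot verify that the global sum over $2^{|E(H)|}$ twist vectors collapses to something computable in $4^{\Delta(H)}\cdot\mathsf{poly}(|V(G)|,a)$ time. If you intend to supply a self-contained proof rather than cite \cite{curticapean2024count}, these two points --- the concrete gadget construction with its vanishing lemma, and the runtime collapse of the inversion --- are exactly what must be filled in.
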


For applying the previous result, we first need to state some properties of the expression of Holants as graph motif parameters established in~\cite{aivasiliotis2024parameterisedholantproblems}. Even before we do so, we need to introduce some more preliminaries. In order to avoid notational clutter, we keep the notation we need to introduce at a minimum, skipping details whenever this does not hurt the accessibility of our proofs. We refer the interested reader to the proof of \cite[Theorem 6.12]{aivasiliotis2024parameterisedholantproblems} for a complete exposition of the notions discussed below.

\subsubsection{Integer Partitions and Generalised Fingerprints}
\begin{definition}[Integer partitions \cite{aivasiliotis2024parameterisedholantproblems}]
For positive integers $d, d_1, d_2, \dots, d_{\ell} \in \mathbb{N}$ such that $\sum_{i = 1}^{\ell}d_i = d$, we say that $\lambda = d_1 + d_2 + \ldots + d_{\ell}$ is a \emph{partition} of $d$. We write $|\lambda|$ for the sum of the summands of $\lambda$ (that is, $|\lambda| = d$) and we write $\mathsf{len}(\lambda)$ for the number of summands of $\lambda$ (that is, $\mathsf{len}(\lambda) = \ell$). We write $\mathcal{P}$ for the set of all partitions. For any finite collection $\{\lambda_i\}_{1 \leq i \leq k} \in \mathcal{P}^k$ of partitions (of possibly different positive integers), we let $\bigcup_{1 \leq i \leq k}\lambda_i$ denote the partition whose summands are the elements of the disjoint union of the summands of each $\lambda_i, 1 \leq i \leq k$. 
\end{definition}

An integer partition $\lambda$ is associated with what was defined in \cite{aivasiliotis2024parameterisedholantproblems} as the \textit{multiplicity} of $\lambda$, denoted by $\mathsf{mult}(\lambda) \in \mathbb{Z}$, which we will be using as a black-box. In particular, for the purposes of this work, it is only the sign of $\mathsf{mult}(\lambda)$ that matters, which is given as follows: for any integer partition $\lambda$, $\mathsf{sign}(\mathsf{mult}(\lambda)) = (-1)^{|\lambda|-\mathsf{len}(\lambda)}$. 

The last ingredient is a generalisation of fingerprints from \Cref{def:fingerprint_intro} that lifts the $\chi$ function which has been defined over a positive integer $d$ and a signature $s$  to the generalised $\chi$ function defined over  a partition $\lambda \in \mathcal{P}$ and a signature $s$ instead. While we refer the reader to \cite{aivasiliotis2024parameterisedholantproblems} for a formal definition, the needs of our analysis only require the following property of the generalised fingerprints.

\begin{proposition}[\cite{aivasiliotis2024parameterisedholantproblems}] \label{Prop:chi_vanishing_property}
Let $\{s\}$ be of type $\mathbb{T}[2]$, then
we have
\begin{enumerate}
    \item[a)] $\chi(\lambda, s) = 0$ if $|\lambda| < 2\cdot  \mathsf{len}(\lambda)-2$,
    \item[b)] $\chi(\lambda, s) = a_\lambda \cdot (s(2)-s(1)^2)^{\mathsf{len}(\lambda)-1}$ for some $a_\lambda \in \mathbb{Z}_{>0}$ if $|\lambda| = 2 \cdot \mathsf{len}(\lambda)-2$.
\end{enumerate}
\end{proposition}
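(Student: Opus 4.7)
The plan is to unpack the definition of the generalised fingerprint $\chi(\lambda, s)$ from the referenced paper and express it as a signed sum indexed by combinatorial objects (most naturally, set partitions of a labelled ``$\lambda$-multiset'' of size $|\lambda|$ that refines the block structure induced by the summands $d_1, \dots, d_\ell$ of $\lambda$) in which every summand ultimately factors as a product of ordinary fingerprints $\chi(a, s)$ for various $a$'s. Once such a factorisation is in place, the $\mathbb{T}[2]$ hypothesis does the heavy lifting: since $\chi(a, s) = 0$ for all $a \geq 3$, only those summands whose factors all have arity at most $2$ survive. By direct computation from \Cref{def:fingerprint_intro} with the normalisation $s(0) = 1$, we have $\chi(1, s) = s(1)$ and $\chi(2, s) = s(2) - s(1)^2$, where the latter is nonzero by the $\mathbb{T}[2]$ hypothesis.

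Call a surviving configuration \emph{admissible}. Each admissible configuration $\pi$ groups the $\ell = \mathsf{len}(\lambda)$ summands of $\lambda$ into blocks whose total arity (sum of the $d_i$'s in the block) is either $1$ or $2$: arity-$1$ blocks are singletons $\{d_i = 1\}$, and arity-$2$ blocks are either a singleton $\{d_i = 2\}$ or a pair $\{d_i = d_j = 1\}$. For part (a), I perform a double-counting argument on the block profile of any admissible $\pi$: letting $k_1, k_2$ denote the number of arity-$1$ and arity-$2$ blocks respectively, one has $k_1 + 2 k_2 = |\lambda|$ and $k_1 + k_2 \leq \ell$, while the connectivity/coarsening constraints inherent in the definition of $\chi(\lambda, s)$ further force $k_1 \leq 2$. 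Combining these inequalities yields $|\lambda| \geq 2\ell - 2$ whenever an admissible $\pi$ exists. Hence the strict inequality $|\lambda| < 2\ell - 2$ leaves no surviving summand, and $\chi(\lambda, s) = 0$.

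For part (b), when $|\lambda| = 2\ell - 2$, admissibility becomes tight: every admissible $\pi$ has exactly $\ell - 1$ arity-$2$ blocks (so a uniform block profile, up to the distinction between singleton-of-$2$ and pair-of-$1$'s inside a block), and each therefore contributes the common factor $(s(2) - s(1)^2)^{\ell - 1}$. The coefficient $a_\lambda$ collects the remaining signed count of admissible $\pi$'s. The main obstacle will be showing $a_\lambda \in \mathbb{Z}_{>0}$: the M\"obius-type signs from the definition of $\chi(\lambda, s)$ must combine with the multiplicity sign $(-1)^{|\lambda| - \mathsf{len}(\lambda)} = (-1)^{\ell - 2}$ to produce the same sign for every admissible $\pi$. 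I would argue that the uniform block profile of admissible $\pi$'s makes all inherited signs coincide, so that $a_\lambda$ reduces to a bare combinatorial count of admissible partitions and is manifestly a positive integer.
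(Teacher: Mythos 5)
First, note that the paper does not actually prove this proposition: it is imported verbatim from \cite{aivasiliotis2024parameterisedholantproblems} as a black box (the formal definition of $\chi(\lambda,s)$ is not even reproduced here), so there is no in-paper proof to match. Your high-level strategy is nevertheless the right one: expand $\chi(\lambda,s)$ in the basis of ordinary fingerprints so that the $\mathbb{T}[2]$ hypothesis ($\chi(a,s)=0$ for $a\geq 3$, $\chi(2,s)=s(2)-s(1)^2\neq 0$) kills every term containing a factor $\chi(a,s)$ with $a\geq 3$. Concretely, writing $\rho_\lambda$ for a partition of a ground set of size $|\lambda|$ with block sizes $d_1,\dots,d_\ell$, M\"obius inversion of $s(a)=\sum_{\sigma}\prod_{B\in\sigma}\chi(|B|,s)$ turns the definition into the sign-free identity $\chi(\lambda,s)=\sum_{\tau\,:\,\tau\vee\rho_\lambda=\top}\prod_{C\in\tau}\chi(|C|,s)$.

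Your combinatorial model of the surviving terms is, however, wrong, and the gap propagates into both parts. The admissible objects are partitions $\tau$ of the $|\lambda|$ \emph{elements} with all blocks of size at most $2$ whose join with $\rho_\lambda$ is $\top$ --- not groupings of the $\ell$ \emph{summands} into blocks of total arity at most $2$. A summand $d_i=3$ can perfectly well contribute (take $\lambda=3+1+1+1$, where $|\lambda|=6=2\ell-2$ and part (b) asserts a nonzero value realised by distributing the three elements of the size-$3$ part over three crossing pairs); your model would declare this case inadmissible and output $0$. For part (a), your inequalities $k_1+2k_2=|\lambda|$, $k_1+k_2\leq\ell$, $k_1\leq 2$ do not imply $|\lambda|\geq 2\ell-2$ (e.g.\ $\lambda=1+1+1$ satisfies all three with $|\lambda|=3<4$); the constraint that actually does the work is connectivity: $\tau\vee\rho_\lambda=\top$ forces at least $\ell-1$ size-$2$ blocks crossing between distinct $\rho_\lambda$-blocks, hence $|\lambda|\geq 2(\ell-1)$. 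For part (b), your sign worry is misplaced: $\mathsf{mult}(\lambda)$ and its sign $(-1)^{|\lambda|-\mathsf{len}(\lambda)}$ belong to the coefficient $\zeta_{k,s}$ of \Cref{thm:zeta_at_most_k_edges}, not to $\chi(\lambda,s)$, and in the fingerprint basis the sum is already sign-free, so $a_\lambda$ is simply the number of admissible $\tau$, each contributing $\chi(2,s)^{\ell-1}$. What you must (and do not) verify is that $a_\lambda>0$, i.e.\ that at least one admissible $\tau$ exists; this amounts to the existence of a tree on $\ell$ vertices with degree sequence $(d_1,\dots,d_\ell)$, which holds precisely because $d_i\geq 1$ and $\sum_i d_i=2\ell-2$.
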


Now we are ready to state the main results from \cite{aivasiliotis2024parameterisedholantproblems} that we need. Note that, in what follows we make use of both definitions of fingerprints. Recall from \Cref{lem:uncolHomBasisRESTATED}, that for any signature $s$, and any instance $\Omega = (G, \{s\}_{v \in V(G)})$ and $k \in \mathbb{N}$ of $\uni{\constuncolholantprob{2}}(\{s\})$, we have $\uncolholant(\Omega, k) = \sum_{F \in \mathcal{G}_{\leq k}(2)}\zeta_{k, s}(F)\cdot\Hom(F \to G)$.

\begin{theorem}[\cite{aivasiliotis2024parameterisedholantproblems}]\label{thm:infinityCoeffRestated}
Let $s$ be a signature with $s(0) = 1$, $k \in \mathbb{N}$ and $F \in \mathcal{G}_k$. Then we have
\[\zeta_{k, s}(F) = \frac1{\Aut(F)}\prod_{v \in V(F)}\chi(\deg_F(v), s)\,.\]
\end{theorem}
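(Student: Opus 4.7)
The plan is to expand $\prod_{v \in V(F)} \chi(\deg_F(v), s)$ directly using \Cref{def:fingerprint_intro} and to organise the resulting sum as a sum over ``blow-up data'' on $F$: tuples $\beta = (\beta_v)_{v \in V(F)}$ where each $\beta_v$ is a set-partition of the $\deg_F(v)$ edges of $F$ incident to $v$. Using $s(0)=1$ and distributivity, this yields
\[
\prod_{v \in V(F)} \chi(\deg_F(v), s) = \sum_{\beta} \prod_{v \in V(F)} (-1)^{|\beta_v|-1}(|\beta_v|-1)! \prod_{B \in \beta_v} s(|B|).
\]
I associate to each $\beta$ a graph $H(\beta)$ with vertex set $\{(v,B) : v \in V(F),\ B \in \beta_v\}$ and an edge $\{(u,B_u),(v,B_v)\}$ for each $\{u,v\} \in E(F)$, where $B_u \in \beta_u$ and $B_v \in \beta_v$ are the blocks containing the incidence of $\{u,v\}$ at $u$ and at $v$ respectively. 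Since $F$ is simple, distinct edges of $F$ produce distinct edges of $H(\beta)$ and none is a loop, so $H(\beta) \in \mathcal{G}_k$ and $\deg_{H(\beta)}(v,B) = |B|$. The partition $\rho(\beta)$ of $V(H(\beta))$ grouping vertices by their first coordinate comes with a canonical isomorphism $\phi(\beta)\colon H(\beta)/\rho(\beta) \to F$, and \Cref{def:mobius} gives $\mu(\rho(\beta)) = \prod_v (-1)^{|\beta_v|-1}(|\beta_v|-1)!$. The expansion then rewrites as $\sum_\beta \mu(\rho(\beta)) \prod_{w \in V(H(\beta))} s(\deg_{H(\beta)}(w))$, whose structure already matches $\Aut(F)\cdot\zeta_{k,s}(F)$ coming from \Cref{lem:uncolHomBasisRESTATED}.

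The combinatorial heart of the argument is to verify that $\beta \mapsto (H(\beta),\rho(\beta),\phi(\beta))$ is a bijection onto isomorphism classes of triples $(H,\rho,\phi)$ with $H \in \mathcal{G}_k$, $\rho \in \mathsf{Part}(H)$ such that $H/\rho \cong F$, and $\phi$ an isomorphism $H/\rho \to F$. The inverse recovers $\beta_v$ by partitioning the incidences at $v \in V(F)$ according to which vertex of the block $\phi^{-1}(v) \subseteq V(H)$ they lift to; a short check shows that both operations are mutual inverses up to canonical relabelling of $V(H)$. With the bijection in place, I regroup by the isomorphism type of $H$: for each iso class $[H]$, let $\mathcal{P}_H$ be the set of labelled pairs $(\rho,\phi)$ with $H/\rho \cong F$ and $\phi$ an iso to $F$. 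The automorphism group of $H$ acts on $\mathcal{P}_H$, and its orbits are exactly the iso classes of triples supported on $H$. Since $\mu(\rho)\prod_w s(\deg_H(w))$ is constant on orbits, orbit-stabiliser gives
\[
\sum_{\text{iso classes with underlying } H} \mu(\rho)\prod_w s(\deg_H(w)) = \frac{1}{\Aut(H)}\sum_{(\rho,\phi) \in \mathcal{P}_H} \mu(\rho)\prod_w s(\deg_H(w)) \cdot |\mathrm{Stab}(\rho,\phi)|.
\]

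The main obstacle is to establish that $|\mathrm{Stab}(\rho,\phi)| = 1$ for every $(\rho,\phi) \in \mathcal{P}_H$ under the hypothesis $F \in \mathcal{G}_k$. An element of the stabiliser is an automorphism $g$ of $H$ that preserves every block of $\rho$ setwise and induces the identity on $H/\rho$. Because $|E(H)| = k = |E(F)|$ and $H/\rho \cong F$, the quotient map cannot contract any edge of $H$ into a single block and cannot identify two distinct edges of $H$; hence for each $e = \{w,w'\} \in E(H)$, $e$ is the unique edge of $H$ between the two (distinct) blocks containing $w$ and $w'$. Since $g$ sends both blocks to themselves, $g(e)$ lies between the same pair of blocks, forcing $g(e)=e$, and distinctness of the blocks then pins down $g(w)=w$ and $g(w')=w'$. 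As $H \in \mathcal{G}_k$ has no isolated vertices, $g$ must be the identity.

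With trivial stabilisers, the inner sum from the second paragraph reduces to $\Aut(F) \cdot \sum_{\rho\,:\,H/\rho \cong F} \mu(\rho) \prod_w s(\deg_H(w))$, because for each fixed $\rho$ the valid $\phi$ form a torsor over the automorphism group of $F$. Summing over iso classes of $H$ and invoking \Cref{lem:uncolHomBasisRESTATED} then yields
\[
\prod_v \chi(\deg_F(v), s) = \Aut(F) \cdot \sum_{H \in \mathcal{G}_k} \frac{1}{\Aut(H)}\prod_{w \in V(H)} s(\deg_H(w)) \sum_{\rho\,:\,H/\rho\cong F}\mu(\rho) = \Aut(F)\cdot\zeta_{k,s}(F),
\]
and the claimed identity follows after dividing through by $\Aut(F)$.
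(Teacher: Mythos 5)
Your proof is correct: the expansion of $\prod_v \chi(\deg_F(v),s)$ into tuples $\beta=(\beta_v)_v$ of partitions of incident edge-sets, the associated split graph $H(\beta)$ with $|E(H(\beta))|=k$ forcing the quotient map to be edge-bijective, the trivial-stabiliser argument, and the orbit-stabiliser bookkeeping against the $\tfrac{1}{\Aut(H)}$ and $\Aut(F)$ factors all check out. This is essentially the same route as the cited source, whose proof proceeds via exactly these vertex-splittings (there called fractures and fractured graphs), so no further comparison is needed.
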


\begin{theorem}[\cite{aivasiliotis2024parameterisedholantproblems}]\label{thm:zeta_at_most_k_edges}
Let $s$ be a signature with $s(0) = 1$, $k \in \mathbb{N}$, and $F \in \mathcal{G}_{\leq k}$. Then we have
\begin{equation} \label{eqn:Theorem_general1}
    \zeta_{k, s}(F)=\frac{1}{\#\mathsf{Aut}(F)} \cdot \sum_{\substack{\lambda: E(F) \to \mathcal{P}\\\sum_e |\lambda(e)| = k }}\ \  \prod_{e \in E(F)} \frac{\mathsf{mult}(\lambda(e))}{|\lambda(e)|!}  \prod_{v \in V(F)} \chi(\mathsf{deg}(F, v, \lambda), s)\,,
\end{equation}
where the partitions $\mathsf{deg}(F, v, \lambda)$ are defined as
\[
\mathsf{deg}(F, v, \lambda) = \bigcup_{\substack{e \in E(F):\\ e \text{ incident to }v}} \lambda(e)\,. 
\]
\end{theorem}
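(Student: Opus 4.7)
The plan is to begin from the closed-form expression of Lemma~\ref{lem:uncolHomBasisRESTATED},
\[
\zeta_{k,s}(F) \;=\; \sum_{H \in \mathcal{G}_k(2)} \frac{1}{\#\mathsf{Aut}(H)} \prod_{v \in V(H)} s(\deg_H(v)) \sum_{\substack{\rho \in \mathsf{Part}(H) \\ H/\rho \cong F}} \mu(\rho)\,,
\]
and to reparameterise the right-hand side as a sum indexed by edge-labellings $\lambda : E(F) \to \mathcal{P}$ with $\sum_e |\lambda(e)| = k$. Intuitively, each edge $e$ of $F$ will be tagged with a partition $\lambda(e)$ recording how the $|\lambda(e)|$ edges of $H$ that collapse onto $e$ under the quotient $H \to H/\rho \cong F$ are organised with respect to the blocks of $\rho$ lying above the two endpoints of $e$. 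The first step is to replace the sum over isomorphism classes $H$ by a sum over triples $(H,\rho,\iota)$ where $\iota : H/\rho \to F$ is a fixed isomorphism, which via a standard orbit-counting move converts the prefactor $1/\#\mathsf{Aut}(H)$ into $1/\#\mathsf{Aut}(F)$.

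For every such triple, each $e \in E(F)$ acquires a fibre $\pi(e) \subseteq E(H)$ sitting between the two blocks of $\rho$ above the endpoints of $e$, and $\sum_e |\pi(e)| = k$ since $H$ has exactly $k$ edges. The combinatorial ``shape'' of $\pi(e)$ (i.e.\ how the edges are glued at the two endpoint-blocks) is encoded by the integer partition $\lambda(e)$, and the factor $\mathsf{mult}(\lambda(e))/|\lambda(e)|!$ will arise precisely as the local count of labelled edge-configurations yielding this shape, weighted by the appropriate signs extracted from $\mu(\rho)$. The third and most delicate step is to split the remaining Möbius weight of $\mu(\rho)$---which is already multiplicative across the blocks of $\rho$---into per-vertex factors: for a fixed $v \in V(F)$, combining the vertex signature weights $\prod_{v' \in B_v} s(\deg_H(v'))$ with the Möbius-weighted sum over all partitions of $B_v$ refining the degree data $\mathsf{deg}(F,v,\lambda) = \bigcup_{e \ni v} \lambda(e)$ yields the generalised fingerprint $\chi(\mathsf{deg}(F,v,\lambda), s)$ essentially by its definition in~\cite{aivasiliotis2024parameterisedholantproblems}. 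As a sanity check, when $|E(F)| = k$ only the labelling $\lambda(e) \equiv (1)$ contributes, $\mathsf{deg}(F,v,\lambda) = (1,\ldots,1)$ collapses to the integer $\deg_F(v)$, and we recover Theorem~\ref{thm:infinityCoeffRestated}.

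The main obstacle is precisely the bookkeeping of $\mu(\rho)$ across edges and vertices of $F$ simultaneously: every edge $\{x,y\}$ of $H$ lies between two blocks $B_u$ and $B_v$, so a naive local factorisation risks double-counting or misallocating factorial terms. The cleanest way to avoid this is to carry out the Möbius inversion block-by-block, absorbing the ``$B_v$-side'' of each edge-fibre into the vertex factor at $v$ and the residual enumeration of fibre-shapes into the $\mathsf{mult}(\lambda(e))/|\lambda(e)|!$ contribution at $e$. Verifying that the resulting multiplicities and signs match the formal definition of $\mathsf{mult}$ from~\cite{aivasiliotis2024parameterisedholantproblems}, and that every $(H,\rho)$ with $H/\rho \cong F$ is counted exactly once with the correct weight, is the technical crux of the argument.
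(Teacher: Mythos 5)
This theorem is not proved in the paper: it is imported verbatim from the full version of Aivasiliotis et al.\ \cite{aivasiliotis2024parameterisedholantproblems} and used as a black box, so there is no in-paper proof to compare your argument against. Judged on its own terms, what you have written is a plan rather than a proof, and the plan defers exactly the step that carries all the content. You assert that the local enumeration of fibre-shapes produces the factor $\mathsf{mult}(\lambda(e))/|\lambda(e)|!$ and that the per-block aggregation of $\prod_{v'\in B_v}s(\deg_H(v'))$ with the Möbius weights produces $\chi(\mathsf{deg}(F,v,\lambda),s)$ ``essentially by its definition'', but neither identity can be checked from the material in this paper: $\mathsf{mult}(\lambda)$ is only characterised here by its sign, and the generalised fingerprint $\chi(\lambda,s)$ is never defined, only two of its vanishing properties are quoted. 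Without the actual definitions from the cited work, the two factors you need to produce are unspecified targets, so the claimed match is not verifiable.

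There is also a concrete structural gap you flag but do not resolve. For a fixed $(H,\rho,\iota)$ the fibre over an edge $e=\{u,v\}$ of $F$ is a bipartite graph between the blocks $B_u$ and $B_v$, and the degree of a vertex $v'\in B_v$ in $H$ is the sum over all $e\ni v$ of its degree in the fibre over $e$. The formula uses the \emph{same} partition $\lambda(e)$ to build both $\mathsf{deg}(F,u,\lambda)$ and $\mathsf{deg}(F,v,\lambda)$, so $\lambda(e)$ must encode the fibre in a way from which the degree data on \emph{both} sides is recoverable; a generic bipartite fibre has different degree sequences on its two sides, so this forces a specific (and nontrivial) choice of which fibres contribute and how $\lambda(e)$ is read off from them. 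Relatedly, the sum over $(H,\rho)$ does not obviously factorise into independent per-edge and per-vertex products for a fixed $\lambda$ --- the choices of fibres over different edges incident to the same vertex of $F$ interact through the block $B_v$ --- and establishing this factorisation, together with the exact cancellation of the factorials coming from $\mu(\rho)=\prod_B(-1)^{|B|-1}(|B|-1)!$ against the orbit counts, is precisely the ``technical crux'' you name and leave open. Your automorphism reduction from $1/\#\mathsf{Aut}(H)$ to $1/\#\mathsf{Aut}(F)$ and your sanity check for $|E(F)|=k$ are both sound, but as it stands the proposal is an outline of the expected argument, not a proof of the theorem.
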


Next we need to establish the survival of regular graphs as constituents in the graph motif parameter expression of our holant problem --- this is necessary since the running time of Curticapean's reduction has exponential dependency on the maximum degree of the graphs we wish to isolate (see the factor of $4^{\Delta(H)}$ in Theorem~\ref{thm:raduPoly}). We will see later that proving the survival of regular graphs will be easy for signatures of type $\mathbb{T}[\infty]$. For signatures of type $\mathbb{T}[2]$, on the other hand, the situation is slightly more tricky. However, we are able to show that $4$-regular graphs survive in the subsequent lemma.
\begin{lemma}\label{lem:non-zero-coeff} Let $\{s\}$ be of type $\mathbb{T}[2]$. Let $F$ be a 4-regular graph on $n_F$ vertices and $m_F$ edges with no isolated vertices. We set $k_F = 3\cdot n_F$. We have, $\zeta_{k_F, s}(F) \neq 0$.    
\end{lemma}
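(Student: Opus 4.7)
The plan is to apply \Cref{thm:zeta_at_most_k_edges} with $k = k_F = 3n_F$ to expand
\[\zeta_{k_F, s}(F) = \frac{1}{\#\mathsf{Aut}(F)} \sum_{\lambda} \prod_{e \in E(F)} \frac{\mathsf{mult}(\lambda(e))}{|\lambda(e)|!} \prod_{v \in V(F)} \chi(\mathsf{deg}(F, v, \lambda), s),\]
where the outer sum is over partition labellings $\lambda : E(F) \to \mathcal{P}$ with $\sum_e |\lambda(e)| = 3 n_F$. The first step is to identify which labellings survive, using \Cref{Prop:chi_vanishing_property}. Writing $D_v = |\mathsf{deg}(F, v, \lambda)| = \sum_{e \ni v} |\lambda(e)|$ and $L_v = \mathsf{len}(\mathsf{deg}(F, v, \lambda)) = \sum_{e \ni v} \mathsf{len}(\lambda(e))$, part (a) of \Cref{Prop:chi_vanishing_property} forces $D_v \geq 2 L_v - 2$ at every vertex, else the corresponding term vanishes.

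The next step is to sum this inequality over $v \in V(F)$ and apply the handshake identities $\sum_v D_v = 2 \sum_e |\lambda(e)| = 6 n_F$ and $\sum_v L_v = 2 \sum_e \mathsf{len}(\lambda(e))$, together with $|V(F)| = n_F$ and $|E(F)| = m_F = 2 n_F$, to obtain $\sum_{e} \mathsf{len}(\lambda(e)) \leq 2 n_F$. Since each $\mathsf{len}(\lambda(e)) \geq 1$ and $F$ has exactly $2 n_F$ edges, this inequality must be an equality termwise. Consequently $\mathsf{len}(\lambda(e)) = 1$ for every edge $e$, so $\lambda(e)$ is a single positive integer, which I denote $a_e$, and furthermore $D_v = 6$ at every vertex. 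Surviving labellings therefore correspond exactly to integer edge-weightings $\{a_e\}_{e \in E(F)}$ satisfying $a_e \geq 1$ and $\sum_{e \ni v} a_e = 6$ for every $v$.

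For each such surviving labelling, part (b) of \Cref{Prop:chi_vanishing_property} gives $\chi(\mathsf{deg}(F, v, \lambda), s) = \alpha_v^{(\lambda)} \cdot (s(2) - s(1)^2)^{3}$ with $\alpha_v^{(\lambda)} \in \mathbb{Z}_{>0}$, and since $\{s\}$ is of type $\mathbb{T}[2]$ we have $s(2) - s(1)^2 = \chi(2, s) \neq 0$. Pulling the common factor $(s(2) - s(1)^2)^{3 n_F}$ out of the sum, the remaining contribution of each surviving $\lambda$ is $\bigl(\prod_v \alpha_v^{(\lambda)}\bigr) \cdot \prod_e \mathsf{mult}(a_e)/a_e!$. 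Since $\mathsf{mult}$ applied to a single-part partition $(a)$ has sign $(-1)^{a - 1}$, the overall sign of this remaining contribution equals $(-1)^{\sum_e (a_e - 1)} = (-1)^{3 n_F - 2 n_F} = (-1)^{n_F}$, which is \emph{independent of }$\lambda$. Hence no cancellation can occur among surviving terms.

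The final step is to produce at least one surviving labelling. Setting $b_e := a_e - 1 \geq 0$, the problem reduces to finding non-negative integer edge-weights with $\sum_{e \ni v} b_e = 2$ at every vertex. By Petersen's 2-factor theorem, the 4-regular graph $F$ contains a 2-regular spanning subgraph $M$; assigning $b_e = 1$ for $e \in E(M)$ and $b_e = 0$ otherwise yields a valid weighting and thus a surviving $\lambda$. Consequently $\zeta_{k_F, s}(F)$ is a nonzero scalar multiple of a nonempty sum of same-sign nonzero terms, and hence nonzero. I expect the main obstacle to be the rigidity argument in step two (forcing each partition to be a single integer and each $D_v$ to equal $6$): once this is in place, both the sign-preservation calculation and the existence of a surviving labelling via Petersen's theorem are transparent.
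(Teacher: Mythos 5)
Your proof is correct and follows essentially the same route as the paper's: identify the surviving labellings via the vanishing criterion of \Cref{Prop:chi_vanishing_property}(a) together with handshaking (forcing $\mathsf{len}(\lambda(e))=1$ and $|\mathsf{deg}(F,v,\lambda)|=6$), observe via part (b) and the sign of $\mathsf{mult}$ that all surviving terms share the sign $(-1)^{n_F}$, and exhibit one surviving labelling from a Petersen $2$-factor. Your global summed-inequality version of the rigidity step is a slightly tidier packaging of the paper's vertex-by-vertex averaging argument, but the substance is identical.
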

\begin{proof}
By the Handshaking Lemma, we have $\sum_{v \in V(F)}\deg_F(v) = 2\cdot m_F$. Furthermore, since $F$ is 4-regular (and has no isolated vertices), we have $\sum_{v \in V(F)}\deg_F(v) = 4\cdot n_F$. Hence, $m_F = 2\cdot n_F$ and so $k_F > m_F$. Let $\lambda : E(F) \to \mathcal{P}$ such that $\sum_{e \in E(F)}|\lambda(e)| = k_F$. We have
\[\sum_{v \in V(F)}|\deg(F, v, \lambda)| = 2\sum_{e \in E(F)}|\lambda(e)| = 6\cdot n_F\,,\]
and so the average weight $\sum_{v}|\deg(F, v, \lambda)|/n_F$ is 6. Clearly, if there is a vertex $u$ with $|\deg(F, u, \lambda)| > 6$, then there is also a vertex $v$ with $|\deg(F, v, \lambda)| < 6$. In that case, we have $|\deg(F, v, \lambda)| < 6 = 2\cdot 4 -2 \leq 2\cdot\mathsf{len}(\deg(F, v, \lambda)) -2$ and so by \Cref{Prop:chi_vanishing_property}, we have $\chi(\deg(F, v, \lambda), s) = 0$. Note that, we used $\mathsf{len}(\deg(F, v, \lambda)) \geq \deg_F(v) = 4$ which follows directly from the definition of $\deg(F, v, \lambda)$. Hence, we may consider only such $\lambda$ for which it holds that, $|\deg(F, v, \lambda)| = 6$, for each $v \in V(F)$. Furthermore, if for some $v \in V(F)$, we have $\mathsf{len}(\deg(F, v, \lambda)) > 4$, then similar arguments as above show that $\chi(\deg(F, v, \lambda), s) = 0$. Since $\mathsf{len}(\deg(F, v, \lambda)) \geq 4$, it follows that, we may further restrict our focus to those $\lambda$ such that, $\mathsf{len}(\deg(F, v, \lambda)) = 4$, for each $v \in V(F)$, which is equivalent to requiring that $\mathsf{len}(\lambda(e)) = 1$, for each $e \in E(F)$, since $F$ is 4-regular.

For any such $\lambda$, we compute
\[
\mathsf{sign}\left(\prod_{e \in E(F)}\mathsf{mult}(\lambda(e))\right) = \prod_{e \in E(F)}(-1)^{|\lambda(e)|-1} = (-1)^{k_F-m_F} = (-1)^{n_F}\,.
\]
Also, for each $v \in V(F)$, we have $\chi(\deg(F, v, \lambda), s) = a_{\deg(F, v, \lambda)}(s(2) - s(1)^2)^3$, where $a_{\deg(F, v, \lambda)} \in \mathbb{Z}_{> 0}$ depends only on $\deg(F, v, \lambda)$.

The above analysis show that no terms of $\zeta_{k_F, s}(F)$ cancel out (since all terms have same sign). Hence, what remains is to show that there is at least one $\lambda$ such that for each $e \in E(G), \mathsf{len}(\lambda(e)) = 1$ and for each $v \in V(F), |\deg(F, v, \lambda)| = 6$ (note that the above implies that we also have $\sum_{e \in E(F)}|\lambda(e)| = 1/2\sum_{v \in V(F)}|\deg(F, v, \lambda)| = 3\cdot n_F = k_F$). 

Let $F'$ be a 2-regular spanning subgraph of $F$, the existence of which follows from Petersen's 2-factor theorem \cite[Theorem 2]{mulder1992julius}. We define $\lambda$ as follows.
\[
\lambda(e) = 
\begin{cases}
    2, & e \in E(F')\\
    1, & e \in E(F) \setminus E(F')
\end{cases}
\]
It is straightforward to verify that for each $v \in V(F)$, $|\deg(F, v, \lambda)| = 6$, which concludes the proof.
\end{proof}

Our final ingredient for the application of Theorem~\ref{thm:raduPoly} is the $\#\mathrm{P}$-hardness $\#\text{\sc{ColHom}}(\mathcal{R}_r)$, for every $r\geq 3$ where $\mathcal{R}_r$ denotes the class of all colourful $r$-regular graphs. We point out that $\#\mathrm{P}$-hardness under \emph{randomised} polynomial-time reductions follows immediately from Curticapean's paper~\cite{curticapean2024count}, since $\mathcal{R}_r$ has unbounded treewidth. The reason for the need of randomisation is the necessity of a polynomial-time algorithm that finds a minor-model of a large enough grid (or wall) in a graph of high treewidth. Specifically, it is necessary to obtain a polynomial dependence between the grid (or wall) minor and the treewidth for the overall reduction to run in polynomial time.  Unfortunately, at the time of writing this paper, there is no \emph{deterministic} algorithm known for finding such grid or wall minors.  However, for our specific case of $r$-regular graphs we can make the reduction deterministic by providing grid-minors explicitly. To this end, we will need to take a small detour to grids, walls, and minors.

\subsubsection{Sub-walls and grid minors in regular graphs}

We write $\boxplus_{k\times \ell}$ for the $k$-by-$\ell$ grid. The $k$\emph{-by-}$\ell$\emph{-wall}, denoted by $W_{k \times \ell}$ is obtained from $\boxplus_{k \times \ell}$ via the following three steps:
\begin{enumerate} 
    \item Delete every odd-indexed edge in every even-indexed-column.
    \item Delete every even-indexed edge in every odd-indexed column.
\end{enumerate}
Finally, for odd $k$ and odd $\ell$, the $k$\emph{-by-}$\ell$-\emph{toroidal wall}, denoted by $\circledcirc_{k\times\ell}$ is obtained from the wall $W_{k\times \ell}$ by identifying the left and the right sides, and by identifying the top and the bottom sides.\footnote{Note that $k$ must be odd and $\ell$ must be even for the left and right sides, and the top and bottom sides to ``fit'' in order to produce a $3$-regular graph.} Formally, the toroidal wall of dimension $k \times \ell$ can also be defined from the Cayley graph of the direct product $\mathbb{Z}_k\times \mathbb{Z}_\ell$ with generators $\pm(0,1),\pm(1,0)$ (which yields the toroidal grid), and then deleting edges and contracting degree-$2$-vertices similar as in the construction of the wall from the grid. However, we will only rely on very basic properties of the toroidal wall, hence we omit the technical details and invite the reader to instead consider Figure~\ref{fig:donutwall}.

\begin{figure}[t]
    \centering
    \includegraphics[width=\linewidth]{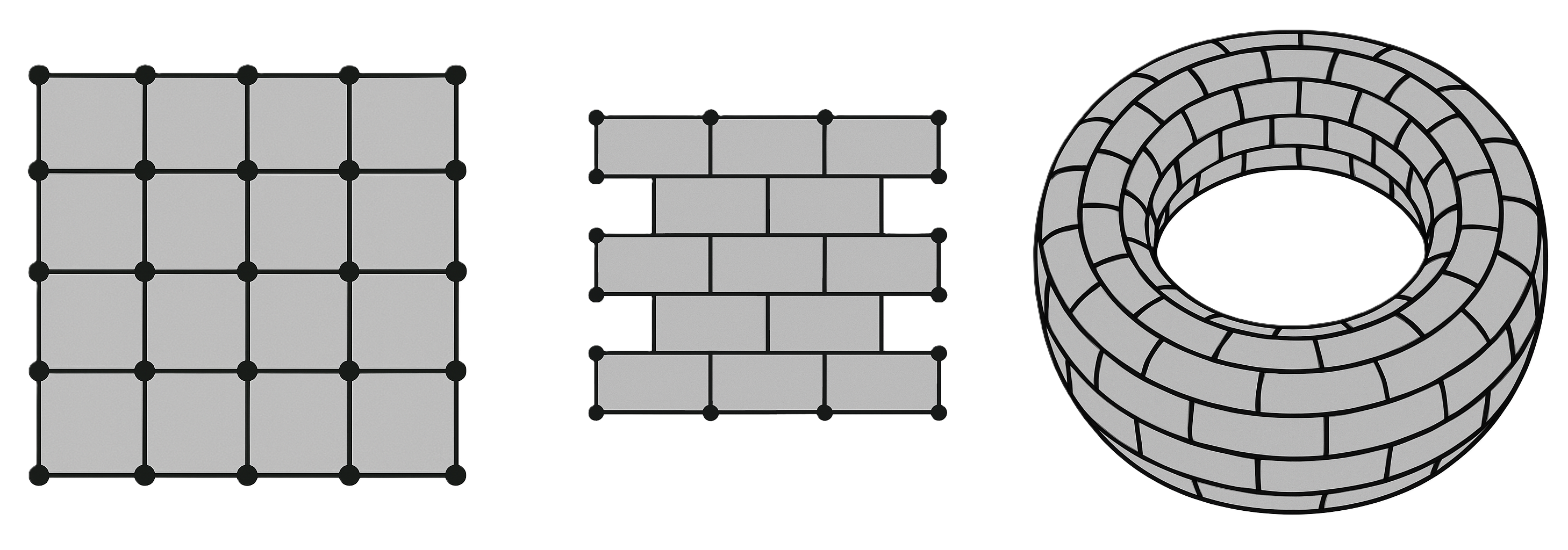}
    \caption{\emph{(Left:)} The $5$-by-$5$-grid $\boxplus_{5\times 5}$. \emph{(Center:)} The $6$-by-$7$-wall $W_{6\times 7}$. \emph{(Right:)} Illustration of a toroidal wall. This illustration was generated by ChatGPT4o.}
    \label{fig:donutwall}
\end{figure}

For the next reduction, we need an explicit construction of $r$-regular graphs together with large wall-subgraphs.

\begin{lemma}
    For every positive integer $r\geq 3$, there is a deterministic polynomial time algorithm that, on input a natural number $t$ in unary, outputs a pair of graphs $(H,W)$ such that
    \begin{itemize}
        \item $H$ is an $r$-regular graph.
        \item $W\cong W_{t\times t}$ and $W$ is a subgraph of $H$, that is, $V(W)\subseteq V(H)$.
    \end{itemize}
\end{lemma}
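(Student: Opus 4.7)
The plan is to construct $H$ in two stages. First, I build a $3$-regular graph $H_0$ that contains a copy of $W_{t\times t}$ as a subgraph. Second, if $r>3$, I boost the regularity from $3$ to $r$ by taking a Cartesian graph product with a small $(r-3)$-regular auxiliary graph, in such a way that the embedded wall survives.

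For the first stage I take $H_0 := \circledcirc_{k\times\ell}$, the toroidal wall of dimensions $k\times\ell$ with $k,\ell\in O(t)$ chosen so that the parity conditions from the preceding construction are met (e.g.\ $k$ odd and $\ell$ even in the conventions of the paper), so that $\circledcirc_{k\times\ell}$ is well-defined and $3$-regular. Any $t$-by-$t$ block of vertices lying sufficiently far from the wrap-around rows and columns, together with the wall edges restricted to that block, is isomorphic to $W_{t\times t}$ (possibly after shifting the block by one row or one column so as to match the parity pattern of the wall's edge-deletion rule). I take $W$ to be this subgraph. If $r=3$ the algorithm simply outputs $(H_0,W)$.

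For $r>3$ I additionally build an $(r-3)$-regular simple graph $P$ on a fixed vertex set $\{v_0,v_1,\dots,v_{p-1}\}$ whose cardinality $p$ depends only on $r$. For instance, $P$ can be taken as a circulant graph $C_p(1,2,\dots,\lfloor (r-3)/2\rfloor)$ for any $p$ sufficiently large and of suitable parity (adding $p/2$ as an additional generator if $r-3$ is odd and $p$ is even). The output is $H := H_0 \mathbin{\square} P$, the Cartesian graph product. Each vertex $(u,v)\in V(H)$ has degree $\deg_{H_0}(u)+\deg_P(v)=3+(r-3)=r$, so $H$ is $r$-regular. Moreover, the ``$v_0$-layer'' $\{(u,v_0) : u\in V(H_0)\}$ together with its ``horizontal'' Cartesian-product edges forms an isomorphic copy of $H_0$, and hence contains the copy of $W_{t\times t}$ from the first stage, via the identification $u\mapsto(u,v_0)$.

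The main technical obstacle is purely the bookkeeping of parities: choosing $k,\ell,p$ so that $\circledcirc_{k\times\ell}$ is $3$-regular and contains a clean (non-wraparound) $t$-by-$t$ wall patch, and so that a simple $(r-3)$-regular graph on $p$ vertices exists. The latter is classical: a simple $d$-regular graph on $p$ vertices exists whenever $dp$ is even and $p\geq d+1$, so this is always achievable with $p$ a constant depending only on $r$. The resulting algorithm runs in deterministic time polynomial in $t$, since $|V(H)|=|V(H_0)|\cdot|V(P)|\in O(t^2)$ and all three components (toroidal wall, circulant graph, Cartesian product) admit explicit linear-time construction procedures.
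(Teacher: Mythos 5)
Your construction is correct and is essentially the same as the paper's: the paper also starts from a toroidal wall of side $O(t)$ (adjusting parity), extracts a $t\times t$ sub-wall, and for $r>3$ boosts the degree by taking what amounts to the Cartesian product of the toroidal wall with a fixed $(r-3)$-regular graph on a constant number of vertices (the paper phrases this as taking $2r$ copies of the toroidal wall and gluing a copy of an $(r-3)$-regular graph $F$ on $[2r]$ into each vertex fibre, which is exactly $\circledcirc \mathbin{\square} F$). The only cosmetic differences are your explicit use of circulant graphs for the auxiliary regular graph and the product notation.
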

\begin{proof}
    Assume w.l.o.g.\ that $t\geq2r$; otherwise we perform the construction for $t'=2r$.
    If $r=3$ then we set (and construct) $H$ to be the toroidal wall of dimensions $t \times t$ if $t$ is odd, and $(t+1)\times (t+1)$ if $t$ is even. We pick for $W$ any of the $t \times t$ sub-walls of the toroidal wall.

    If $r>3$, we construct $2r$ copies $H_1,\dots,H_{2r}$ of the toroidal wall $\circledcirc_{t\times t}$ (if $t$ is odd) or of $\circledcirc_{(t+1)\times(t+1)}$ (if $t$ is even). For $i\in[2r]$, we let $v_i$ denote the $i$-th copy of vertex $v$ in the toroidal wall. Since $2r(r-3)$ is even and $2r\geq (r-3)+1$, there is an $(r-3)$-regular graph $F$ with $V(F)=[2r]$.\footnote{Recall that for even $dn$, and $n\geq d+1$ there always exists a $d$-regular $n$-vertex graph.}  For each vertex $v$ of the toroidal wall, we add a copy of $F$ by identifying $v_i$ with vertex $i\in V(F)$ for all $i\in [2k]$. The resulting graph is set to be $H$. Clearly, $H$ is $3+(r-3)=r$-regular. Moreover, we can pick for $W$ any of the $t \times t$ sub-walls of the first copy $H_1$ of the toroidal wall.
\end{proof}

We will furthermore rely on the following hardness result as an intermediate step for our overall reduction. We remark that this result follows easily from combining the previous lemma with the analysis of $\#\mathrm{P}$-hardness of colourful homomorphism counting in~\cite{curticapean2024count}; we only add an explicit proof for reasons of self-containment.
\begin{lemma}
    Let $r>2$ be a positive integer and let $\mathcal{R}_r$ denote the class of all colorful $r$-regular graphs. Then the problem $\#\text{\sc{ColHom}}(\mathcal{R}_r)$ is $\#\mathrm{P}$-hard.
\end{lemma}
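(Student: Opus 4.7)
My plan is to reduce deterministically in polynomial time from the problem $\#\text{\sc{ColHom}}(\mathcal{W})$ of counting colourful homomorphisms from square walls, which is $\#\mathrm{P}$-hard. The deterministic hardness of this source problem follows from Curticapean's reduction~\cite{curticapean2024count}: walls have treewidth linear in their side length and are themselves their own explicit wall subgraphs, so the randomised grid-minor extraction step is unnecessary here. Given an input $(W, G)$ with $W \cong W_{t\times t}$ colourful and $G$ coloured by $V(W)$, I would first invoke (a slight variant of) the preceding lemma to obtain in polynomial time a pair $(H, W_H)$ where $H \in \mathcal{R}_r$ and $W_H$ is a \emph{proper} sub-wall of $H$ isomorphic to $W_{t\times t}$, meaning that no edge of $E(H) \setminus E(W_H)$ has both endpoints in $V(W_H)$. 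For $r > 3$ this holds automatically because the $F$-gadgets in the construction sit on single toroidal-wall vertices and the chosen sub-wall lies entirely in a single copy $H_1$; for $r = 3$ this requires picking $W_H$ as an \emph{interior} $t \times t$ sub-wall of a strictly larger toroidal wall, which is achieved by running the lemma's construction with odd dimension $(t + c) \times (t + c)$ for a small constant $c$ so that the wrap-around edges lie entirely outside $V(W_H)$. After identifying $W_H$ with $W$, I would assign fresh unique colours to the vertices of $V(H) \setminus V(W_H)$ to make $H$ colourful.

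Next, I would construct a coloured target graph $G'$ by augmenting $G$ as follows. For each $v \in V(H) \setminus V(W_H)$, add one new vertex $v^*$ of colour $v$. For each edge $\{u,w\} \in E(H)$ with $u, w \in V(H) \setminus V(W_H)$, add the single edge $\{u^*, w^*\}$; for each edge $\{u,w\} \in E(H)$ with $u \in V(W_H)$ and $w \notin V(W_H)$, add the edges $\{x, w^*\}$ for every vertex $x \in V(G)$ of colour $u$. The original edges of $G$ remain untouched. The resulting graph $G'$ has size polynomial in $|G|$ and $|V(H)|$, and the entire construction is deterministic and polynomial-time.

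I would then verify that $\#\mathsf{ColHom}(H \to G') = \#\mathsf{ColHom}(W \to G)$. A colourful homomorphism $\phi : H \to G'$ is forced by the uniqueness of colour classes in $V(G') \setminus V(G)$ to map each $v \in V(H) \setminus V(W_H)$ to its unique image $v^*$. By the properness of $W_H$, every edge of $H$ incident to $V(W_H)$ that is not in $E(W_H)$ has its other endpoint outside $V(W_H)$, so the corresponding edge constraints in $G'$ are automatically satisfied by construction. The edges of $H$ inside $V(W_H)$ coincide with $E(W_H) \cong E(W)$, and must be mapped to edges of $G$ (since no $v^*$ lies in their image). Consequently, the restriction of $\phi$ to $V(W_H)$ ranges exactly over $\mathsf{ColHom}(W \to G)$, establishing the bijection and completing the Turing reduction.

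The main obstacle I foresee is proving the properness property of $W_H$ cleanly within the framework of the preceding lemma, since that lemma only asserts $W$ is a subgraph of $H$ and does not explicitly rule out that additional $H$-edges lie on $V(W)$—in particular, the toroidal wrap-around edges can create such phantom edges when $t$ is odd. Resolving this requires the small offset-$c$ modification sketched above, so that the sub-wall always sits strictly inside the toroidal construction and avoids all wrap-around identifications; this is a mild change that does not affect the lemma's polynomial bounds. A conservative alternative plan, should this bookkeeping prove delicate, is to reduce instead from $\#\text{\sc{ColHom}}$ on toroidal walls of suitably larger odd dimension, where the reduction template above goes through verbatim.
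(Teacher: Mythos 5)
Your proposal is correct in its core ideas but takes a genuinely different route from the paper. The paper reduces from $\#\text{\sc{ColHom}}$ on square \emph{grids} (whose $\#\mathrm{P}$-hardness it cites directly from Curticapean) and then invokes Curticapean's minor-model lemma as a black box: given the explicit $t\times t$ sub-wall of $H$, one reads off a grid minor-model in $H$ and the lemma produces $G'$ with $\#\mathsf{ColHom}(\boxplus_{t\times t}\to G)=\#\mathsf{ColHom}(H\to G')$, with all bookkeeping about ``extra'' edges of $H$ absorbed into that lemma. You instead reduce from $\#\text{\sc{ColHom}}$ on \emph{walls} and hand-roll the target construction for the special case where the pattern sits inside $H$ as a subgraph. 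What your route buys is a self-contained, elementary verification of the count-preserving bijection; what it costs is two extra obligations the paper never incurs: (i) you must separately establish $\#\mathrm{P}$-hardness of colourful homomorphisms from walls (your treewidth remark is plausible, but the clean way is to note that $\boxplus_{t\times t}$ is an explicit minor of an $O(t)\times O(t)$ wall and apply the same transfer lemma); and (ii) you must strengthen the preceding lemma to deliver an \emph{induced} (``proper'') sub-wall, which the paper's construction genuinely does not provide.

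On point (ii), your diagnosis of the obstruction is right but your case split is slightly off: the wrap-around edges of the toroidal wall are a problem for \emph{all} $r\geq 3$ whenever the chosen sub-wall meets the identified boundary (e.g.\ when $t$ is odd and the toroidal wall has dimension exactly $t\times t$), not only for $r=3$; the $F$-gadget edges are indeed harmless for $r>3$ since they leave the copy $H_1$, but the wrap-around edges of $H_1$ itself do not. Your proposed fix --- building a strictly larger toroidal wall and taking an interior sub-wall --- resolves this uniformly for all $r$, so this is a repairable slip rather than a gap. Alternatively, you could drop the properness requirement entirely by handling each edge of $H$ with both endpoints in $V(W_H)$ but outside $E(W_H)$ the same way you handle the mixed edges, namely by inserting a complete bipartite graph between the two relevant colour classes of $G$; since $H$ is colourful these added edges cannot interfere with the genuine wall constraints, and this would let you use the preceding lemma exactly as stated.
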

\begin{proof}
    Let $\boxplus$ be the class of all square grids.
    As shown in~\cite[Theorem 2.2]{curticapean2024count}, $\#\text{\sc{ColHom}}(\boxplus)$ is $\#\mathrm{P}$-hard. We construct a polynomial-time Turing reduction from $\#\text{\sc{ColHom}}(\boxplus)$ as follows: given a colourful $t\times t$ square grid $\boxplus_{t\times t}$ and a (coloured) graph $G$, the goal is to compute the number of colour-preserving homomorphisms from $\boxplus_{t\times t}$ to $G$. 
    We use the algorithm in the previous lemma to construct in polynomial time in $t$ a pair $(H,W)$ such that $H$ is $r$-regular and $W$ is a $t\times t$ sub-wall of $H$. Observe that, given $W$, it is easy to construct a minor-model of $\boxplus_{t\times t}$ in $H$ in polynomial time. Then, by~\cite[Lemma 2.1]{curticapean2024count}, we can in polynomial time construct a coloured graph $G'$ such that $\#\homs{\boxplus_{t\times t}}{G}=\#\homs{H}{G'}$ (where $H$ is considered vertex-coloured with the identity function on $V(H)$). Since $H\in \mathcal{R}_r$, the proof is concluded.
\end{proof}

For reasons of compatibility with the notation of \Cref{thm:raduPoly}, we choose to equivalently see $\uncolholant(\Omega, k)$ as a map \[p_{k, s} : G \mapsto \sum_{F \in \mathcal{G}_{\leq k}(2)}\zeta_{k, s}(F)\cdot \Hom(F \to G)\,,\]
where $G$ is a simple \emph{graph}. Clearly, for any instance of $\uni{\constuncolholantprob{2}}(\{s\})$ consisted of a signature grid $\Omega$ with underlying graph $G$ and $k \in \mathbb{N}$, we have $p_{k, s}(G) = \uncolholant(\Omega, k)$.
Recall that we write $F \triangleleft\,p_{k, s}$ whenever $\zeta_{k, s}(F) \neq 0$.

\begin{theorem}\label{thm:hardnessInfinity}
If $\{s\}$ is of type $\mathbb{T}[\infty]$, $\classHolant(\{s\})$ is $\#\mathrm{P}$-hard.    
\end{theorem}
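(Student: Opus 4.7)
The plan is to apply Curticapean's CFI-filter framework (Theorem~\ref{thm:raduPoly}) with the source hypergraph class being colourful $r$-regular graphs, and to reduce from $\#\text{\sc{ColHom}}(\mathcal{R}_r)$, which we just established to be $\#\mathrm{P}$-hard. The choice of $r\geq 3$ will be guided by the type: since $\{s\}$ is of type $\mathbb{T}[\infty]$, by \Cref{def:sigtype_intro} there exists some integer $r\geq 3$ such that $\chi(r,s)\neq 0$. Fix this $r$ throughout.

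For any instance $(H,G)$ of $\#\text{\sc{ColHom}}(\mathcal{R}_r)$, where $H$ is a colourful $r$-regular graph and $G$ is its associated vertex-coloured graph, I set $H^\circ$ to be the underlying uncoloured $r$-regular graph and $k := |E(H^\circ)|$. Consider the graph motif parameter
\[p_{k,s}: G'\mapsto \sum_{F \in \mathcal{G}_{\leq k}(2)} \zeta_{k,s}(F)\cdot \#\homs{F}{G'}\,,\]
which via \Cref{lem:uncolHomBasisRESTATED} coincides with $\uncolholant((G',\{s\}_{v\in V(G')}),k)$. Thus any oracle for $\classHolant(\{s\})$ gives oracle access to $p_{k,s}$. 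Since $H^\circ$ has exactly $k$ edges, \Cref{thm:infinityCoeffRestated} applies and yields
\[\zeta_{k,s}(H^\circ)=\frac{1}{\#\mathsf{Aut}(H^\circ)}\prod_{v \in V(H^\circ)}\chi(\deg_{H^\circ}(v),s)=\frac{\chi(r,s)^{|V(H^\circ)|}}{\#\mathsf{Aut}(H^\circ)}\neq 0\,,\]
so $H^\circ \triangleleft p_{k,s}$. Moreover, every $F \triangleleft p_{k,s}$ has at most $k$ edges and no isolated vertices, hence $|V(F)|\leq 2|E(F)|\leq 2k$, so we can set $a:=2k$ when invoking Theorem~\ref{thm:raduPoly}.

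With these ingredients, Theorem~\ref{thm:raduPoly} produces $\#\mathsf{ColHom}(H\to G)$ in time $4^{\Delta(H)}\cdot \mathsf{poly}(|V(G)|,a)=4^r\cdot\mathsf{poly}(|V(G)|,k)$, which is polynomial because $r$ is an absolute constant depending only on~$s$ and $k=rn/2$ is polynomial in $|V(H)|$. Each oracle query to $p_{k,s}$ is answered by a single call to $\classHolant(\{s\})$ on the signature grid that equips the queried graph with the signature $s$ at every vertex. This yields a polynomial-time Turing reduction $\#\text{\sc{ColHom}}(\mathcal{R}_r)\leq_{\mathsf T}\classHolant(\{s\})$, and combining with the $\#\mathrm{P}$-hardness of $\#\text{\sc{ColHom}}(\mathcal{R}_r)$ concludes the argument.

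The main obstacle to anticipate is essentially bookkeeping rather than combinatorial: one must verify that the patterns $F$ surviving in $p_{k,s}$ really do have bounded vertex count in terms of $k$ (so that $a$ may be chosen polynomially), that oracle access to the holant value really does amount to oracle access to the correct graph motif parameter, and that the maximum degree $\Delta(H)=r$ is a constant so that the $4^{\Delta(H)}$ factor is absorbed into the polynomial. Unlike the $\mathbb{T}[2]$ case (where Curticapean's black-box does not directly apply to cliques and the combinatorial work of \Cref{lem:non-zero-coeff} was needed to replace them with $4$-regular graphs), here $r$-regular graphs are a priori bounded-degree and $\zeta_{k,s}$ is non-vanishing by \Cref{thm:infinityCoeffRestated} without any further combinatorial analysis.
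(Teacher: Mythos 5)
Your proposal is correct and follows essentially the same route as the paper's own proof: pick $r\geq 3$ with $\chi(r,s)\neq 0$, reduce from $\#\text{\sc{ColHom}}(\mathcal{R}_r)$, set $k=|E(H)|$, verify $\zeta_{k,s}(H^\circ)\neq 0$ via Theorem~\ref{thm:infinityCoeffRestated}, and invoke Theorem~\ref{thm:raduPoly} with $a=2k$ so that the $4^{\Delta(H)}=4^r$ factor is constant. The bookkeeping points you flag (bounding $|V(F)|$ by $2k$, identifying the holant oracle with the graph motif parameter $p_{k,s}$) are exactly the checks the paper performs.
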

\begin{proof}
Since $\{s\}$ is of type $\mathbb{T}[\infty]$, there exists $r \in \mathbb{Z}_{> 2}$, such that $\chi(r, s) \neq 0$ by \Cref{def:fingerprint_intro}. Also, there is a algorithm to compute such $r$ (that depends only on $s$).

Let $\mathcal{R}_r$ denote the class of all colorful $r$-regular graphs. Since $r>2$, we have that $\#\text{\sc{ColHom}}(\mathcal{R}_r)$ is $\#\mathrm{P}$-hard by the previous lemma.

We show the following reduction
\[\#\text{\sc{ColHom}}(\mathcal{R}_r) \leq_{\mathsf{T}} \classHolant(\{s\})\,.\]
To this end, let $H \in \mathcal{R}_r$ and $G$ be an instance of $\#\textsc{ColHom}(\mathcal{R}_r)$. We set $k = |E(H)|$ and observe that from \Cref{thm:infinityCoeffRestated}, it follows that $\zeta_{k, s}(H^\circ) = \Aut(H^{\circ})^{-1}\cdot\chi(r,s)^{|V(H^\circ)|} \neq 0$. 

Hence, the conditions for the application of \Cref{thm:raduPoly} are met and thus we can compute $\#\mathsf{ColHom}(H \to G)$ with oracle access to $p_{k, s}(\cdot)$, in time $4^r\cdot \mathsf{poly}(|V(G)|, a)$, where $a$ is the maximum number of vertices appearing in $\mathcal{G}_{\leq k}$ which is by definition $2k$. Since $k = |E(H)|$ and $r$ is a constant, we deduce that the reduction requires $\mathsf{poly}(|V(G)|, |V(H)|)$ time. This concludes the reduction and hence the claim is proved.
\end{proof}

\begin{theorem}\label{thm:hardnessOmega}
If $\{s\}$ is of type $\mathbb{T}[2]$, $\classHolant(\{s\})$ is $\#\mathrm{P}$-hard.    
\end{theorem}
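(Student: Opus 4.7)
The plan is to mirror the structure of the proof of Theorem~\ref{thm:hardnessInfinity} for the type $\mathbb{T}[\infty]$ case, but using $4$-regular colourful graphs as the source of hardness instead of $r$-regular graphs. Concretely, I would establish the reduction
\[
\#\text{\sc{ColHom}}(\mathcal{R}_4) \leq_{\mathsf{T}} \classHolant(\{s\})\,,
\]
where $\mathcal{R}_4$ denotes the class of all colourful $4$-regular graphs. The source problem is $\#\mathrm{P}$-hard by the lemma preceding Theorem~\ref{thm:hardnessInfinity}, since $4>2$.

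Given an instance consisting of a colourful $4$-regular graph $H \in \mathcal{R}_4$ and a coloured graph $G$, I would set $k := 3 \cdot |V(H)|$, which matches the quantity $k_F$ in Lemma~\ref{lem:non-zero-coeff}. Applying Lemma~\ref{lem:non-zero-coeff} to $F = H^\circ$ (the underlying uncoloured graph of $H$), which is $4$-regular and has no isolated vertices, yields $\zeta_{k, s}(H^\circ) \neq 0$, i.e.\ $H^\circ \triangleleft\, p_{k, s}$. This places us exactly in the hypothesis of Theorem~\ref{thm:raduPoly}, since we have oracle access to the graph motif parameter $p_{k, s}$ (each evaluation is a single call to $\classHolant(\{s\})$ on a signature grid that equips every vertex with $s$).

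To conclude, I would verify that the running time bound $4^{\Delta(H)} \cdot \mathsf{poly}(|V(G)|, a)$ in Theorem~\ref{thm:raduPoly} is polynomial in the size of our input. Here $\Delta(H) = 4$ is a constant, so $4^{\Delta(H)} = 256$. The parameter $a$ can be taken to be the maximum number of vertices of any $F \triangleleft\, p_{k, s}$; since $F \in \mathcal{G}_{\leq k}(2)$, $F$ has at most $k$ edges and no isolated vertices, giving $|V(F)| \leq 2k = 6|V(H)|$. Hence the overall running time is polynomial in $|V(G)| + |V(H)|$, establishing the desired reduction and thus $\#\mathrm{P}$-hardness of $\classHolant(\{s\})$.

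The only genuinely substantive step here has already been carried out by the authors as Lemma~\ref{lem:non-zero-coeff}: showing that $4$-regular graphs survive in the homomorphism basis of $p_{k, s}$ despite the more delicate cancellation behaviour of the generalised fingerprints at type $\mathbb{T}[2]$ (as opposed to type $\mathbb{T}[\infty]$, where single-vertex survival of regular graphs is immediate from Theorem~\ref{thm:infinityCoeffRestated}). Once that lemma is in hand, the proof is essentially a direct plug-in to Curticapean's CFI-filter framework, identical in shape to the $\mathbb{T}[\infty]$ case.
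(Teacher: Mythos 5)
Your proposal is correct and follows essentially the same route as the paper's own proof: reduce from $\#\text{\sc{ColHom}}(\mathcal{R}_4)$, set $k=3|V(H)|$, invoke Lemma~\ref{lem:non-zero-coeff} to get $\zeta_{k,s}(H^\circ)\neq 0$, and apply Theorem~\ref{thm:raduPoly} with $\Delta(H)=4$ constant and $a\leq 2k$. The paper's proof is exactly this argument, stated more tersely.
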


\begin{proof}
Let $\mathcal{R}_4$ denote the class of all colorful $4$-regular graphs. Similarly to the proof of \Cref{thm:hardnessInfinity} we show the following reduction
\[\#\text{\sc{ColHom}}(\mathcal{R}_4) \leq_{\mathsf{T}} \classHolant(\{s\})\,.\]

To this end, let $H \in \mathcal{R}_4$ and $G$ be an instance of $\#\textsc{ColHom}(\mathcal{R}_4)$. We set $k = 3|V(H)|$ and observe that from \Cref{lem:non-zero-coeff} we have $\zeta_{k, s}(H^{\circ}) \neq 0$. Hence, the conditions for the application of \Cref{thm:raduPoly} are met and thus we can compute $\#\mathsf{ColHom}(H \to G)$ with oracle access to $p_{k, s}(\cdot)$ in time $\mathsf{poly}(|V(G)|, |V(H)|)$.
\end{proof}

\subsection{Hardness for signatures of types $\mathbb{T}[2]$ or $\mathbb{T}[\infty]$ on hypergraphs}\label{sec:HardnessCFIHypergraphs}

\begin{lemma}\label{lem:connectedHypergraphs}
For any $d, b \in \mathbb{Z}_{\geq 2}$, there are infinitely many $d$-uniform $b$-regular connected hypergraphs.    
\end{lemma}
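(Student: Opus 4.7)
The plan is to combine an existence result for $d$-uniform $b$-regular hypergraphs with a simple ``swap'' operation that reduces the number of connected components without disturbing uniformity, regularity, simplicity, or the vertex set.

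First, I would invoke the realisability result from \cite{frosini2021new} (already used in this paper) to obtain, for infinitely many $n$ satisfying the necessary divisibility condition $d \mid bn$, a (possibly disconnected) $d$-uniform $b$-regular hypergraph $H$ on $n$ vertices. The challenge is then to promote each such $H$ to a connected hypergraph on the same vertex set without disturbing its parameters.

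The key ingredient is the following local surgery. Suppose $H$ has two distinct connected components $C_1, C_2$; pick arbitrary hyperedges $e_1 = \{v_1,\ldots,v_d\} \in E(C_1)$ and $e_2 = \{u_1,\ldots,u_d\} \in E(C_2)$, and replace $\{e_1,e_2\}$ with
\[
e_1' = \{v_1, u_2, \ldots, u_d\}, \qquad e_2' = \{u_1, v_2, \ldots, v_d\}.
\]
$d$-uniformity is immediate, and every vertex in $e_1 \cup e_2$ loses membership in exactly one removed hyperedge and gains membership in exactly one added hyperedge, so $b$-regularity is preserved (and vertices outside $e_1 \cup e_2$ are untouched). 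Because each of $e_1'$ and $e_2'$ contains vertices from both $C_1$ and $C_2$, they are mutually distinct and distinct from every surviving hyperedge of $H$ (each such surviving hyperedge lies entirely within one component); this yields simplicity of the modified hypergraph $H'$. Moreover, using $b \geq 2$, the vertex $v_1 \in e_1'$ remains incident to at least one hyperedge inside $C_1$, and $u_2 \in e_1'$ remains incident to at least one hyperedge inside $C_2$, so $e_1'$ merges $C_1$ and $C_2$ in $H'$, while all other components are unchanged. Hence the number of components strictly drops.

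Iterating this swap until only one component remains produces a connected $d$-uniform $b$-regular hypergraph on the original vertex set. Since $n$ ranges over an infinite set of admissible values and hypergraphs on different numbers of vertices are pairwise non-isomorphic, we obtain infinitely many connected $d$-uniform $b$-regular hypergraphs, as required. The only delicate point is verifying that the swap never creates a duplicate hyperedge, and this is precisely the ``cross-component'' observation above; everything else is bookkeeping.
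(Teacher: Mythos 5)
Your degree- and simplicity-preservation bookkeeping is fine, but the connectivity step has a genuine gap: the claim that the swap strictly reduces the number of components is false in general. The problem is that removing $e_1$ from $C_1$ may itself disconnect $C_1$. Your argument only shows that $v_1$ retains a surviving hyperedge in $C_1$ and $u_2$ retains one in $C_2$, so $e_1'$ merges the \emph{component of $v_1$ in $C_1-e_1$} with the \emph{component of $u_2$ in $C_2-e_2$} --- not all of $C_1$ with all of $C_2$. Concretely, take $d=2$, $b=3$, and let $C_1,C_2$ be two disjoint copies of a cubic graph containing a bridge (such graphs exist, e.g.\ two pendant $K_4$-gadgets joined by an edge). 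Choosing $e_1=\{v_1,v_2\}$ and $e_2=\{u_1,u_2\}$ to be the bridges, $C_i-e_i$ splits into $A_i\ni$ the first endpoint and $B_i\ni$ the second. After the swap, $e_1'=\{v_1,u_2\}$ joins $A_1$ to $B_2$ and $e_2'=\{u_1,v_2\}$ joins $A_2$ to $B_1$, and these two new components are disjoint: the hypergraph still has two components. So "pick arbitrary hyperedges" does not work, the iteration is not guaranteed to terminate, and you would need an additional argument --- e.g.\ that each component of a $b$-regular ($b\ge 2$) hypergraph contains a non-cut hyperedge, or a smarter swap --- to close the proof. (A secondary, minor point: you should also confirm that the realisability theorem you invoke actually yields \emph{simple} $b$-regular $d$-uniform hypergraphs for infinitely many admissible $n$, since the whole argument starts from that.)

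For comparison, the paper avoids surgery entirely: it builds connected $d$-uniform $b$-regular hypergraphs by induction on $b$, taking $i\cdot d$ copies of a connected $(b-1)$-regular $d$-uniform hypergraph and adding new size-$d$ hyperedges that partition all vertices while threading a "path" through consecutive copies. This is fully constructive, needs no realisability theorem, and connectivity is immediate from the construction. Your route is salvageable and would be a nice alternative, but only after the cut-hyperedge issue is handled.
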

\begin{proof}
For any $d \geq 2$, we prove by induction on $b$, that there is at least one $d$-uniform $b$-regular connected hypergraph. In particular, for any $b \geq 2$, instead of merely showing the existence of one such hypergraph, we construct an infinite sequence of $d$-uniform $b$-regular connected hypergraphs.

For the induction base, we note that, for $b = 1$ and $d \geq 2$, there is a unique $d$-uniform $1$-regular connected hypergraph, which is the single hyperedge (of rank $d$). 

For the induction step, let $b \geq 2$ and let $F_0$ be a $d$-uniform $(b-1)$-regular connected hypergraph. We define an infinite sequence $F_1, F_2, \dots$ of $d$-uniform $b$-regular connected hypergraphs (of increasing size) as follows.

For $i \geq 2$, let $C^1, C^2, \dots, C^{i\cdot d}$ be $i\cdot d$ copies of $F_0$. Since $|V(\bigcup_{j = 1}^{i\cdot d}C^j)|$ is a multiple of $d$ we can partition the vertices of all copies $C^j$ into sets of size $d$. Then, we use those sets to form new hyperedges that increase the degree of each vertex by one. So, what remains is to ensure that we form hyperedges in such a way that the resulting hypergraph is connected. For this, let $x, y \in \mathbb{Z}_{\geq 1}$ such that $x + y = d$. For each $1 \leq j < i \cdot d$, we form a hyperedge containing $x$ vertices from $C^j$ and $y$ vertices from $C^{j+1}$, in such a way that no vertex of a copy belongs to more than one newly formed hyperedges. The above is feasible since, for each copy, at most $d$ of its vertices appear in the new hyperedges (and each copy contains at least $d$ vertices), thus, yielding a path intersecting with all copies. Finally, the number of vertices that do not appear in any of the new hyperedges is still a multiple of $d$, and so we partition them into hyperedges of size $d$ arbitrarily. It is straightforward to verify that the resulting hypergraph is indeed connected and $b$-regular.
\end{proof}

\begin{theorem}\label{thm:SharpPHardHyper}
If $\{s\}$ is of type $\mathbb{T}[2]$ or $\mathbb{T}[\infty]$, $\constclassHolant{d}(\{s\})$ is $\#\mathrm{P}$-hard, for any number $d \in \mathbb{Z}_{> 2}$. 
\end{theorem}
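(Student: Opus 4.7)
The plan is to split into three cases depending on the values of $s(1)$ and $s(2)$, handling the first two by reducing from $\constclassHolant{2}(\{s\})$ via the gadgets already used in the parameterised classification, and handling the remaining case by a direct reduction from $\#\textsc{PerfectMatching}^d$.

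In the first case, $s(1)\neq 0$, and in the second case, $s(1)=0$ but $s(2)\neq 0$, I would reuse verbatim the two gadget constructions from the proof of Theorem~\ref{thm:reduction2TodArity} (the subcases $b=1$ and $b=2$ therein). Those constructions transform an instance $(\Omega,k)$ of $\constclassHolant{2}(\{s\})$ with underlying graph $G$ into a signature grid $(\Omega',k')$ on a $d$-uniform hypergraph $G'$ of size $O(|\Omega|)$ such that $\uncolholant(\Omega',k')=c\cdot\uncolholant(\Omega,k)$ for an explicitly computable non-zero constant $c$ (namely $s(1)^{k(d-2)}$ or $s(2)^{k(|V(B)|-2)}$ respectively, where $B$ is the connected gadget hypergraph from Theorem~\ref{thm:reduction2TodArity}(2)). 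Although the original statements are phrased as FPT Turing reductions, the constructions run in time polynomial in $|\Omega|$ and do not depend super-polynomially on $k$, so they also serve as polynomial-time Turing reductions $\constclassHolant{2}(\{s\})\leq_{\mathsf{T}}\constclassHolant{d}(\{s\})$. Combined with the $\#\mathrm{P}$-hardness of $\constclassHolant{2}(\{s\})$ from Theorems~\ref{thm:hardnessInfinity} and~\ref{thm:hardnessOmega}, this yields the claim in these two subcases.

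In the third case, $s(1)=s(2)=0$, the fingerprint $\chi(2,s)=s(2)/s(0)-(s(1)/s(0))^2$ vanishes, so $\{s\}$ cannot be of type $\mathbb{T}[2]$ and must be of type $\mathbb{T}[\infty]$; in particular, letting $b\geq 3$ be the smallest positive integer with $s(b)\neq 0$, we have $s(0)\neq 0$ and $s(j)=0$ for all $1\leq j<b$. I would reduce from $\#\textsc{PerfectMatching}^d$, which is $\#\mathrm{P}$-hard by~\cite{creignou1996complexity}. The approach is conceptually parallel to Theorem~\ref{thm:HyperPerfectMatchingReduction}, but the ``trivial'' vanishing argument there relies on $s(0)=0$ and breaks down for us. Instead, I would invoke Lemma~\ref{lem:connectedHypergraphs} to obtain a connected $d$-uniform $b$-regular hypergraph, and then modify it (by deleting a single edge incident to a chosen port vertex $p$ and patching the $d-1$ affected vertices by gluing further $b$-regular pieces through a fresh edge, following the inductive template of Lemma~\ref{lem:connectedHypergraphs}) into a gadget $H^\star$ in which $p$ has degree $b-1$, every other vertex has degree $b$, and $H^\star\setminus\{p\}$ remains connected. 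Attaching one copy of $H^\star$ to every $v\in V(G)$ by identifying $p$ with $v$, and keeping $E(G)$, yields a $d$-uniform signature grid $\Omega'$ on a hypergraph $G'$ with every vertex carrying $s$.

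The correctness analysis exploits the connectedness of $H^\star\setminus\{p\}$ together with the vanishing of $s$ on $\{1,\dots,b-1\}$ to force, in every $A\subseteq E(G')$ contributing non-trivially to $\uncolholant(\Omega',k)$, each copy of $H^\star$ to be entirely contained in or entirely disjoint from $A$ --- exactly the ``all-in-or-all-out'' argument used in the $b=2$ case of Theorem~\ref{thm:reduction2TodArity}. Setting $k=n\cdot|E(H^\star)|+n/d$ (where $n=|V(G)|$ and assuming $d\mid n$, else $\#\mathsf{PerfMatch}(G)=0$), a global Handshake-Lemma count on $A\cap E(G)$ forces all gadget copies to be in $A$; each copy then contributes $b-1$ to the $A$-degree of its attached vertex, and the ``clean'' configurations where $A\cap E(G)$ is a perfect matching bring every vertex degree to exactly $b$, giving a contribution of $s(b)^{n|V(H^\star)|}\cdot\#\mathsf{PerfMatch}(G)$. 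The principal obstacle is ruling out parasitic configurations in which some $v\in V(G)$ is incident to two or more chosen edges of $E(G)$: these contribute factors of the form $s(b+j)$ with $j\geq 1$, whose vanishing is \emph{not} guaranteed by the type hypothesis. To handle them I would invoke the oracle on polynomially many modified signature grids, attaching $t\in\{1,2,\dots,\mathrm{poly}(n)\}$ copies of $H^\star$ per vertex (and varying $k$ accordingly), and recover $\#\mathsf{PerfMatch}(G)$ from the resulting Vandermonde-like linear system in the unknown contributions indexed by ``excess'' vertex degrees. This is exactly the step where the non-parameterised setting is strictly more permissive than the parameterised one: the values of $k$ involved are linear in $|G|$ and thus inaccessible to FPT Turing reductions, matching the remark on ``global gadgets'' in the overview of Section~\ref{sec:HardnessCFIHypergraphs}.
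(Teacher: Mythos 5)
Your treatment of the cases $s(1)\neq 0$ and $s(1)=0,\ s(2)\neq 0$ matches the paper exactly: the gadgets of Theorem~\ref{thm:reduction2TodArity} are indeed polynomial-time Turing reductions, and combining them with Theorems~\ref{thm:hardnessInfinity} and~\ref{thm:hardnessOmega} settles these subcases. Your observation that the remaining case forces $\{s\}$ to be of type $\mathbb{T}[\infty]$ with $b\geq 3$ (via $\chi(2,s)=0$) is also correct, and your high-level plan there --- a global $b$-regularising gadget attached to every vertex of $G$ plus a reduction from $\#\textsc{PerfectMatching}^{d}$ --- is the paper's plan too.

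However, the third case has a genuine gap, centred on an obstacle that does not exist. Once all $n$ gadget copies are forced into $A$ and exactly $n/d$ edges of $E(G)$ remain in $A$, the ``parasitic'' configurations you worry about --- some $v\in V(G)$ meeting two or more edges of $A\cap E(G)$ --- automatically contribute zero: if an $n/d$-edge subset of a $d$-uniform $n$-vertex hypergraph covers some vertex twice, then by counting incidences ($n/d$ edges supply exactly $n$ vertex slots) some other vertex $u$ is covered by none, so $u$ has $A$-degree exactly $b-1$ and the factor $s(b-1)=0$ annihilates the entire product. Hence every surviving $A\cap E(G)$ is a perfect matching and no interpolation is needed. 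Your proposed fix --- attaching $t$ copies of the gadget and solving a ``Vandermonde-like'' system --- is moreover not justified as stated: for $t\geq 2$ the base degree $t(b-1)$ already exceeds $b$, so the vanishing of $s$ on $\{1,\dots,b-1\}$ no longer constrains $A\cap E(G)$ at all; the unknowns are counts of edge-subsets indexed by degree profiles, and you give no argument that the resulting linear system is square, let alone invertible. A second, smaller gap: you never actually force every gadget copy to lie \emph{inside} $A$. The connectivity argument only yields the all-in-or-all-out property, and a copy could be fully out with the missing edges compensated from $E(G)$ or other copies. The paper closes this by building the gadget with strictly more edges than $|E(G)|$ (possible by Lemma~\ref{lem:connectedHypergraphs} while keeping its size in $O(|E(G)|)$), so that omitting even one copy makes it impossible to reach $k$ edges; your ``global Handshake-Lemma count'' does not substitute for this.
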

\begin{proof}
Let $b > 0$ such that $s(b) \neq 0$ and for each $0 < i < b$, $s(i) = 0$. We distinguish between the cases $b = 1$, $b = 2$ and $b > 2$. For the first two cases, we observe that the reduction
\[\constclassHolant{2}(\{s\}) \leq_{\mathsf{T}} \constclassHolant{d}(\{s\})\,,\]
follows directly from the two first cases of \Cref{thm:reduction2TodArity} respectively. This is true, since the parameterised Turing reductions appearing in both of the aforementioned cases, are in fact polynomial time Turing reductions. Furthermore, since $\constclassHolant{2}(\{s\})$ is $\#\mathrm{P}$-hard whenever $\{s\}$ is of type $\mathbb{T}[2]$ (resp. $\mathbb{T}[\infty]$) as shown in \Cref{thm:hardnessOmega} (resp. \Cref{thm:hardnessInfinity}), it follows that $\constclassHolant{d}(\{s\})$ is $\#\mathrm{P}$-hard.

For the remaining case, we follow the lines of the proof of \Cref{thm:HyperPerfectMatchingReduction}. However, we present the complete proof as it requires non-trivial modifications. We show the following reduction
\[\#\textsc{PerfectMatching}^{d} \leq_{\mathsf{T}} \constclassHolant{d}(\{s\})\,.\]

To this end, let $G$ be an instance of $\#\textsc{PerfectMatching}^{d}$ and assume that $n = |V(G)|$ is a multiple of $d$ (since otherwise, $\#\textsf{PerfMatch}(G) = 0)$. Let $F(d, b-1)$ be a $d$-uniform $(b-1)$-regular hypergraph, with the additional requirement that $F(d, b-1)$ is connected. By \Cref{lem:connectedHypergraphs}, there are infinitely many choices $F(d, b-1)$ for any $d$ and $b \geq 3$. Hence, we may also assume that $F(d, b-1)$ has at least $(d-1)(b-1)$ vertices as well as that $|E(F(d, b-1))| > |E(G)|$. We may also construct $F(d, b-1)$ such that $E(F(d, b-1)) \in O(|E(G)|)$, which can be done in polynomial time.

To see this note that, by the construction shown in \Cref{lem:connectedHypergraphs} the number of vertices of $F(d, b-1)$ can be expressed as $j \cdot c$, for $j \in \mathbb{N}$ divisible by $d$ and some constant $c$. It is easy to verify that for $d$-uniform $(b-1)$-regular hypergraphs (with no isolated vertices) we have $|V(F(d,b-1))|\cdot (b-1) = d \cdot |E(F(d,b-1))|$. Hence, it follows that $j\cdot c \cdot (b-1) = d\cdot |E(F(d, b-1))|$ and so $|E(F(d,b-1))| = j \cdot(c\cdot(b-1)/d)$. Note that the term $c\cdot(b-1)/d$ is a constant and thus $|E(F(d,b-1))|$ depends only on $j$. Hence, by simply picking $j = d\cdot |E(G)|$, we have $|E(F(d, b-1))| > |E(G)|$ and $|E(F(d,b-1))| \in O(|E(G)|)$.

We construct a hypergraph $G'$ obtained from $G$ as follows. For each $v \in V(G)$, we add a copy of $F(d, b-1)$, denoted as $F(d, b-1;v)$. We pick $(d-1)(b-1)$ vertices of $F(d, b-1;v)$ arbitrarily, which we partition into $b-1$ sets of size $d-1$, namely $S_1, \dots, S_{b-1}$. For each $1 \leq i \leq b-1$, we create a new hyperedge $S_i \cup \{v\}$.

Next, we proceed by increasing by one the degree of each vertex of $F(d, b-1; v)$ having degree $b-1$ (that is, of each vertex of $F(d, b-1;v)$ that is not adjacent to $v$). We will perform this step as follows. Recalling that $n$ is a multiple of $d$, we group $V(G)$ into pairwise disjoint sets of size $d$. Let $v_1, \dots, v_{d}$ be such a group. We partition the vertices of $\bigcup_{i =1}^{d}F(d, b-1;v_i)$ having degree $b-1$ into pair-wise disjoint sets of size $d$, which we add as new hyperedges to $G'$. We repeat the aforementioned procedure for all groups and thus we obtain $G'$. Note, that since $|F| \in O(|G|)$, we have $|G'| = |G|^{O(1)}$.

\begin{figure}
    \centering
    \includegraphics[width=\linewidth]{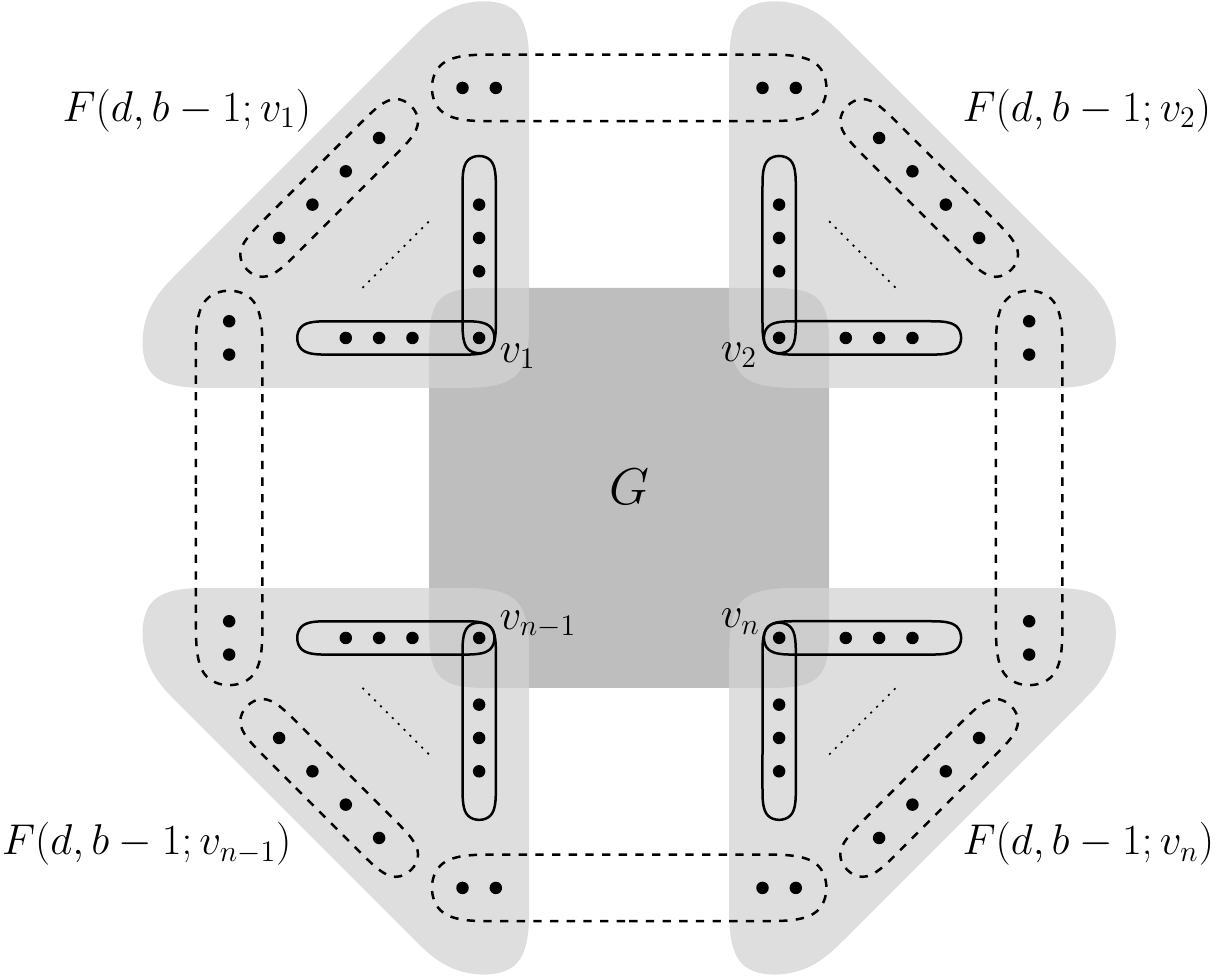}
    \caption{Illustration of the construction of $G'$ in the proof of Theorem~\ref{thm:SharpPHardHyper} for the case $d=4$. The solid hyperedges in each gadget represent the sets $S_1\cup\{v\},\dots,S_{b-1}\cup\{v\}$. The dashed hyperedges represent the hyperedges added in the last step in order to enforce $G'$ to be $b$-regular.}
    \label{fig:gadgetreduction}
\end{figure}

Let $\Omega'$ denote the signature grid with underlying hypergraph $G'$ the vertices of which are equipped with signature $s$. Recall that $s(0) \neq 0$. Observe, that every vertex $v \in V(G') \setminus V(G)$ has degree $b$. 

We also observe that for any $k$-set $A \subseteq E(G')$, if there is $v \in V(G') \setminus V(G)$ such that $0 < |A \cap E_{G'}(v)| < |E_{G'}(v)| = b$, then the contribution of $A$ to $\uncolholant(\Omega', k)$ (for any $k$) is trivially zero, since then $s(|A \cap E_{G'}(v)|) = 0$. In other words, we may assume that we only consider $k$-sets $A$, such that, for each $v \in V(G)$, $A$ contains either all or none of the hyperedges incident to vertices of $F(d, b-1;v)$. Following the lines of \Cref{thm:HyperPerfectMatchingReduction} we want to ensure that every $k$-set $A$ with non-zero contribution to $\uncolholant(\Omega', k)$ must contain all hyperedges in $E(G') \setminus E(G)$.\footnote{In the proof of \Cref{thm:HyperPerfectMatchingReduction} this was easily achieved using that $s(0) = 0$.} To achieve this, it suffices to take $F(d, b-1)$ such that $|E(F(d, b-1))| > |E(G)|$ along with $k \geq |E(G') \setminus E(G)|$. To see this, assume for contradiction that a $k$-set misses one gadget completely. Then even if all other hyperedges are taken, the total number of hyperedges is at most $|E(G)|+ |E(G')\setminus E(G)| - |E(F(d,b-1))|< |E(G')\setminus E(G)| \leq k$, which is a contradiction.

Hence, we can assume that, for $k \geq |E(G') \setminus E(G)|$, all edges in $E(G')\setminus E(G)$ are taken. In particular, for $k = n/d + |E(G') \setminus E(G)|$, since every (non-trivial) $k$-set $A \subseteq E(G')$ contains all hyperedges in $E(G') \setminus E(G)$, it follows that the remaining $n/d$ hyperedges should consist a hyperedge-cover of $G$. To see this, not that every vertex $v \in V(G)$ has exactly $b-1$ incident hyperedges intersecting with $G' \setminus G$ and $s(b-1) = 0$. As it has already been mentioned, a hyperedge-cover of size $n/d$ of an $n$-vertex $d$-uniform hypergraph is a perfect matching. So, the contribution of $A$ to $\uncolholant(\Omega', k)$ is precisely $\prod_{v \in V(G')}s(|A \cap E_{G'}(v)|) = s(b)^{|V(G')|}$. Hence, we have
\[
\uncolholant(\Omega', k) = s(b)^{|V(G')|}\#\mathsf{PerfMatch}(G)\,.
\]

Clearly, $s(b)^{|V(G')|}$ can be computed in time that is polynomial in $|G|$ by using standard techniques, which concludes the proof.

\end{proof}

\section{\ensuremath{\mathrm{W}[2]}-hardness for \ensuremath{\uncolunbndholantprob}}\label{sec:W2hardness}

We will show, in a very easy reduction, that $\uncolunbndholantprob(\mathcal{S})$ is $\mathsf{W}[2]$-hard even for the signature $\mathsf{hw}_{\geq 1}$ of having Hamming weight at least $1$ --- note that for this signature, $\uncolholantprob(\mathcal{S})$ is fixed-parameter tractable (see \Cref{thm:mainParamThm}).

\begin{lemma}
$\uncolunbndholantprob(\{\mathsf{hw}_{\geq 1}\})$ is $\mathsf{W}[2]$-hard.
\end{lemma}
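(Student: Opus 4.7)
The plan is to give a simple polynomial-time Turing reduction from the parameterised decision problem $\textsc{SetCover}$, which is well-known to be $\mathrm{W}[2]$-complete (see, e.g., \cite{FlumG06}). Recall that $\textsc{SetCover}$ takes as input a universe $U$, a family $\mathcal{S}\subseteq 2^U$, and a positive integer $k$, and asks whether there exist $k$ sets in $\mathcal{S}$ whose union equals $U$.

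First I would unpack the semantics of the signature $\mathsf{hw}_{\geq 1}$: it evaluates to $1$ on any input of Hamming weight at least $1$, and to $0$ on the all-zero input. Consequently, for a signature grid $\Omega = (G, \{\mathsf{hw}_{\geq 1}\}_{v \in V(G)})$ and any $k \in \mathbb{N}$, a $k$-subset $A \subseteq E(G)$ contributes $1$ to the holant if and only if every vertex of $G$ is incident to at least one hyperedge of $A$, and $0$ otherwise. In other words,
\[\uncolholant(\Omega, k) \;=\; \bigl|\{A \subseteq E(G) : |A| = k,\ \textstyle\bigcup_{e \in A} e = V(G)\}\bigr|.\]

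Next I would describe the reduction. Given an instance $(U, \mathcal{S}, k)$ of $\textsc{SetCover}$, assume without loss of generality that $\mathcal{S}$ contains no duplicate sets (so that the resulting hypergraph has no multiple hyperedges, as the paper requires) and that every element of $U$ lies in at least one set of $\mathcal{S}$ (otherwise output NO). Construct the hypergraph $G$ with $V(G) = U$ and $E(G) = \mathcal{S}$, equip every vertex with the signature $\mathsf{hw}_{\geq 1}$ to obtain a signature grid $\Omega$, and query the oracle for $\uncolunbndholantprob(\{\mathsf{hw}_{\geq 1}\})$ on $(\Omega, k)$. By the observation above, $\uncolholant(\Omega, k) > 0$ if and only if $(U, \mathcal{S}, k)$ is a yes-instance of $\textsc{SetCover}$, so we answer in polynomial time with a single oracle call; the parameter $k$ is preserved exactly. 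This yields an FPT (indeed, polynomial-time) Turing reduction from $\textsc{SetCover}$ to $\uncolunbndholantprob(\{\mathsf{hw}_{\geq 1}\})$, establishing $\mathrm{W}[2]$-hardness.

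There is no real obstacle here: the construction is essentially a direct translation, and the main content is simply recognising that $\mathsf{hw}_{\geq 1}$ is the indicator signature for ``every vertex must be covered''. The only minor subtlety is the simple-hypergraph convention of this paper, which is why we first discard duplicate sets from $\mathcal{S}$.
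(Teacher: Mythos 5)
Your proof is correct and is essentially the paper's proof seen through the standard Set Cover / Hitting Set duality: the paper reduces from $\textsc{p-HittingSet}$ and explicitly builds the dual hypergraph (vertices $=$ hyperedges of the input, hyperedges $=$ incidence sets of the input's vertices), which is precisely the Set Cover instance you start from, and then uses the same observation that $\mathsf{hw}_{\geq 1}$ turns the holant into a count of $k$-edge-covers. Your deduplication of identical sets plays exactly the role of the paper's removal of vertices with identical incident-edge sets, so the two arguments coincide up to which $\mathrm{W}[2]$-complete problem is named as the source.
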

\begin{proof}
    We reduce from the problem $\textsc{p-HittingSet}$ which asks, on input a hypergraph $G$ and a positive integer $k$, to decide whether $G$ contains a $k$-hitting set, that is, a vertex-subset $S$ of size $k$ such that each hyperedge is incident to at least one vertex of $S$. The parameter is $k$. It is well-known that $\textsc{p-HittingSet}$ is a $\mathrm{W}[2]$-complete problem (see e.g., \cite[Theorem 7.14]{FlumG06}).

    To this end, let $(G,k)$ be an instance of $\textsc{p-HittingSet}$. Note that, if there are two vertices $u, v \in V(G)$ with $E_G(v) = E_G(u)$, then we can safely remove one of them (arbitrarily), in the sense that this operation does not affect the existence or not of a $k$-hitting set. To see this, first note that we only remove vertices and not hyperedges. It is also clear that deleting $u$ or $v$ does not remove hyperedges either. Therefore, deletion of $u$ or $v$ cannot lead to forming new $k$-hitting sets. Furthermore, for any $k$-hitting set $X$ of $G$, the following are true:
    \begin{enumerate}
    \item If $\{u, v\} \subseteq X$, then for $z \in \{u, v\}$ and any $w \notin \{u, v\}$, we have that $(X \setminus\{z\}) \cup \{w\}$ is a $k$-hitting set of $G$ (assuming we are not in the degenerate case where $V(G) = \{u, v\}$ and $k = 2$).
    \item If $\{u, v\}\,\cap\,X = \{u\}$, then $(X \backslash \{u\}) \,\cup\,\{v\}$ is a $k$-hitting set of $G$ (equivalently, $(X \backslash \{v\})\,\cup\,\{u\}$ is a $k$-hitting set if $\{u, v\}\,\cap\,X = \{v\}$). 
    \end{enumerate}
    It is also easy to see that we may repeat the deletion of vertices as many times as required (which can be done in polynomial time). Hence, given $G$ we may assume that there are no vertices $u, v \in V(G)$ such that $E_G(u) = E_G(v)$.
    
    To proceed with our reduction, we construct an instance of $\uncolunbndholantprob(\{\mathsf{hw}_{\geq 1}\})$ as follows. Let $G'$ be the hypergraph with vertices $E(G)$; moreover, for each $v\in V(G)$, we add a hyperedge $e'_v=\{e\in E(G)=V(G')\mid v \in e \}$. Note that, $G'$ contains no multi-hyperedges which follows from our previous assumption. Finally, each vertex in $V(G')=E(G)$ is equipped with signature $\mathsf{hw}_{\geq 1}$.

    Then it is easy to see that $\uncolholant(G',k)$ counts precisely the $k$-hitting sets of $G$, showing the reduction, since we can decide whether at least one $k$-hitting set of $G$ exists.
\end{proof}

\bibliographystyle{plain}
\bibliography{biblio}

\newpage

\appendix

\section{Proof of Lemma~\ref{lem:lowerBoundsHoms}}\label{sec:ProofOfLemma}
Recall that the goal is to show that for any $r \in \mathbb{Z}_{\geq 2}$ and every class of $r$-uniform hypergraphs $\mathcal{H}$ of unbounded treewidth, the problem $\Homprob(\mathcal{H})$ is $\#\mathrm{W}[1]$-hard and cannot be solved in time $f(|H|)\cdot |V(G)|^{o(\mathsf{tw}(H)/\log(\mathsf{tw}(H)))}$ for any function $f$, unless ETH fails, even if all inputs $(H,G)$ to $\Homprob(\mathcal{H})$ satisfy that $G$ is $r$-uniform as well. For what follows, fix $r$ and $\mathcal{H}$.

To prove this claim, we need to take a detour to constraint satisfaction problems and relational structures, which is encapsulated here since we do not need it anywhere else in the paper.

A vocabulary $\tau$ is a tuple $(R_1,\dots,R_\ell)$ of relation symbols, each equipped with an arity $a_i$. A structure $A$ over $\tau$ consists of a finite universe $V(A)$ and relations $R^A_1,\dots R^A_\ell$ over $V(A)$ such that, for each $i\in [\ell]$, $R^A_i$ has arity $a_i$.
Given two structures $A$ and $B$ over vocabulary $\tau$, a homomorphism from $A$ to $B$ is a mapping $\varphi: V(A) \to V(B)$ such that, for each $i\in[\ell]$ and tuple $t\in R^A_i$, we have that $\varphi(t)\in R^B_i$. The Gaifman graph of a structure is defined similarly to the Gaifman graph of a hypergraph, and the treewidth of a structure is defined to be the treewidth of the Gaifman graph as well.

Now, we associate each $H\in \mathcal{H}$ with a structure $A=A(H)$ as follows. Let $k=|V(H)|$ --- assume w.l.o.g.\ that $V(H)=\{v_1,\dots,v_k\}$ --- and set $\tau_H=(U_1,\dots,U_k,E)$, where the arity of each $U_i$ is $1$, and the arity of $E$ is $r$. We set $V(A)=V(H)$, and $U^A_i=\{v_i\}$ for each $i\in [k]$. Furthermore, for every hyperedge $\{v_{j_1},\dots,v_{j_r}\}$ with $j_1 < \dots < j_r$ we add the tuple $(v_{j_1},\dots,v_{j_k})$ to $E$.

We make two trivial observations:
\begin{enumerate}
    \item[(a)] $\mathsf{tw}(H)=\mathsf{tw}(A)$.
    \item[(b)] The only homomorphism from $A$ to $A$ is the identity (due to the unary relations~$U^A_i$).
\end{enumerate}
Condition (b) implies that $A$ is a core: there is no homomorphism from $A$ to a \emph{proper} substructure of $A$.

Now let $\mathcal{A}=\{A(H)\mid H \in \mathcal{A}\}$ and let $\Homprob(\mathcal{A})$ denote the problem of, given $A \in \mathcal{A}$ and a structure $B$ over the same vocabulary as $A$, computing the number $\#\homs{A}{B}$ of homomorphisms from $A$ to $B$. Since, by (a), the treewidth of $\mathcal{A}$ is unbounded, we can invoke the lower bounds of Dalmau and Jonsson~\cite{dalmau2004complexity}, and of Marx~\cite[Corollary 5.3]{Marx10} to obtain that $\Homprob(\mathcal{A})$ is $\#\mathrm{W}[1]$ hard and cannot be solved in time $f(|H|)\cdot |V(G)|^{o(\mathsf{tw}(H)/\log(\mathsf{tw}(H)))}$ for any function $f$, unless ETH fails.

For the remainder of this proof, it hence suffices to construct a tight reduction from $\Homprob(\mathcal{A})$ to $\Homprob(\mathcal{H})$. To this end, we consider an intermediate coloured problem; for what follows, recall that we do not allow isolated vertices in our hypergraphs.

Let $H$ be a hypergraph, and let $(G,c)$ be a pair of a hypergraph $G$ and a homomorphism $c: G \to H$; we call $(G,c)$ an $H$-coloured hypergraph. A \emph{colour-prescibed} homomorphism from $H$ to $G$ is a homomorphism $\varphi \in \homs{H}{G}$ such that, for every $v \in V(H)$, we have $c(\varphi(v))=v$. We write $\mathsf{cpHom}(H\to(G,c))$ for the set of all colour-prescribed homomorphisms from $H$ to $(G,c)$, and we denote with $\#\textsc{cpHom}(\mathcal{H})$ the problem of, given $H \in \mathcal{H}$ and an $H$-coloured hypergraph $G$, computing $\#\mathsf{cpHom}(H \to (G,c))$.

\paragraph*{Reducing $\Homprob(\mathcal{A})$ to $\#\textsc{cpHom}(\mathcal{H})$}
Let $A=A(H)$ and $B$ be the input to $\Homprob(\mathcal{A})$. Let $k=V(H)$ and recall that $A$ and $B$ must be over vocabulary $\tau_H$. We construct the tensor $B \otimes A$: the universe is $V(B \otimes A)=V(B) \times V(A)$, and for each $r$-tuple $\vec{t}=(t_1,\dots,t_r) \in V(B \otimes A)^r$, we add $\vec{t}$ to $E^{B \otimes A}$ if and only if the projections to the first and second elements of $\vec{t}$ yield tuples in $E^B$ and $E^A$, respectively. Moreover, for each $i\in [k]$, we add $(x,y)\in V(B \otimes A)$ to $U_i^{B\times A}$ if and only if $x \in U^B_i$ and $y \in U^A_i$. It is well-known that homomorphism counts are linear w.r.t.\ the tensor product of relational structures. Hence
\begin{equation*}
    \#\homs{A}{B \otimes A} = \#\homs{A}{B} \cdot \#\homs{A}{A} \stackrel{(b)}{=}  \#\homs{A}{B}
\end{equation*}
Next, we associate $B \otimes A$ with an $H$-coloured hypergraph $G(B,A)$ as follows. First, we start with vertices $V(B \otimes A)$ and note that each vertex is of the form $(x,v_i)$ for some $i \in [k]$. Hence $(x,v_i)\in U^{B \otimes A}_i$ if $x \in U^B_i$. Otherwise $(x,y_i)$ is contained in none of the $U^{B \otimes A}_i$; we delete all such vertices that are not contained in any of the unary relations.

The remaining set $V'$ of vertices can be partitioned into 
\[V' = U^{B \otimes A}_1\,\dot \cup\, \dots\,\dot \cup \,U^{B \otimes A}_k\]
(note that some of the $U^{B\otimes A}_i$ might be empty.)

We are now able to construct $G(B,A)$. First, set $V(G(B,A))=V'$. Next, for each tuple $(v_{j_1},\dots,v_{j_r})\in E^A$, we add to $G(B,A)$ all hyperedges of the form $\{(x_1,v_{j_1}),\dots,(x_r,v_{j_r})\}$ such that $(x_1,\dots,x_r)\in E^B$ is satisfied. Moreover, set $c$ as the $H$-colouring of $G(B, A)$ that maps $(u,v)$ to $v$ (recall that $V(A)=V(H)$).

\begin{claim}
We have $\#\mathsf{cpHom}(H \to (G(B,A),c)) = \#\homs{A}{B}$.
\end{claim}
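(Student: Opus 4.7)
The plan is to exhibit an explicit bijection
\[\Phi: \mathsf{cpHom}(H \to (G(B,A),c)) \;\longrightarrow\; \homs{A}{B}.\]
Given a colour-prescribed homomorphism $\varphi$, the colouring $c$ sends $(u,v_i)$ to $v_i$, so the requirement $c(\varphi(v_i))=v_i$ forces $\varphi(v_i)=(x_i,v_i)$ for some $x_i\in V(B)$; moreover, since $(x_i,v_i)\in V(G(B,A))=V'$, the construction of $V'$ yields $x_i\in U_i^B$. Set $\Phi(\varphi)(v_i):=x_i$. Conversely, given $\psi\in \homs{A}{B}$, unary preservation (applied to $U_i^A=\{v_i\}$) gives $\psi(v_i)\in U_i^B$, so $(\psi(v_i),v_i)\in V'$, and we set $\Phi^{-1}(\psi)(v_i):=(\psi(v_i),v_i)$. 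These maps are mutually inverse on the level of vertex maps, and the unary-relation condition matches the colour-prescription condition tautologically.

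The substantive step will be to verify that the hyperedge/relation condition on the two sides correspond under $\Phi$. By construction, the hyperedges of $G(B,A)$ are exactly the sets of the form $\{(x_1,v_{j_1}),\ldots,(x_r,v_{j_r})\}$ with $(v_{j_1},\ldots,v_{j_r})\in E^A$ (where $j_1<\cdots<j_r$) and $(x_1,\ldots,x_r)\in E^B$. Tuples in $E^A$ are in bijection with hyperedges of $H$ via the canonical ordering by index. Thus, for every hyperedge $e=\{v_{j_1},\ldots,v_{j_r}\}\in E(H)$ with $j_1<\cdots<j_r$, the set $\varphi(e)=\{(x_{j_1},v_{j_1}),\ldots,(x_{j_r},v_{j_r})\}$ lies in $E(G(B,A))$ if and only if $(x_{j_1},\ldots,x_{j_r})\in E^B$, which is exactly the requirement that $\Phi(\varphi)$ maps the $A$-tuple $(v_{j_1},\ldots,v_{j_r})$ into $E^B$. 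Ranging over all hyperedges of $H$ (equivalently, all tuples in $E^A$) gives the equivalence of the two edge-preservation conditions, completing the bijection and hence the claim.

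The only real (and rather minor) obstacle is the ordered-tuple versus unordered-hyperedge bookkeeping: tuples in $E^A$ and $E^B$ are ordered, while hyperedges in $H$ and $G(B,A)$ are sets. This is resolved by the observation that within any hyperedge of $G(B,A)$ the second coordinates $v_{j_1},\ldots,v_{j_r}$ are pairwise distinct (since the vertices of any hyperedge of $H$ are distinct), so the representation as a pair of ordered tuples is unique up to the shared sorting of indices, and the pairing of first coordinates with second coordinates is unambiguous. With this observation in place the edge-preservation translation is immediate, and the equality of cardinalities follows at once from the existence of $\Phi$.
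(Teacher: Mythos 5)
Your proposal is correct and follows essentially the same route as the paper: both construct the canonical bijection that projects a colour-prescribed homomorphism to its first coordinates (and conversely pairs $\psi(v_i)$ with $v_i$), with the unary relations $U_i$ enforcing the colour-prescription. Your explicit treatment of the ordered-tuple versus unordered-hyperedge bookkeeping is a minor presentational refinement of the same argument, not a different approach.
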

\begin{proof}
For a tuple $t$, we write $\pi_i(t)$ for the $i$-th entry of $t$. Let $\phi \in \mathsf{cpHom}(H \to (G(B,A),c))$ and consider a mapping $b$ that maps $\phi$ to $b(\phi) = \psi : V(A) \to V(B)$ such that for each $v_i \in V(A), \psi(v_i) = \pi_1(\phi(v_i))$. In fact, $\psi \in \homs{A}{B}$. To see this, first note that $\phi(v_i)$ is by definition in $U^{B\times A}_i$, which implies that $\psi(v_i) = \pi_1(\phi(v_i)) \in U^B_i$ hence the unary relations are preserved. For $e = \{v_{j_1}, \dots, v_{j_r}\} \in E(H)$, we have $\phi(e) = \{(x_1, v_{j_1}), \dots, (x_r, v_{j_r})\}$ for some $x_1, \dots, x_r$. Assuming that $j_1 < \dots < j_r$, we also have $(x_1, \dots, x_r) \in E(B)$, which implies that $\psi$ preserves the relation $E$ as well. Hence, $\psi \in \homs{A}{B}$.

To prove our claim, it suffices to show that $b$ defines a bijection. Indeed, $b$ is injective. To see this, let $\phi, \phi'$ such that $b(\phi) = b(\phi') = \psi$. For $v_i \in V(H)$, we have $\phi(v_i) = (x, v_i)$ and $\phi'(v_i) = (x', v_i)$. Since $\psi(v_i) = \pi_1(\phi(v_i)) = \pi_1(\phi'(v_i))$, we have $x = x'$.

Finally, to show that $b$ is also surjective, let $\psi \in \homs{A}{B}$ and define $\phi$ such that for each $v_i \in V(H)$, $\phi(v_i) = (\psi(v_i), v_i)$. Clearly, $\phi \in \mathsf{cpHom}(H \to (G(B,A),c))$ and $b(\phi) = \psi$, completing the proof.
\end{proof}

Moreover, by construction, $G(B,A)$ is $r$-uniform. Hence we can use our oracle to query $\#\mathsf{cpHom}(H \to (G(B,A),c))$, completing the computation.
Note that $(G(B,A),c)$ can easily be constructed in FPT time from $B$ and $A$, and the parameter of the oracle query ($|H|$) is bounded linearly by the parameter of the input ($|A|$); thus the conditional lower bound under ETH transfers as well.

\paragraph*{Reducing $\#\textsc{cpHom}(\mathcal{H})$ to $\Homprob(\mathcal{H})$}
The second reduction follows verbatim the standard argument for graphs via inclusion-exclusion (see~\cite[Lemma 2.49]{Roth19} for a detailed presentation). The only trivial modification that has to be made is the deletion of isolated vertices in the terms arising in the intermediate steps.

\end{document}